\newcommand{\sra}{\mbox{{\footnotesize $\rightarrow$}}}
\newcommand{\goali}[4]{\ensuremath{\left[#4,#1,#2\right]\Vdash #3}}
\newcommand{\EPS}{\textbf{EPS}}
\newcommand{\GSI}{\textbf{GSI}}
\newcommand{\CAS}{\textbf{CAS}}
\newcommand{\CS}{\textbf{CS}}
\newcommand{\CVUL}{\textbf{CVUL}}
\newcommand{\VA}{\textbf{VA}}
\newcommand{\ERL}{\textbf{ERL}}
\newcommand{\ERR}{\textbf{ERR}}
\title{Inductive Reasoning with Equality Predicates,\\Contextual Rewriting and 
Variant-Based Simplification}
\author{Jos\'{e} Meseguer}
\institute{Department of Computer Science\\
University of Illinois at Urbana-Champaign, USA \\
	       \email{meseguer@illinois.edu}}
\authorrunning{}
\begin{document}
\maketitle
\pagestyle{plain}
\begin{abstract}
An inductive inference system
for proving validity of formulas in 
the initial algebra $T_{\mathcal{E}}$
of an order-sorted equational theory $\mathcal{E}$
is presented.  It has 20 inference rules, but only 9 of them 
require user interaction; the remaining 11
can be automated as \emph{simplification rules}.
In this way, a substantial fraction of the
proof effort can be automated.
The inference rules are based on
advanced equational reasoning techniques, including:
equationally defined equality predicates, narrowing,
constructor variant unification,
variant satisfiability,
order-sorted
congruence closure, contextual rewriting, ordered rewriting,
and recursive path orderings. All
these techniques work modulo axioms $B$,
for $B$ any combination of associativity and/or commutativity and/or identity axioms.
Most of these inference rules have  already been  implemented in Maude's
{\bf NuITP} inductive theorem prover.
\end{abstract}

%% \begin{center}
%% Version of June 10 2021
%% \end{center}

\section{Introduction} \label{INTRO-SECT}

In inductive theorem proving for equational specifications there is a
tension between automated approaches, e.g., 
\cite{DBLP:conf/popl/Musser80,ind,DBLP:journals/jcss/HuetH82,DBLP:journals/ai/KapurM87,DBLP:conf/lics/Bachmair88,DBLP:journals/jar/BouhoulaR95,DBLP:journals/iandc/ComonN00},
and explicit induction ones,
 e.g.,
\cite{DBLP:conf/rta/KapurZ89,DBLP:conf/alp/Goguen90,DBLP:series/mcs/GuttagHGJMW93,cafe-tools-paper,DBLP:conf/birthday/GAinALOF14,itp-manual,hendrix-thesis,itp/HendrixKM10,DBLP:journals/corr/abs-2101-02690}.
For two examples of automated equational inductive provers we can mention,
among various others, Spike \cite{DBLP:journals/jar/BouhoulaR95} and the
superposition-based ``inductionless induction'' prover in
\cite{DBLP:journals/iandc/ComonN00}; and for explicit induction
equational provers we can mention, again among various others, 
RRL \cite{DBLP:conf/rta/KapurZ89}, OBJ as a Theorem Prover
\cite{DBLP:conf/alp/Goguen90,DBLP:journals/corr/abs-2101-02690},
the OTS/CafeOBJ Method
\cite{Futatsugi10,DBLP:conf/birthday/OgataF14,DBLP:conf/birthday/GAinALOF14,DBLP:journals/scp/Futatsugi22}, 
and the Maude ITP
\cite{cafe-tools-paper,itp-manual,hendrix-thesis,itp/HendrixKM10}.  
The advantage of
automated provers is that they do not need user interaction, although they
often require proving auxiliary lemmas.
Explicit induction is less automated, but it allows users to be ``in the
driver's seat'' for finding a proof.  This work presents an approach that 
combines features from automated and explicit-induction theorem
proving in the context of proving validity in the initial algebra $T_{\mathcal{E}}$
of an order-sorted equational theory $\mathcal{E}$ for
both arbitrary quantifier-free (QF)
formulas (expressed as conjunctions of clauses, some 
of which can be combined together as ``multiclauses'' in the sense explained
in Section \ref{SuperClIndTh}) and for existential formulas
reduced to QF form by Skolemization (see Section \ref{Skolem-subsection}).

The combination of proving styles
 is achieved by an inference system having 20 inference
rules, where 11 of them are \emph{goal simplification rules} that can
be fully automated, whereas the remaining 9 require explicit user
interaction, unless they are also automated by tactics.
%
% In fact, we have combined 9 of those simplification rules
% into an automated simplification strategy that we call
% \emph{ISS}.  A more powerful $\mathit{ISS}^{+}$ combining 
%the 11  simplification rules could easily be developed.
%
The
simplification rules are powerful enough to be used on their
own as an automatic (partial) \emph{oracle} to answer inductive validity
questions.  For example, as mentioned in the earlier
paper \cite{ind-ctxt-rew}, a prototype version of
a subset of the simplification rules made it possible
to discharge a huge number of inductive validity verification
conditions (VCs) that were generated in the deductive verification
proof in constructor-based reachability logic of the security
properties of the IBOS Browser described in \cite{skeirik-thesis,ibos-wrla20}.
For a more recent example, the simplification rules implemented
in the {\bf NuITP} prover are invoked as an oracle to discharge
VCs generated by the {\bf DM-Check} tool when proving invariants of
infinite-state systems \cite{DM-Check-WRLA24}.

The effectiveness of these simplification rules seems to be
due to a novel \emph{combination} of 
\emph{automatable} equational reasoning techniques.
Although several are well-known
and some are widely used in superposition-based
\cite{DBLP:journals/logcom/BachmairG94}
automatic first-order theorem provers such as, e.g.,
\cite{DBLP:journals/aicom/Schulz02,weidenbach-SPASS},
to the best of my knowledge they have not been previously combined
for inductive theorem proving purposes with the
extensiveness and generality presented
here.  They include:
(1) equationally defined equality predicates
\cite{DBLP:journals/scp/GutierrezMR15};
(2) constrained narrowing \cite{conditional-narrowing-SCP};
(3) constructor variant unification
\cite{var-sat-scp,skeirik-meseguer-var-sat-JLAMP};
(4) variant satisfiability
\cite{var-sat-scp,skeirik-meseguer-var-sat-JLAMP};
 (5) recursive path orderings  \cite{rubio-thesis,MTA};
(6) order-sorted
congruence closure \cite{DBLP:conf/fossacs/Meseguer16};
(7) contextual rewriting \cite{DBLP:journals/fuin/Zhang95};
and (8) ordered rewriting, e.g.,
\cite{DBLP:conf/cade/MartinN90,Bachmair91,DBLP:books/el/RV01/NieuwenhuisR01}).
Furthermore, in this work all these techniques work \emph{modulo} axioms $B$,
for $B$ any combination of associativity and/or commutativity and/or
identity axioms.

Since the paper \cite{ind-ctxt-rew} was published, two important
developments have taken place.  First, as further explained 
below, the inference system has been extended with new 
inference rules to make it more effective and scalable, and several of the
rules in \cite{ind-ctxt-rew}  have been extended
in their scope and applicability for the same purpose.  Furthermore,
the mathematical foundations of the inference system have been
developed and its completeness has been proved.  Second, since mid
2021 to the present I have been working with Francisco Dur\'{a}n,
Santiago Escobar and Julia Sapi\~na on the design and implementation of, and
experimentation with, an inductive theorem prover for Maude equational programs
called Maude's {\bf NuITP} that implements
(most of) the inference system presented in this paper and whose
user manual, code and examples are
publicly available \cite{NuITP}.  The {\bf NuITP} has been used in formal
methods courses since 2022 at several universities, including
the University of Illinois at Urbana-Champaign and the University of
M\'{a}laga, and  a substantial body of examples and applications
have been developed.  Furthermore, as already mentioned,
the {\bf NuITP} is used as a backend
to dischage inductive verification conditions generated by  the
{\bf DM-Check} tool  \cite{DM-Check,DM-Check-WRLA24}, a deductive model checker under development
by Kyungmin Bae, Santiago Escobar, Ra\'{u}l  L\'{o}pez, Jos\'{e}
Meseguer and Julia Sapi\~{n}a, which
can be used to verify invariants of infinite-state concurrent systems.
The theoretical and tool developments have been closely related.  For
example, based on the experience gained with the {\bf NuITP},
two new rules have been added to the inference system to increase the
effectiveness of the {\bf NuITP}.

\subsection*{Relationship of this Paper to an Earlier Conference Paper} \label{ADDTL-CONTS}

In 2020 Stephen Skeirik and I published
a preliminary set of inference
rules and some preliminary experiments in
the conference paper \cite{ind-ctxt-rew}.
 Since then,  I have 
 advanced the theoretical foundations and have improved and
 substantially extended the inference system presented 
 in \cite{ind-ctxt-rew}.  Also, as mentioned above,
since mid 2021 these advances have been used both as
the theoretical basis of  Maude's
{\bf NuITP} tool and as a means to make  the {\bf NuITP} a more effective prover.

Besides making substantial improvements to the inference rules in
\cite{ind-ctxt-rew} and developing  new formula simplification
 techniques (see below), four
 new inference rules, namely,
 the
 \emph{narrowing induction} (${\bf NI}$),
\emph{narrowing simplification} (${\bf NS}$),
\emph{equality} (${\bf Eq}$)
and \emph{cut} (${\bf Cut}$) rules,
make the inference system more powerful and
 versatile.

${\bf NI}$ and ${\bf NS}$ both have the effect of
fully evaluating at the symbolic level one step of computation with a
defined function symbol $f$ for all the ground instances of
a given expression $f(\vec{v})$ appearing in a conjecture.
${\bf NI}$ and ${\bf NS}$ complement each other and serve
somewhat different purposes.  ${\bf NI}$ is a full-blown induction
rule that inducts on smaller calls to $f$ appearing in the righthand
sides or conditions of the equations defining $f$. 
Instead, ${\bf NS}$ is used mainly for goal simplification purposes
and is a key contributor to the ``virtuous circle'' of simplification
by which ``chain reactions'' of formula simplification are triggered.
As I show in several examples, ${\bf NI}$ compares favorably
with more standard rules like structural induction or its 
more flexible generalization
to \emph{generator set induction} (the ${\bf GSI}$ rule in the present
system).  Another attractive feature of both 
 the ${\bf NI}$ and  ${\bf NS}$  rules is that both
can be used in a fully automatic way.  For ${\bf NS}$ this is
to be expected, but for  ${\bf NI}$ this
opens up the future prospect of an automated
use of the inference system
where the {\bf NuITP} could be used as a backend by many other formal tools.
The ${\bf Eq}$ inference rule allows the application of
a conditional equation, either in the original theory, an induction hypothesis,
or an already proved lemma, to a subterm of the current goal
with a user-provided (partial) substitution.
The ${\bf Cut}$ rule has the usual meaning in many
other inference systems. In the context of reasoning with multiclauses,
which are \emph{conditional} formulas of the form
$\Gamma \rightarrow \Lambda$ with $\Gamma$ a conjunction of equalities
(more on this in \S \ref{SCBR}),
it serves the purpose of a generalized
\emph{modus ponens} inference rule, since modus
ponens becomes the special case when $\Gamma = \top$.

\vspace{1ex}

\noindent Another important new addition is the use of
\emph{ordered rewriting modulo axioms} $B$ ---including conditional 
ordered rewriting--- to substantially increase the possibility of applying
induction hypotheses as rewrite rules for conjecture simplification.
Two entirely new sections  have  been added on this topic (Sections
\ref{OR-section} and \ref{IndH-REW}),
and a considerably more nuanced classification of different types
of induction hypotheses is made in Section \ref{SuperClIndTh}
in order to more effectively apply induction hypotheses to simplify
conjectures.  Furthermore, to boost the effectiveness
of ordered rewriting for conjecture simplification purposes, 
Section \ref{OR-section} concludes with a new semantics for ordered
rewriting that, 
by making ordered rewriting ``background theory aware,''
is strictly more powerful in simplifying conjectures than the
standard semantics of ordered rewriting.  To the best of my
knowledge this new semantics seems to
be a novel contribution.

\vspace{1ex}

\noindent Four other entirely new sections have been added: 
Section \ref{RWL-BACKGR} on rewriting logic; 
Section \ref{GS-FubCall-Subsection}, explaining in detail some
nuances about \emph{generator sets} modulo axioms $B$ and all the
technicalities about \emph{function subcalls} needed to support the new 
 ${\bf NI}$ rule described above; Section \ref{Skolem-subsection},
explaining in detail the model-theoretic semantics of 
\emph{Skolemization}
for satisfaction of formulas existentially quantified on Skolem
function symbols and constants, which supports a more
general $\exists$ rule now included in the inference system
to support inductive reasoning about quantified first-order formulas;
and Section \ref{inf-ex}, illustrating
the use of the inference
system with a collection of examples
(all new, except one borrowed from \cite{ind-ctxt-rew})
that go beyond
the numerous other examples given in Section \ref{ind-inf-rules}
to illustrate the use of individual rules, which are also new.
The examples in  Section \ref{inf-ex}
illustrate both the meaning and the application of 
important inference rules ---including  ${\bf GSI}$,  ${\bf NI}$, 
${\bf LE}$,  ${\bf CAS}$, ${\bf VA}$ and  ${\bf Cut}$,
and the new, generalized versions of ${\bf EPS}$,
${\bf CVUL}$ and 
 ${\bf ICC}$--- as well as 
the remarkable power and economy of thought 
 of reasoning with the \emph{multiclause} representation 
of formulas supported by the inference system, as opposed to the 
usual clause representation.
 Last but not least,  the proof of soundness of the inference
system is given in Appendix 
\ref{Proof-Sound-Theo}.

\vspace{1ex}

\noindent Besides all these entirely new additions and inference rules, the
following substantial extensions of previous
inference rules further  differentiate the
current inference system from its erlier, preliminary version in \cite{ind-ctxt-rew}:

\begin{enumerate}
\item The equality predicate simplification rule
(\textbf{EPS}) and the 
 inductive congruence closure rule (${\bf ICC}$) have been
  substantially generalized to increase their effectiveness: they now can
  also use ordered rewriting to apply a much wider
range of induction hypotheses in the simplification process that could
not be applied in the previous
formulations of \textbf{EPS} and
${\bf ICC}$.

\item The two previous constructor variant unification left
simplification rules, namely, {\bf CVUL} and {\bf CVUFL},
have been substantially generalized and subsumed
by a single, more widely applicable {\bf CVUL} rule.

\item The substitution left (\textbf{SUBL}) and right
  (\textbf{SUBR}) rules have also been substantially
generalized: now they can, not only substitute equations of the
form $x=u$ with $x$ a variable as before, but  can also substitute
equations of the form $\overline{x}=u$, with $\overline{x}$ a
(universal) Skolem constant.

\item The generator set induction (\textbf{GSI}) rule now generates 
  stronger induction hypotheses.

\item The  case (\textbf{CAS}) rule has also been generalized,
so that it can be applied, not just on a variable, but also on a 
(universal) Skolem constant.

\item The existential ($\exists$), lemma enrichment (\textbf{LE}),
split  ($\textbf{SP}$) and  variable abstraction  ($\textbf{VA}$)  rules have been 
substantially generalized.
\end{enumerate}

\section{Preliminaries} \label{PRELIM-SECT}

Since many different techniques are combined in the inductive
inference system, to make the paper self-contained and
as easy to understand as possible,
an unusually long list of preliminaries need to be covered
on: (i) order-sorted first-order logic, (ii) convergent theories
and constructors; (iii) rewriting logic; (iv) narrowing;
(v) equationally-defined equality predicates; (vi) congruence closure;
(vii) contextual rewriting; (viii) variant unification and satisfiability;
(ix) generating sets and function subcalls; (x) ordered rewriting; 
and (xi) Skolemization.  Readers already familiar with some of these
topics may skip the corresponding sections or,  perhaps better,  read them quickly
to become familiar with the notation used.

\subsection{Background on Order-Sorted First-Order Logic}

Familiarity with the notions of an order-sorted signature
$\Sigma$ on a poset of sorts $(S,\leq)$,
an order-sorted $\Sigma$-algebra $A$, and the term $\Sigma$-algebras 
$T_{\Sigma}$ and $T_{\Sigma}(X)$ for $X$ an $S$-sorted set of
variables is assumed, as well as with the notions of:
(i) $\Sigma$-homomorphism $h : A \rightarrow B$
between $\Sigma$-algebras $A$ and $B$,
so that $\Sigma$-algebras 
and $\Sigma$-homomorphisms
form a category ${\bf OSAlg}_{\Sigma}$;
 (ii) order-sorted (i.e., sort-preserving) substitution
$\theta$, its domain $\mathit{dom}(\theta)$ and range
$\mathit{ran}(\theta)$, and its application $t \theta$ to a term $t$;
 (iii) \emph{preregular} order-sorted
signature $\Sigma$, i.e., a signature such that each term $t$ has a least sort,
denoted $\mathit{ls}(t)$; (iv) the set $\widehat{S}=S/(\geq \cup
\leq)^{+}$ of \emph{connected components} of a poset $(S,\leq)$
viewed as a DAG; and
(v) for $A$ a $\Sigma$-algebra, the set $A_{s}$ of it elements
of sort $s \in S$, and the set $A_{[s]}=\bigcup_{s' \in [s]}A_{s'}$
of all elements in a connected component
$[s] \in \widehat{S}$.  It is furthermore assumed that all
signatures $\Sigma$ have \emph{non-empty sorts}, i.e.,
$T_{\Sigma,s}\not= \emptyset$ for each $s \in S$.
All these
notions are explained in detail in \cite{tarquinia,osa1}.  The material below
is adapted from \cite{var-sat-scp}.

The first-order language of \emph{equational} 
$\Sigma$-\emph{formulas}
is defined in the usual way: its atoms\footnote{As explained in \cite{var-sat-scp},
there is no real loss of generality
  in assuming that all atomic formulas are equations: predicates can
  be specified by equational formulas using additional function
  symbols of a fresh new sort
  $\mathit{Pred}$ with a constant $\mathit{tt}$, so that a predicate atom
$p(t_{1},\ldots, t_{n})$ becomes
$p(t_{1},\ldots, t_{n})=\mathit{tt}$.} are
$\Sigma$-\emph{equations} $t=t'$, where
$t,t' \in T_{\Sigma}(X)_{[s]}$ for some $[s] \in \widehat{S}$ and
each $X_{s}$ is assumed countably infinite.  The
set $\mathit{Form}(\Sigma)$ of \emph{equational}
$\Sigma$-\emph{formulas}
 is then
inductively built
from atoms by: conjunction ($\wedge$), disjunction ($\vee$),
negation ($\neg$), and universal ($\forall x_{1}\!\!:\!\!s_{1},\ldots, x_{n}\!\!:\!\!s_{n}
$) and existential
($\exists x_{1}\!\!:\!\!s_{1},\ldots, x_{n}\!\!:\!\!s_{n}$)
quantification with distinct sorted variables
$x_{1}\!\!:\!\!s_{1},\ldots, x_{n}\!\!:\!\!s_{n}$, with
$s_{1},\ldots,s_{n} \in S$
(by convention, for $\emptyset$ the empty set of variables and
$\varphi$ a formula, we define
$(\forall \emptyset)\; \varphi
\equiv (\exists \emptyset)\; \varphi
\equiv \varphi$).
A literal $\neg(t=t')$ is denoted $t \not= t'$.
Given a $\Sigma$-algebra $A$, a formula $\varphi \in \mathit{Form}(\Sigma)$,
and an assignment $\alpha \in [Y \sra A]$, where
$Y \supseteq \mathit{fvars}(\varphi)$, with $\mathit{fvars}(\varphi)$
the free variables of $\varphi$,
 the \emph{satisfaction
relation} $A,\alpha \models \varphi$ is defined inductively as usual:
for atoms, $A,\alpha \models t = t'$ iff $t \alpha = t' \alpha$;
for Boolean connectives it is the corresponding Boolean combination
of the satisfaction relations for subformulas; and for
quantifiers: $A,\alpha \models (\forall x_{1}\!\!:\!\!s_{1},\ldots, x_{n}\!\!:\!\!s_{n})\; \varphi$
 (resp.~$A,\alpha \models (\exists x_{1}\!\!:\!\!s_{1},\ldots, x_{n}\!\!:\!\!s_{n})\; \varphi$)
holds  iff for all $(a_1 , \ldots, a_n ) \in A_{s_{1}} \times \ldots
\times A_{s_{n}}$
(resp.~for some  $(a_1 , \ldots, a_n ) \in A_{s_{1}} \times \ldots
\times A_{s_{n}}$) we have
$A,\alpha[x_{1}\!\!:\!\!s_{1} := a_{1},\ldots,
x_{n}\!\!:\!\!s_{n} := a_{n}] \models \varphi$, where
if $\alpha \in [Y \sra A]$, then
$\alpha[x_{1}\!\!:\!\!s_{1} := a_{1},\ldots,
x_{n}\!\!:\!\!s_{n} := a_{n}]
\in [(Y  \cup \{x_{1}\!\!:\!\!s_{1},\ldots, x_{n}\!\!:\!\!s_{n}\})
\sra A]$ and is such that for $y \!\!:\!\!s \in (Y  \setminus
\{x_{1}\!\!:\!\!s_{1},\ldots, x_{n}\!\!:\!\!s_{n}\})$,
$\alpha[x_{1}\!\!:\!\!s_{1} := a_{1},\ldots,
x_{n}\!\!:\!\!s_{n} := a_{n}](y \!\!:\!\!s) = \alpha(y \!\!:\!\!s)$, and
$\alpha[x_{1}\!\!:\!\!s_{1} := a_{1},\ldots,
x_{n}\!\!:\!\!s_{n} := a_{n}](x_{i}\!\!:\!\!s_{i}) = a_{i}$, $1 \leq i
\leq n$.
 $\varphi$ is called  \emph{valid} in $A$
(resp. \emph{satisfiable} in A) iff
$A, \underline{\emptyset} \models (\forall Y)\; \varphi$
(resp. $A, \underline{\emptyset} \models (\exists Y)\; \varphi$),
where $Y=\mathit{fvars}(\varphi)$ and
$\underline{\emptyset}\in [\underline{\emptyset} \sra A]$
denotes the empty $S$-sorted assignment of values in $A$
to the empty $S$-sorted family $\underline{\emptyset}$ of variables.
The notation $A \models \varphi$ abbreviates validity of
$\varphi$ in $A$.  More generally, a set of formulas
$\Gamma \subseteq \mathit{Form}(\Sigma)$
is called \emph{valid} in $A$, denoted $A \models \Gamma$,
iff $A \models \varphi$ for each $\varphi \in \Gamma$.
For a subsignature $\Omega \subseteq
\Sigma$ and $A \in {\bf
  OSAlg}_{\Sigma}$, the \emph{reduct} $A|_{\Omega} \in {\bf
  OSAlg}_{\Omega}$ agrees with
$A$ in the interpretation of all sorts and operations in $\Omega$ and
discards everything in $\Sigma \setminus \Omega$.
If $\varphi \in \mathit{Form}(\Omega)$ we have
the equivalence $A \models \varphi \; \Leftrightarrow \; A|_{\Omega} \models \varphi$.

An OS \emph{equational theory}
is a pair $T=(\Sigma,E)$, with $E$ a set of (possibly conditional)  $\Sigma$-equations.
${\bf OSAlg}_{(\Sigma,E)}$ denotes the full subcategory
of ${\bf OSAlg}_{\Sigma}$ 
with objects those $A \in {\bf
  OSAlg}_{\Sigma}$ such that $A \models E$, called the $(\Sigma,E)$-\emph{algebras}.
${\bf OSAlg}_{(\Sigma,E)}$ has an
\emph{initial algebra} $T_{\Sigma/E}$ \cite{tarquinia}.
Given $T=(\Sigma,E)$ and $\varphi \in \mathit{Form}(\Sigma)$, we call
$\varphi$ $T$-\emph{valid}, written $E
\models \varphi$, iff $A \models \varphi$ for all $A \in {\bf
  OSAlg}_{(\Sigma,E)}$. We call $\varphi$ $T$-\emph{satisfiable}
iff there exists $A \in {\bf
  OSAlg}_{(\Sigma,E)}$ with $\varphi$ satisfiable in A.  Note that $\varphi$ 
is $T$-\emph{valid} iff  $\neg \varphi$ 
is $T$-\emph{unsatisfiable}. 
The inference system in \cite{tarquinia} is
\emph{sound and complete} for OS  equational deduction, i.e., for
any OS equational theory
$(\Sigma,E)$, and $\Sigma$-equation $u=v$
we have an equivalence $E \vdash u=v \;\; \Leftrightarrow \;\; E
\models u=v$.  Deducibility $E \vdash u=v$ is abbreviated 
as $u =_{E} v$, called $E$-\emph{equality}.
An $E$-\emph{unifier} of a system of
$\Sigma$-equations, i.e., of a conjunction 
$\phi=u_{1}=v_{1} \, \wedge \, \ldots \, \wedge \,  u_{n}=v_{n}$ of
$\Sigma$-equations, is a substitution $\sigma$ such that $u_{i} \sigma=_{E} v_{i}\sigma$,
$1 \leq i \leq n$.  An $E$-\emph{unification algorithm} for
$(\Sigma,E)$ 
is an algorithm generating a \emph{complete set} of $E$-unifiers
$\mathit{Unif}_{E}(\phi)$ for any system of $\Sigma$ equations $\phi$,
where ``complete'' means that for any $E$-unifier $\sigma$ of $\phi$ there
is a $\tau \in \mathit{Unif}_{E}(\phi)$ and a substitution
$\rho$ such that $\sigma =_{E} 
(\tau \rho)|_{\mathit{dom}(\sigma) \cup \mathit{dom}(\tau)}$,
 where $=_{E}$ here means that for any variable $x$ we have 
$x\sigma =_{E} x (\tau \rho)|_{\mathit{dom}(\sigma) \cup \mathit{dom}(\tau)}$.
The algorithm is \emph{finitary} if it  always terminates
with a \emph{finite set} $\mathit{Unif}_{E}(\phi)$ for any $\phi$.

Given a set of equations $B$ used for deduction modulo $B$,
a preregular OS signature $\Sigma$ is called
$B$-\emph{preregular}\footnote{\label{broad-b-prereg} If $B = B_{0} \uplus U$, with $B_{0}$ 
associativity and/or commutativity axioms, and $U$ identity axioms,
the $B$-preregularity notion can be \emph{broadened} by requiring
only that:
(i) $\Sigma$ is $B_{0}$-preregular in the standard sense that 
$\mathit{ls}(u\rho)=\mathit{ls}(v\rho)$ for all $u=v \in B_{0}$
and substitutions $\rho$; and (ii) the axioms $U$ oriented as rules $\vec{U}$ are
\emph{sort-decreasing} in the sense explained in Section
\ref{convergent-backgr}.}
iff for each $u =v \in B$ and substitutions $\rho$,
$\mathit{ls}(u\rho)=\mathit{ls}(v\rho)$.

\subsection{Background on Convergent Theories and Constructors}
\label{convergent-backgr}

Given an order-sorted equational theory $\mathcal{E}=(\Sigma,E \cup
B)$, where $B$ is a collection of associativity and/or commutativity
and/or identity axioms and $\Sigma$ is $B$-preregular,
we can associate to it a corresponding
\emph{rewrite theory} \cite{unified-tcs}
$\vec{\mathcal{E}}=(\Sigma,B,\vec{E})$
by orienting the equations $E$ as left-to right rewrite
rules.  That is, each $(u=v) \in E$ is transformed
into a rewrite rule $u \rightarrow v$.  For simplicity I
recall here the case of unconditional equations;
for how conditional equations (whose conditions are
conjunctions of equalities)
are likewise transformed into
conditional rewrite rules see, e.g.,
\cite{lucas-meseguer-normal-th-JLAMP}.
The main purpose of the rewrite theory $\vec{\mathcal{E}}$ is to
make equational logic efficiently computable by
reducing the complex bidirectional reasoning with equations to
the much simpler unidirectional reasoning with rules under suitable
assumptions.  I assume familiarity with the notion of subterm
$t|_{p}$ of $t$ at a term position $p$ and of term replacement
$t[w]_{p}$ of $t|_{p}$ by $w$ at position $p$ (see, e.g., \cite{dershowitz-jouannaud}).
The rewrite relation $t \rightarrow_{\vec{E},B} t'$ holds
iff there is a subterm $t|_{p}$ of $t$, a rule $(u
\rightarrow v) \in \vec{E}$ and a substitution $\theta$
such that $u \theta =_{B} t|_{p}$, and $t'=t[v\theta]_{p}$.
$\rightarrow^{*}_{\vec{E},B}$ denotes the reflexive-transitive
closure of $\rightarrow_{\vec{E},B}$.
The requirements on $\vec{\mathcal{E}}$ that allow
reducing equational reasoning to rewriting are the following:
 (i) $\mathit{vars}(v) \subseteq \mathit{vars}(u)$;
(ii) \emph{sort-decreasingness}:  for each 
substitution $\theta$ we must have $\mathit{ls}(u \theta) \geq
\mathit{ls}(v \theta)$; (iii) \emph{strict} $B$-\emph{coherence}:
if $t_1 \rightarrow_{\vec{E},B} t'_1$ and $t_1 =_{B} t_2$
then there exists $t_2 \rightarrow_{\vec{E},B} t'_2$ with $t'_1 =_{B} t'_2$;
(iv) \emph{confluence} (resp. \emph{ground confluence}) modulo $B$:
for each term $t$ (resp. ground  term $t$)
if $t\rightarrow^{*}_{\vec{E},B}v_1$ and 
$t\rightarrow^{*}_{\vec{E},B}v_2$,
then there exist rewrite sequences
$v_1\rightarrow^{*}_{\vec{E},B}w_1$ and 
$v_2\rightarrow^{*}_{\vec{E},B}w_2$ such that $w_1 =_{B} w_2$;
(v) \emph{termination}: the relation $\rightarrow_{\vec{E},B}$ is
well-founded (for $\vec{E}$ conditional, we require \emph{operational termination}
\cite{lucas-meseguer-normal-th-JLAMP}).
 If $\vec{\mathcal{E}}$ satisfies
conditions (i)--(v) (resp. the same, but (iv) weakened to ground
confluence modulo $B$), then it is called \emph{convergent} (resp. \emph{ground
  convergent}).  The key point is that then, given a term
(resp. ground term) $t$,
all terminating rewrite sequences $t \rightarrow^{*}_{\vec{E},B}w$ 
end in a term $w$, denoted  $t!_{\vec{\mathcal{E}}}$,
that is unique up to $B$-equality,
and it is called $t$'s \emph{canonical form}.
Three major results then follow for the ground convergent case:
(1) for any ground terms $t,t'$ we have $t =_{E \cup B} t'$
iff $t!_{\vec{\mathcal{E}}} =_{B} t'!_{\vec{\mathcal{E}}}$, 
(2) the $B$-equivalence classes of canonical forms are the elements of the \emph{canonical term
  algebra} $C_{\Sigma/E,B}$, where for each $f:s_1 \ldots s_n
\rightarrow s$ in $\Sigma$ and $B$-equivalence classes of  canonical terms $[t_1], \ldots, [t_n]$
with $ls(t_i) \leq s_i$ the operation $f_{C_{\Sigma/E,B}}$ is defined
by the identity:  $f_{C_{\Sigma/E,B}}([t_1] \ldots [t_n]) = [f(t_1\ldots t_n)!_{\vec{\mathcal{E}}}]$,
and (3) we have an isomorphism 
$T_{\mathcal{E}}\cong C_{\Sigma/E,B}$.

A ground convergent rewrite theory 
$\vec{\mathcal{E}}=(\Sigma,B,\vec{E})$ 
is called \emph{sufficiently complete} with respect to a
subsignature  $\Omega$, whose operators are then called \emph{constructors},
iff for each ground $\Sigma$-term $t$, $t!_{\vec{\mathcal{E}}} \in T_{\Omega}$.
Furthermore, for $\vec{\mathcal{E}}=(\Sigma,B,\vec{E})$ 
sufficiently complete w.r.t. $\Omega$,
a ground convergent rewrite subtheory
$(\Omega,B_{\Omega},\vec{E}_{\Omega}) \subseteq (\Sigma,B,\vec{E})$
is called a \emph{constructor subspecification} iff
 $T_{\mathcal{E}}|_{\Omega} \cong T_{\Omega/ E_{\Omega} \cup
  B_{\Omega}}$.  If $E_{\Omega} = \emptyset$, then
$\Omega$ is called a signature
of \emph{free constructors modulo axioms} $B_{\Omega}$.

\subsection{Background on Rewriting Logic} \label{RWL-BACKGR}

Convergent rewrite theories are ideal to specify \emph{deterministic}
systems, where computations have a unique final result.  To formally
specify \emph{non-deterministic} and possibly concurrent
systems, rewriting logic \cite{unified-tcs,20-years} is a 
suitable framework that contains convergent rewrite theories
as a special case.  A \emph{rewrite theory} is a 
triple $\mathcal{R}=(\Sigma,E \cup B,R)$, where 
$(\Sigma,E \cup B)$ is an equational theory, and $R$ is a
collection of, possibly conditional, rewrite rules.  The intended
meaning of $\mathcal{R}$ is to specify a system whose
states are elements of the initial algebra $T_{\Sigma/E \cup B}$,
and whose non-deterministic \emph{transitions} are specified by
the rules $R$.  

Since the states are $E \cup B$-equivalence classes,
conceptually, rewriting with $\mathcal{R}$ is rewriting modulo
$E \cup B$, i.e., the relation $\rightarrow_{R/E \cup B}$.  However,
$\rightarrow_{R/E \cup B}$ is a hard to implement, complicated relation.
In practice, the effect of rewriting with $\rightarrow_{R/E \cup B}$
is achieved by a much simpler relation $\rightarrow_{R\!:\!E ,B}$
under the assumptions that: (i) $\vec{E}$ is convergent modulo $B$,
and (ii) the rules $R$ are ``coherent'' with $\vec{E}$ modulo $B$
\cite{viry-tcs,crc-alp}.  $\rightarrow_{R\!:\!E ,B}$ is
the composed relation $\rightarrow_{R\!:\!E ,B} \; = \;
\rightarrow_{\vec{E},B}! \; ; \; \rightarrow_{R,B} \; ; \;
\rightarrow_{\vec{E},B}!$.  One can think of $\rightarrow_{R\!:\!E ,B}$
 as executing the relation $\rightarrow_{R,B} \cup \rightarrow_{\vec{E},B}$ with an eager
strategy for $\rightarrow_{\vec{E},B}$.

In this paper, the state transitions that we will be interested
in will be deduction steps in inductive reasoning, so that the
``states'' are formulas in a deductive proof.  Two different
 kinds of rewrite theories will be particularly useful for this purpose.

First, we will use \emph{object-level} rewrite theories
associated to specific \emph{inductive goals} stating some property
of the initial algebra $T_{\Sigma/E \cup B}$ of an equational
theory $(\Sigma,E \cup B)$.  These theories
will be of the general form $\mathcal{R}=(\Sigma,E \cup B,H)$,
where $H$ are the \emph{induction hypotheses} associated to
the goal we want to prove, used as rewrite rules.
In their simplest form, hypotheses in $H$ may be equations
or conditional equations.  So, why not using instead the
equational theory $(\Sigma,E \cup H \cup B)$?  The key reason
is that the equations $E$ will always be assumed to be 
\emph{ground convergent} modulo $B$ when oriented as
rules $\vec{E}$, but in general the rules $H$ need not be
convergent.  Nevertheless, as discussed
in Section \ref{IndH-REW}, they may be oriented in
a manner ensuring their termination.  The key advantage of
the rewrite theory $\mathcal{R}=(\Sigma,E \cup B,H)$
versus the equational theory $(\Sigma,E \cup H \cup B)$,
is that $\mathcal{R}$, by sharply distinguishing between $E$ and $H$,
can consider a variety of  non-deterministic ways in which
the induction hypotheses $H$ can be applied.  This is
important for inductive theorem proving, because: 
 (i) in a terminating rewrite theory, 
there is in general no such thing as \emph{the} normal form
of a term; in general there will be a \emph{set} of such normal
forms; and (ii) in fact, the signature $\Sigma$ that will be used
will not be that of the given equational theory
$\Sigma$, but, as explained in Section \ref{EQ-PREDS},
its extension to a signature $\Sigma^{=}$ whose terms
denote \emph{quantifier-free} $\Sigma$-formulas.
That is,  $\mathcal{R}$ will not be rewriting $\Sigma$-terms,
but $\Sigma$-formulas, and it may very well be, that one
of the normal forms of the goal formula $\varphi$ we
are rewriting is $\top$, so that the goal is proved, while
another normal form $\varphi'$ has only made partial progress
towards proving $\varphi$.  By considering  the various
ways in which the induction hypotheses $H$ can be applied,
we maximize the chances of success in proving $\varphi$.

There is, however, a second way in which rewriting logic
will play an important role in this work \emph{at the meta-level}.
The point is that rewriting logic is a flexible \emph{logical framework}
in which many other logics can be naturally represented
\cite{rwl-fwk,20-years}.  In such representations,
an inference rule in a given logic $\mathcal{L}$ becomes
a possibly conditional rewrite rule.  In particular, the
inference rules in the inductive inference system presented in 
Section \ref{ind-inf-rules} will be the rewrite rules
of a \emph{meta-level} rewrite theory.  Why meta-level?
Because such inference rules are \emph{parametric}
on the given equational theory $(\Sigma,E \cup B)$
we are reasoning on and, furthermore,
may \emph{transform} $(\Sigma,E \cup B)$ into
another theory by adding, for example, induction hypotheses
to a new goal.  This is not just a theoretically pleasing way to
\emph{think} about the inductive inference system
of Section \ref{ind-inf-rules}, but
 an eminently practical way
to \emph{implement} such an inference system in Maude
using Maude's efficient support of meta-level reasoning
through its \texttt{META-LEVEL} module.  
 Most of the inference rules in this paper have  already
been  implemented in Maude's {\bf NuITP} in exactly this way.

\subsection{Narrowing in a Nutshell}

Narrowing modulo axioms $B$ generalizes
rewriting modulo $B$ in the following, natural
way.  When we rewrite a term $t$ at position $p$
with a rule $l \rightarrow r$, the subterm $t|_{p}$ 
\emph{matches} modulo $B$ the lefthand side $l$.
Instead, when we narrow a term $t$ at position $p$
with a rule $l \rightarrow r$ in $\vec{E}$, the subterm $t|_{p}$ 
\emph{unifies} modulo $B$ with the lefthand side $l$,
a more general notion which allows us to \emph{symbolically
evaluate} the term $t$ with the given oriented
equations $\vec{E}$. That is, ``narrowing'' is just technical
jargon for symbolic evaluation with equations.
 Here is the precise definition:
Given a ground convergent theory $\vec{\mathcal{E}}=(\Sigma,B,\vec{E})$,
the \emph{narrowing relation} $t \leadsto^{\alpha}_{\vec{E},B}u$
holds between $\Sigma$-terms $t$ and $u$
iff there is a non-variable position $p$ in $t$, a rule
$l \rightarrow r$ in $\vec{E}$, and
a $B$-unifier $\alpha$ of the equation $t|_{p}=l$
such that $u=(t[r]_{p})\alpha$, where we assume that the variables of
$l \rightarrow r$ have been renamed if necessary so that
no variables are shared with those of $t$.  Since in this
work we will consider ground convergent rewrite theories
that are sufficiently complete with respect to
free constructors modulo axioms $B_{\Omega} \subseteq B$,
we can discuss a key property of narrowing, called ``lifting,''  in this
specific setting.  For any non-constructor $\Sigma$ term $t$ and ground
constructor substitution $\rho$, the ground term $t\rho$
can always  be rewritten in one step to, say, $t\rho \rightarrow_{\vec{E},B}v$.
 But any such rewriting computation is \emph{covered} as
an instance by a corresponding narrowing step
$t \leadsto^{\alpha}_{\vec{E},B}u$, in the sense that there
is a ground constructor substitution $\mu$ such that
$u \mu =_{B} v$.  

Note that there may be infinitely many
such $\rho$'s instantiating $t$, and therefore infinitely
many one-step rewrites $t\rho \rightarrow_{\vec{E},B}v$, but
\emph{all of them are simultaneously covered as instances}
by a \emph{finite} number of one-step narrowing 
computations from $t$ of the form $t \leadsto^{\alpha}_{\vec{E},B}u$,  
assuming that: (i) $B$ has a finitary unification algorithm; and (ii) $\vec{E}$ is
a finite set. As we shall see in Section \ref{ind-inf-rules}, the
${\bf NS}$ and ${\bf NI}$ inference rules exploit this lifting
property of narrowing for inductive reasoning purposes.
More precisely, they do so using so-called \emph{constrained narrowing}
with possibly conditional rules $\vec{E}$, in the sense
of \cite{conditional-narrowing-SCP}.

\subsection{Equationally Defined Equality Predicates in a Nutshell} 
\label{EQ-PREDS}

Equationally-defined equality predicates \cite{DBLP:journals/scp/GutierrezMR15} achieve a
remarkable feat for QF formulas in
initial algebras under reasonable executability conditions:
they \emph{reduce} first-order logic
satisfaction of QF formulas in an initial algebra
$T_{\mathcal{E}}$ to \emph{purely equational reasoning}.  This is achieved
by a theory transformation\footnote{In
  \cite{DBLP:journals/scp/GutierrezMR15} the equality predicate is
  denoted $\_\sim\_$, instead of the standard notation
  $\_=\_$.  Here we use  $\_=\_$ throughout.  This has the
  pleasant effect that a QF formula $\varphi$ is both
a formula and a Boolean expression, which of course amounts to mechanizing
by equational rewriting the Tarskian semantics of QF formulas in first-order-logic for
initial algebras.}
 $\mathcal{E} \mapsto \mathcal{E}^{=}$ such that, provided: (i)
$\mathcal{E} = (\Sigma,E \cup B)$, with $B$ any combination
of associativity and/or commutativity axioms, is ground convergent 
and operationally terminating modulo $B$, and 
is sufficiently complete with respect to a subsignature $\Omega$
of constructors such that $T_{\mathcal{E}}|_{\Omega} \cong
T_{\Omega/B_{\Omega}}$, with $B_{\Omega} \subseteq B$, then:
(ii) $\mathcal{E}^{=}$ is ground convergent operationally terminating
and sufficiently complete and protects\footnote{That
is,  $T_{\mathcal{E}^{=}}|_{\Sigma} \cong T_{\mathcal{E}}$, and
there is a subtheory inclusion $\mathcal{B} \subseteq \mathcal{E}^{=}$,
with $\mathcal{B}$ having signature $\Sigma_{\mathcal{B}}$ and only sort
 $\mathit{NewBool}$ such that: (i)
 $T_{\mathcal{B}}$ the initial algebra of the Booleans,
and (ii) $T_{\mathcal{E}^{=}}|_{\Sigma_{\mathcal{B}}} \cong T_{\mathcal{B}}$.}
both $\mathcal{E}$ and 
 a new copy of the Booleans,
of sort $\mathit{NewBool}$, where true and false are respectively
denoted
$\top$, $\bot$, conjunction and disjunction are respectively denoted
$\wedge$, $\vee$, negation is denoted $\neg$, and a
QF $\Sigma$-formula $\varphi$ is a term of sort $\mathit{NewBool}$.
Furthermore, for any ground QF $\Sigma$-formulas $\varphi,\psi$
we have:
\[T_{\mathcal{E}} \models \varphi 
\;\;\; \;\;\; \mathit{iff} \;\;\; \;\;\;
\varphi!_{\vec{\mathcal{E}}^{=}}= \top.
\]
and, in particular,
\[T_{\mathcal{E}} \models \varphi \;\; \Leftrightarrow \;\; 
  \psi
\;\;\; \;\;\; \mathit{iff} \;\;\; \;\;\;
(\varphi \Leftrightarrow  \psi)!_{\vec{\mathcal{E}}^{=}}= \top
\;\;\; \;\;\; \mathit{iff} \;\;\; \;\;\;
T_{\mathcal{E}^{=}} \models \varphi = \psi
\]
where $\varphi = \psi$ is an equality of Boolean terms of sort
$\mathit{NewBool}$.  Therefore, one can \emph{decide} both the validity 
and the semantic equivalence of ground QF $\Sigma$-formulas in
$T_{\mathcal{E}}$ by reducing them to canonical form with the ground convergent rules in
$\vec{\mathcal{E}}^{=}$.  In particular, and this is a  property that
will be systematically exploited in Section \ref{ind-inf-rules}, 
for any QF $\Sigma$-formula  $\varphi$, possibly  with variables, we have
$T_{\mathcal{E}} \models \varphi \Leftrightarrow (\varphi!_{\vec{\mathcal{E}}^{=}})$, where
$\varphi!_{\vec{\mathcal{E}}^{=}}$ may be a much simpler formula, 
sometimes just $\top$ or $\bot$. Since the 
$\mathcal{E} \mapsto \mathcal{E}^{=}$ transformation excludes identity axioms
from $\mathcal{E}$, one lingering doubt is what to do when
$\mathcal{E}$ has also identity axioms $U$.  The answer is that we can
use the semantics-preserving theory transformation $\mathcal{E}
\mapsto \mathcal{E}_{U}$
defined in \cite{frocos09}, which
 turns identity axioms $U$ into rules $\vec{U}$ and  preserves ground convergence,
to reduce to the case $U= \emptyset$,
provided we have $T_{\mathcal{E}_{U}}|_{\Omega} \cong
T_{\Omega/B_{\Omega}}$.

 Since the rewrite theory 
$\vec{\mathcal{E}}_{U}^{=}$
will be a key workhorse for the 
simplification rules described in Section
\ref{ind-inf-rules}, let
us describe it in more detail, as well as  the above-mentioned 
$\vec{\mathcal{E}}
\mapsto \vec{\mathcal{E}}_{U}$ transformation
with which the $\vec{\mathcal{E}} \mapsto \vec{\mathcal{E}}^{=}$
transformation is composed.
That is, let us look in more detail at the
composition of transformations:
\[\vec{\mathcal{E}}
\mapsto \vec{\mathcal{E}}_{U} \mapsto \vec{\mathcal{E}}_{U}^{=}.
\]
The rewrite theory $\vec{\mathcal{E}}_{_U}$ 
is defined as follows.  $\mathcal{E}$'s axioms $B$
can be decomposed as
$B=B_0\cup U$, where $U$ are the unit axioms,
and $B_0$ are the remaining associative and/or commutative axioms.
Then,
$\vec{\mathcal{E}}_{_U}=
(\Sigma,B_0,\vec{E}_U\cup\vec{U})$
is the semantically equivalent rewrite theory obtained from 
$\vec{\mathcal{E}} =(\Sigma,B,\vec{E})$
by: (i) adding
the axioms $U$ as rules $\vec{U}$; and (ii)
transforming the rules $\vec{E}$ into $\vec{E}_U$
by mapping each rule $(l \rightarrow r) \in \vec{E}$ 
to the set of rules $\{l_{i} \rightarrow r \alpha_{i} \mid 1 \leq i
\leq n\}$, where $\{(l_{i},\alpha_{i})\}_{1 \leq i \leq n}$
is the finite set of $\vec{U},B_0$-\emph{variants} of
$l$  (the notion of variant is explained
in Section \ref{var-sat-nut}).
  For example, if $\_,\_$ is an ACU multiset union
operator of sort $\mathit{MSet}$
with identity $\emptyset$ and  with subsort 
$\mathit{Elt}$ of elements, a membership
rewrite rule $x \in x,S \rightarrow \mathit{true}$
modulo ACU with $x$ of sort $\mathit{Elt}$
and $S$ of sort $\mathit{MSet}$
 is mapped to the set of rules
$\{x \in x \rightarrow \mathit{true}, x \in x,S \rightarrow \mathit{true}\}$
modulo AC.
Since these theories are semantically equivalent, we have
$T_{\mathcal{E}} \cong
T_{\mathcal{E}_{U}}$.
The second step
$\vec{\mathcal{E}}_{U} \mapsto \vec{\mathcal{E}}_{U}^{=}$
then adds the equationally-defined equality predicates
to $\vec{\mathcal{E}}_{U}$ in the usual manner
specified in \cite{DBLP:journals/scp/GutierrezMR15}.
The easiest equality predicate defining equations to illustrate 
are those for free constructors.  For example, for the naturals
in Peano notation they are: $(n = n) = \mathit{true}$, $(s(n) = s(m)) = (n =
m)$, and $(s(n)=0)= \mathit{false}$ ($\_=\_$  is declared as 
a \emph{commutative} operator of (fresh) sort $\mathit{NewBool}$ ).

\subsection{Order-Sorted Congruence Closure in a Nutshell} \label{OS-CC}

Let $(\Sigma,B)$ be an order-sorted theory where the
axioms $B$ are only associativity-commutativity (AC) axioms
and $\Sigma$ is $B$-preregular.  Now let $\Gamma$ be 
a set of ground $\Sigma$-equations.  The question is:
is $B \cup \Gamma$-equality \emph{decidable}?
(when $\Sigma$ has just a binary AC operator, this is called
the ``word problem for commutative semigroups'').
The answer, provided in \cite{DBLP:conf/fossacs/Meseguer16},
is yes!  We can perform a ground Knuth-Bendix
completion of $\Gamma$ into an equivalent (modulo $B$) set of
ground rewrite rules $\mathit{cc}^{\succ}_{B}(\Gamma)$ that is
convergent modulo $B$, so that 
$t =_{B \cup \Gamma} t'$ iff
$t!_{\vec{\mathcal{E}}_{\mathit{cc}^{\succ}_{B}(\Gamma)}} =_{B^{\square}}
t'!_{\vec{\mathcal{E}}_{\mathit{cc}^{\succ}_{B}(\Gamma)}}$,
where $\vec{\mathcal{E}}_{\mathit{cc}^{\succ}_{B}(\Gamma)}$
is the rewrite theory
$\vec{\mathcal{E}}_{\mathit{cc}^{\succ}_{B}(\Gamma)}=(\Sigma^{\square},B^{\square},
\mathit{cc}^{\succ}_{B}(\Gamma))$, with
$\Sigma^\square$ the ``kind completion'' of $\Sigma$,
which is automatically computed by Maude by adding a
so-called ``kind'' sort $\top_{[s]}$ above each connected component
$[s] \in \widehat{S}$ of $(S,\leq)$ and lifting each
operation $f:s_1\cdots s_n\rightarrow s$ to its kinded version
$f : \top_{[s_1]} \cdots \top_{[s_n]} \rightarrow \top_{[s]}$,
and where $B^{\square}$ is obtained from $B$ by replacing each variable
of sort $s$ in $B$ by a corresponding variable of sort $\top_{[s]}$.
The symbol $\succ$ in $\mathit{cc}^{\succ}_{B}(\Gamma)$ is a total
well-founded order on ground terms modulo $B$ that is used to orient
the equations into rules.  In all our uses $\succ$ will be
an AC \emph{recursive path order}
(RPO)\footnote{The \emph{recursive path order} (RPO) is a
  well-founded simplification order on terms parametric on an order
on function symbols  (see, e.g., \cite{baader-nipkow}).
RPO is a \emph{total} order on terms if the order on symbols is so.
RPO has been extended to RPO modulo AC in various papers, including
\cite{DBLP:journals/iandc/Rubio02}.}
based on a total order on function symbols
\cite{DBLP:journals/iandc/Rubio02}.  The need to
extend $\Sigma$ to $\Sigma^\square$ is due to the fact that
some terms in $\mathit{cc}^{\succ}_{B}(\Gamma)$ may be
$\Sigma^\square$-terms that fail to be $\Sigma$-terms.

Extending the above congruence closure framework
from $AC$ axioms to axioms $B$ that contain any combination of
 associativity and/or
commutativity axioms is quite smooth, but requires a crucial caveat:
if some operator $f \in \Sigma$ is only associative, then
$\mathit{cc}^{\succ}_{B}(\Gamma)$ may be an infinite set that cannot
be computed in practice.  This is due to the  undecidability of
the ``word problem for semigroups.''  The Maude implementation
of $\mathit{cc}^{\succ}_{B}(\Gamma)$ used in the {\bf NuITP}
supports this more general combination of axioms $B$; but when some
$f \in \Sigma$ that is only associative appears somewhere in $\Gamma$, 
a bound is imposed on the number of iterations of the ground completion cycle.
This means
that, if the completion process has not terminated before the bound is
reached, the above decidability result does not hold.  However,
for  inductive simplification purposes it is enough to obtain a
set of ground rules $\mathit{cc}^{\succ}_{B}(\Gamma)$ that is
guaranteed to be \emph{terminating} modulo $B$, and that, thanks to the, perhaps partial,
completion, ``approximates convergence'' much better than the original
$\Gamma$.

\subsection{Contextual Rewriting in a Nutshell} \label{CTX-RW}

Let $(\Sigma,B)$ be an order-sorted theory where the
axioms $B$  contain any combination of
 associativity and/or
commutativity axioms.  What can we do to prove that in $(\Sigma,B)$
an implication of the form $\Gamma \rightarrow u = v$, with variables
$\mathit{vars}(\Gamma \rightarrow u = v)=X$ and $\Gamma$ a conjunction
of equations, is valid?  We can:
 (i) add to $\Sigma$ a set of fresh new constants
$\overline{X}$ obtained from $X$ by changing each $x \in X$ into a
constant $\overline{x} \in \overline{X}$ of same sort as $x$, (ii)
replace the conjunction $\Gamma$ by the ground conjunction
$\overline{\Gamma}$ obtained by replacing each $x \in X$ in $\Gamma$ by its
corresponding $\overline{x} \in \overline{X}$, and obtaining likewise
the ground equation $\overline{u} = \overline{v}$.  
By the Lemma of Constants and the Deduction Theorem we have \cite{tarquinia}:
\[(\Sigma,B) \vdash \Gamma \rightarrow u = v
\;\;\; \Leftrightarrow \;\;\;
(\Sigma(\overline{X}), \cup B \cup \{\overline{\Gamma}\}) \vdash \overline{u} = \overline{v}
\]
where $\Sigma(\overline{X})$ is obtained from $\Sigma$ by adding the
fresh new constants $\overline{X}$, and
$\{\overline{\Gamma}\}$ denotes the set of ground equations
associated to the conjunction
$\overline{\Gamma}$.
 But, disregarding the difference
between $\overline{\Gamma}$ and $\{\overline{\Gamma}\}$,
and realizing that
$\mathit{cc}^{\succ}_{B}(\overline{\Gamma})$ is equivalent modulo
$B^{\square}$ to $\overline{\Gamma}$, if we can prove
$\overline{u}!_{\vec{\mathcal{E}}_{\mathit{cc}^{\succ}_{B}(\overline{\Gamma})}}
=_{B^{\square}}
\overline{v}!_{\vec{\mathcal{E}}_{\mathit{cc}^{\succ}_{B}(\overline{\Gamma})}}$,
then we have proved
$(\Sigma(\overline{X}), \cup B \cup \{\overline{\Gamma}\}) \vdash
\overline{u} = \overline{v}$ and therefore
$(\Sigma,B) \vdash \Gamma \rightarrow u = v$, where
$\vec{\mathcal{E}}_{\mathit{cc}^{\succ}_{B}(\overline{\Gamma})}=(\Sigma(\overline{X})^{\square},B^{\square},
\mathit{cc}^{\succ}_{B}(\overline{\Gamma}))$.  Furthermore, if
$\vec{\mathcal{E}}_{\mathit{cc}^{\succ}_{B}(\overline{\Gamma})}$ is
\emph{convergent} (this may only fail to be the case if some
$f \in \Sigma$ is associative but not commutative) this is an
\emph{equivalence}: $(\Sigma,B) \vdash \Gamma \rightarrow u = v$ iff
$\overline{u}!_{\vec{\mathcal{E}}_{\mathit{cc}^{\succ}_{B}(\overline{\Gamma})}}
=_{B^{\square}}
\overline{v}!_{\vec{\mathcal{E}}_{\mathit{cc}^{\succ}_{B}(\overline{\Gamma})}}$,
and therefore a decision procedure.
Rewriting with
$\vec{\mathcal{E}}_{\mathit{cc}^{\succ}_{B}(\overline{\Gamma})}$ is
called \emph{contextual rewriting} \cite{DBLP:journals/fuin/Zhang95}, since we are using the ``context''
$\overline{\Gamma}$ suitably transformed into
$\mathit{cc}^{\succ}_{B}(\overline{\Gamma})$.  Many increasingly more
powerful variations on this method are possible.  For example, we may
replace $(\Sigma,B)$ by $\mathcal{E}=(\Sigma,E \cup B)$, with
$\vec{\mathcal{E}}$ ground convergent and then rewrite  $\overline{u}
= \overline{v}$ not only with
$\mathit{cc}^{\succ}_{B}(\overline{\Gamma})$ but also with
$\vec{\mathcal{E}}$.  Likewise, when performing inductive reasoning
we may consider not just a ground
equation $\overline{u} = \overline{v}$, but a ground QF formula
$\overline{\varphi}$ (whose premise is $\overline{\Gamma}$),
 and rewrite $\overline{\varphi}$ not only with
$\mathit{cc}^{\succ}_{B}(\overline{\Gamma})$ but also with
$\vec{\mathcal{E}_{U}}^{=}$.

\subsection{Variant Unification and Variant Satisfiability in a Nutshell} \label{var-sat-nut}

Consider an order-sorted equational theory $\mathcal{E}=(\Sigma,E \cup
B)$ such that $\vec{\mathcal{E}}$ is ground convergent and
suppose we have a constructor subspecification
$(\Omega,B_{\Omega},\emptyset) \subseteq (\Sigma,B,\vec{E})$,
so that $T_{\mathcal{E}}|_{\Omega} \cong T_{\Omega/ B_{\Omega}}$.
Suppose, further, that we have a subtheory
$\mathcal{E}_{1} \subseteq \mathcal{E}$
such that: (i) $\vec{\mathcal{E}}_{1}$ is convergent and has the finite
variant property\footnote{An 
$\vec{\mathcal{E}}_{1}$-\emph{variant}
(or $\vec{E}_{1},B_{1}$-\emph{variant}) of a 
$\Sigma_{1}$-term $t$ is a pair $(v,\theta)$, where
$\theta$ is a substitution in canonical form, i.e.,
$\theta = \theta!_{\vec{\mathcal{E}}_{1}}$, and
$v =_{B_{1}} (t\theta)!_{\vec{\mathcal{E}}_{1}}$.
$\vec{\mathcal{E}_{1}}$ is FVP iff any such $t$ has
a finite set of variants
$\{(u_{1},\alpha_{1}),\ldots, (u_{n},\alpha_{n})\}$ which are
``most general possible'' in the precise
sense that for any variant $(v,\theta)$  of $t$
there exist $i$, $1 \leq i \leq n$, and
substitution $\gamma$ such that:
(i) $v =_{B_{1}} u_{i}\gamma$, and (ii) 
$\theta =_{B_{1}} \alpha_{i}\gamma$.}
 (FVP) \cite{variant-JLAP}, 
(ii) $\vec{\mathcal{E}_{1}}$ can be ``sandwiched''
between $\vec{\mathcal{E}}$ and the constructors
as $(\Omega,B_{\Omega},\emptyset) \subseteq (\Sigma_{1},B_{1},\vec{E}_{1})
\subseteq (\Sigma,B,\vec{E})$, 
(iii) $B_{1}$ can involve any combination 
of associativity and/or commutativity and/or identity 
axioms;\footnote{\label{var-A-foot} The notion of 
an FVP theory has been recently extended  in
  \cite{DBLP:journals/jlap/Meseguer23} 
to allow also axioms
of associativity without commutativity in $B_{1}$.}
and  (iv)
$T_{\mathcal{E}}|_{\Sigma_{1}} \cong T_{\mathcal{E}_{1}}$,
which forces $T_{\mathcal{E}_{1}}|_{\Omega} \cong T_{\Omega/B_{\Omega}}$.

Then, 
if $\Gamma$ is a conjunction of $\Sigma_{1}$-equations, since
$T_{\mathcal{E}}|_{\Omega} \cong T_{\Omega/ B_{\Omega}}$,
a ground $\mathcal{E}$-unifier $\rho$ of $\Gamma$ is always 
$\mathcal{E}$-equivalent to its normal form
$\rho!_{\vec{\mathcal{E}}}$, which, by definition,
is the substitution
$\lambda x \in \mathit{dom}(\rho).\; \; \rho(x)!_{\vec{\mathcal{E}}}$.
Therefore, by ground convergence and sufficient completeness,
 $\rho!_{\vec{\mathcal{E}}}$ is a ground
$\Omega$-substitution, that is, a \emph{constructor}
ground $\mathcal{E}$-unifier of $\Gamma$.
But since $\Omega \subseteq \Sigma_{1}$
and $T_{\mathcal{E}}|_{\Sigma_{1}} \cong T_{\mathcal{E}_{1}}$,
which implies $C_{\Sigma/E,B}|_{\Sigma_{1}} \cong C_{\Sigma_{1}/E_{1},B_{1}}$,
this makes $\rho!_{\vec{\mathcal{E}}}$ a \emph{constructor}
ground $\mathcal{E}_{1}$-unifier of $\Gamma$.
But, under the assumptions for $B_{1}$, by the results in
\cite{var-sat-scp,skeirik-meseguer-var-sat-JLAMP,DBLP:journals/jlap/Meseguer23}
one can compute a complete,
finite\footnote{When $B_{1}$ contains associativity without 
commutativity axioms, the number of unifiers in 
$\mathit{Unif}^{\Omega}_{\mathcal{E}_{1}}(\Gamma)$ may be infinite.
However, Maude's $B_{1}$-unification algorithm
used in the process of computing $\mathit{Unif}^{\Omega}_{\mathcal{E}_{1}}(\Gamma)$
 will always produce a finite set of solutions, together
with an incompleteness warning in case some solutions are missing.
This warning will imply that the finite set of unifiers
 $\mathit{Unif}^{\Omega}_{\mathcal{E}_{1}}(\Gamma)$
is incomplete.}
set $\mathit{Unif}^{\Omega}_{\mathcal{E}_{1}}(\Gamma)$
of constructor $\mathcal{E}_{1}$-unifiers of $\Gamma$,
so that any constructor 
ground $\mathcal{E}_{1}$-unifier of $\Gamma$, and therefore up to 
$\mathcal{E}$-equivalence any ground $\mathcal{E}$-unifier of
$\Gamma$, is an instance of a unifier in $\mathit{Unif}^{\Omega}_{\mathcal{E}_{1}}(\Gamma)$.

Note, furthermore, that if  $B_{1}$ does not contain 
 associativity without  commutativity axioms,
$(\Omega,B_{\Omega})$ is an OS-compact theory \cite{var-sat-scp}
(resp. a weakly OS-compact theory if $B_{1}$ does contain associativity without 
commutativity axioms and $(\Omega,B_{\Omega})$ satisfies some additional assumptions
 \cite{DBLP:journals/jlap/Meseguer23}).
Therefore, again by \cite{var-sat-scp} (resp.  \cite{DBLP:journals/jlap/Meseguer23})
satisfiability (and therefore validity) of any
QF $\Sigma_{1}$-formula in $T_{\mathcal{E}_{1}}$, and by
$T_{\mathcal{E}}|_{\Sigma_{1}} \cong T_{\mathcal{E}_{1}}$ also in
$T_{\mathcal{E}}$, is \emph{decidable}
(resp. has a partial decidability algorithm under some 
assumptions as explained in \cite{DBLP:journals/jlap/Meseguer23}).
In the inductive inference system
presented in Section \ref{ind-inf-rules}
this decidability is exploited by the
{\bf VARSAT} simplification rule.

\subsection{Generator Sets and Function Subcalls} \label{GS-FubCall-Subsection}

Generator sets generalize standard structural induction on the
constructors of a sort, in the sense that structural
induction inducts on a specific generator set.
They are particularly useful for inductive
reasoning when constructors obey structural axioms $B$ including
associativity or associativity-commutativity for which structural
induction may prove ill-suited; but, even for free
constructors,  structural
induction may still be ill-suited for inducting on a  variable under
a function equationally defined in  a manner other than
by standard primitive recursion.  

A \emph{generator set} for a sort $s$ is just a set of constructor terms
patterns ---of sort $s$ or smaller--- such that, up to $B$-equality,
 any ground constructor
term of sort $s$ is a ground substitution instance of one of the
patterns in the generator set.  For example, structural induction on
the Peano natural numbers is achieved by the generator set $\{0,s(n)\}$
 for sort  $\mathit{Nat}$, but the alternative generator set
$\{0,s(0),s(s(n))\}$ may be better suited to reason about functions
defined by the two ``base cases'' $0$ and $s(0)$, and inducting on
$s(s(n))$, such as, for example, addition defined by the equations: $x+0=x,\;\;
x+s(0)=s(x),\;\; x+s(s(y)) = s(s(x+y))$, which can add numbers roughly
twice as fast as the standard definition of addition.  

The usefulness of generator sets is even greater
for constructors that satisfy structural axioms
such as $A$ or $AC$.  Let us see an example.

\begin{example} \label{multiset-ctors}
Let us consider a generator set for $\_\cup\_$
a (subsort-overloaded) \emph{associative-commutative} ($AC$)
 multiset union operator on multisets of elements, involving
subsort inclusions $\mathit{Elt} < \mathit{NeMSet} < \mathit{MSet}$,
with constructors $\Omega$ including $\emptyset$
of sort $\mathit{MSet}$ and the overloading 
$\_\cup\_ : \mathit{NeMSet} \;\; \mathit{NeMSet} \rightarrow \mathit{NeMSet}$
of $\_\cup\_$ for
the sort $\mathit{NeMSet}$ of non-empty multisets,
whereas the overloading 
$\_\cup\_ : \mathit{MSet} \;\; \mathit{MSet} \rightarrow \mathit{MSet}$
of  $\_\cup\_$  for the sort $\mathit{MSet}$
 is a \emph{defined symbol}, namely,
defined by the equation:
$S \cup \emptyset = S$,
with $S$ of sort $\mathit{MSet}$.
 Then,
the set $\{\emptyset,x,U \cup
V\}$, with $x$ of sort $\mathit{Elt}$ and $U,V$ of sort $\mathit{NeMSet}$,
can be seen as the generator set for sort $\mathit{MSet}$
corresponding to
structural induction.  But the generator set 
$\{\emptyset,x,y \cup U\}$, with $y$ also of sort $\mathit{Elt}$,
is much better suited to reason about a multiset membership
predicate $\_\in\_$ defined by equations including, among others,
$x \in \emptyset = \mathit{false}$, 
$x \in x = \mathit{true}$,
$x \in x \cup U = \mathit{true}$.
\end{example}

If the set $B$ of
structural axioms among constructors $\Omega$
decomposes as a disjoint union $B=B_{0} \uplus  U$,
with $B_{0}$ associative and/or commutative axioms,
and $U$ unit (identity) axioms,  it is better to consider
generator sets modulo $B_{0}$, instead of modulo $B$.  The reason for
this choice is that the axioms $B_{0}$ are term-size-preserving,
but the axioms $U$ are not so, and this can cause technical
complications when reasoning inductively about proper constructor subterms
of a term.  For example, if we had added an axiom $U_{\emptyset}$
 for $\emptyset$ as
a \emph{unit element} for $\_\cup\_$ instead of specifying
$S \cup \emptyset = S$ as a
 defining equation,
then $\{\emptyset,x\cup U\}$ would
indeed be a generator set modulo $\mathit{ACU}$ for sort $\mathit{MSet}$,
 but a term like
$a \cup \emptyset$  \emph{collapses} modulo $B$
to its proper subterm $a$, and is, for that reason, problematic for
inductive reasoning on \emph{subterms}.  Here is the precise
definition that will be used
for inductive reasoning purposes:

\begin{definition}  For $\Omega$ an order-sorted signature
of constructors satisfying axioms $B$ decomposable
as $B=B_{0} \uplus  U$, as explained above, and $s$ a sort in
$\Omega$, a  $B_{0}$-\emph{generator set} for sort $s$ is
a finite set of terms $\{u_{1},\ldots, u_{k}\}$, with
$u_{1},\ldots, u_{k} \in T_{\Omega}(X)_{s}$ and such that
\[ T_{\Omega/B_{0} ,s}=\{[u_{i} \;\rho] \in  T_{\Omega/B_{0} ,s} \mid 1 \leq i \leq k,\;\;\; \rho \in
[X \rightarrow T_{\Omega}]\}.
\]
\end{definition}

In practice one may use \emph{several} generator sets for the
same sort $s$, depending on the various functions that one may 
wish to reason inductively about.  Furthermore,
 one may be able to generate such sets
automatically from the function definitions 
themselves,\footnote{\label{collapse-foot} More
precisely, from their definition, not in the original theory
$\mathcal{E}$, but in the transformed theory
$\mathcal{E}_{U}$, making sure that no lefthandside
for an equation defining $f$ collapses
under the $\mathcal{E} \mapsto \mathcal{E}_{U}$
transformation to a lefthand side not topped by $f$
due to an $U$-axiom for $f$.}
assuming that defined functions
have been proved sufficiently complete with respect to $\Omega$.

\vspace{2ex}

\noindent {\bf Checking the Correctness of Generator Sets}. 
How can we \emph{know} that a proposed generator
set $\{u_{1},\ldots, u_{k}\}$ it truly one modulo axioms
$B_{0}$ for a given sort $s$
and constructors $\Omega$?  Assuming that the
terms $u_{1},\ldots, u_{k}$ are all \emph{linear}, i.e., have no
repeated variables ---which is the usual case for generator sets---
this check can be reduced to an automatic \emph{sufficient
  completeness check} with Maude's Sufficient Completeness Checker
 (SCC) tool 
\cite{hendrix-meseguer-ohsaki-ijcar06}, which is based on
tree automata decision procedures 
modulo axioms $B_{0}$.  The reduction is extremely
simple:  define a new unary predicate $s: s \rightarrow \mathit{Bool}$
with equations $s(u_{i}) = \mathit{true}$, $1 \leq i \leq k$.
Then, $\{u_{1},\ldots, u_{k}\}$  is a correct generator set for
sort $s$ modulo  $B_{0}$ for the constructor signature
$\Omega$ iff the predicate $s$ is sufficiently complete,
which can be automatically checked by the SCC tool.
Furthermore, if $\{u_{1},\ldots, u_{k}\}$
is \emph{not} a generator set
for sort $s$, the  SCC tool will output a useful counterexample.

\vspace{2ex}

\noindent {\bf Function Subcalls}. Function definitions usually involve
recursive (and possibly conditional) equations
oriented as rules of the form:
\[[l]:
f(\vec{u}) \rightarrow  t[f(\vec{u_{1}}),\ldots,
f(\vec{u_{k}})]_{p_{1},\ldots,p_{k}}\;\;\; \mathit{if} \;\;
\mathit{cond}[f(\vec{v_{1}}),\ldots, f(\vec{v_{k'}})]_{q_{1},\ldots,q_{k'}}
\]
where all the terms in $\vec{u}, \vec{u_{1}},\ldots, \vec{u_{k}}$
and $\vec{v_{1}},\ldots, \vec{v_{k'}}$ are
constructor terms, and
 at positions $p_{1},\ldots, p_{k}$ in the righthand side $t$, and
$q_{1},\ldots, q_{k'}$ in the
condition $\mathit{cond}$, with $k+k' \geq 1$,
new calls $f(\vec{u_{1}}),\ldots, f(\vec{u_{k}})$
and $f(\vec{v_{1}}),\ldots, f(\vec{v_{k'}})$
to $f$, which we shall call the \emph{subcalls 
 with constructor arguments}\footnote{Rule $[l]$ may also  have
other subcalls where some arguments are terms with defined
symbols.}  of the rule,
appear.   In reasoning inductively about $f$ after applying
such a rule, it can be highly advantageous to have
\emph{inductive hypotheses} available for such subcalls.

For the same technical reasons already mentioned, of avoiding 
$U$-collapses of a constructor term to a proper subterm,
we shall consider such rules $[l]$, not in the
original ground convergent rewrite theory $\vec{\mathcal{E}}$
of interest, but in the semantically-equivalent
transformed theory\footnote{However, as mentioned
in Footnote \ref{collapse-foot}, we should check
that in the 
$\vec{\mathcal{E}} \mapsto \vec{\mathcal{E}}_{U}$ 
transformation,
no lefthandside
of a rule in $\vec{\mathcal{E}}$ defining $f$ 
has as one of its transformed rules 
a rule whose
lefthand side is not topped by $f$
due to an $U$-axiom for $f$.}
 $\vec{\mathcal{E}}_{U}$.
Under such assumptions, 
$\mathit{SC}([l])$ will denote the set
\[
\mathit{SC}([l])=\{f(\vec{u_{1}}),\ldots, f(\vec{u_{k}}),
f(\vec{v_{1}}),\ldots, f(\vec{v_{k'}})\}
\]
of \emph{subcalls with constructor arguments} of rule $[l]$.

Intuitively, we should be able to induct on a subcall  $f(\vec{v})$ of a main
call $f(\vec{u})$ if $f(\vec{v})$ is \emph{smaller} than $f(\vec{u})$
 just in the same way than, if $s(x)$ belongs to a generator set for
the sort $\mathit{Nat}$, we can induct on the smaller term $x$.
This leads us to the notion of a (proper) \emph{subterm subcall} of
a main call $f(\vec{u})$.  For example, $f(s(x),s(y))$ may be the
main call in  the lefthand side of one of the rules defining $f$,
which may have two subcalls in its righthand side and/or its
condition, say, $f(x,s(y))$, and $f(x,x)$.  Then, $f(x,s(y))$
\emph{is}  a (proper) subterm subcall of $f(s(x),s(y))$ ---because
$x$ is a proper subterm of $s(x)$, and $s(y)$
is a (non-proper) subterm of $s(y)$--- but $f(x,x)$ is
\emph{not} a subterm subcall of $f(s(x),s(y))$ ---because $x$ is
not a subterm of $s(y)$.

In the presence of structural axioms $B_{0}$
of associativity and/or commutativity among
constructors, the notions of:
(1) (proper)  subterm (modulo  $B_{0}$); and (2)
(proper)
subterm subcall  (modulo  $B_{0}$) are somewhat more subtle.  
We first need to recall the well-founded immediate subterm
relation $\lhd$, where, for any $\Sigma$-term $f(t_{1},\ldots,t_{n})$,
the relation
$t_{i} \lhd f(t_{1},\ldots,t_{n})$ holds, by definition,
for $1 \leq i \leq n$, as well as its transitive,  $\lhd^{+}$,
  and reflexive-transitive,  $\lhd^{*}$, closures.

\begin{definition} \label{B0-subterm-def}
Assuming that $B_{0}$ are axioms of associativity and/or
 commutativity, a term $u$ is called a $B_{0}$-\emph{proper subterm}
of $v$, denoted  $u \lhd_{B_{0}}^{+} v$, 
 iff there exists a term $v'$
such that $v =_{B_{0}} v'$ and $u \lhd^{+} v'$.
The set $\mathit{PST}_{B_{0}}(v)$ of all
\emph{proper} $B_{0}$-\emph{subterms} of 
the $\Sigma$-term $v$ is then,
$\mathit{PST}_{B_{0}}(v)=\{u \mid  u \lhd_{B_{0}}^{+} v \}$.
\end{definition}

\begin{remark} \label{B0-subterm-remark}
Note that the set $\mathit{PST}_{B_{0}}(t)$ can be quite large.
In practice, what we are interested in is the much
smaller set $\{[v] \mid  v \in \mathit{PST}_{B_{0}}(t) \}$,
where $[v]$ denotes the $B_{0}$-equivalence class
of $[v]$.  That is, we identify any two proper
$B_{0}$-subterms of $t$ up to $B_{0}$-equality.
Without further ado, 
$\mathit{PST}_{B_{0}}(t)$ will be understood in this sense, that is,
as a \emph{choice} in the equivalence class
$[v]$ of any $v \in \mathit{PST}_{B_{0}}(t)$,
with all other equivalent choices discarded as redundant.
\end{remark}

We can now characterize notion (2), namely, that of a proper
subterm subcall,  as follows:

\begin{definition} Let $\vec{\mathcal{E}}$ be a ground convergent
  rewrite theory with
 $\vec{\mathcal{E}}_{U}$ its semantically equivalent
transformed theory where identity axioms in $\mathcal{E}$ have been
transformed into rules in $\vec{\mathcal{E}}_{U}$,
so that $\vec{\mathcal{E}}_{U}$ only satisfies
associativity and/or commutativity axioms $B_{0}$.
Let $[l]$ be a possibly
conditional rule in  $\vec{\mathcal{E}}_{U}$ with constructor
argument calls whose lefthand side is $f(\vec{u})$.  Then, the
set $\mathit{SSC}([l])$ of its (proper) \emph{subterm subcalls} is, by
definition, the set:
\[
\mathit{SSC}([l]) = 
\{ f(\vec{w}) \in  \mathit{SC}([l]) \mid
\exists f(\vec{u'})\; s.t. \; f(\vec{u})=_{B_{0}} f(\vec{u'}) 
\wedge \vec{w} \lhd^{+} \vec{u'}\}
\]
where, by definition, if  $\vec{w} = w_{1},\ldots,w_{n}$
and $\vec{u'} = u'_{1},\ldots,u'_{n}$, then
$\vec{w} \lhd^{+} \vec{u'}$ iff (i) $w_{i} \lhd^{*} u'_{i}$, $1 \leq i \leq n$, and (ii) exists $j$, 
$1 \leq j \leq n$, such that $w_{j} \lhd^{+} u'_{j}$.
\end{definition}

\noindent Consider, for example, a Boolean predicate 
$\mathit{ordered}$
on lists of natural numbers with an \emph{associative} ($A$) constructor symbol
$\_\cdot\_$  involving, among others, the rule:
\[[l]:
\mathit{ordered}(n \; \cdot m \; \cdot \;  L) \rightarrow 
\mathit{if} \;\; n \leq m \;\;
\mathit{then} \;\; \mathit{ordered}(m \; \cdot \; L)
\;\; \mathit{else} \; \; \mathit{false} \;\; \mathit{fi}.
\]
Then, $\mathit{SSC}([l]) =\mathit{SC}([l]) = \{ \mathit{ordered}(m \; \cdot L)\}$.

\subsection{Ordered Rewriting} \label{OR-section}

In inductive theorem proving one should try to use induction
hypotheses as much as possible to \emph{simplify} conjectures.  But this raises two
obvious questions: (1) In what sense should a conjecture be made
\emph{simpler}? (2) \emph{How} can an induction hypothesis
be used for this purpose?  There are various answers given to 
questions (1) and (2).  Ordered rewriting 
answers them as follows: Question (1) is answered by suggesting
that a conjecture $\varphi$ should become simpler by
being transformed into a \emph{smaller} conjecture $\varphi'$, in the
sense that $\varphi \succ \varphi'$, for $\succ$ a suitable reduction
order (for reduction orders see, e.g.,  \cite{baader-nipkow}).
Question (2) is answered by proposing the ordered rewriting
relation\footnote{Ordered rewriting can be defined even
more broadly than as presented here. However, the
cases we discuss apply very broadly ---and modulo axioms $B$---
and are quite  easy to implement.} 
$\varphi \rightarrow_{\succ} \varphi'$ for this purpose.

 Assuming our induction
hypotheses are clauses, $\Gamma \rightarrow \Delta$,
with $\Gamma$ a conjunction 
and $\Delta$ a disjunction of equations,
it is worth considering a (by no means exhaustive)
 natural taxonomy of clauses associated to a reduction order $\succ$ on terms:
\begin{enumerate}
\item \textbf{Reductive equation}, $w_{0}=w_{1}$ such that, for some $i\in\{0,1\}$,
$\mathit{vars}(w_{i}) \supseteq \mathit{vars}(w_{i+_{2}1})$ and  
$w_{i} \succ w_{i+_{2}1}$, in which case,
the equation can  be oriented
as a  rewrite rule $w_{i} \rightarrow w_{i+_{2}1}$, where
$+_{2}$ denotes addition modulo 2.
We then call $w_{i}$ the equation's \emph{lefthand side}.

\item \textbf{Reductive conditional equation},  $u_{1}=v_{1},\ldots, u_{n}=v_{n}
\rightarrow w_{0}=w_{1}$ such that: (i)  $w_{0}=w_{1}$ is reductive
with lefthand side $w_{i}$,
(ii) $\mathit{vars}(w_{i}) \supseteq 
\mathit{vars}(u_{1}) \cup \mathit{vars}(v_{1}) 
\cup \ldots \cup 
\mathit{vars}(u_{n}) \cup \mathit{vars}(v_{n})$, and 
(iii) $w_{i} \succ u_{1},\; v_{1},\ldots, u_{n},\;v_{n}$.
A reductive conditional equation can be
oriented as a conditional rewrite rule
$w_{i} \rightarrow w_{i+_{2}1} \;\; \mathit{if}\;\;
 u_{1}= v_{1} \wedge \ldots,\wedge\; u_{n}=v_{n}$.

\item \textbf{Usable equation}, $w_{0}=w_{1}$,
a non-reductive
equation such that for some $i\in\{0,1\}$,
$\mathit{vars}(w_{i}) \supseteq \mathit{vars}(w_{i+_{2}1})$.
The set of $w_{i}$ satisfying this property is
called the set of \emph{candidate lefthand sides},
denoted $\mathit{cand}(w_{0}=w_{1})$. 

\item \textbf{Usable conditional equation}, a non-reductive
$u_{1}=v_{1},\ldots, u_{n}=v_{n}
\rightarrow w_{0}=w_{1}$ 
such that either:  (i) $w_{0}=w_{1}$ is reductive
with lefthand side $w_{i}$, and
$\mathit{vars}(w_{i}) \supseteq 
\mathit{vars}(u_{1}) \cup \mathit{vars}(v_{1}) 
\cup \ldots \cup 
\mathit{vars}(u_{n}) \cup \mathit{vars}(v_{n})$;
or  (ii) $w_{0}=w_{1}$ is usable, and
if $w_{i}\in \mathit{cand}(w_{0}=w_{1})$, then
$\mathit{vars}(w_{i}) \supseteq 
\mathit{vars}(u_{1}) \cup \mathit{vars}(v_{1}) 
\cup \ldots \cup 
\mathit{vars}(u_{n}) \cup \mathit{vars}(v_{n})$.
The set of \emph{candidate lefthand sides}
for such a conditional equation is
$\{w_{i}\}$ in case (i), and
$\mathit{cand}(w_{0}=w_{1})$
in case (ii).

\item \textbf{Unusable equation}, conditional or not.  But this
does not mean that they could not be used \emph{in other ways}, even
for rewriting in a user-guided maner.

\item \textbf{Clauses}  $\Gamma \rightarrow \Delta$ with more
than one disjunct in $\Delta$.  
Not being equations, they cannot be oriented
as $\Sigma$-rewrite rules; but we shall see in 
Section \ref{IndH-REW} that, under mild conditions,
they can be oriented as $\Sigma^{=}$-rewrite rules
(recall Section \ref{EQ-PREDS}).
\end{enumerate}

\vspace{1ex}

\noindent What is attractive about this taxonomy is its simplicity and
the easy, syntactic
way one can, given a reduction order $\succ$ on terms,
determine where a given clause falls in the above pecking order.  
Since induction hypotheses
are generated dynamically as part of the inductive deduction process,
the above taxonomy provides an inexpensive way to determine if
and how any such hypothesis can be used for simplification.
Specifically, equations falling in categories (1) and (2) are
\emph{immediately usable by standard rewriting} (modulo
axioms $B$ if they are present in the given specification
and $\succ$ is $B$-\emph{compatible}, i.e., if it defines an order,
not just on terms $u \succ v$, but on $B$-equivalence classes
$[u] \succ [v]$).  Furthermore, rewriting modulo axioms $B$ with reductive equations
oriented as rules is \emph{guaranteed to terminate} thanks to $\succ$.
Specifically, unconditional (resp. conditional) rewriting with reductive rules
is terminating (resp. operationally terminating in the sense of 
\cite{Lucas-Marche-Meseguer-IPL}).  Termination is
 extremely useful for
any simplification process.
The next two categories, (3) and (4), are the
whole point and focus of ordered rewriting: how to use them for simplification
with the order $\succ$, an idea coming from both term rewriting and
 paramodulation-based theorem proving (see, e.g., 
\cite{DBLP:conf/cade/MartinN90,Bachmair91,DBLP:books/el/RV01/NieuwenhuisR01}).
Let us begin with category (3):

\begin{definition} (Unconditional Ordered Rewriting Modulo).
For $\Sigma$ a kind-complete and $B$-preregular ordered-sorted signature,
and $B$ a set of equational $\Sigma$-axioms, given a set
$G$ of usable $\Sigma$-equations
and a reduction order $\succ$ on $\Sigma$-terms compatible with $B$,
the \emph{ordered rewriting relation
modulo} $B$, denoted $t \rightarrow_{G,B\succ} t'$, or sometimes
$t \stackrel{\theta}{\rightarrow_{G,B\succ}} t'$,
holds between $\Sigma$-terms $t$ and $t'$ iff there exist:
(i) an equation $(w_{0}=w_{1}) \in G$, (ii) a position $p$ in $t$, (iii) a candidate
lefthand side $w_{i} \in \mathit{cand}(e)$, and (iv) a substitution $\theta$,
such that: (a) $t|_{p} =_{B} w_{i}  \theta$, (b) $w_{i} \theta \succ w_{i+_{2}1} 
\theta$, and (c)  $t' \equiv t[w_{i+_{2}1}\theta]_{p}$.

The relation $t \rightarrow^{\circledast}_{G,B\succ} t'$ holds
between $t$ and $t'$ iff either (i) $t =_{B} t'$, or
(ii) $t \rightarrow^{+}_{G,B\succ} t'$.  Note that, since $\succ$ is a
reduction order modulo $B$,
$\rightarrow_{G,B\succ}$ is always a
\emph{terminating} relation.
\end{definition}

\noindent Intuitively, if, say, $\{w_{0},w_{1}\}=\mathit{cand}(w_{0}=w_{1})$, we can think of
ordered rewriting with $w_{0}=w_{1}$ as equivalent to rewriting 
with two conditional rules:
 (i) $w_{0} \rightarrow w_{1}\;\; \mathit{if}\;\; w_{0} \succ w_{1}$,
and (ii) $w_{1} \rightarrow w_{0}\;\; \mathit{if}\;\; w_{1} \succ w_{0}$,
or just one such rule if $\mathit{cand}(w_{0}=w_{1})$
 is a singleton set.
It is a form of ``constrained  rewriting,'' where the
constraint is specified by means of the reduction order
$\succ$.  

\vspace{1ex}

\noindent For category (4) the notion of conditional ordered rewriting
is somewhat  more involved.
However, the effectiveness in  both categories (3) and (4)
can be significantly increased by making ordered rewriting
(conditional or not) \emph{background theory aware}.  As it turns out,
background theory awareness will make the definition of conditional ordered
rewriting simpler, since no extra requirements on termination of condition
evaluation will be needed.

\vspace{2ex}

\noindent {\bf Making Order Rewriting Background-Theory-Aware}.
Sometimes, ordered rewriting \emph{should} deliver the goods, but it
doesn't.  Let us consider a simple example.

\begin{example} \label{comm-ord-rew} Consider the theory
$\mathcal{N}$ giving the usual
definition of natural number addition in Peano notation:
$n+0=n,\; n+s(m)=s(n+m)$, and suppose we want to
prove the commutativity law $x+y=y+x$ by standard
induction, inducting on $x$.  Let us focus on the
proof for the induction step, i.e., on proving the subgoal
$s(\overline{x}) + y = y + s(\overline{x})$
with induction hypothesis $\overline{x} + y = y + \overline{x}$
(where $\overline{x}$ is a fresh constant),
which simplifies
to $s(\overline{x}) + y = s( y + \overline{x})$
because ordered rewriting with
 the induction hypothesis is ineffective at this
point.   We induct again, now on $y$.  Let us focus
on the base case, where we get the sub-subgoal
$s(\overline{x}) + 0 = s( 0 + \overline{x})$,
which simplifies to
$s(\overline{x}) = s( 0 + \overline{x})$,
and, using the theory $\mathcal{N}^{=}$
that adds an equationally-defined
equality predicate to $\mathcal{N}$ (see Section \ref{EQ-PREDS}),
it further simplifies\footnote{In the inductive inference system presented
in Section \ref{ind-inf-rules}, the first and second simplifications
can both be achieved
by the equality predicate simplification rule (\textbf{EPS}).}
to $\overline{x} = 0 + \overline{x}$.
We would now like to use ordered rewriting
with the induction hypothesis $\overline{x} + y = y + \overline{x}$
to further simplify this sub-subgoal with matching substitution
$\{y \mapsto 0\}$.  But, since when extending the
total RPO order based on the total order on symbols
$+ \succ s \succ 0$,  its natural extension
is based on $+ \succ s \succ \overline{x} \succ 0$,
we cannot apply order rewriting here,
since  $0 + \overline{x} \not\succ \overline{x} + 0$
in the standard  RPO order.

This is  frustrating, because the $0$ in the
righthand side's instance
 is a silly obstacle blocking the rule's application,
which could, for example, be removed  by first
instantiating the induction hypothesis to this case
as $\overline{x} + 0 = 0 + \overline{x}$ and then
simplifying it to $\overline{x} = 0 + \overline{x}$,
so that with this simplified instance of
the hypothesis ordered rewriting 
\emph{would actually succeed} in proving this sub-subgoal.

This raises the following question: is there a way
of making ordered rewriting \emph{backgroud theory aware},
so as to substantially increase its chances of success
in situations such as the one above?  In this case, the
``background theory'' is of course
$\mathcal{N}$ or, more generally, $\mathcal{N}^{=}$.
In general, it may also include other
induction hypotheses and possibly
some already-proved lemmas.  

The answer to this
question is an emphatic Yes!  Recall that, conceptually,
ordered rewriting with $\overline{x} + y = y + \overline{x}$
is rewriting with the conditional constrained rules:
\[
\overline{x} + y \rightarrow y + \overline{x}
\;\;\;\; \mathit{if} \;\;\;\; \overline{x} + y \succ y + \overline{x}
\;\;\;\;\; \;\;\;\; \; \;\;\;\;\; \; \;\;\;\;\; 
\mathit{and}
\;\;\;\;\; \;\;\;\;\; \;\;\;\;\; \; \;\;\;\;\; 
y + \overline{x} \rightarrow \overline{x} + y 
\;\;\;\; \mathit{if} \;\;\;\; y + \overline{x} \succ  \overline{x} + y.
\]
Of course, as given, these rules cannot exploit at all
the presence of a background theory.  But they can do so
in a more subtle version of ordered rewriting, as the rules:
\[
\overline{x} + y \rightarrow z
\;\;\;\; \mathit{if} \;\;\;\; z :\!=  y + \overline{x} \; \wedge\;
\overline{x} + y \succ z
\;\;\;\;\; \;\;\;\; \; \;\;\;\;\; \; \;\;\;\;\; 
\mathit{and}
\;\;\;\;\; \;\;\;\;\; \;\;\;\;\; \; \;\;\;\;\; 
y + \overline{x} \rightarrow z
\;\;\;\; \mathit{if} 
\;\;\;\; z :\!= \overline{x} + y \; \wedge \;
y + \overline{x} \succ  z
\]
where we are using Maude's ``matching condition'' notation,
so that, for example,  evaluating, say, the first rule's  condition
$z :\!=  y + \overline{x}$ for a 
lefthand side's matching substitution $\theta$ means:
(i) normalizing $(y + \overline{x})\theta$ to, say,
$u$, and (ii) then performing the comparison
$(\overline{x} + y)\theta \succ u$.  In the notation
of conditional term rewriting systems (see, e.g., \cite{ohlebusch-book})
this would be described as a ``deterministic 
3-CTRS'' conditional rewrite
rule of the form:
$\overline{x} + y \rightarrow z
\;\;\;\; \mathit{if} \;\;\;\; y + \overline{x} \rightarrow z \; \wedge\;
\overline{x} + y \succ z$.  However, this description is
still a bit too simplistic, since it fails to distinguish between
equational and non-equational rewrite rules (see Section
\ref{RWL-BACKGR}  and  Footnote \ref{eq-rules-foot}
below).

This new method of contextual rewriting is indeed
``background theory aware,'' for the simple reason
that the normalization of expressions
in a matching condition uses the 
equational\footnote{\label{eq-rules-foot}In a rewrite
theory $\mathcal{R}=(\Sigma,E \cup B,R)$ where the equational rules
$\vec{E}$ are convergent modulo $B$, a non-equational and possibly
conditional rule $l \rightarrow r\;\; \mathit{if} \; \; \mathit{cond}$
can have an \emph{equational condition} $\mathit{cond}$.  This means
that only the equational rules $\vec{E}$ will be applied (modulo $B$)
to evaluate such an equational condition.  In our example,
the matching condition $z :\!=  y + \overline{x}$ is
an equational condition, which will be evaluated
using only equational rules such as those in the background theory
$\mathcal{N}$.  This distinction between equational rules and
non-equational ones will be exploited in Sections \ref{SCBR}--\ref{ind-inf-rules}.       
It is useful not just to make ordered rewriting ``theory aware,'' but
for the much broader reason that, while the rules $\vec{E}$
are assumed ground confluent,  confluence is often
lost when induction hypotheses are added, so that there is
no such thing anymore as \emph{the} canonical form of a term.
Exploring all the different canonical forms of a term is what the
distinction between equational and non-equational rules makes possible.}
rewrite rules available, which include those in the
background equational  theory.  In fact, the above recalcitrant sub-subgoal
$\overline{x} = 0 + \overline{x}$ can be immediately
discharged with the rule
$y + \overline{x} \rightarrow z
\;\;\;\; \mathit{if} 
\;\;\;\; z :\!= \overline{x} + y \; \wedge \;
y + \overline{x} \succ  z$, since
$\overline{x} + 0$ normalizes to $\overline{x}$,
and $0+ \overline{x} \succ  \overline{x}$.
\end{example}

The moral to be drawn from this
example is that there is a general \emph{theory transformation}
from usable equations
(conditional or not, i.e., from equations in categories (3)--(4))
to conditional rewrite rules that
supports background theory awareness: given
a usable conditional equation of the form:
$u=v \;\;\;\; \mathit{if} 
\;\;\;\; \mathit{cond}$ (in category (3), $\mathit{cond}= \top$),
assuming, say, that both $u$ and $v$ are
candidate lefthand sides,
 we transform it into the
constrained conditional rewrite rules:
\[
u \rightarrow z
\;\;\;\; \mathit{if} \;\;\;\; z :\!=  v  \; \wedge\;
u \succ z \; \wedge\; \mathit{cond}
\;\;\;\;\; \;\;\;\; \; \;\;\;\;\; \; \;\;\;\;\; 
\mathit{and}
\;\;\;\;\; \;\;\;\;\; \;\;\;\;\; \; \;\;\;\;\; 
v \rightarrow z
\;\;\;\; \mathit{if} 
\;\;\;\; z :\!= u \; \wedge \;
v \succ  z \; \wedge\; \mathit{cond}
\]
with $z$ a fresh variable whose sort is the top sort in the
connected component of sorts including those
of $u$ and $v$. 
Of course, if there is only one candidate lefthand side,
only one such rule is generated.  And if
$u \succ v$ (resp. $v \succ u$),
we only need to generate the simpler rule:
$u \rightarrow v \;\;\;\; \mathit{if} 
\;\;\;\; \mathit{cond}$
(resp. $v \rightarrow u \;\;\;\; \mathit{if} 
\;\;\;\; \mathit{cond}$).
To the best of my knowledge this form of
ordered rewriting  appears to be new.  
Background theory awareness
can  make ordered rewriting
considerably  more likely to succeed.  Success can be even more dramatic
in the case of conditional usable equations (category (4)), since, 
assuming the background equational theory is terminating,
the evaluations of conditions \emph{always} terminates.
That is, \emph{all} usable equations in categories (3)--(4)
become in this way executable in a terminating way.

But we can do even better: (i) we can associate 
to a conditional equation
 $u_{1}=v_{1},\ldots, u_{n}=v_{n}
\rightarrow w_{0}=w_{1}$ that is  reductive
with lefthand side $w_{i}$ (category (2))
the rule: 
\[
w_{i} \rightarrow w_{i+_{2}1} \;\; \mathit{if}\;\;
\bigwedge_{1 \leq i \leq n} (u_{i} \equiv v_{i}) \rightarrow tt
\]
where we have added a fresh new sort $\mathit{Pred}$ with constant $\mathit{tt}$ 
to the signature $\Sigma$
as well as  a new operator $\_\equiv\_$ of
sort $\mathit{Pred}$, and
new rules $x \equiv x \rightarrow tt$
for each $x$ whose sort is the top of a connected component
of $\Sigma$; (ii) likewise,  for  a
 usable conditional equation
$w_{1} = w_{2}  \;\;\;\; \mathit{if} 
\;\;\;\; \bigwedge_{1 \leq i \leq n} u_{i} = v_{i}$
 in category (4) with $w_{i} \in \mathit{cand}(w_{0}=w_{1})$
such that $w_{i}  \succ u_{1},\; v_{1},\ldots, u_{n},\;v_{n}$,
we can modify  the above transformation from a usable conditional
equation into a conditional rewrite rule by
 associating to the choice of lefthand
side $w_{i}$ the rule:
\[
w_{i} \rightarrow z
\;\;\;\; \mathit{if} \;\;\;\; z :\!=  v  \; \wedge\;
u \succ z \; \wedge\; \bigwedge_{1 \leq i \leq n} (u_{i} \equiv v_{i}) \rightarrow tt
\]
The main difference in case (ii)
 is that, if the above requirement
$w_{i}  \succ u_{1},\; v_{1},\ldots, u_{n},\;v_{n}$ does not hold, in
a rule application
the instance of the condition $\bigwedge_{1 \leq i \leq n} u_{i} = v_{i}$
is only evaluated \emph{with equations}.  Instead, now, the
\emph{rewrite condition} $\bigwedge_{1 \leq i \leq n} (u_{i} \equiv v_{i})
\rightarrow tt$ is evaluated with \emph{both equations and rules}.
The same applies to case (i): the condition is now evaluated
with equations and rules.
Since in our inductive theorem proving applications 
both reductive conditional equations and
usable equations will be \emph{induction hypotheses} used for
goal simplification, what this means in practice
is that, when applying  a conditional equational
hypothesis to simplify a goal,
in cases (i)--(ii) above
 we can use (executable) induction hypotheses
themselves, and not just the ground convergent equations
in the original theory, to evaluate the hypothesis' condition,
which gives us a better chance of success.
This both ensures
termination of goal simplification with equational hypotheses,
 and, in cases (i)--(ii) above,
 increases the chances that such hypotheses will be applied, since,
the evaluation of their conditions will be more likely to succeed.

The main import of this entire discussion for the rest of this
paper is that, without further ado, in any future
mention and use of the words ``ordered rewriting'' we will
\emph{always} mean ordered rewriting in
the \emph{background theory aware} version  defined above.

\subsection{Existential Formula Quantification with Skolem Signatures}
\label{Skolem-subsection}

What is the model-theoretic meaning of a universally (resp. existentially)
quantified sentence $(\forall{x_{1},\ldots,x_{n}})\; \varphi$
(resp. $(\exists{x_{1},\ldots,x_{n}})\; \varphi$) being satisfied
on a $\Sigma$-algebra $A$?  It exactly means that the sentence
$\overline{\varphi}$,  where $\overline{\varphi} \equiv
\varphi\{x_{1} \mapsto  \overline{x}_{1},\ldots,\; x_{n}\mapsto
\overline{x}_{n}\}$, is satisfied in \emph{all}  (resp. \emph{some})
$\Sigma(\{\overline{x}_{1},\ldots,\overline{x}_{n}\})$-algebras
$B$ that have $A$ as their $\Sigma$-reduct, i.e., such that
$B|_{\Sigma}=A$.
The constants $\overline{x}_{1},\ldots,\overline{x}_{n}$
are called the \emph{Skolem constants} involved in this business,
in honor of Thoralf Skolem.  But why quantifying
(universally or existentially) only on (Skolemized) \emph{variables}?
After all, we know, thanks to Skolem, that we can also quantify 
on \emph{function symbols}; and that this is the essential trick
(``Skolemization'') allowing us  to
reduce the satisfaction of
\emph{all} first-order formulas (first put in prenex normal form)
to that of quantifier-free formulas 
(for detailed treatments of prenex normal form and Skolemization see,
e.g., \cite{harrison-book,Baaz-etal-formula-normal-forms}).
 All this is eminently relevant for our purposes
in this paper, since inductive theorem proving is just proving the
satisfaction of a formula $\varphi$ in the initial algebra
$T_{\mathcal{E}}$ of an equational theory
$\mathcal{E}$ with, say, signature $\Sigma$.  But which formulas?
There are very good practical reasons to stick to quantifier-free ones;
but we can introduce quantified formulas through the back door
of Skolemization.  

A simple example may help at this point.
Suppose that we have specified the Peano natural numbers 
in an unsorted equational theory $\mathcal{N}$ with
constructors $0$ and $s$, the obvious equations for $+$,
$*$, and natural division $\mathit{div}$, and a Boolean predicate
$\mathit{even}$, defined by the equations: $\mathit{even}(0) = \mathit{true}$,
$\mathit{even}(s(0)) = \mathit{false}$, and 
$\mathit{even}(s(s(n))) = \mathit{even}(n)$.  A simple property we might like 
to prove is the following:
\[\mathit{even}(n)= \mathit{true} \; \Rightarrow\;
  (\exists m)\;\; n = s(s(0)) * m
\]
which Skolemized becomes:
\[
 (\exists f) \;\; \mathit{even}(n)= \mathit{true} \; \Rightarrow\; n = s(s(0)) * f(n).
\]
In general, of course, there may be, not just one, but several Skolem
constants and functions in a formula like this.  But what does this
mean? In an order-sorted setting, it just means that we are
quantifying (existentially in this case) over an order-sorted \emph{signature}
---let us call it a \emph{Skolem signature}, and denote it by $\chi$---
with the exact same poset of sorts $(S,\leq)$ as our original signature $\Sigma$, so
that a quantifier-free Skolemized formula
 is just a quantifier-free
$\Sigma \cup \chi$-formula $\psi$, and the satisfaction of its
 universal (resp.
existential) quantification, $(\forall \chi)\; \psi$ (resp. $(\exists \chi)\;
\psi$) in a $\Sigma$-algebra $A$
 has the expected meaning:  $A \models (\forall \chi)\; \psi$
(resp. $A \models (\exists \chi)\; \psi$)
iff  it is the case that in \emph{all} (resp. in \emph{some}) 
of the  $\Sigma \cup \chi$-algebras
$B$ such that $B|_{\Sigma}=A$, $B \models  \psi$ holds.

\vspace{1ex}

But then, what does it mean to \emph{prove}, that a $\Sigma$-algebra
$A$ satisfies an existentially quantified Skolemized formula?
For example, that 
the initial algebra $T_{\mathcal{N}}$ of our example satisfies
$(\exists f) \;\; \mathit{even}(n)= \mathit{true} \; \Rightarrow\; n = s(s(0)) * f(n)$?
It exactly means to show 
 that there is an \emph{interpretation} of $f$ as
a unary \emph{function} in $T_{\mathcal{N}}$, let us denote it
$f_{T_{\mathcal{N}}}:  T_{\mathcal{N}} \rightarrow T_{\mathcal{N}}$,
satisfying the formula $\mathit{even}(n)= \mathit{true} \; \Rightarrow\; n = s(s(0)) * f(n)$
in $T_{\mathcal{N}}$.
But how can we \emph{specify}  such a function $f_{T_{\mathcal{N}}}$?  Well, the best possible
situation is when we are lucky enough that $f_{T_{\mathcal{N}}}$
is \emph{definable} as a $\Sigma$-functional expression, i.e.,
when we can define $f_{T_{\mathcal{N}}}$ by means of
 a definitional extension of our theory $\mathcal{N}$
of the form: $f(n)=t(n)$, for $t(n)$ a $\Sigma$-term.
In this example we are lucky, since we can define
$f(n)=\mathit{div}(n,s(s(0))$, so that we just have to prove the
inductive theorem:
\[ \mathit{even}(n)= \mathit{true} \; \Rightarrow\; n = s(s(0)) * \mathit{div}(n,s(s(0)).
\]
In general, of course, this may not be the case (for example, if we
had not yet specified the $\mathit{div}$ function in $\mathcal{N}$).
However, thanks to the Bergstra-Tucker Theorem \cite{bt80},
\emph{any} total computable function can be 
specified in a finitary theory extension protecting the given data type
by a finite number of confluent and terminating equations.
What is the upshot of all this in practice?  The following:
To prove that an initial algebra $T_{\mathcal{E}}$ with signature $\Sigma$
satisfies an existentially quantified Skolemized formula
of the form $(\exists \chi)\; \psi$, with $\psi$ quantifier-free we can:
\begin{enumerate}
\item First, find, if possible,
a \emph{theory interpretation} (called a \emph{view} in Maude
\cite{maude-book})
$I: \chi \rightarrow \mathcal{E}$
that is the identity on sorts, and maps
each $f: s_{1},\ldots s_{n} \rightarrow s$ (including consants) in $\chi$
to a $\Sigma$-term $t_{f}(x_{1}: s_{1},\ldots,\; x_{n}: s_{n})$
of sort $s$ in $\mathcal{E}$.
This is equivalent to adding to 
$\mathcal{E}$ the family of definitional extensions
$\{ f(\vec{x})=t_{f}(\vec{x}) \}_{f \in \chi}$, but $I$ makes
those explicit extensions unnecessary.

\item Then, prove that
  $T_{\mathcal{E}}\models I(\psi)$ by standard inductive theorem
  proving methods, where $I(\psi)$ is the homomorphic extension
of the theory interpretation $I$ to a formula $\psi$.
If the choice of $I$ was correct, i.e., if indeed
$T_{\mathcal{E}}\models I(\psi)$, we can then think of $I$ as
a \emph{witness}, giving us a \emph{constructive proof}
of  the existential formula $(\exists \chi)\; \psi$ for $T_{\mathcal{E}}$.

\item However, if some of the function symbols
in $\chi$ are not definable as $\Sigma$-terms,
then \emph{extend} $\mathcal{E}$
to a theory $\mathcal{E}'$  (with auxiliary function symbols, and perhaps with
auxiliary sorts) that
\emph{protects} $T_{\mathcal{E}}$, i.e., such that
$T_{\mathcal{E}'}|_{\Sigma} \cong T_{\mathcal{E}}$.
Then, define $I$ as a theory interpretation
$I: \chi \rightarrow \mathcal{E}'$ and follow steps (1)--(2) above.
\end{enumerate}
For any formula  $(\exists \chi)\; \psi$
such that the function symbols in $\chi$
can be witnessed by \emph{computable}
functions in the initial algebra of interest 
$T_{\mathcal{E}}$ (which seems reasonable for
many Computer Science applications and agrees with a constructive view
of existential quantification),
 steps (1)--(3) above
reduce ---thanks to the Bergstra-Tucker Theorem---
 the theoremhood problem 
 $T_{\mathcal{E}}  \models (\exists \chi)\; \psi$
to the theoremhood problem
$T_{\mathcal{E}}  \models I(\psi)$,
with $I(\psi)$  a quantifier-free 
$\Sigma$-formula amenable to standard inductive
theorem proving methods.

This of course does no
guarantee that we can always \emph{prove}
the formula $I(\psi)$, even if the witness $I$ was correctly 
guessed.  After all, our example theory
$\mathcal{N}$ contains natural number addition
and multiplication, so that, by G\"{o}del's Incompleteness
Theorem, there is no hope of having an inference system
capable of  proving all theorems in $T_{\mathcal{N}}$.

\section{Multiclause-Based Inductive Theories and Induction Hypotheses} \label{SCBR}

The syntax and semantics of the inductive logic, including the
notions of \emph{multiclause} and of \emph{inductive theory},
are first presented in Section \ref{SuperClIndTh}. Then,
the important topic of how to effectively use induction
hypotheses for conjecture simplification purposes is
treated in Section \ref{IndH-REW}.
%
% The inference system itself, consisting of 11 
% simplification rules and 7  inductive rules,
% is presented and illustrated with examples in Section
% \ref{ind-inf-rules}.  Finally, six more inductive proof examples
% are added in Section \ref{inf-ex} to help the reader gain a more
% complete understanding of the inference system and its possibilities.

\subsection{Multiclauses and Inductive Theories} \label{SuperClIndTh}

Since predicate symbols can always be transformed into function symbols
by adding a fresh new sort \textit{Pred}, we can reduce all of order-sorted
first-order logic to just reasoning about equational formulas whose only atoms
are equations.
Any quantifier-free formula $\phi$ can therefore be put in conjunctive normal form (CNF)
as a conjunction of equational clauses
$\phi \equiv \bigwedge_{i\in I} \Gamma_i \rightarrow \Delta_i$,
where $\Gamma_i$, denoted $u_1 = v_1,\cdots,u_n = v_n$, is a \emph{conjunction}
of equations $\bigwedge_{1\leq i \leq n} u_i = v_i$ and $\Delta_i$, denoted
$w_1 = w'_1,\cdots, w_m = w'_m$, is a \emph{disjunction} of equations
$\bigvee_{1\leq k\leq m} w_k = w'_k$.
%
%
% so that $\Gamma_i \rightarrow \Delta_i$
% represents the equational clause:
%
% \begin{center}
% $\bigvee_{1\leq i \leq n} u_i \neq v_i \vee \bigvee_{1\leq k\leq m} w_k = w'_k$.
%\end{center}
%
Higher effectiveness can be gained by applying
inductive inference rules not to a single clause, but to a conjunction of
related clauses sharing the same condition $\Gamma$.  Thus, we will
assume that all clauses $\{\Gamma \rightarrow \Delta_l\}_{l\in L}$
with the same condition $\Gamma$ in the CNF of $\phi$ have been
gathered together into a semantically equivalent formula of the form
$\Gamma \rightarrow \bigwedge_{l\in L} \Delta_l$,
which I will call a \emph{multiclause}.\footnote{Using the multiclause
representation is optional: a user of the inference system
may choose to stick to the less general clause representation.
However, I explain in Section \ref{inf-ex} that using
multiclauses can afford a substantial economy of thought and lead to
shorter proofs.}
I will use the
notation $\Lambda$ to abbreviate $\bigwedge_{l\in L} \Delta_l$.
Therefore, $\Lambda$ denotes a conjunction of disjunctions of
equations.  Multiclauses, of course, generalize 
clauses, which
generalize both disjunctions of equations and 
conditional equations, which, in turn, generalize equations.
Likewise, multiclauses generalize conjunctions of
disjunctions of equations, which generalize conjunctions of
equations, which generalize equations.
Thus, multiclauses give us a very general setting for inductive reasoning.

What is an \emph{inductive theory}?
In an order-sorted equational logic framework,
the simplest possible inductive theories we can consider are order-sorted conditional
equational theories $\mathcal{E} = (\Sigma, E\cup B)$, where
$E$ is a set of conditional
equations (i.e., Horn clauses) of the form
$u_1 = v_1,\; \cdots,\; u_n=v_n \rightarrow w=w'$,
 and $B$ is a set of equational axioms such as associativity and/or
commutativity and/or identity.  Inductive properties are then
properties satisfied in the \emph{initial algebra}
$T_\mathcal{E}$ associated to $\mathcal{E}$.  Note that this is
exactly the initial semantics of functional modules in the 
Maude language.  So, as a first approximation, a Maude user can think of
an inductive theory as an order-sorted functional module.  
The problem, however, is that
as we perform inductive reasoning, the inductive theory of
$\mathcal{E}$, let us denote it $[\mathcal{E}]$ to emphasize
its initial semantics,  needs to be extended
by: (i) extra constants, and; (ii) extra formulas such as: (a)
induction hypotheses, (b) lemmas, and (c) hypotheses associated to
modus ponens reasoning.  Thus, general inductive
theories will have the form
$[\overline{X},\mathcal{E},H]$,
where $\overline{X}$ is a fresh set of constants having sorts in $\Sigma$
and $H$ is a set of $\Sigma(\overline{X})$ clauses,\footnotemark  corresponding to
formulas of types (a)--(c) above.
\footnotetext{Even when, say, an induction hypothesis
 in H might originally be a
multiclause 
$\Gamma\rightarrow \bigwedge_{l\in L}\Delta_l$, for executability
reasons it will always be decomposed into its corresponding set of clauses
$\{\Gamma\rightarrow \Delta_l\}_{l\in L}$.}
The \emph{models} of an inductive theory $[\overline{X},\mathcal{E},H]$
are exactly the $\Sigma^\square(\overline{X})$-algebras $A$ such that
$A|_{\Sigma^\square} \cong T_{\mathcal{E}^\square}$ and $A \models H$,
where $\mathcal{E}^\square=(\Sigma^\square,E\cup B^{\square})$ and
$\Sigma^\square$ is the kind completion\footnote{The reader might
wonder about the reasons for extending
$\mathcal{E}$ to $\mathcal{E}^\square$.  There are at least two
good reasons.  First, as shown in \cite{DBLP:conf/fossacs/Meseguer16},
the congruence closure modulo axioms $B$ of a set of ground
$\Sigma$-equations is not in general a set of ground
$\Sigma$-rewrite rules, but only a set of ground $\Sigma^\square$-rewrite
rules.  Therefore, even if the original goal to be proved is a 
$\Sigma$-multiclause, some of its subgoals may be
$\Sigma^\square$-multiclauses ---obtained, for example,
by application of the \textbf{ICC} simplification rule (see Section
\ref{ind-inf-rules}).  Second, the extra generality of
the theory $\mathcal{E}^\square$ supports the verification
of sufficient completeness properties  as inductive theorems
 for specifications $\mathcal{E}$ outside the scope of 
tree-automata techniques 
such as \cite{hendrix-meseguer-ohsaki-ijcar06},
a topic developed elsewhere \cite{DBLP:conf/wrla/Meseguer22}.}
of $\Sigma$
defined in Section \ref{OS-CC}.
Note that, since $T_{\mathcal{E}^\square}|_\Sigma = T_\mathcal{E}$
\cite{DBLP:conf/fossacs/Meseguer16}, the key relation for reasoning is
$A|_\Sigma \cong T_\mathcal{E}$.
Such algebras $A$ have a very simple description: they are pairs
$(T_{\mathcal{E}^\square},a)$ where $a : \overline{X}\rightarrow T_\mathcal{E}$
is the assignment interpreting constants $\overline{X}$.
In Maude, such inductive theories $[\overline{X},\mathcal{E},H]$
can be defined as \emph{functional theories} which \emph{protect}
the functional module $[\mathcal{E}]$, which in our expanded notation
is identified with the inductive theory
$[\emptyset,\mathcal{E},\emptyset]$.
I.e., $[\emptyset,\mathcal{E},\emptyset]$ has no
extra constants, hypotheses, lemmas, or assumptions.  It is
the inductive theory from which we will start our reasoning.

I will furthermore assume that
$\vec{\mathcal{E}}=(\Sigma,B,\vec{E})$, with $B = B_{0} \uplus U$,
is ground convergent, 
with a total\footnote{\label{total-RPO} A crucial property of a total
RPO order modulo $B_{0}$, for $B_{0}$ any combination
of associativity and/or commutativity axioms,
is that it defines a \emph{total} order on $B_{0}$-equivalence
classes of ground terms 
\cite{DBLP:journals/iandc/Rubio02,DBLP:conf/csl/Rubio95}.  This property
has many useful consequences for the present work.}
 RPO order $\succ$ modulo $B_{0}$ making $\vec{\mathcal{E}}_{U}$
operationally terminating,
and that there is a ``sandwich''
$(\Omega,B_{\Omega},\emptyset) \subseteq (\Sigma_{1},B_{1},\vec{E}_{1})
\subseteq (\Sigma,B,\vec{E})$ with $B_{\Omega} \subseteq B_{0}$
satisfying all the requirements in Section \ref{var-sat-nut},
including the sufficient completeness of $\vec{\mathcal{E}}$
w.r.t. $\Omega$, the
 finite variant property of $\mathcal{E}_{1}$,
and the OS-compactness of $(\Omega,B_{\Omega})$.
Under the above assumptions,
up to isomorphism, and identifying $T_{\mathcal{E}^\square}$ 
with $\mathcal{C}_{\Sigma^{\square}/E,B}$,
a model $(A,a)$ of an inductive theory
$[\overline{X},\mathcal{E},H]$ has a very simple
description as a pair $(T_{\mathcal{E}^\square},[\overline{\alpha}])$,
where $\alpha: X \rightarrow T_{\Omega}$ is a ground constructor
substitution, $[\alpha]$ denotes the composition
$X \stackrel{\alpha}{\rightarrow} T_{\Omega}
\stackrel{[\_]}{\rightarrow}T_{\Omega/B_{\Omega}}$, with $[\_]$  the unique
$\Omega$-homomorphism mapping each term $t$ to its
$B_{\Omega}$-equivalence class $[t]$, and where
 $[\overline{\alpha}] : \overline{X} \rightarrow T_{\mathcal{E}^\square}$
maps each $\overline{x} \in \overline{X}$ to $[\alpha](x)$,
where $x \in X$ is the variable with same sort associated to  
$\overline{x} \in \overline{X}$.
The fact 
that for each clause $\Gamma\rightarrow \Delta$ in $H$
we must have $(T_{\mathcal{E}^\square},[\overline{\alpha}]) 
\models \Gamma\rightarrow \Delta$ has also a very simple
expression.  Let $Y = \mathit{vars}(\Gamma\rightarrow \Delta)$.
Then, $(T_{\mathcal{E}^\square},[\overline{\alpha}]) 
\models \Gamma\rightarrow \Delta$ exactly means that
$T_{\mathcal{E}^\square} \models  (\Gamma\rightarrow  \Delta)^{\circ}\alpha$, 
where $(\Gamma\rightarrow \Delta)^{\circ}$ is obtained from
$\Gamma\rightarrow \Delta$ by replacing each 
constant $\overline{x} \in \overline{X}$ appearing in it by its 
corresponding variable $x \in X$.  Equivalently, it also
means that  for each ground constructor substitution 
$\beta: Y \rightarrow T_{\Omega}$
we have $T_{\mathcal{E}^\square}  \models
(\Gamma\rightarrow  \Delta)^{\circ}(\alpha \uplus \beta)$.
This entire discussion can be summarized by the equivalence:
$(T_{\mathcal{E}^\square},[\overline{\alpha}])
\models \Gamma \rightarrow \Lambda$
iff 
$T_{\mathcal{E}^\square} \models (\Gamma \rightarrow \Lambda)^{\circ}\,\alpha$.
In particular, $(T_{\mathcal{E}^\square},[\overline{\alpha}])$
 is a model
of the  inductive theory 
$[\overline{X},\mathcal{E},H]$ iff 
$T_{\mathcal{E}^\square} \models H^{\circ}\,\alpha$.

We will finally assume that the clauses $H$ in an inductive theory
$[\overline{X},\mathcal{E},H]$ have, first of all, been
simplified with the theory  $\vec{\mathcal{E}}_{\overline{X}_U}^=$
that adds equationally-defined equality predicates to
$\vec{\mathcal{E}}_{\overline{X}_U}$ (see Section \ref{EQ-PREDS}),
and then
classified, according
to a  slightly refined version of 
the taxonomy described in Section \ref{OR-section},
into the disjoint union
$H = H_e \uplus H_{\vee,e} \uplus H_{ne}$,  
where: (i)  $H_{e}$, the \emph{executable} hypotheses,
are (possibly conditional) equations that are either reductive (classes (1)--(2)), or
usable unconditional or conditional equations (classes (3)--(4))
executable by ordered rewriting 
using the same RPO order $\succ$ modulo $B_{0}$
that makes $\vec{\mathcal{E}}=(\Sigma,B,\vec{E})$ ground convergent;
(ii) $H^{=}_{\vee,e}$, which we call the
$(=)$-\emph{executable non-Horn hypotheses}
(because they can be executed
as conditional $\Sigma^{=}$-rewrite rules, as explained below),
are clauses of the form $\Gamma \rightarrow \Delta$
where $\Delta$ has more than one disjunct
(therefore in class (6)) and such that 
$\mathit{vars}(\Delta) \supseteq \mathit{vars}(\Gamma)$;
finally, $H_{ne}$,  the \emph{non-executable} hypotheses,
are all other clauses in $H$ (necessarily in classes (5) or (6)).
Although clauses in $H_{ne}$ cannot be
used as rewrite rules to simplify conjectures,
they can still be used for goal simplification
in two other ways, namely, by means of
the clause subsumption simplification
rule ({\bf CS}) and also  (if in category (3))
in a user-guided manner
by the equality rule  ({\bf Eq}), both
described in Section \ref{ind-inf-rules}.

\subsection{Using Induction Hypotheses as Rewrite Rules} 
\label{IndH-REW}

The effectiveness of an inductive inference system can be significantly
increased by maximizing the ways in which
induction hypotheses can be applied as rewrite rules
 to simplify inductive  goals.\footnote{That is, formulas
appearing in a conjectured inductive theorem or in subgoals of
such a conjecture. Note that  rewriting
modulo axioms with both the module's ground
convergent oriented equations $\vec{E}$ and executable hypotheses in $H$
can simplify both \emph{terms} in a goal and the goal $\Gamma \rightarrow \Delta$
itself.  However, as further explained below, 
since we regard goal $\Gamma \rightarrow \Delta$ as a \emph{Boolean term}
in the convergent equational theory extending the 
goal's theory $\mathcal{E}$ with equationally-defined equality
predicates, formula simplification can be
even more powerful than term simplification, since 
we can also use the oriented equations defining equality predicates
to simplify not only terms but also \emph{equations} inside a goal.}
In this regard, even for the most obvious canditates,
namely, the equations $H_{e}$ in the above classification,
which are easily orientable as rewrite rules
$\vec{H_{e}}$ because they have either lefthand sides or
candidate lefthand sides (when having two candidate lefthand sides, 
they are orientable as two
$\succ$-\emph{constrained} conditional rewrite rules: see Section 
\ref{OR-section}),  since such rules
may contain operators enjoying unit axioms  $U$
in $B=B_{0} \cup U$, we should not use $\vec{H_{e}}$ itself,
but its $U$-transformation $\vec{H}_{e_{U}}$, which achieves modulo
$B_{0}$ the same semantic effect as the rules $\vec{H_{e}}$ modulo
$B=B_{0} \cup U$. A crucial reason for performing the 
 $U$-transformation on such hypothesis rules is that we want to use
the induction hypotheses not just for \emph{term} simplification,
but, more generally, for \emph{formula} simplification.  But since:
(i) the rules in 
$\vec{\mathcal{E}}^{=}_{\overline{X}_U}$ can simplify exactly the same
$\Sigma$-\emph{terms} as those in $\vec{\mathcal{E}}_{\overline{X}}$
(and to the same normal forms), but (ii) they  can \emph{also} simplify many
QF $\Sigma$-\emph{formulas}, we will always want to
use the rules $\vec{H}_{e_{U}}$ \emph{together} with the
rules $\vec{\mathcal{E}}^{=}_{\overline{X}_U}$ instead of
using the combination $\vec{\mathcal{E}}_{\overline{X}} \cup\vec{H_{e}}$, which is strictly weaker for \emph{formula}
simplification.  How to most effectively do this, so as to
maximize our chances to prove conjectures while avoiding
non-termination problems is the matter I turn to
next.

Note that formula simplification will always take place
in the context of a goal associated to an inductive theory
$[\overline{X},\mathcal{E},H]$.  We can spell out how formula
simplification will be achieved by describing such formula
simplifications as the computations of an associated rewrite
theory in the sense of Section \ref{RWL-BACKGR}, namely, the
rewrite theory
\[\mathcal{R}_{[\overline{X},\mathcal{E},H]}=(\Sigma^{=}(\overline{X}),
E^{=} \cup B_{0}^{=}, \vec{H}^{+}_{e_U})
\]
where ---as already mentioned, and it will be further explained in what follows---
the induction hypotheses $H$ have been suitably \emph{simplified}
beforehand to increase their effectiveness.
The rewrite rules $\vec{H}^{+}_{e_U} =  \vec{H}_{e_U}  \cup
\vec{H}^{=}_{\vee,e_{U}}$ are defined as follows:
\begin{enumerate}
\item $\vec{H}_{e_U}$ are the $U$-transformed rewrite rules
corresponding to induction hypotheses in $H_{e}$.  These
are hypotheses orientable as either (possibly conditional)
reductive rewrite rules, or as usable conditional equations, both
oriented as \emph{background aware}.
In this case, the background equational theory is $(\Sigma^{=}(\overline{X}),
E^{=} \cup B_{0}^{=})$) and the usable equations are transformed
$\succ$-constrained rewrite rules in the manner
explained in Section \ref{OR-section}.  In both cases, if the rules
are conditional ---besides the possible variable matching condition 
$x := r$ and constraint condition $l \succ x$---
their remaining conditional part will be either a conjunction of equations
$\bigwedge_{i\in I} u_{i}=v_{i}$, or, for reductive or usable
conditional equations in the cases (i)--(ii) discussed towards the
end of Section \ref{OR-section}, their remaining conditional part will
be of the form\footnote{Note that, since 
$(\Sigma^{=}(\overline{X}), E^{=} \cup B_{0}^{=})$ already has an equality predicate $\_=\_$
we do not need to add a new sort $\mathit{Pred}$ or an
equality predicate $\_\equiv\_$ as done in Section \ref{OR-section}
to define such conditions:
we just use the sort $\mathit{NewBool}$ and the equality  predicate $\_=\_$
in  $(\Sigma^{=}(\overline{X}), E^{=} \cup B_{0}^{=})$ instead.}
$\bigwedge_{i\in I} (u_{i}=v_{i}) \rightarrow \top$.
In a rewrite theory $\mathcal{R}=(\Sigma,E \cup B,R)$,
a rewrite condition $u \rightarrow v$
is evaluated for a given matching substitution $\theta$
 by searching for a term $w$ such that: (i) either 
$u\theta =_{B} w$ or
$u\theta
 \rightarrow^{+}_{R\!:\!E,B} w$,
and (ii) there exists a substitution $\gamma$ such that
$w =_{B} (v \gamma)$.  In our case, this means that
each rewrite condition $(u_{i}=v_{i}) \rightarrow \top$ will
be evaluated by applying both the equational rules in 
$\vec{E}^{=}$, which include the rule $x = x \rightarrow \top$,
and the rules in $\vec{H}^{+}_{e_U}$.  This maximizes the chances of
success when  applying the rules in $\vec{H}_{e_U}$ to simplify a goal.

%
% \item $\vec{H}_{wu_U}$ are, by definition, the $U$-transformed
% versions of the hypotheses that are orientable
% as weakly usable conditional rewrite rules.  Their treatment
% as rewrite rules is entirely similar to that for
% the strongly usable rules in  $\vec{H}_{e_U}$, \emph{except}
% for one crucial difference: their equational
% condition $\bigwedge_{i\in I} u_{i}=v_{i}$ \emph{remains equational},
% and is therefore evaluated only  by means of equational rules.
% This is because, as pointed out in Section \ref{OR-section},
% the operational termination of weakly usable conditional
% rules cannot be guaranteed when such rules are themselves
% used recursively  to evaluate their condition.  However, termination
%of an equational condition instance $(\bigwedge_{i\in I} u_{i}=v_{i})\theta$
% \emph{is} guaranteed to terminate when it is only evaluated by means
% of ground \emph{convergent} equational rules.

\item $\vec{H}^{=}_{\vee,e_{U}}$  are rules obtained from 
  $U$-transformed  hypotheses
$\Gamma \rightarrow \Delta$
such that $\Delta$ contains two or more disjuncts and
$\mathit{vars}(\Gamma) \subseteq \mathit{vars}(\Delta)$
as follows: (i) if $\Delta > u_{i},v_{i}$ for each $u=v$ in $\Gamma$,
where $\succ$ denotes the extension of the RPO order on $\Sigma$
 to symbols in $\Sigma(X)^{=}$, then $\Gamma \rightarrow \Delta$
produces a rule of the form,
\[
\Delta \rightarrow \top \;\; \mathit{if} \;\; 
\bigwedge_{(u = v) \in \Gamma} (u=v) \rightarrow \top
\]
(ii) otherwise, $\Gamma \rightarrow \Delta$ produces a rule of the form,
\[
\Delta \rightarrow \top \;\; \mathit{if} \;\; 
\bigwedge_{(u = v) \in \Gamma} u=v.
\]
\end{enumerate}
A direct consequence of such a definition of 
$\mathcal{R}_{[\overline{X},\mathcal{E},H]}$ is that this theory
is \emph{operationally terminating}: we are guaranteed to never
loop when simplifying a goal's formula with its
associated rewrite theory
$\mathcal{R}_{[\overline{X},\mathcal{E},H]}$.
Operational termination of
$\mathcal{R}_{[\overline{X},\mathcal{E},H]}$
means that each QF formula $\varphi$ 
will have a finite \emph{set} of normal forms,
which by abuse of notation
will be denoted $\varphi!_{\vec{H}^{+}_{e_{U}}\!:\! \vec{E}^{=}_{U},B_{0}}$. 
In Maude, such a set can be computed using the search command:
\[
\mathit{search}\;\;\; \varphi \;\;\;
\Rightarrow!\;\;\;\; B\!:\! \mathit{NewBool}\;\; .
\]

Two additional notational
conventions will also be useful in what follows.  We can identify within the set
$ \vec{H}_{e_U}$ the subset $\vec{H}_{ge_U} \subseteq \vec{H}_{e_U}$
of its ground unconditional  $\Sigma(\overline{X})$-rewrite 
rules.\footnote{Note that \emph{all} unconditional ground
equations in $H$ are always orientable as
($U$-transformed) reductive ground rewrite rules 
in $\vec{H}_{ge_U}$.
This is because, as explained in Footnote
\ref{total-RPO}, $B_{0}$-equivalence
classes of $\Sigma$-terms are totally ordered
in a total RPO order modulo $B_{0}$, and we assume
that the total order on symbols in $\Sigma$  has been extended to
a total order on symbols in $\Sigma(\overline{X})$.}
The second notational convention is related to the previous one.  By
definition, $\vec{H}^{\oplus}_{e_U}=\vec{H}^{+}_{e_U}\setminus
\vec{H}_{ge_U}$.
In a similar way, if 
our hypotheses $H$ before the $U$-transformation were classified
as $H = H_e  \uplus H_{\vee,e} \uplus H_{ne}$, 
and $H_{ge} \subseteq H_{e}$ denotes the ground equations,
we define $H^{+}_{e} =  H_e \uplus H_{\vee,e}$,
and $H^{\oplus}_{e} = H^{+}_{e} \setminus H_{ge}$.

\vspace{2ex}

\noindent {\bf Simplifying Hypotheses}.  The above discussion has so
far focused on how to effectively use induction hypotheses to
simplify conjectures by rewriting, but \emph{who simplifies the simplifiers?}
There are, for example, two ways
in which induction hypotheses may be ineffective as rewrite rules.
To begin with, if the lefthand side of a hypothesis rule is not
normalized by the rules in 
$\vec{\mathcal{E}}$, it may fail even to
apply to a conjecture that has itself been simplified.
Furthermore, induction hypotheses
may \emph{preempt each other}, in the sense that
the application of one hypothesis might block the application of
another, also useful, hypothesis. For a simple example,
consider a ground hypothesis rule $\overline{x} \rightarrow \overline{y}$
and a non-ground hypothesis rule $z + \overline{x}  \rightarrow s(s(z))$.
Then, application of the ground hypothesis to an expression
$u + \overline{x}$ will preempt application of the non-ground hypothesis.

We can try to minimize
these problems by: (i)  ensuring that  lefthand sides of
hypotheses used as rewrite rules (or even as
non-executable hypotheses) are
$\vec{E},B$-normalized,
and (ii) trying to minimize in a practical
way the problem of executable  hypotheses preempting each other.
These aims can be advanced by a hypothesis
transformation process $H \mapsto H _{\mathit{simp}}$
that tries to arrive at simplified hypotheses $H _{\mathit{simp}}$
meeting goals (i)--(ii) as much as possible in a practical manner. 

There is, of course, no single such transformation possible.
What follows is a first proposal for such
a transformation, open to further modifications
and  improvements through future experimentation.
The transformation $H \mapsto H _{\mathit{simp}}$ is defined
as follows:

\begin{enumerate}
\item We first define $H' = \mathit{clauses}(H!_{\vec{\mathcal{E}}^{=}_{\overline{X}_{U}}})$
where $\mathit{clauses}$ turns a formula  into a set of clauses.
For example, if $[\_,\_]$ is a constructor for unordered pairs,
$\vec{E}^{=}$ will contain a rule $[x,y]=[x',y']
\rightarrow (x = x') \wedge (y = y')$.  Then, assuming that
the equation $[u,v] = [u'=v']$ belongs to $H$,
and that the $u,u',v,v'$ are already in $\vec{E},B$-normal form,
we will have $([u,v] =[u'=v'])!_{\vec{\mathcal{E}}^{=}_{\overline{X}_{U}}} =
\; (u=u' \wedge v = v')$, so that $\{u=u', v = v'\} \subseteq  H'$.
Note that the two equations $u=u'$  and $v = v'$ will be more
widely applicable as induction hypothesis than the
original equation $[u,v] = [u'=v']$.

\item As usual, we can classify $H'$ as the disjoint union
$H' = H'_{ge} \uplus H_{e}'^{\oplus} \uplus H'_{ne}$.  We then
define $H _{\mathit{simp}} = H''$, where,

\item $H''_{ge} = \mathit{orient_{\succ}}( (cc^{\succ}_{B_{0}}(H'_{ge}))!_{\vec{E},B})$.
That is, we first compute the congruence closure modulo $B_{0}$
of the ground equations $H'_{ge}$ (or an approximation to
it if $B_{0}$ contains axioms for associative but non-commutative
operators) to make them convergent; but to exclude the possibility 
that some resulting rules might not
be $\vec{E},B$-normalized, we further $\vec{E},B$-normalize
both sides of those ground equations and orient them
again as rules with the RPO order $\succ$.

\item ${H''_{e}}^{\oplus}$ is generated
as follows: (i) For each $u = v \;\; \mathit{if} \;\; \mathit{cond}$
in $ H_{e}'^{\oplus}$ (where $\mathit{cond}$ could be empty)
we add to ${H''_{e}}^{\oplus}$ the set
$(u = v)!_{\vec{H}''_{ge}\!:\! \vec{E},B} \;\; \mathit{if} \;\; \mathit{cond}$.
Please, recall the notation $\rightarrow_{R \!:\! \vec{E},B}$
introduced in Section \ref{RWL-BACKGR} for rewriting
with the rules $R$ and equations $E$ of a rewrite theory
modulo axioms $B$.  Since confluence
is not guaranteed, 
$(u = v)!_{\vec{H}''_{ge}\!:\! \vec{E},B} \;\; \mathit{if} \;\;\mathit{cond}$
may in general be a set of hypotheses. (ii)
 For each $\Gamma \rightarrow \Delta$ in $ H'^{\oplus}$
we add to ${H''_{e}}^{\oplus}$ the set
$\Gamma \rightarrow (\Delta)!_{\vec{H}''_{ge}\!:\! \vec{E},B}$.

\item $H''_{ne}$ is generated by adding for each clause
(which could be a conditional equation)
$\Gamma \rightarrow \Delta$ in $H_{ne}'$
the set $(\Gamma \rightarrow \Delta)!_{\vec{H}''_{ge}\!:\! \vec{E},B}$,
 to $H''_{ne}$.
\end{enumerate}
The simplified hypotheses $H''$ thus obtained, give rise
to corresponding rewrite rules $\vec{H}''_{ge}
\uplus {\vec{H''}}_{eU}^{\oplus}$ and to non-executable hypotheses
$H''_{ne}$ as usual.  So, what has been achieved
by the $H \mapsto H _{\mathit{simp}}$ transformation thus defined?
The main achievements are: (1) the ground rewrite rules 
$\vec{H}''_{ge}$ are $\vec{E},B$-normalized
and, with some luck, these rules may even be convergent; (2)
the rewrite rules in 
$\vec{H''}_{eU}^{\oplus}$
are both $\vec{E},B$-normalized and 
$\vec{H}''_{ge},B$-normalized; and
(3) the consequents of  clauses in $H''_{ne}$ are likewise
both $\vec{E},B$-normalized and 
$\vec{H}''_{ge},B$-normalized.  Although the clauses
in $H''_{ne}$ are not executable as rewrite rules,
as we shall see, they can nevertheless simplify goals
by means of
either the clause subsumption
$\textbf{CS}$ or the $\textbf{EQ}$ rule,
both discussed in Section \ref{ind-inf-rules}.
Since the $\textbf{CS}$ and $\textbf{EQ}$ rules
will be applied by matching
a goal (resp. a term in the goal)
 (that can both be safely assumed to already be 
both $\vec{E},B$-normalized and 
$\vec{H}''_{ge},B$-normalized)
to a pattern hypothesis in $H''_{ne}$
(resp. to a chosen side of the equation),
the chances for such a matching to succeed are maximized
by the above definition of $H''_{ne}$.

\section{Inductive Inference System} \label{ind-inf-rules}

The inductive inference system presented below transforms
\emph{inductive goals} of the form
$[\overline{X},\mathcal{E},H] \Vdash \Gamma \rightarrow \Lambda$,
where $[\overline{X},\mathcal{E},H]$ is an inductive theory
and $\Gamma \rightarrow \Lambda$ is
a $\Sigma^\square(\overline{X})$-multiclause,
 into sets
of goals.  The empty set of goals is denoted $\top$ to suggests that the
goal from which it was generated has been proved (is a \emph{closed}
goal).  However, in the special case of
goals of the form  
$[\emptyset,\mathcal{E},\emptyset]\Vdash \Gamma \rightarrow \Lambda$,
called \emph{initial goals},
we furthermore require that $\Gamma \rightarrow \Lambda$
is a $\Sigma$-multiclause; and
we also allow \emph{existential initial goals}
of the form 
$[\emptyset,\mathcal{E},\emptyset]\Vdash  (\exists \chi)(\Gamma\rightarrow\Lambda)$,
with $\chi$ a Skolem signature and $\Gamma \rightarrow \Lambda$ a
$\Sigma \cup \chi$-multiclause
(see Section \ref{Skolem-subsection}).

A \emph{proof tree} is a tree of goals, where at the root we have the original
goal that we want to prove and the children of each node in the tree have been
obtained by applying an inference rule in the usual bottom-up proof search fashion.
Goals in the leaves are called the \emph{pending goals}.  A proof tree
is \emph{closed} if it has no pending goals, i.e., if all its
leaves are marked $\top$.
Soundness of the inference system means that if the goal
$[\overline{X},\mathcal{E},H] \Vdash \phi$
is the root of a closed proof tree, then $\phi$ is valid in the inductive theory
$[\overline{X},\mathcal{E},H]$,
i.e., it is satisfied by all the models $(T_{\mathcal{E}^\square},[\overline{\alpha}])$ of
$[\overline{X},\mathcal{E},H]$ in the sense explained above.

The inductive inference system presented below consists of two sets of
inference rules:
(1) \emph{goal simplification rules}, which are easily amenable to automation, and
(2) \emph{inductive rules}, which are usually applied under user guidance,
although they could also be automated by tactics.

To increase its effectiveness,
this inference system maintains the invariant that the induction
hypotheses $H$ in all inductive theories will always be in simplified form.

\subsection{Goal Simplification Rules} \ \\

\noindent\textbf{Equality Predicate Simplification} (\textbf{EPS}).

\begin{prooftree}
\AxiomC{$\{[\overline{X},\mathcal{E},H] \Vdash \Gamma_{i}'
\rightarrow\Lambda_{i}'\}_{i \in I}$}
\UnaryInfC{$[\overline{X},\mathcal{E},H] \Vdash \Gamma\rightarrow\Lambda$}
\end{prooftree}

\noindent where $(\bigwedge_{i\in I} \Gamma_{i}'
\rightarrow\Lambda'_{i})\in
(\Gamma\rightarrow\Lambda)
!\,_{\vec{\mathcal{E}}_{\overline{X}_U}^=\cup\,\vec{H}^{+}_{e_U}}$.
Furthermore, if $\top \in
(\Gamma\rightarrow\Lambda)
!\,_{\vec{\mathcal{E}}_{\overline{X}_U}^=\cup\,\vec{H}^{+}_{e_U}}$,
then $\bigwedge_{i\in I} \Gamma_{i}'
\rightarrow\Lambda'_{i}$ is chosen to be $\top$;
if $\bot \in
(\Gamma\rightarrow\Lambda)
!\,_{\vec{\mathcal{E}}_{\overline{X}_U}^=\cup\,\vec{H}^{+}_{e_U}}$,
then $\bigwedge_{i\in I} \Gamma_{i}'
\rightarrow\Lambda'_{i}$ is chosen to be $\bot$
(i.e., the conjecture $\Gamma\rightarrow\Lambda$
is then shown to be \emph{false});  
otherwise, the choice of $\bigwedge_{i\in I} \Gamma_{i}'
\rightarrow\Lambda'_{i}$ is unspecified,
but it should be optimized according to some criteria that
make the resulting subgoals easier to prove.\footnote{For example,
that the number of the goals in the conjunction is smallest possible.}
The fact that an \textbf{EPS} simplification might result in
a \emph{conjunction} of goals is illustrated in 
Example \ref{EPS-example} below.
We assume that
the hypotheses $H$ have already been simplified using the
$H \mapsto H _{\mathit{simp}}$ transformation, since this
should be an invariant maintained throughout.
We also assume that constructors in the transformed
theory $\mathcal{E}_{U}$ are free
modulo axioms.\footnote{\label{non-free-ctors} This requirement will be relaxed
  in the future to allow equational programs whose constructors
  are not free modulo axioms.  The main reason for currently imposing this restriction
  is precisely to allow application of the \textbf{EPS} simplification
  rule, which simplifies formulas with the rules in
  $\vec{\mathcal{E}}_{U}^{=}$ (plus executable induction hypotheses), since 
  $\vec{\mathcal{E}}_{U}^{=}$ assumes that $\mathcal{E}_{U}$
has free constructors modulo $B_{0}$.}
This means that in 
$\mathcal{E}_\Omega=(\Omega,B_\Omega)$, the axioms $B_\Omega$
decompose as $B_\Omega=Q_\Omega\uplus U_\Omega$ with $U_\Omega$ the unit axioms
and $Q_\Omega$ the associative and/or commutative axioms such that
$T_{\mathcal{E}_\Omega}\cong T_{\Omega/Q_\Omega}$.
This can be arranged with relative ease in many cases by subsort overloading,
so that the rules in $\vec{U}_\Omega$ only apply to subsort-overloaded operators
that are \emph{not} constructors.

For example, consider sorts $\textit{Elt} < \textit{NeList} < \textit{List}$
and $\Omega$ with operators  $\mathit{nil}$ of sort \textit{List} and
$\_\,;\_\; : \textit{NeList}\ \textit{NeList} \rightarrow
\textit{NeList}$
and $B_{\Omega}$
associativity of $\_\,;\_$ with identity \textit{nil}, but where
$\_\,;\_\; : \textit{List}\ \textit{List} \rightarrow \textit{List}$,
declared with the same axioms, is in $\Sigma \setminus \Omega$.
Then $Q_\Omega$ is just the associativity axiom for $(\_\,;\_)$.

In summary, this inference rule simplifies a 
multiclause $\Gamma\rightarrow\Lambda$ with:
(i) the rules in $\vec{\mathcal{E}}_{U}$,
(ii) the equality predicate rewrite rules in 
$\vec{\mathcal{E}}_{U}^=$,
and
(iii) the hypothesis rewrite rules 
$\vec{H}^{+}_{e_U}$.\footnote{Recall 
that  $\Gamma$ is a conjunction 
and $\Lambda$ a conjunction of disjunctions. 
Therefore, the equality predicate rewrite rules 
together with $\vec{H}^{+}_{e_U}$
may have powerful ``cascade effects.''  For example, 
if either $\bot \in \Gamma 
!\,_{\vec{\mathcal{E}}_{\overline{X}_U}^=\cup\,\vec{H}^{+}_{e_U}}$
or
$\top \in \Lambda
!\,_{\vec{\mathcal{E}}_{\overline{X}_U}^=\cup\,\vec{H}^{+}_{e_U}}$,
then
$\top \in (\Gamma\rightarrow\Lambda)
!\,_{\vec{\mathcal{E}}_{\overline{X}_U}^=\cup\,\vec{H}^{+}_{e_U}}$ 
 and the goal has been proved.}

\begin{example} \label{EPS-example} Let $\mathcal{N}\mathcal{P}$ denote
  theory of the natural numbers in Peano notation with the usual
  equations defining addition, $+$,
  multiplication, $*$, and exponentiation, $(\_)^{(\_)}$, and with two
  additional sorts, $\mathit{Pair}$ and $\mathit{UPair}$, of,
  respectively, ordered and unordered pairs of numbers, built with the
constructor operators, 
$[\_,\_]: \mathit{Nat}\;\; \mathit{Nat} \rightarrow \mathit{Pair}$ and
$\{\_,\_\}: \mathit{Nat}\;\; \mathit{Nat} \rightarrow \mathit{UPair}$,
where $\{\_,\_\}$ satisfies the \emph{commutativity} axiom.
Now consider the  goal:
\[
[\emptyset, \mathcal{N}\mathcal{P},\emptyset] \Vdash
\{x^{s(s(0))},y\}=\{s(y),0\} \rightarrow [x + x^{s(s(0))},x * x]=[s(x+y),x]
\]
Note that the theory $\vec{\mathcal{N}\mathcal{P}}^{=}$ includes, among
others,  the rules:
\begin{itemize}
\item $[x,y]=[x',y'] \rightarrow x=x' \wedge y = y'$
\item $\{x,y\}=\{x',y'\} \rightarrow (x=x' \wedge y = y')
\vee (x=y' \wedge y = x')$
\item $s(x) = x \rightarrow \bot$.
\end{itemize}
Application of the first and second
 $\vec{\mathcal{N}\mathcal{P}}^{=}$-rules to the
above clause, plus Boolean simplification,
plus the equations for  $+$, $*$, $(\_)^{(\_)}$,  gives us a 
conjunction of two multiclauses:
\[
x * x =s(y),\; y =0 \rightarrow x + (x * x) = s(x+y) 
\wedge x * x = x 
\]
and
\[
x * x =0,\; y =s(y) \rightarrow x + (x * x) = s(x+y) 
\wedge x * x = x.
\]
But, since $y =s(y)$ simplifies to $\bot$ by the third rule, 
the second multiclause
simplifies to $\top$.  So we just get the first multiclause as the
result of applying the \textbf{EPS} rule to our original goal.
\end{example}

\begin{comment}

Finally, note that the hypotheses $H^{+}_e$ in the theory
$[\overline{X},\mathcal{E},H]$ are
transformed as rules $\vec{H}^{+}_{e_U}$ exactly as for $\vec{E}_U$
(but note the special definition in Section \ref{IndH-REW}
of the rule associated to a non-Horn hypothesis
clause $\Upsilon \rightarrow \Delta$), which are also
added as extra rules to the theory
$\vec{\mathcal{E}}_{\overline{X}_U}^=$.

\end{comment}

\vspace{2ex}
\noindent\textbf{Constructor Variant Unification Left} (\textbf{CVUL}).

\begin{prooftree}
\AxiomC{$\{[\overline{X}\uplus
    \overline{Y}_{\alpha},\; \mathcal{E},\;
(H \cup
\widetilde{\overline{\alpha}}|_{\overline{X}_{\Gamma}})_{\mathit{simp}}
] \Vdash (\Gamma'\rightarrow\Lambda)
\overline{\alpha}\}_{\alpha\in\textit{Unif}^{\hspace{1pt}\Omega}_{\mathcal{E}_1}\!(\Gamma^{\circ})}$}
\UnaryInfC{$[\overline{X},\mathcal{E},H] \Vdash \Gamma,\Gamma'\rightarrow\Lambda$}
\end{prooftree}

\noindent where: (i) $\Gamma$ is a conjunction of
$\mathcal{E}_1$-equalities and $\Gamma'$ does not
contain any $\mathcal{E}_1$-equalities;
(ii) $\overline{X}_{\Gamma}$ is the (possibly empty) set of constants from
$\overline{X}$ appearing in $\Gamma$, and
$X_{\Gamma}$ the set of variables obtained by replacing each
$\overline{x} \in \overline{X}_{\Gamma}$ by a fresh variable
$x$ of the same sort;
(iii) $\Gamma^{\circ}$ is just
$\Gamma \{\overline{x} \mapsto x\}_{\overline{x} \in\overline{X}_{\Gamma}}$, i.e.,
we replace each constant $\overline{x} \in\overline{X}_{\Gamma}$
by its corresponding variable $x \in X_{\Gamma}$
(this makes possible treating the constants
in $\overline{X}_{\Gamma}$ as variables for unification purposes);
(iv) $Y= \mathit{ran}(\alpha)$ is a set of fresh variables,
$Y_{\alpha} \subseteq Y$ is,
by definition,  the set of variables
$Y_{\alpha} = \mathit{vars}(\alpha(X_{\Gamma}))$, and
$\overline{Y}_{\alpha}$ is the corresponding set
of constants, where each $y \in Y_{\alpha}$
is replaced by a fresh constant $\overline{y}$ of the same sort;
 (v) $\overline{\alpha}$ is the
composed substitution
$\overline{\alpha}=
\{\overline{x} \mapsto x\}_{\overline{x} \in\overline{X}_{\Gamma}}
\alpha \{y \mapsto \overline{y}\}_{y \in Y_{\alpha}}$;
 (vi) $\widetilde{\overline{\alpha}}|_{\overline{X}_{\Gamma}}$
is the set of ground equations
$\widetilde{\overline{\alpha}}|_{\overline{X}_{\Gamma}}
= \{\overline{x}=
\overline{\alpha}(\overline{x})\}_{\overline{x}\in\overline{X}_{\Gamma}}$; 
and  (vii)
$\textit{Unif}^{\hspace{2pt}\Omega}_{\mathcal{E}_1}(\Gamma^{\circ})$
denotes the set of constructor $\mathcal{E}_1$-unifiers of $\Gamma^{\circ}$
\cite{var-sat-scp,skeirik-meseguer-var-sat-JLAMP}.

The \textbf{CVUL} simplification rule can be very powerful.  The most powerful
case  is when $\textit{Unif}^{\hspace{2pt}\Omega}_{\mathcal{E}_1}(\Gamma^{\circ}) =
\emptyset$, since then there are no subgoals, i.e., the rule's premise becomes $\top$, and we
 have  \emph{proved} the given  goal.

The main reason for  the extra technicalities (i)--(viii)  is that
they make this simplification rule \emph{more 
generally applicable},
so that it can also be applied 
when $\Gamma$ contains some extra constants in $\overline{X}$.
However, extra care is required to make sure
that the \textbf{CVUL} rule remains sound in this more general setting.
The main idea is to turn such extra constants into variables.  But in fact they
\emph{are} constants.   What can we
do?  That is what the technicalities (i)--(viii), and in particular
the new ground hypotheses 
$\widetilde{\overline{\alpha}}|_{\overline{X}_{\Gamma}}$,
answer.  Since a concrete example is
worth a thousand generic explanations, let us see the
technicalities  (i)--(viii) at work in a simple example.

\begin{example}
Consider an equational theory $\mathcal{E}$ containing
the multiset specification in Example \ref{multiset-ctors}
in Section \ref{GS-FubCall-Subsection} and perhaps many
more things.  For our purposes here, let us assume that it
also includes a Boolean predicate $p: \mathit{MSet} \rightarrow
\mathit{Bool}$ whose defining equations are irrelevant here.  
The theory  $\mathcal{E}$ 
certainly has an FVP subtheory  $\mathcal{E}_{1} \subseteq \mathcal{E}$ 
which contains the subsort-overloaded  $AC$ multiset union  operator
$\_\cup\_$, and a single equation, namely,
$S \cup \emptyset = S$, oriented as the rule
$S \cup \emptyset \rightarrow S$, with $S$  of sort $\mathit{MSet}$.
Now consider the inductive  goal:
\[
x \cup \overline{U} = y \cup V \rightarrow p(V) = \mathit{true}.
\]
where $x,y$ have sort $\mathit{Elt}$,  $V$ has sort  $\mathit{NeMSet}$,
and the constant $\overline{U}$ has also sort $\mathit{NeMSet}$.
So in this example our $\Gamma$ is $x \cup \overline{U} = y \cup V$,
$\overline{X}_{\Gamma} = \{\overline{U}\}$, $X_{\Gamma} = \{ U \}$, 
which we assume is fresh, i.e., it appears nowhere else,
and $\Gamma^{\circ} \equiv x \cup U = y \cup V$.
One of the variant constructor unifiers of $\Gamma^{\circ}$
is the unifier $\alpha
= \{U \mapsto y' \cup W,\;
x \mapsto x',\;
y \mapsto y',\;
V \mapsto x' \cup W\}$.
$Y_{\alpha} =
\mathit{vars}(\alpha(X_{\Gamma}))=\mathit{vars}(\alpha( U)) = \{y',W\}$, and
$\overline{Y}_{\alpha} =\{\overline{y'},\overline{W}\}$.
Therefore,
$\overline{\alpha}$  is the composed substitution:
\[
\{\overline{U} \mapsto U\}\;  \{U \mapsto y' \cup W,\;
x \mapsto x',\;
y \mapsto y',\;
V \mapsto x' \cup W\} \; \{y' \mapsto \overline{y'},\; W \mapsto \overline{W}\}.
\]
That is, the substitution
$\overline{\alpha} =  
\{\overline{U} \mapsto \overline{y'} \cup \overline{W},\;
x \mapsto x',\;
y \mapsto\overline{y'},\;
V \mapsto x' \cup 
 \overline{W}\}$, 
and 
$\widetilde{\overline{\alpha}}|_{\overline{X}_{\Gamma}}=\{\overline{U} = \overline{y'} \cup \overline{W}\}$. 
Therefore, if our original goal
had the form:
\[
[\overline{X},\mathcal{E},H] \Vdash x \cup \overline{U} = y \cup V \rightarrow p(V) = \mathit{true}
\]
then the subgoal associated to the constructor variant unifier
$\alpha$ is:
\[
[ \overline{X} \uplus \overline{Y}_{\alpha}
,\;\mathcal{E},
(H \cup \{\overline{U} = \overline{y'} \cup
\overline{W}\})_{\mathit{simp}}
] 
\Vdash  (p(V) =\mathit{true}) \overline{\alpha}. 
\]
\noindent That is, the subgoal:
\[
[ \overline{X} \uplus \overline{Y}_{\alpha}
,\;\mathcal{E},
(H \cup \{\overline{U} = \overline{y'} \cup
\overline{W}\})_{\mathit{simp}}] 
\Vdash  p(x' \cup \overline{W}) =\mathit{true}.
\]
\noindent Note the very useful, but slight poetic license of disregarding  any
differences between  (universal) Skolem constants and variables
in this extended notation for substitutions.
\end{example}

\begin{comment}

\vspace{2ex}
\noindent\textbf{Constructor Variant Unification Failure Left} (\textbf{CVUFL}).

\begin{prooftree}
\AxiomC{$\top$}
\UnaryInfC{$[\overline{X},\mathcal{E},H] \Vdash \Gamma,\Gamma'\rightarrow\Lambda$}
\end{prooftree}

\noindent where $\Gamma$ is a conjunction of 
$\mathcal{E}_{1_{\overline{X}}}$-equalities, $\Gamma'$
contains no extra such $\mathcal{E}_{1_{\overline{X}}}$-equalities, 
$\Gamma^\degree$ is the conjunction of 
$\mathcal{E}_1$-equalities obtained by replacing
the constants $\overline{x}\in\overline{X}$ by corresponding variables $x\in X$, and
$\textit{Unif}^{\hspace{2pt}\Omega}_{\mathcal{E}_1}(\Gamma^{\degree})=\emptyset$.

\end{comment}

\vspace{2ex}
\noindent\textbf{Constructor Variant Unification Failure Right} (\textbf{CVUFR}).

\begin{prooftree}
\AxiomC{$[\overline{X},\mathcal{E},H] \Vdash \Gamma\rightarrow\Lambda\wedge\Delta $}
\UnaryInfC{$[\overline{X},\mathcal{E},H] \Vdash \Gamma\rightarrow\Lambda \wedge (u=v,\Delta)$}
\end{prooftree}

\noindent where $u=v$ is a $\mathcal{E}_{1_{\overline{X}}}$-equality and
$\textit{Unif}^{\hspace{2pt}\Omega}_{\mathcal{E}_1}((u=v)^{\circ})=\emptyset$.

\vspace{2ex}
\noindent\textbf{Substitution Left} (\textbf{SUBL}).

\noindent We consider two cases: when $x$ is a variable, and 
when $\overline{x}$ is a fresh constant.

\begin{prooftree}
\AxiomC{$[\overline{X},\mathcal{E},H] \Vdash (\Gamma\rightarrow\Lambda)\{x\mapsto u\}$}
\UnaryInfC{$[\overline{X},\mathcal{E},H] \Vdash x=u,\; \Gamma\rightarrow\Lambda$}
\end{prooftree}

\noindent where: (i) $x$ is a variable of sort $s$, $ls(u)\leq s$, and
$x\not\in vars(u)$; and
(ii) $u$ is not a $\Sigma_1$-term and $\Gamma$ contains no
other $\Sigma_1$-equations.
Note that $\_=\_$ is assumed
commutative, so cases $x=u$ and $u=x$ are both covered.

\vspace{2ex}

\noindent When $\overline{x}$ is a fresh constant the \textbf{SUBL}
rule has the form:

\begin{prooftree}
\AxiomC{$[\overline{X} \uplus
  \overline{Y}_{u} ,\mathcal{E},(H \cup \{\overline{x} = \overline{u}\})_{\mathit{simp}}] 
\Vdash (\Gamma\rightarrow\Lambda) (\{y \mapsto \overline{y}\}_{y \in  Y_{u}} \uplus
\{\overline{x}\mapsto \overline{u}\})$}
\UnaryInfC{$[\overline{X},\mathcal{E},H] \Vdash 
\overline{x}=u,\; \Gamma\rightarrow\Lambda$}
\end{prooftree}

\noindent where $\overline{x} \in \overline{X}$ has sort $s$, $ls(u)\leq s$, 
$\overline{x}$ does not appear in $u$, which 
is not a $\Sigma_1$-term, and $\Gamma$ contains no 
$\Sigma_1$-equations; and where $Y_{u}=\mathit{vars}(u)$,
$\overline{Y}_{u}$ are fresh new constants $\overline{y}$ of same sort
for each $y \in Y_{u}$, and $\overline{u} \equiv u \{y \mapsto
\overline{y}\}_{y \in Y_{u}}$. Note again that $\_=\_$ is assumed commutative.

\vspace{2ex}
\noindent\textbf{Substitution Right} (\textbf{SUBR}).

\noindent We again consider two cases: when $x$ is a variable, and 
when $\overline{x}$ is a fresh constant.

\begin{prooftree}
\AxiomC{$[\overline{X},\mathcal{E},H] \Vdash \Gamma\rightarrow x=u$}
\AxiomC{$[\overline{X},\mathcal{E},H] \Vdash (\Gamma\rightarrow\Lambda)\{x\mapsto u\}$}
\BinaryInfC{$[\overline{X},\mathcal{E},H] \Vdash \Gamma\rightarrow\Lambda\wedge x=u$}
\end{prooftree}

\noindent provided (i) $\Lambda \not= \top$,  (ii)
variable $x$ has sort $s$ and  $ls(u)\leq s$, 
and (iii)  $x\not\in vars(u)$.  (i) avoids looping, and (ii) makes
$\{x\mapsto u\}$ an order-sorted substitution.
Cases $x=u$ and $u=x$ are both covered.

\vspace{2ex}

\noindent When $\overline{x}$ is a fresh constant the \textbf{SUBR}
rule has the form:

\begin{prooftree}
\AxiomC{$[\overline{X},\mathcal{E},H] \Vdash \Gamma\rightarrow \overline{x}=u$}
\AxiomC{$[\overline{X} \uplus
  \overline{Y}_{u},\mathcal{E},
(H \cup \{ \overline{x} =  \overline{u}\}) _{\mathit{simp}}] \Vdash
  (\Gamma\rightarrow\Lambda)(\{y \mapsto \overline{y}\}_{y \in Y_{u}} \uplus
\{ \overline{x}\mapsto \overline{u}\})$}
\BinaryInfC{$[\overline{X},\mathcal{E},H] \Vdash
  \Gamma\rightarrow\Lambda\wedge 
\overline{x}=u$}
\end{prooftree}

\noindent provided (i)  $\Lambda \not= \top$,  (ii)
$\overline{x} \in \overline{X}$ has sort $s$ and  $ls(u)\leq s$, and (iii)
$\overline{x}$ does not appear in $u$;
and where $Y_{u}=\mathit{vars}(u)$,
$\overline{Y}_{u}$ are fresh new constants $\overline{y}$ of same sort
for each $y \in Y_{u}$, and $\overline{u} \equiv u \{y \mapsto
\overline{y}\}_{y \in Y_{u}}$. 
Cases $\overline{x}=u$ and $u=\overline{x}$ are both covered.

\vspace{2ex}

\noindent\textbf{Narrowing Simplification} (\textbf{NS}).

\begin{prooftree}
\AxiomC{$\{[\overline{X}
\uplus\overline{Y}_{i,j},\mathcal{E},(H \uplus 
\widetilde{\overline{\alpha}}_{i,j}|_{\overline{X}_{f(\vec{v})}})_{\mathit{simp}}
]
 \Vdash (\Gamma_{i},(\Gamma\rightarrow\Lambda)[r_{i}=u]_{p})
\overline{\alpha}_{i,j}
\}_{ i \in I_{0}}^{j \in J_{i}}$}
\UnaryInfC{$[\overline{X},\mathcal{E},H] \Vdash (\Gamma\rightarrow
\Lambda) [f(\vec{v})=u]_{p}$}
\end{prooftree}

\noindent where:
\begin{enumerate}
\item $f(\vec{v})$ is called the \emph{narrowex of goal}
$\Gamma\rightarrow\Lambda$
\emph{for the equation at position} $p$, $\overline{X}_{f(\vec{v})}$ is
the set of constants from $\overline{X}$ appearing in 
$f(\vec{v})$, $X_{f(\vec{v})}$ are the corresponding fresh
variables $x$ having the same sort as $\overline{x}$
for each $\overline{x} \in \overline{X}_{f(\vec{v})}$,
and $f(\vec{v})^{\circ}$ is the term obtained
by simultaneously replacing each $\overline{x} \in
\overline{X}_{f(\vec{v})}$ by its corresponding
$x \in X_{f(\vec{v})}$.

\item $f$ is a non-constructor symbol in $\Sigma\setminus \Omega$,
the terms $\vec{v}$ are $\Omega$-terms, and $f$ is defined in the
transformed ground convergent theory $\vec{\mathcal{E}}_{U}$
by a family of (possibly conditional) rewrite rules (with constructor
argument subcalls), of the form:
$\{[i]: f(\vec{u_{i}}) \rightarrow r_{i} \;\; \mathit{if} \;\;
\Gamma_{i} \}_{i \in I}$ such that: (i) 
are renamed with \emph{fresh variables} disjoint from those in $X_{f(\vec{v})}$
 and
in $\Gamma\rightarrow\Lambda$; (ii)
as assumed throughout,  for each $i \in I$,
$\mathit{vars}(f(\vec{u_{i}})) 
\supseteq
\mathit{vars}(r_{i}) \cup
\mathit{vars}(\Gamma_{i})$ and the rules are \emph{sufficiently complete},
i.e., can rewrite modulo the axioms $B_{0}$ of $\vec{\mathcal{E}}_{U}$
any $\Sigma$-ground term of the form $f(\vec{w})$, with the $\vec{w}$ ground
$\Omega$-terms; and (iii) are the transformed rules by the
$\vec{\mathcal{E}} \mapsto \vec{\mathcal{E}}_{U}$ transformation
of corresponding rules defining $f$ in $\vec{\mathcal{E}}$
and are such that,
as explained in Footnote  \ref{collapse-foot}, they
never lose their lefthandside's top symbol $f$ in the
$\vec{\mathcal{E}} \mapsto \vec{\mathcal{E}}_{U}$ transformation
due to a $U$-collapse.

\item For each $i \in I$,
  $\mathit{Unif}_{B_{0}}(f(\vec{v})^{\circ}=
f(\vec{u_{i}}))$ 
is a family of most general
$B_{0}$-unifiers\footnote{The fact that  $B_{0}$ may involve axioms
of associativity without commutativity ---for which the number of
unifiers may be infinite--- may be a problem.
In this case, Maude's $B_{0}$-unification algorithm will
either find a complete finite set of unifiers, or will return
a finite set of such unifiers with an incompleteness warning.
In this second case, the ${\bf NS}$ rule application would have to be undone due to
incompleteness.  But  this second case seems unlikely
thanks to two favorable reasons: (1) the unifiers in 
$\mathit{Unif}_{B_{0}}(f(\vec{v}) ^{\circ} =f(\vec{u_{i}}))$ are
\emph{disjoint} unifiers, i.e., no variables are shared
between the two unificands; and (2)  we may safely assume, without loss of generality,
that all the lefthand sides $f(\vec{u_{i}})$ of the rules defining $f$
are \emph{linear} terms (have no repeated variables); 
because if they are not, we can easily  linearize
them and add the associated equalities between the linearized
variables to their rule's condition $\Gamma_{i}$.}
$\mathit{Unif}_{B_{0}}(f(\vec{v})^{\circ}=f(\vec{u_{i}}))=
\{\alpha_{i,j}\}_{j \in J_{i}}$, with $I_{0} = \{i \in I \mid
  \mathit{Unif}_{B_{0}}(f(\vec{v}) ^{\circ}=f(\vec{u_{i}}))
\not= \emptyset\}$, and with $\mathit{dom}(\alpha_{i,j})=
\mathit{vars}(f(\vec{v}))^{\circ} \uplus \mathit{vars}(f(\vec{u_{i}}))$,
 $\mathit{ran}(\alpha_{i,j})$ is a set of  \emph{fresh} variables 
not appearing anywhere, 
$Y_{i,j} = \mathit{vars}(\alpha_{i,j}(X_{f(\vec{v})}))$, and
$\overline{Y}_{i,j}$
the set of \emph{fresh} constants 
$\overline{y}$ of same sort for each variable $y \in Y_{i,j}$, and
$\overline{\alpha}_{i,j}$ 
the composed substitution
$\overline{\alpha}_{i,j} = \{\overline{x} \mapsto x 
\}_{\overline{x} \in \overline{X}_{f(\vec{v})}}
\alpha_{i,j}\{y \mapsto \overline{y}\}_{y\in Y_{i,j}}$.
Finally,  $\widetilde{\overline{\alpha}}_{i,j}|_{\overline{X}_{f(\vec{v})}}$
is the set of ground equations
$\widetilde{\overline{\alpha}}_{i,j}|_{\overline{X}_{f(\vec{v})}}
= \{\overline{x}=
\overline{\alpha}_{i,j}(\overline{x})\}_{\overline{x}\in \overline{X}_{f(\vec{v})}}$.

\item We furthermore assume that: (i) $u$ is a $\Sigma_{1}$-term; and
(ii) all $r_{i}$ in
the set of equations $\{[i]: f(\vec{u_{i}}) \rightarrow r_{i} \;\; \mathit{if} \;\;
\Gamma_{i} \}_{i \in I}$ defining $f$ are also $\Sigma_{1}$-terms,
and  $f$ is not a $\Sigma_{1}$-symbol.
This will ensure that the equations  $(r_{i}=u)\overline{\alpha}_{i,j}$
are all $\Sigma_{1}$-equations.  As earlier, we assume
throughout that the symbol $\_=\_$ is commutative.
\end{enumerate}

\vspace{1ex}

Thanks to condition (4), the effect of this rule if $p$ is a position in the
premise $\Gamma$ of the goal multiclause is that the {\bf CVUL} 
simplification rule will become enabled in all the resulting
subgoals, thus generating a healthy ``chain reaction''
of simplifications.  Given the more restrictive nature of the {\bf CVUFR} rule,
the effect will be more limited
if $p$ is a position in the multiclause's conclusion $\Lambda$.
The intention to ensure some definite progress in the
simplification process is the only reason for adding condition (4)
as a reasonable restriction for automation.  However, the ${\bf NS}$ 
rule makes perfect sense dropping such a restriction, i.e., as a 
simplification rule of the form:

\begin{prooftree}
\AxiomC{$\{[\overline{X}
\uplus\overline{Y}_{i,j},\mathcal{E},(H \uplus 
\widetilde{\overline{\alpha}}_{i,j}|_{\overline{X}_{f(\vec{v})}})_{\mathit{simp}}
]
 \Vdash (\Gamma_{i},(\Gamma\rightarrow\Lambda)[r_{i}]_{p})
\overline{\alpha}_{i,j}
\}_{ i \in I_{0}}^{j \in J_{i}}$}
\UnaryInfC{$[\overline{X},\mathcal{E},H] \Vdash (\Gamma\rightarrow
\Lambda) [f(\vec{v})]_{p}$}
\end{prooftree}

\noindent  keeping the above side conditions (1)--(3),
but now allowing the  narrowex $f(\vec{v})$ to apear
\emph{anywhere} in $\Gamma\rightarrow
\Lambda$.  In particular, $f(\vec{v})$ may well be a proper subterm
of another $\Sigma$-term, and therefore neither the lefthand nor the 
righthand side of an equation.  For purposes of proving soundness,
we will use this more general form of the ${\bf NS}$
inference rule.  In practice, we shall regard ${\bf NS}$ as a 
simplification
rule with a \emph{dual use}: (i) automated with the additional 
restrictions in
(4), and (ii)  applied under user control
with the more general rule, assuming (1)--(3).

\vspace{1ex}

\noindent The ${\bf NS}$ simplification
rule is very closely related, and complements,
the narrowing induction (${\bf NI}$)  inference 
rule to be discussed later.  The only differences
are that: (i)  ${\bf NI}$ adds additional induction
hypotheses to its subgoals, based on the subcalls
appearing in the rules defining $f$; and (ii) in the
${\bf NI}$ rule, the narrowex $f(\vec{v})$ does not
have any constants in $\overline{X}$.  Therefore, these rules
have different restrictions and can cover different situations.
Since  several detailed examples illustrating the use of the ${\bf NI}$ 
rule will be given later, and the notation introduced in both rules
is quite similar, I do not give an example here.

\vspace{2ex}

\noindent\textbf{Clause Subsumption} ($\textbf{CS}$).

\begin{prooftree}
\AxiomC{$[\overline{X},\mathcal{E},H\cup\{\Gamma\rightarrow\Delta\}]
  \Vdash \Gamma'\rightarrow (\mathit{ESC})\theta$}
\UnaryInfC{$[\overline{X},\mathcal{E},H\cup\{\Gamma\rightarrow\Delta\}]
  \Vdash \Gamma'\rightarrow\Lambda'$}
\end{prooftree}

\noindent where $\theta$ is a matching (modulo $\mathit{ACU}$)  substitution
such that $\Gamma'\rightarrow\Lambda' =_{B^{=}} 
(\Gamma,\mathit{ES} \rightarrow \Delta, \mathit{ES}' \wedge \mathit{ESC})
\theta$.  That is, the hypothesis $\Gamma\rightarrow\Delta$
is extended to the pattern $\Gamma,\mathit{ES} \rightarrow \Delta,
\mathit{ES}' \wedge \mathit{ESC}$ for pattern matching multiclauses,
where the variables $\mathit{ES}$ and $\mathit{ES}'$ range over \emph{equation
sets} (with $\mathit{ES}$, appearing on the left side, understood
as a conjunction, and $\mathit{ES}'$, appearing on the right side,
understood as a disjunction), and the variable $\mathit{ESC}$ 
ranges over \emph{equation set conjunctions} (with such equation
sets understood as disjunctions).

This rule
can be fully automated: if a substitution $\theta$
is found such that the goal matches the extended pattern of
 some hypothesis   $\Gamma\rightarrow\Delta$ in the
goal's inductive theory, then the goal can be automatically
simplified.

Let us illustrate the $\textbf{CS}$ rule by means of
 a simple example.  Suppose that
the inductive theory involves the Peano natural numbers
with $0$, $s$, $+$, $*$ and $>$, 
the goal formula is:
\[
\Gamma' \rightarrow\Delta' \; \equiv \;
\mathit{true} = s(n) > m, \; \mathit{true} = m > \overline{z},\; m * m = m 
\rightarrow s(n) > \overline{z} = \mathit{true} \; \wedge \; 
 (n*n =n,\; n > m = \mathit{true}) 
\]
and we have the (non-executable in this case) hypothesis:
\[
\Gamma \rightarrow\Delta \; \equiv \;
x> y = \mathit{true}, \; y > \overline{z} = \mathit{true} \rightarrow 
x > \overline{z} = \mathit{true}. 
\]
Then, the $\textbf{CS}$ rule can be applied with
matching substitution $\theta = \{x \mapsto s(n),\;
y \mapsto m,\; \mathit{ES} \mapsto ( m * m = m),\;
\mathit{ES}' \mapsto \emptyset ,\; \mathit{ESC} \mapsto
(n*n =n,\; n > m = \mathit{true})\}$, and yields the resulting
subgoal:
\[
\Gamma' \rightarrow \theta(\mathit{ESC}) \; \equiv \;
\mathit{true} = s(n) > m, \; \mathit{true} = m > \overline{z},\; m * m = m 
\rightarrow n*n =n,\; n > m = \mathit{true}.
\]
%
% For an example showing how the $\textbf{CS}$ rule is
% applied, together with other inference rules,
% to prove the cancellation law for multiplication
% of non-zero natural numbers see Section \ref{inf-ex}.
%

\vspace{1ex}

\noindent The above $\textbf{CS}$ rule complements the inductive congruence
closure rule (\textbf{ICC}) presented later in at least two ways.
First, it makes it possible to use those induction hypotheses
not usable as rewrite rules, namely, those in the
set  $H_{ne}$, for clause simplification purposes.  
In a sense,  the $\textbf{CS}$ rule  manages to use a clause $\Gamma\rightarrow\Delta$
 in $H_{ne}$ ``as if it were a rewrite rule,''  since we can think of
 such a clause
---in a way entirely similar as how clauses $\Upsilon \rightarrow \Delta'$
in $H^{=}_{\vee,e}$ were oriented as rewrite rules 
$\Upsilon \rightarrow (\Delta' \rightarrow
\top)$--- as a ``conditional rewrite rule'' of the form 
$\Gamma\rightarrow(\Delta \rightarrow \top)$ that
the  $\textbf{CS}$ rule ``applies'' to the head of a conjecture when
part of the conjecture's  head matches the lefthand side $\Delta$
and the condition $\Gamma$ is ``satisfied'' by part of
the conjecture's premise matching it.  Second, since the $\textbf{CS}$
rule is computationally less expensive that \textbf{ICC} \emph{and}
all hypotheses in $H$ can be used for subsumption, the $\textbf{CS}$
rule also complements the heavier artillery of \textbf{ICC}, because
it may blow away some conjectures without having to bring in
\textbf{ICC} itself.  In practice, therefore, $\textbf{CS}$ should be
tried before \textbf{ICC} to pick any low-hanging fruit that may
be around.

\vspace{2ex}
\noindent\textbf{Equation Rewriting (Left and Right)} (\textbf{ERL}\ and\ \textbf{ERR}).

\begin{prooftree}
\AxiomC{$[\overline{X},\mathcal{E},H] \Vdash (u' = v')\theta,\;\Gamma\rightarrow\Lambda$}
\AxiomC{$[\emptyset,\mathcal{E},\emptyset] \Vdash u=v \Leftrightarrow u'=v'$}
\LeftLabel{\textbf{(ERL)}}
\BinaryInfC{$[\overline{X},\mathcal{E},H] \Vdash w = w',\;\Gamma\rightarrow\Lambda$}
\end{prooftree}

\begin{prooftree}
\AxiomC{$[\overline{X},\mathcal{E},H] \Vdash \Gamma\rightarrow\Lambda
\wedge((u'=v')\theta,\;\Delta)$}
\AxiomC{$[\emptyset,\mathcal{E},\emptyset] \Vdash u=v \Leftrightarrow u'=v'$}
\LeftLabel{\textbf{(ERR)}}
\BinaryInfC{$[\overline{X},\mathcal{E},H] \Vdash
\Gamma\rightarrow\Lambda
\wedge(w=w',\; \Delta)$}
\end{prooftree}

\noindent where there is a substitution $\theta$ such that
 $(w = w') =_{B^{=}} (u = v)\theta$,
$\mathit{vars}(u' = v') \subseteq \mathit{vars}(u =
v)$, $(u = v) \succ (u' = v')$, and
$[\emptyset,\mathcal{E},\emptyset] \Vdash u=v \Leftrightarrow u'=v'$
abbreviates \emph{two} different goals:
$[\emptyset,\mathcal{E},\emptyset] \Vdash u=v \rightarrow u'=v'$
and
$[\emptyset,\mathcal{E},\emptyset] \Vdash u'=v' \rightarrow u=v$.
 The equivalence
 $(u=v)\Leftrightarrow (u'=v')$ should be verified ahead of time as a separate proof
 obligation, so that it can be used automatically to simplify many
goals containing any equations $w_{1} = w_{2}$
that are instances of $u=v$ modulo $B^{=}$,
without requiring reproving $(u=v)\Leftrightarrow (u'=v')$ each time.
This provides a general method
to fully \emph{automate} rules \textbf{ERL}
and \textbf{ERR} so that they are 
\emph{subsumed}\footnote{The net effect is not only
that \textbf{EPS} both subsumes \textbf{ERL}
and \textbf{ERR} and becomes more powerful:
by adding such extra rules to $\vec{\mathcal{E}}_{U}^=$,
the \textbf{ICC} simplification rule
discussed next, which also performs simplification
with equality predicates, also becomes more powerful.}
by the \textbf{EPS} simplification
rule when the equality predicate theory $\vec{\mathcal{E}}_{U}^=$
(and therefore the given inductive theory)
has been extended with reductive rules of the form $(u=v)\rightarrow (u'=v')$
proved as lemmas,\footnote{More generally,
the equality predicate theory $\vec{\mathcal{E}}_{U}^=$
can be extended by adding to it \emph{conditional rewrite rules}
that orient inductive theorems of $\mathcal{E}$
or $\mathcal{E}_{U}^=$,
are executable, and keep $\vec{\mathcal{E}}_{U}^=$ operationally
terminating.  For example,
if $c$ and $c'$ are \emph{different} constructors
whose sorts belong to the same connected component having
a top sort, say, $s$, then the
conditional rewrite rule
$x=c(x_{1},\ldots, x_{n})\wedge x=c'(y_{1},\ldots, y_{m})\rightarrow
\bot$, where $x$ has sort $s$ orients
an inductively valid lemma, clearly terminates,
 and  can thus be added to $\vec{\mathcal{E}}_{U}^=$.
In particular, if $p$ is a Boolean-valued predicate,
$p(u_{1},\ldots, u_{n})= \mathit{true}\wedge p(u_{1},\ldots, u_{n})= 
\mathit{false}$ rewrites to $\bot$.}
that is, after we have proved the lemma
$(u=v) \Leftrightarrow (u'=v')$.
Note that what we are exploiting in the \textbf{ERL}
and \textbf{ERR} rules
is the fundamental equivalence:
\[T_{\mathcal{E}} \models \varphi \;\; \Leftrightarrow \;\; 
  \psi
\;\;\; \;\;\; \mathit{iff} \;\;\; \;\;\;
T_{\mathcal{E}^{=}} \models \varphi = \psi
\]
from  Section \ref{EQ-PREDS}, together
with the property $(u = v) \succ (u' = v')$, which
allows us to orient the equation
$(u = v) = (u' = v')$ between Boolean terms as a reductive 
rewrite rule $(u = v) \rightarrow (u' = v')$.

For an example of a useful
rewrite rule of this kind, one can prove for 
natural number addition in, say, Peano notation,
the equivalence $x+z = y +z \Leftrightarrow x = y$
and then use it as the reductive $\Sigma^{=}$-rewrite rule
$x + z = y + z \rightarrow x = y$ to simplify, by means of  the \textbf{ERL}
and \textbf{ERR} rules,
natural arithmetic expressions.

\vspace{2ex}

\noindent {\bf Inductive Congruence Closure}  (\textbf{ICC}).  

\vspace{2ex}

\noindent The next simplification
rule is the  \textbf{ICC} rule.  Since
this is one of the
most complex rules in the inference system, to help the reader I first
explain a simpler 
\emph{congruence closure}  rule, which has
 not been made part of the inference system because
\textbf{ICC} is a more powerful extension of it.
As shown below, \textbf{ICC} is more powerful because
it can draw useful \emph{inductive} consequences (thus the
``\textbf{I}''  in \textbf{ICC}) that cannot be drawn  with congruence
closure alone.

\vspace{1ex}

Furthermore, even though congruence
closure and  \textbf{ICC} can be viewed as  ``modus ponens'' steps
internal to the given goal $\Gamma\rightarrow\Lambda$, where
we assume a convergent equivalent
version of the ground condition $\overline{\Gamma}$
to try to prove the ground conclusion $\overline{\Lambda}$ true, we shall see that, even
when  $\overline{\Lambda}$ cannot be proved true,
\textbf{ICC} can still simplify $\Gamma\rightarrow\Lambda$ to $\top$
by showing $\Gamma$ unsatisfiable.

\begin{example} \label{ICC-EX1}
  Assume a specification $\mathcal{E}$  of the natural numbers in
  Peano notation with the usual definition of $+$ and with predicates
  $\mathit{even}$ and $\mathit{odd}$ defined
  by equations  $\mathit{even}(0) =\mathit{true}$, $\mathit{even}(s(0)) =\mathit{false}$,
   $\mathit{even}(s(s(x)))= \mathit{even}(x)$,
 $\mathit{odd}(0) =\mathit{false}$, and
   $\mathit{odd}(s(x))= \mathit{even}(x)$.  Now consider that, while
   proving a bigger theorem, we encounter a subgoal of the form:
\[
  [\overline{X},\mathcal{E},H] \Vdash \mathit{even}(n+m)=
  \mathit{true},\;
  \mathit{odd}(n+m)= \mathit{true}
  \rightarrow \bot
\]
where $H$ contains the already proved lemma
 $\mathit{even}(x) =\mathit{true} \rightarrow \mathit{odd}(x)
 =\mathit{false}$,
 which is an executable hypothesis with the RPO order
 generated by the symbol order $+ \succ \mathit{odd} \succ \mathit{even} \succ \mathit{not}
 \succ s \succ 0 \succ \mathit{false} \succ \mathit{true}$.
 I shall use this as a running example to show the difference between
 congruence closure and \textbf{ICC}.
\end{example}

\noindent First of all, note  that the  \textbf{EPS} rule will not change the
goal in Example \ref{ICC-EX1}, since the goal's condition cannot be further
simplified by  \textbf{EPS}.  An attempt to simplify this goal by
applying to it a \emph{congruence closure} rule of the form:

\begin{prooftree}
\AxiomC{$[\overline{X},\mathcal{E},H]
  \Vdash \Gamma^{\bullet}\rightarrow \Lambda^{\bullet}$}
\UnaryInfC{$[\overline{X},\mathcal{E},H]
  \Vdash \Gamma \rightarrow\Lambda $}
\end{prooftree}

\noindent where, if  $Y=vars(\Gamma\rightarrow\Lambda)$, then
$\overline{\Gamma}$, resp. $\overline{\Lambda}$, denotes the
ground formula  obtained by instantiating each $y \in Y$
in $\Gamma$, resp. in $\Lambda$, by a fresh constant $\overline{y}$
with same sort as $y$;  and, assuming  that
$\mathcal{E}$ has axioms $B=B_{0} \uplus U$ and
recalling the definition of the congruence closure transformation
$cc^\succ_{B_0}$ in Section \ref{OS-CC},
$\overline{\Gamma}^{\bullet}$
is defined as the set of ground equations
associated to the ground convergent set of rules
$cc^\succ_{B_0} (\overline{\Gamma})$,
and $\Gamma^{\bullet}$
is obtained from $\overline{\Gamma}^{\bullet}$
by mapping back each such  $\overline{y}$
to its corresponding  $y \in Y$.  Likewise,
$\Lambda^{\bullet}$ is obtained in the same manner from
$\overline{\Lambda}^{\bullet}$, where, by definition,
$\overline{\Lambda}^{\bullet}  \in \overline{\Lambda}
!\,_{\vec{\mathcal{E}}_{\overline{X \cup
      Y}_U}^{=}\cup\,\vec{H}^{+}_{e_U} \cup cc^\succ_{B_0}(\overline{\Gamma})}$.  That is,
we simplify  $\overline{\Lambda}$
with the combined power or \textbf{EPS} and the convergent rules
$cc^\succ_{B_0} (\overline{\Gamma})$ to get
$\overline{\Lambda}^{\bullet}$.  Of course, if 
$\overline{\Lambda}^{\bullet} = \top$, the entire
multiclause $\Gamma \rightarrow\Lambda$ is simplified to $\top$.
Note, however, that this congruence closure simplification method leaves
the goal in Example \ref{ICC-EX1} untouched.  This is because:
(i) the rules $\mathit{even}(\overline{n}+\overline{m}) \rightarrow
  \mathit{true},\;
  \mathit{odd}(\overline{n}+ \overline{m}) \rightarrow \mathit{true}$
  are already ground convergent, and (ii) $\bot$ cannot be simplified.

  \vspace{1ex}

  A further observation about the limitations of the above congruence
  closure simplification rule is that  the rules in 
$\vec{\mathcal{E}}_{\overline{X \cup
      Y}_U}^{=}\cup\,\vec{H}^{+}_{e_U}
  \cup cc^\succ_{B_0}(\overline{\Gamma})$ that
  we are using for simplifying $\overline{\Lambda}$
can \emph{preempt each other}, thus becoming less effective.
For example, the lefthand side $l$ of a ground rule $l \rightarrow r$ in 
$cc^\succ_{B_0}(\overline{\Gamma})$ may be reducible
by some rule in $\vec{\mathcal{E}}_{\overline{X \cup
    Y}_U}^{=}\cup\,\vec{H}^{+}_{e_U}$ and may for this reason fail to
be applied to some ground subterm of $\overline{\Lambda}$, simply
because such a subterm was already simplified by the same rule
in
$\vec{\mathcal{E}}_{\overline{X \cup Y}_U}^{=} \cup\,\vec{H}^{+}_{e_U}$,
which can also simplify $l$.
This suggests defining the formula:
\[
 \overline{\Gamma}^{\flat} =_{\mathit{def}}
  \bigwedge_{(l \rightarrow r) \in cc^\succ_{B_0}(\overline{\Gamma})}
(l = r)!_{\vec{\mathcal{E}}_{\overline{X \cup Y}_U}^=\cup\,\vec{H}^{+}_{e_U}}
\]
Note that, as shown in Example \ref{EPS-example}, since we are
performing  \textbf{EPS} simplification,
$(l = r)!_{\vec{\mathcal{E}}_{\overline{X \cup Y}_U}^{=}\cup\,\vec{H}^{+}_{e_U}}$
need not be a single equation: it can be
so; but it can also be a disjunction or conjunction of equations,
or $\top$ or $\bot$.  Therefore, by putting 
$\overline{\Gamma}^{\flat}$ in disjunctive normal form, it can be
either  $\top$, or $\bot$, or a disjunction of conjunctions of the form
$\overline{\Gamma}^{\flat} = \bigvee_{i \in I}\overline{\Gamma}^{\flat}_{i}$.
We can then define
$\vec{\overline{\Gamma}}^{\flat} = \bigvee_{i \in I}
\vec{\overline{\Gamma}}^{\flat}_{i}$,
where
$\vec{\overline{\Gamma}}^{\flat}_{i} =_{\mathit{def}}
\mathit{orient}_{\succ}(\overline{\Gamma}^{\flat}_{i})$,
where $\mathit{orient}_{\succ}(\overline{\Gamma}^{\flat}_{i})$
denotes the set of ground rewrite rules
associated to the ground equations in $\overline{\Gamma}^{\flat}_{i}$
according to the RPO order $\succ$.  This has
several important advantages over the above congruence closure
simplification method.  First, we may have $\overline{\Gamma}^{\flat}
= \bot$, showing $\Gamma$ unsatisfiable and thus proving the given
goal $\Gamma \rightarrow\Lambda$ true.  For
example, one of the rules in $cc^\succ_{B_0}(\overline{\Gamma})$
may
be of  the form $s(\overline{v}) \rightarrow 0$, with $0$ and $s$ the natural
zero and successor symbols, so that
the ground equation $s(\overline{v}) = 0$ will be
\textbf{EPS}-simplified to $\bot$, making $\overline{\Gamma}^{\flat}= \bot$.
Second, a rule in $cc^\succ_{B_0}(\overline{\Gamma})$ may become
much more widely applicable when transformed into a rule in some
$\vec{\overline{\Gamma}}^{\flat}_{i}$.  For example, a ground rule of the
form $s(\overline{u}) \rightarrow s(\overline{v})$ in
$cc^\succ_{B_0}(\overline{\Gamma})$
may become in some $\vec{\overline{\Gamma}}^{\flat}_{i}$
either a rule  $\overline{u}' \rightarrow \overline{v}'$,
or   $\overline{v}' \rightarrow \overline{u}'$, depending
on the RPO-ordering of the simplified forms $\overline{u}'$ and $\overline{v}'$
of $\overline{u}$ and $\overline{v}$,
making such a rule much more widely applicable.
However, even with all these improvements, for our goal
in Example \ref{ICC-EX1} we still get
$\vec{\overline{\Gamma}}^{\flat}=
\{\mathit{even}(\overline{n}+\overline{m}) \rightarrow
  \mathit{true},\;
  \mathit{odd}(\overline{n}+ \overline{m}) \rightarrow \mathit{true}\}$
  because: (i) \textbf{EPS} simplification leaves these rules untouched,
  and (ii) the inductive lemma  $\mathit{even}(x) =\mathit{true} \rightarrow \mathit{odd}(x)
 =\mathit{false}$ cannot be applied to simplify 
$\mathit{odd}(\overline{n}+ \overline{m})$ to $\mathit{false}$, since
the needed assumption $\mathit{even}(x) =\mathit{true}$ cannot be proved
with the rules in  $\vec{\mathcal{E}}_{\overline{X \cup
    Y}_U}^{=}\cup\,\vec{H}^{+}_{e_U}$.  So, even this more powerful
form of congruence closure still leaves the goal in Example \ref{ICC-EX1}
untouched.

\vspace{1ex}

This suggests one more turning of the screw, finally bringing us
to our desired  \textbf{ICC} rule, namely, to
\emph{inter-reduce} the rules in $\vec{\overline{\Gamma}}^{\flat}$
by defining:
\[
 \overline{\Gamma}^{\sharp} =_{\mathit{def}}
  \bigwedge_{(l \rightarrow r) \in cc^\succ_{B_0}(\overline{\Gamma})}
(l = r)!_{\vec{\mathcal{E}}_{\overline{X \cup
      Y}_U}^=\cup\,\vec{H}^{+}_{e_U} \cup
  cc^\succ_{B_0}(\overline{\Gamma})\setminus \{l \rightarrow r\}}
\]
and then defining $\overline{\Gamma}^{\sharp} = \bigvee_{i \in I}
\overline{\Gamma}^{\sharp}_{i}$ and 
$\vec{\overline{\Gamma}}^{\sharp} = \bigvee_{i \in I}
\vec{\overline{\Gamma}}^{\sharp}_{i}$
in the same way as we defined $\overline{\Gamma}^{\flat}$,
$\vec{\overline{\Gamma}}^{\flat}$,
and the $\overline{\Gamma}^{\flat}_{i}$ and
$\vec{\overline{\Gamma}}^{\flat}_{i}$.
This is indeed a powerful enough method to prove our
goal in  Example \ref{ICC-EX1},
since now we can inter-reduce the equation
$\mathit{odd}(\overline{n}+ \overline{m}) = \mathit{true}$
with the rules
 $\vec{\mathcal{E}}_{\overline{X \cup Y}_U}^{=}\cup\,\vec{H}^{+}_{e_U}
 \cup \, \{\mathit{even}(\overline{n}+\overline{m}) \rightarrow
  \mathit{true}\}$, which allow the inductive lemma
$\mathit{even}(x) =\mathit{true} \rightarrow \mathit{odd}(x)
=\mathit{false}$ to be appied to simplify the above equation
to $\mathit{false} = \mathit{true}$, which is then
 \textbf{EPS}-simplified to $\bot$.

\vspace{1ex}

Note that, since we have the Boolean equivalence
$(A \vee B) \Rightarrow C \equiv (A \Rightarrow C) \wedge (B
\Rightarrow C)$, when performing  \textbf{ICC} simplification
of $\overline{\Lambda}$
by means of $\vec{\overline{\Gamma}}^{\sharp}$,
we in general may get not a single subgoal but several of them,
each corresponding to a different set of ground rules $\vec{\overline{\Gamma}}^{\sharp}_{i}$.
By bluring the distinction between the $\_\wedge\_$
and the $\_ , \_$ notation for conjunction, we then get our
desired \textbf{ICC} inference rule:

\begin{prooftree}
\AxiomC{$\{[\overline{X},\mathcal{E},H]
  \Vdash
  \Gamma^{\sharp}_{i} \rightarrow \Lambda^{\sharp}_{i} \}_{i \in I}$}
\LeftLabel{\textbf{(ICC)}}
\UnaryInfC{$[\overline{X},\mathcal{E},H]
  \Vdash \Gamma \rightarrow\Lambda $}
\end{prooftree}

\noindent where: (i) by definition,
$\overline{\Lambda}^{\sharp}_{i}  
 \in \overline{\Lambda}
!\,_{\vec{\mathcal{E}}_{\overline{X \cup
Y}_U}^{=}\cup\,\vec{H}^{+}_{e_U} \cup
\vec{\overline{\Gamma}}^{\sharp}_{i}}$,
and we always pick $\overline{\Lambda}^{\sharp}_{i}= \top$
if $\top \in  \overline{\Lambda}
!\,_{\vec{\mathcal{E}}_{\overline{X \cup
Y}_U}^{=}\cup\,\vec{H}^{+}_{e_U} \cup
\vec{\overline{\Gamma}}^{\sharp}_{i}}$;
(ii) $\Gamma^{\sharp}_{i} \rightarrow \Lambda^{\sharp}_{i}$
is obtained from
$\overline{\Gamma}^{\sharp}_{i} \rightarrow \overline{\Lambda}^{\sharp}_{i}$
by converting back the Skolem constants associated to the variables of
$\Gamma \rightarrow\Lambda$ into those same variables,
and
(iii) the case $\overline{\Gamma}^{\sharp}= \bot$
is the case when there are no conjunctions
in the disjunctive normal form
$\overline{\Gamma}^{\sharp}$,  i.e., $I = \emptyset$,
so that, by convention, the notation
$\{[\overline{X},\mathcal{E},H]
  \Vdash
  \Gamma^{\sharp}_{i} \rightarrow \Lambda^{\sharp}_{i} \}_{i \in \emptyset}$
  then denotes $\top$, because, since $\Gamma$ has been shown inductively
  unsatisfiable, the goal is proved.
  Note that when $\overline{\Gamma}^{\sharp}= \top$, $|I|=1$, and what we
  get is the subgoal $[\overline{X},\mathcal{E},H]
  \Vdash \Lambda'$,
  with $\overline{\Lambda'} \in    \overline{\Lambda}
!\,_{\vec{\mathcal{E}}_{\overline{X \cup
Y}_U}^{=}\cup\,\vec{H}^{+}_{e_U}}$.

\vspace{1ex}

\noindent Note the
close connection between the \textbf{ICC} and
\textbf{EPS} rules:  when the premise
$\Gamma$ of the goal $\Gamma \rightarrow \Lambda$
 is $\top$, then \textbf{ICC}  coincides
 with \textbf{EPS}.
This means that it is useless to apply \textbf{ICC}
when $\Gamma$ is $\top$.  But 
since \textbf{EPS} is a simpler and computationally
 less expensive rule than \textbf{ICC},
when $\Gamma$ is a non-trivial premise
it is a good strategy to always apply \textbf{EPS} 
before applying \textbf{ICC}.

\begin{example} \label{ICC-example}
Consider the \textbf{EPS}-simplified goal  
\[
[\emptyset, \mathcal{N}\mathcal{P},\emptyset] \Vdash
x * x =s(y),\; y =0 \rightarrow x + (x * x) = s(x+y) 
\wedge x * x = x 
\]
from Example  \ref{EPS-example}.  We want to further
simplify it using \textbf{ICC}.
Suppose
that, after adding new constants $\overline{x}$
and $\overline{y}$ to the signature of $\mathcal{N}\mathcal{P}$,
we linearly order its symbols in the chain:
\[
(\_)^{(\_)} \; \succ\; * \;\succ\; + \;\succ\; \overline{x} \;\succ\; \overline{y}
\;\succ\; s \;\succ\; 0.
\]
Then, the congruence closure of the ground equations
$\overline{\Gamma}=\{\overline{x} * \overline{x}
=s(\overline{y}), \overline{y} =0\}$
using the associated RPO order is the set of rules
$\{\overline{x} * \overline{x}  \rightarrow s(0),\overline{y}
\rightarrow 0\}$, which, since it cannot be further simplified by
the rules in $\vec{\mathcal{N}\mathcal{P}}^{=}$, 
 there are
 no induction hypotheses,
and is already inter-reduced,
 is just $\vec{\overline{\Gamma}}^{\sharp}$.
In this case, $\overline{\Lambda} =\overline{x} +
(\overline{x}  * \overline{x})  = s(\overline{x}+\overline{y}) 
\wedge \overline{x} * \overline{x} = \overline{x}$,
which is simplified by 
$\vec{\mathcal{N}\mathcal{P}}^{=} 
\cup \vec{\overline{\Gamma}}^{\sharp}$
(where $\vec{\mathcal{N}\mathcal{P}}^{=}$ includes
the rule $x = x \rightarrow \top$)
to $s(0) = \overline{x}$.
Therefore, 
\textbf{ICC} has simplified our goal to the considerably simpler goal,
\[
[\emptyset, \mathcal{N}\mathcal{P},\emptyset] \Vdash
x* x=s(0),\; y =0 \rightarrow s(0) = x. 
\]
\end{example}

\vspace{2ex}
\noindent\textbf{Variant Satisfiability} ($\textbf{VARSAT}$).

\begin{prooftree}
\AxiomC{$\top$}
\UnaryInfC{$[\overline{X},\mathcal{E},H] \Vdash \Gamma\rightarrow\Lambda$}
\end{prooftree}

\noindent if $\Gamma\rightarrow\Lambda$ is an $\mathcal{E}_1$-formula and
$\neg(\Gamma^\degree\rightarrow\Lambda^\degree)$ is unsatisfiable in $T_{\mathcal{E}_1}$,
where $\Gamma^\degree\rightarrow\Lambda^\degree$ is obtained from
$\Gamma\rightarrow\Lambda$ by replacing constants in $\overline{X}$ by
corresponding variables in $X$.  This rule
can be automated because being a $\mathcal{E}_1$-formula is a
syntactic condition.  Furthermore, since $\mathcal{E}_1$ is FVP and
sufficiently complete on free constructors modulo $A \vee C$ axioms,
satisfiability in $T_{\mathcal{E}_1}$
(and therefore in $T_{\mathcal{E}}$, since
$T_{\mathcal{E}}|_{\Sigma_{1}} \cong T_{\mathcal{E}_1}$)
is decidable \cite{var-sat-scp}.

% The rule could be more effective by adding the possible
% existing  constraints between Skolem constants to $\Gamma$
% but this would require careful thought.

\vspace{2ex}

\noindent One could politely ask: why {\bf VARSAT} and not
a more general rule allowing any combination of decision procedures?
Why not indeed? The purely pragmatic reason why  {\bf VARSAT} 
only supports variant satisfiability is that
\emph{it is already there}.  It has been
used effectively together with the other simplification
rules in discharging many verification conditions in
a large proof effort such as the proof of the IBOS browser
security discussed in \cite{ind-ctxt-rew}.  
Variant satisfiability
makes SMT solving \emph{user-extensible} to an infinite class
of user-definable data types \cite{var-sat-scp}, and therefore
\emph{complements}  the domain-specific decision
procedures supported by current SMT solvers.  One of course wants
\emph{both}: the efficiency of domain-specific
procedures and  {\bf VARSAT}'s  user-definable extensibility.  Therefore,
{\bf VARSAT} should in the future be generalized to
a ${\bf VARSAT}+{\bf SMT}$ simplification rule that combines domain-specific
procedures 
and variant satisfiability.

\subsection{Inductive Rules}\ \\

\noindent\textbf{Generator Set Induction} (\textbf{GSI}).

\vspace{1ex}

\noindent The generator set induction rule is a substantial
generalization of structural
induction on constructors. In both structural induction and its
\textbf{GSI} generalization one inducts on a variable $z$ in the 
conjecture. 

\begin{comment}

 The first
modality of the \textbf{GSI} rule below covers this, most common case.  
However, to widen the scope
of the \textbf{GSI} rule,  in a second modality of the rule, also
described below, we also support inducting on a fresh
constant $\overline{z}$. To induct on a variable $z$ we use the first
rule modality:

\end{comment}

{\small
\begin{prooftree}
\AxiomC{$\{[\overline{X}\uplus\overline{Y}^{\bullet}_{i},\mathcal{E},(H\uplus H_{i})_{\mathit{simp}}] \Vdash 
(\Gamma\rightarrow\bigwedge_{j\in
    J}\Delta_j)\{z\mapsto\overline{u}^{\bullet}_i\}
\}_{1 \leq i \leq n}$}
\UnaryInfC{$[\overline{X},\mathcal{E},H] \Vdash \Gamma\rightarrow\bigwedge_{j\in J}\Delta_j$}
\end{prooftree}
}
\noindent where $z\in vars(\Gamma\rightarrow\bigwedge_{j\in J}\Delta_j)$ has sort $s$,
$\{u_1\cdots u_n\}$ is a $B_{0}$-generator set
 for $s$, with $B_{0}$ the axioms of the
transformed ground convergent theory $\vec{\mathcal{E}}_{U}$,
and with  $Y_i$ fresh variables $Y_i=vars(u_i)$
for $1\leq i\leq n$.
The set of induction hypotheses $H_{i}$ is then, by definition,
the set
\[
H_{i}=\{((\Gamma\rightarrow\Delta_j)\{z\rightarrow\overline{v}\}) 
!\,_{\vec{\mathcal{E}}_{{\overline{X} \uplus\overline{Y}_{i}}_U}^{=}}
\mid v \in \mathit{PST}_{B_{0}, \leq s}(u_{i})\; \wedge \; j \in J \}
\]
where, by definition, $\mathit{PST}_{B_{0}, \leq s}(u_{i})=\{
v \in \mathit{PST}_{B_{0}}(u_{i})\mid ls(v) \leq s\}$
(recall the notion of proper $B_{0}$-subterm
in Definition \ref{B0-subterm-def}, and
Remark \ref{B0-subterm-remark} there),  
and $\overline{v}$ denotes the
instantiation of $v$ by the substitution
$\{y \mapsto \overline{y}\}_{y \in Y_{i}}$.
By \emph{notational convention}: (1)
when $\mathit{PST}_{B_{0}, \leq s}(u_{i}) \not= \emptyset$ and 
the set of induction hypotheses 
$H_{i}$ for constructor term $u_{i}$ are \emph{nontrivial}, i.e.,
at least for some $j\in J$ we have
$((\Gamma\rightarrow\Delta_j)\{z\rightarrow\overline{v}\}) 
!\,_{\vec{\mathcal{E}}_{{\overline{X} \uplus\overline{Y}_{i}}_U}^{=}}
\not= \top$ (trivial induction hypotheses are always omitted),
then
$\overline{Y}_{i}^{\bullet} \equiv \overline{Y}_{i}$ 
are fresh constants $\overline{y}$, with the same sorts, for
each $y \in Y_{i}$,
and $\overline{u}^{\bullet}_i \equiv \overline{u}_i$
denotes the instantiation
 of $u_i$ by the substitution
$\{y \mapsto \overline{y}\}_{y \in Y_{i}}$;
(2) otherwise, i.e., when either $Y_{i} = \emptyset$,
or all the simplified induction hypotheses $H_{i}$
are \emph{trivial}, then,
$\overline{Y}_{i}^{\bullet} \equiv \emptyset$, and
$\overline{u}^{\bullet}_i \equiv u_{i}$.  That is, in case (2)
the  inductive subgoal for  constructor term $u_{i}$
 has no (nontrivial) induction hypotheses, and then
we neither introduce new constants $\overline{Y}_{i}$ nor instantiate
$u_{i}$ with them.  Note, finally, that the new sets of hypotheses
$H \uplus H_{i}$ are simplified for each $1 \leq i \leq n$
in the way described in Section \ref{IndH-REW}.

\begin{comment}

\vspace{2ex}

\noindent To induct on a fresh constant $\overline{z}\in \overline{X}$ 
of sort $s$, we use the
second rule modality:

{\small
\begin{prooftree}
\AxiomC{$\{[\overline{X}
  \uplus\overline{Y}_{i},\mathcal{E},
(H  \uplus \{\overline{z} = \overline{u}_i\} \uplus
  \{((\Gamma\rightarrow\Delta_j)\{\overline{z}\rightarrow\overline{y}\}) 
!\,_{\vec{\mathcal{E}}_{{\overline{X} \uplus\overline{Y}_{i}}_U}^{=}}\}_{\overline{y}\in\overline{Y}_{\!\!i\leq
      s}}^{j\in J})_{\mathit{simp}}] \Vdash 
((\Gamma\rightarrow\bigwedge_{j\in
    J}\Delta_j)\{\overline{z}\mapsto\overline{u}_i\}) !\,_{\vec{\mathcal{E}}_{{\overline{X} \uplus\overline{Y}_{i}}_U}^{=}}
\}_{1 \leq i \leq n}$}
\UnaryInfC{$[\overline{X},\mathcal{E},H] \Vdash \Gamma\rightarrow\bigwedge_{j\in J}\Delta_j$}
\end{prooftree}
}

\noindent where $\overline{z}$ occurs in
$\Gamma\rightarrow\bigwedge_{j\in J}\Delta_j$, and
the meaning of the sets $\overline{Y}_{i}$ and $\overline{Y}_{i \leq s}$
 is exactly as defined for the first rule modality.
The only notational difference is that here $\overline{u}_i$ denotes
the instantiation of $u_i$ by the substitution
$\{y \mapsto \overline{y}\}_{y \in Y_{i}}$, and therefore $\overline{u}_i$
is always a ground term.

\end{comment}

\vspace{1ex}

Let me illustrate the \textbf{GSI} rule with two
examples.  The first example should be familiar to many readers,
but may still be helpful to illustrate the notation and
use of \textbf{GSI}.  The second shows both the
versatility and the proving power added by \textbf{GSI} to
standard structural induction.

\begin{comment}

Although the \textbf{GSI} rule generalizes standard structural
induction on constructors and should be familiar to many readers,
it may be worthwhile to illustrate its use through a couple of simple
examples for two reasons: (i) the fact that in the first modality,
as explained in Footnote \ref{u-i-bar-nonground},
the term $\overline{u}_{i}$ need not be ground; and (ii) since
we are not aware of uses of the second modality in the
literature, it may be of interest to also illustrate its use to better
appreciate its advantages.

\end{comment}

\begin{example}  \label{cons-ind-example}
Suppose we are building lists of
natural numbers with two constructors: $\mathit{nil}$, and
a ``cons'' operator
$\_\cdot \_: \mathit{Nat}\;\; \mathit{List}\rightarrow \mathit{List}$,
and we want to prove the associativity of the list append
operator $\_@ \_: \mathit{List}\;\; \mathit{List}\rightarrow
\mathit{List}$, defined with the usual recursive equations
$\mathit{nil}@L=L$ and $(n \cdot L)@Q= n \cdot (L@Q)$,
where $n,m$ have sort $\mathit{Nat}$, and $L,P,Q,R$ have sort $\mathit{List}$.
Let us call this theory $\mathcal{L}_{\mathit{cons}}$.  That is,
we want to prove the goal:
\[
[\emptyset , \mathcal{L}_{\mathit{cons}},\emptyset] \Vdash 
(L @ P)@Q = L@(P@Q). 
\]
We can do so by applying the \textbf{GSI} rule on variable $L$
with generator set $\{\mathit{nil},m \cdot R\}$, i.e., by structural
list induction.  We get two goals.  Goal (1) for $\mathit{nil}$
easily simplifies to $\top$.  Goal (2) for $m \cdot R$ also
easily simplifies to $\top$. The point, however, is to 
illustrate how this second
subgoal is generated according to rule \textbf{GSI}
and is simplified to $\top$ by the {\bf EPS} simplification rule.
$Y_{2}=\{m,R\}$, and $\mathit{PST}_{B_{0}, \leq \mathit{List} }(m \cdot R )
=\{R\}$. Therefore, we get the subgoal:
\[
[\{ \overline{m},\overline{R}\} ,
\mathcal{L}_{\mathit{cons}},\{(\overline{R} @ P)@Q = 
\overline{R}@(P@Q)\}
] \Vdash 
(\overline{m} \cdot \overline{R} @ P)@Q = \overline{m} \cdot \overline{R} @(P@Q). 
\]
and $H_{2}$ is the induction hypothesis
$(\overline{R} @ P)@Q = \overline{R}@(P@Q)$.
Since in the standard RPO order we have
$(\overline{R} @ P)@Q \succ \overline{R}@(P@Q)$,
$H_{2}$ can be
used as the rewrite rule
$(\overline{R} @ P)@Q \rightarrow \overline{R}@(P@Q)$.
The  {\bf EPS} rule
 can simplify the subgoal
$(\overline{m} \cdot \overline{R} @ P)@Q = \overline{m} \cdot \overline{R} @(P@Q)$
to $\top$ because, using the second recursive equation,
it simplifies to
$\overline{m} \cdot (\overline{R} @ P)@Q = \overline{m} \cdot
\overline{R} @(P@Q)$, which using the induction hypothesis is
rewritten to
$\overline{m} \cdot \overline{R}@(P@Q) = \overline{m} \cdot
\overline{R} @(P@Q)$, which using the equationally-defined
equality predicate rules
in $ \mathcal{L}_{\mathit{cons}}^{=}$ is finally reduced to $\top$.
\end{example}

\begin{example}  \label{+-comm-lemma}
This example similar to
Example \ref{comm-ord-rew} in Section \ref{OR-section}
on proving commutativity of natural number addition
for the Peano naturals, i.e., the goal is
$x+y=y+x$.  But
 here we prove commutativity for the ``twice as fast''
definition of addition: $n+0=n,\; n+s(0)=s(n), \; n+s(s(m))=s(s(n+m))$.
We induct on $x$ applying the ${\bf GSI}$ rule
with generator set $\{0,s(0),s(s(k))\}$.
Subgoal (1), the subgoal  for $0$,
 is  $0+y=y+0$, which simplifies to $0+y=y$ and 
cannot be further simplified.  
We apply again the \textbf{GSI} rule
inducting on $y$.  The sub-subgoals for $0$ and $s(0)$ are
 trivially simplified.
The  simplified sub-subgoal for $s(s(\overline{k}))$
is $s(s(0+\overline{k}))=s(s(\overline{k}))$.
Since $\mathit{PST}_{B_{0}, \leq \mathit{Nat} }(s(s(k)) )
=\{k,s(k)\}$, we get the
induction hypotheses:
$0+\overline{k}=\overline{k}$ and
$0+s(\overline{k})=s(\overline{k})$,
which are both reductive.  The {\bf EPS} rule then first
simplifies this sub-subgoal to
$s(s(\overline{k}))=s(s(\overline{k}))$ using the first hypothesis,
which is then simplified to $\top$ by the equality predicate
rules.

The simplified versions of  the 
other two subgoals of the original goal are:
(2) $s(0)+y=s(y)$, and (3) $s(s(\overline{k'}))+y =
s(s(y+\overline{k'}))$.  The reader can check that,
for both (2) and (3), a further induction
on $y$ followed by {\bf EPS} simplification also discharges them.
It is also interesting to see that, as already happened
in Example \ref{comm-ord-rew}, it is only by 
using  the ``background-theory-aware'' semantics
of ordered rewriting that the entire commutative law
can be proved without any need for extra lemmas.

\vspace{1ex}

\noindent This example illustrates two main points.
First, the versatility of the ${\bf GSI}$ rule
in being able to use the ``right'' generator set 
for each function, based on its recursive equations.
This often substantially helps the formula simplification
process.  The second point is the added power of
generating induction hypotheses,
not just for the \emph{variables} of a
term $u_{i}$ in the generator set whose sort
is smaller than or equal to the induction variable's sort $s$,
but also for all other \emph{proper subterms} of $u_{i}$
whose least sort is smaller than or equal to $s$.
Here this is illustrated by the
fact that $\mathit{PST}_{B_{0}, \leq \mathit{Nat} }(s(s(k)) )
=\{k,s(k)\}$, so that \emph{two} induction hypotheses are 
generated for $s(s(k))$.
In general, of course, this
provides a \emph{stronger} induction principle 
than just generating induction hypotheses
for the variables of $u_{i}$.  This increases the
chances of success in simplifying conjectures by
applying the additional induction hypotheses
to subcalls of the function (or functions)
whose recursive equations suggested the choice of the generator set.
This is analogous to the way in which strong induction on the
natural numbers is stronger than standard induction.
\end{example}

\begin{remark}  The {\bf GSI} rule could be further
generalized by allowing the notion of
$B_{0}$-generator set for a sort $s$ to be, not just a
set $\{u_1\cdots u_n\}$, but a set of
$\Sigma$-\emph{constrained terms}
$\{u_1 | \varphi_{1}\;\cdots\; u_n|\varphi_{n}\}$
such that $T_{\Omega/B_{0},s}=
\bigcup_{1 \leq i \leq n} \llbracket  u_{i} | \varphi_{i}
\rrbracket$, where, by definition, 
$\llbracket  u | \varphi
\rrbracket = \{[v] \in T_{\Omega/B_{0}}
\mid \exists \rho \in[X \rightarrow T_{\Omega}]\;\; s.t. \;\;
[v]  = [u\; \rho] \; \wedge \; \mathcal{E} \vdash \varphi \rho \}$,
where $X = \mathit{vars}( u)$ and
$\varphi$ is a QF $\Sigma$-formula
with  $\mathit{vars}(\varphi) \subseteq \mathit{vars}( u)$.
Why could this be useful?
Because it would allow what one might call
``bespoke generator sets,'' tailor made for
a specific function or set of functions $f$
involving \emph{conditional} equations
in their definition.  The reason why the
${\bf GSI}$ rule was not specified
in this more general form is that,
 the ``bespoke'' nature
of such a generalization is implicitly
provided by another inference rule, namely,
the \textbf{NI} rule to be presented next.
However, if future experimentation
suggests that it would be useful to have
this more general form of the ${\bf GSI}$ rule
explicitly available, it could be added to the present
inference system as a natural generalization.
\end{remark}

\vspace{2ex}

\noindent\textbf{Narrowing Induction} (\textbf{NI}).

\begin{prooftree}
\AxiomC{$\{[\overline{X}
\uplus\overline{Y}^{\bullet}_{i,j},\mathcal{E},(H \uplus 
(H_{i,j}) !\,_{\vec{\mathcal{E}}_{{\overline{X} \uplus \overline{Y}_{i,j}}_U}^{=}})_{\mathit{simp}}
]
 \Vdash (\Gamma_{i},(\Gamma\rightarrow\bigwedge_{l\in L}\Delta_l)[r_{i}]_{p})
\overline{\alpha}^{\bullet}_{i,j}
\}_{ i \in I_{0}}^{j \in J_{i}}$}
\UnaryInfC{$[\overline{X},\mathcal{E},H] \Vdash (\Gamma\rightarrow\bigwedge_{l\in L}\Delta_l) [f(\vec{v})]_{p}$}
\end{prooftree}

\noindent where:
\begin{enumerate}
\item $f(\vec{v})$ does not contain any constants in $\overline{X}$.
 We call $f(\vec{v})$ the \emph{focus narrowex of goal}
$\Gamma\rightarrow\bigwedge_{l\in L}\Delta_l$
\emph{at position} $p$.

\item $f$ is a non-constructor symbol in $\Sigma\setminus \Omega$,
the terms $\vec{v}$ are $\Omega$-terms, and $f$ is defined in the
transformed ground convergent theory $\vec{\mathcal{E}}_{U}$
by a family of (possibly conditional) rewrite rules (with constructor
argument subcalls), of the form:
$\{[i]: f(\vec{u_{i}}) \rightarrow r_{i} \;\; \mathit{if} \;\;
\Gamma_{i} \}_{i \in I}$ such that: (i)
are renamed with \emph{fresh variables} disjoint from those in $X$ and
in $\Gamma\rightarrow\bigwedge_{l\in L}\Delta_l$;
(ii) as assumed throughout, for each $i \in I$,
$\mathit{vars}(f(\vec{u_{i}})) 
\supseteq
\mathit{vars}(r_{i}) \cup
\mathit{vars}(\Gamma_{i})$ and the rules are \emph{sufficiently complete},
i.e., can rewrite modulo the axioms $B_{0}$ of $\vec{\mathcal{E}}_{U}$
any $\Sigma$-ground term of the form $f(\vec{w})$, with the $\vec{w}$ ground
$\Omega$-terms; and (iii) are the transformed rules by the
$\vec{\mathcal{E}} \mapsto \vec{\mathcal{E}}_{U}$ transformation
of corresponding rules defining $f$ in $\vec{\mathcal{E}}$ and are such that,
as explained in Footnote  \ref{collapse-foot},
never lose their lefthandside's top symbol $f$ in the
$\vec{\mathcal{E}} \mapsto \vec{\mathcal{E}}_{U}$ transformation
due to a $U$-collapse.

\item For each $i \in I$,
  $\mathit{Unif}_{B_{0}}(f(\vec{v})=f(\vec{u_{i}}))$ 
is a family of $B_{0}$-unifiers
$\mathit{Unif}_{B_{0}}(f(\vec{v})=f(\vec{u_{i}}))=
\{\alpha_{i,j}\}_{j \in J_{i}}$, with $I_{0} = \{i \in I \mid
  \mathit{Unif}_{B_{0}}(f(\vec{v})=f(\vec{u_{i}}))
\not= \emptyset\}$, and with $\mathit{dom}(\alpha_{i,j})=
\mathit{vars}(f(\vec{v})) \uplus \mathit{vars}(f(\vec{u_{i}}))$,
 $Y_{i,j} = \mathit{ran}(\alpha_{i,j})$ \emph{fresh} variables 
not appearing anywhere, 
$\overline{Y}_{i,j}$
 denotes the set of \emph{fresh} constants 
$\overline{y}$ of same sort for each variable $y \in Y_{i,j}$, and
$\overline{\alpha}_{i,j}$ 
denotes the composed substitution
$\overline{\alpha}_{i,j} = 
\alpha_{i,j}\{y \mapsto \overline{y}\}_{y\in Y_{i,j}}$.

\item $H_{i,j}$ is defined by cases, depending on where position
$p$ occurs in $\Gamma\rightarrow\bigwedge_{l\in L}\Delta_l$.
{\bf Case} (1): If $p$ occurs in $\Gamma$, then
$H_{i,j} =  \{(\Gamma\rightarrow\Delta_l) 
\overline{\gamma} \mid \;\; l \in L \; \wedge \; 
 f(\vec{w})  \in  \mathit{SSC}([i])\; \wedge \; 
f(\vec{w}) \alpha_{i,j} =_{B_{0}} f(\vec{v}) \gamma\}$.
{\bf Case} (2): If  $p$ occurs in some $\Delta_{k}$ in $\bigwedge_{l\in
  L}\Delta_l$, then
$H_{i,j} =  \{(\Gamma\rightarrow\Delta_k) 
\overline{\gamma} \mid \;\; 
 f(\vec{w})  \in  \mathit{SSC}([i])\; \wedge \; 
f(\vec{w}) \alpha_{i,j} =_{B_{0}} f(\vec{v}) \gamma\}$.
In both cases $\overline{\gamma}$ is, by definition, the
composed substitution
$\overline{\gamma} = 
\gamma \{y \mapsto \overline{y}\}_{y \in Y_{i,j}}$.

\item By \emph{notational convention}, (1) 
when the simplified induction hypotheses
$(H_{i,j}) !\,_{\vec{\mathcal{E}}_{{\overline{X} \uplus \overline{Y}_{i,j}}_U}^{=}}$
are \emph{nontrivial} (i.e., non-empty and at least one
 does not simplify to  $\top$),
then $\overline{Y}^{\bullet}_{i,j}=
\overline{Y}_{i,j}$, and
$\overline{\alpha}^{\bullet}_{i,j} = \overline{\alpha}_{i,j}$;
(2) otherwise, i.e., when the simplified induction hypotheses
of the $i,j$-th inductive subgoal are \emph{trivial},
then, $Y^{\bullet}_{i,j} = \emptyset$,
and $\overline{\alpha}^{\bullet}_{i,j} = \alpha_{i,j}$.
\end{enumerate}

\vspace{1ex}

\noindent Note that, by the way the induction hypotheses
$H_{i,j}$ are defined in (4) above, if $\bigwedge_{l\in L}\Delta_l$
contains more than one disjunction $\Delta_l$, i.e., if the
goal is properly a multiclause and not just a clause, there
are definite advantages in choosing a positon $p$ for the focus
narrowex in the goal's premise $\Gamma$, since this will have
a ``gunshot'' effect of producing as many induction hypotheses
as conjuncts $\Delta_l$ in $\bigwedge_{l\in L}\Delta_l$,
which may simultaneously help in proving all such conjuncts
(see Section \ref{inf-ex} for an example).

\vspace{1ex}

\noindent The above, somewhat long,
 list of technical details specifying the \textbf{NI} rule
is of course essential for its precise definition, proof of
soundness, and correct implementation.  But it may obscure
 its intuitive meaning, which can be clarified and illustrated 
by means of an example.

\begin{example}  \label{reverse-example}
Consider a signature of constructors $\Omega$ for
non-empty lists of elements (the constructors for such elements are
irrelevant), with sorts $\mathit{Elt}$ and $\mathit{List}$,
a subsort inclusion $\mathit{Elt} < \mathit{List}$,
an associative ($A$) list concatenation operator
$\_\cdot\_ : \mathit{List} \;\; \mathit{List} \rightarrow
\mathit{List}$,
and a convergent and sufficiently complete list reverse function 
$\mathit{rev}: \mathit{List} \rightarrow
\mathit{List}$, defined by the equations oriented as rules,
$[1]: \mathit{rev}(x) \rightarrow x$ (which has no subcalls),
 and $[2]: \mathit{rev}(x\; \cdot \; L)\rightarrow \mathit{rev}(L) \;
\cdot \;x$, with $\mathit{SSC}([2]) = \{\mathit{rev}(L)\}$,
where $x,x',y,y'$ have sort $\mathit{Elt}$, and $L,P,Q$ have sort
$\mathit{List}$.   Note that in this example $U = \emptyset$, and therefore
$\vec{\mathcal{E}}=\vec{\mathcal{E}}_{U}$.

We would like to prove the inductive lemma:
\[\mathit{rev}(Q \; \cdot \; y)=y \;
\cdot \; \mathit{rev}(Q).
\]
We can do so by applying the \textbf{NI} rule to the conjecture's
lefthand side as our focus narrowex, i.e., by trying to 
\emph{narrow}\footnote{For unconditional rules,
``narrowing'' is meant here in the sense (generalized to the
order-sorted case)
of narrowing a term with rewrite rules modulo axioms $B$ \cite{JKK83}.  For
conditional rules, it is meant
 in the sense of order-sorted
\emph{constrained narrowing} modulo axioms $B$
proposed in \cite{conditional-narrowing-SCP}.  The condition
$\Gamma_{i}$ of rule [i] is then carried along as a constraint, which
in the  ${\bf NI}$  inference rule is then added to the premise
of subgoal $i,j$ ---after instantiating  it by the associated
substitution--- for
each $j \in J_{i}$.}
 it with the two
rules defining $\mathit{rev}$.  Rule $[1]$ has no $A$-unifiers.
Rule $[2]$ has two most general $A$-unifiers, namely,
$\alpha_{2,1}= \{x \mapsto x',\; Q \mapsto x',\; y \mapsto y',\; L
\mapsto y'\}$, with $Y_{2,1}=\{x',\;y'\}$,
and
$\alpha_{2,2}= \{x \mapsto x',\; y \mapsto y', \;
L \mapsto P\; \cdot \;  y', \; Q \mapsto x' \; \cdot \; P\}$,
with $Y_{2,1}=\{x',\;y',P\}$.  

\noindent With $\alpha_{2,1}$,
since for the subcall $\mathit{rev}(L)$ we get the instance
$\mathit{rev}(L) \alpha_{2,1}= \mathit{rev}(y')$, which does
\emph{not} match the focus narrowex
$\mathit{rev}(Q \; \cdot \; y)$ 
as an instance modulo $A$,
$H_{2,1} = \emptyset$, and we just get the subgoal
$\mathit{rev}(y')\; \cdot \; x' =y' \;
\cdot \; \mathit{rev}(x')$,
which simplifies to $\top$.

\noindent With $\alpha_{2,2}$,
 the subcall $\mathit{rev}(L)$ yields the instance
$\mathit{rev}(L) \alpha_{2,2}= \mathit{rev}(P \; \cdot \; y')$, which \emph{does}
match the focus narrowex
$\mathit{rev}(Q \; \cdot \; y)$ 
as an instance modulo $A$
with substitution $\gamma = \{Q \mapsto P,\; y \mapsto y' \}$.
Therefore, $\overline{Y}^{\bullet}_{2,2}
= \overline{Y}_{2,2}=\{\overline{x'},\; \overline{y'},\; 
\overline{P}\}$, 
$\overline{\alpha}^{\bullet}_{2,2}=\widehat{\overline{\alpha}}_{2,2}$,
and
$\overline{\gamma}=
\{Q \mapsto \overline{P},\; y \mapsto\overline{y'} \}$.
And we get
$H_{2,2} = \{ 
\mathit{rev}(\overline{P} \; \cdot
\;\overline{y'})= \overline{y'}
 \; \cdot
\; \mathit{rev}(\overline{P})\}$ (which is already simplified),
and the subgoal 
$\mathit{rev}(\overline{P} \; \cdot
\;\overline{y'}) \; \cdot \overline{x'} = \overline{y'}
 \; \cdot
\; \mathit{rev}(\overline{x'} \; \cdot \;\overline{P})$,
which simplifies to
$\mathit{rev}(\overline{P} \; \cdot
\; \overline{y'}) \; \cdot\overline{x'} = \overline{y'}
 \; \cdot \; \mathit{rev}(\overline{P}) \; \cdot \; \overline{x'}$,
and can, using $H_{2,2}$,
 be further simplified to $\top$ by means of the {\bf EPS}
simplification rule, thus finishing the proof of the lemma
by a single application of  {\bf NI} followed by {\bf EPS} simplification.
\end{example}

\begin{remark}  The ${\bf NI}$ rule can be further generalized to take
into account the possibility that a defined binary function symbol $f$
may enjoy axioms such as associativity or associativity-commutativity.
For example, we might have an $AC$ defined function symbol $\_*\_$
for multiplication of natural numbers.  The issue this raises
is that we may wish to narrow on a focus narrowex that is
not syntactically present in the given  multiclause, but \emph{is}
present modulo the given axioms.  For example, the multiclause
may contain the subexpression $(x*s(x))*y$ and we may
wish to narrow on $x*y$, which is not a proper subterm of it,
but \emph{is} a proper subterm modulo $AC$ 
(see Section \ref{GS-FubCall-Subsection}).  The generalization of
rule  ${\bf NI}$ allowing us to do this is straightforward.  We
replace the conclusion part of the inference rule by the more general
conclusion:
\[
[\overline{X},\mathcal{E},H] \Vdash \Gamma' \rightarrow\bigwedge_{l\in
  L}\Delta'_l
\]
adding to the side conditions (1)--(6) for the ${\bf NI}$
rule, just as originally stated, the extra side
condition (0):
\[
 (\Gamma'\rightarrow\bigwedge_{l\in L}\Delta'_l) =_{B}
 (\Gamma \rightarrow\bigwedge_{l\in L}\Delta_l) [f(\vec{v})]_{p}
\]
and keeping the ${\bf NI}$ rule's premises the same.
In what follows, I will assume without further ado that the  ${\bf NI}$ rule
can be applied in this generalized from, which frees it
from what might be called the ``slavery of syntax.''
\end{remark}

\vspace{2ex}
\noindent\textbf{Existential} ($\mathbf{\exists}$).

\begin{prooftree}
\AxiomC{$[\emptyset,\mathcal{E},\emptyset] \Vdash I(\Gamma\rightarrow\Lambda)$}
\UnaryInfC{$[\emptyset,\mathcal{E},\emptyset] \Vdash (\exists \chi)(\Gamma\rightarrow\Lambda)$}
\end{prooftree}

\noindent where $\chi$ is a Skolem signature,
$\Gamma\rightarrow\Lambda$
is a $\Sigma \cup \chi$-multiclause, 
and $I: \chi \rightarrow
\mathcal{E}$ is a theory interpretation (see Section \ref{Skolem-subsection}).
Note that the $(\exists)$ rule only applies when the inductive theory is
$[\emptyset,\mathcal{E},\emptyset]$, that is,
at the beginning of the inductive reasoning process,
and that $I$ must be provided by the user as a witness
(for example, as a view in Maude \cite{maude-book}).
See Section  \ref{Skolem-subsection} for a simple
example of an arithmetic formula of the
form $(\exists \chi)(\Gamma\rightarrow\Lambda)$
and its associated theory interpretation
$I: \chi \rightarrow
\mathcal{E}$.

\vspace{2ex}
\noindent\textbf{Lemma Enrichment} ($\textbf{LE}$).

\begin{prooftree}
\AxiomC{$[\overline{X}_{0},\mathcal{E},H_{0}] \Vdash
  \Gamma'\rightarrow\bigwedge_{j\in J}\Delta'_j
\;\;\;\;\;\; \;\;\;\;\;\; \;\;\;\;\;\;
 [\overline{X},\mathcal{E},H] \Vdash H_{0}$}
\AxiomC{$[\overline{X},\mathcal{E},(H\uplus
\{\Gamma'\rightarrow\Delta'_j\}_{j\in J})_{\mathit{simp}}] \Vdash \Gamma\rightarrow\Lambda$}
\BinaryInfC{$[\overline{X},\mathcal{E},H] \Vdash \Gamma\rightarrow\Lambda$}
\end{prooftree}

\noindent where 
$\emptyset \subseteq \overline{X}_{0} \subseteq \overline{X}$.
Common cases of use include
either: (a) $\overline{X}_{0} =  H_{0} = \emptyset$,
or (b) $\overline{X}_{0} =  \overline{X}$ and 
$H_{0} = H$.  The use of this inference rule is illustrated in 
Section \ref{inf-ex}.

\vspace{2ex}
\noindent\textbf{Split} ($\textbf{SP}$).

\begin{prooftree}
\AxiomC{$\{ [\overline{X},\mathcal{E},H] \Vdash
  \Gamma_i\theta,\Gamma\rightarrow\Lambda \}_{i\in I}
\;\;\;\;\;\; \;\;\;\;\;\; \;\;\;\;\;\;
 [\overline{X},\mathcal{E},H] \Vdash H_{0}$}
\AxiomC{$[\overline{X}_{0},\mathcal{E}, H_{0} ] \Vdash
\mathit{cnf} (\bigvee_{i \in I} \Gamma_{i})$}
\BinaryInfC{$[\overline{X},\mathcal{E},H] \Vdash \Gamma\rightarrow\Lambda$}
\end{prooftree}

\noindent where $\emptyset \subseteq  \overline{X}_{0} \subseteq 
\overline{X}$,
$vars((\bigvee_{i \in I} \Gamma_{i} )\theta)\subseteq
vars(\Gamma\rightarrow\Lambda)$, and 
$\mathit{cnf} (\bigvee_{i \in I} \Gamma_{i})$
denotes the conjunctive normal form of
$\bigvee_{i \in I} \Gamma_{i}$, which is a multiclause
of the general form $\Lambda'$.

\vspace{1ex}

\noindent This rule is, on purpose,
very general. In many cases we may have
either: (a) $\overline{X}_{0} =  H_{0} = \emptyset$,
or (b) $\overline{X}_{0} =  \overline{X}$ and 
$H_{0} = H$.  In either case (a) or (b),
two very common special cases
are: (i)  $\Gamma_{i}=(u_{i}=v_{i})$, $i \in I$, and (ii) the
even simpler subcase of a disjunction
$u = \mathit{true}\; \vee \;   u = \mathit{false}$,
where $u$ is a term of sort $\mathit{Bool}$ and
$\theta$ is the identity substitution.

\vspace{2ex}
\noindent\textbf{Case} ($\textbf{CAS}$).

\noindent This rule has two modalities: one for $z$ a variable,
and another for $\overline{z}$ a (universal) Skolem constant.
For $z$ a variable the rule is:

\begin{prooftree}
\AxiomC{$\left\{[\overline{X},\mathcal{E},H] \Vdash (\Gamma\rightarrow\Lambda)\{z\mapsto u_i\}\right\}_{1\leq i\leq n}$}
\UnaryInfC{$[\overline{X},\mathcal{E},H] \Vdash \Gamma\rightarrow\Lambda$}
\end{prooftree}

\noindent where $z\in vars(\Gamma\rightarrow\Lambda)$ has sort $s$ and $\{u_1,\cdots,u_n\}$
is a $B_{0}$-generator
set for sort $s$ with the $u_i$ for $1\leq i\leq n$ having fresh variables.

\vspace{2ex} 

\noindent For a fresh constant $\overline{z}\in \overline{X}$ 
of sort $s$ the rule is:

\begin{prooftree}
\AxiomC{$\{[\overline{X}
  \uplus\overline{Y}_{i},\mathcal{E},
(H  \uplus \{\overline{z} = \overline{u}_i\})_{\mathit{simp}}] \Vdash 
(\Gamma\rightarrow\Lambda)\{\overline{z}\mapsto\overline{u}_i\}
\}_{1 \leq i \leq n}$}
\UnaryInfC{$[\overline{X},\mathcal{E},H] \Vdash \Gamma\rightarrow
\Lambda$}
\end{prooftree}

\noindent where $\overline{z}$ occurs in
$\Gamma\rightarrow\Lambda$, 
$Y_{i} = \mathit{vars}(u_i)$, $\overline{Y}_{i}$
are the corresponding new fresh constants, and
 $\overline{u}_i \equiv u_{i} \{y \mapsto \overline{y}\}_{y \in
   Y_{i}}$.

\noindent The use of this inference rule is illustrated in 
Section \ref{inf-ex}.

\vspace{2ex}

\vspace{2ex}
\noindent\textbf{Variable Abstraction} ($\textbf{VA}$).

\begin{prooftree}
\AxiomC{$[\overline{X},\mathcal{E},H] \Vdash u=v[x]_{p},\; z =w, \;\Gamma\rightarrow\Lambda$}
\UnaryInfC{$[\overline{X},\mathcal{E},H] \Vdash u=v[w]_{p},\;\Gamma\rightarrow\Lambda$}
\end{prooftree}

\noindent where $z$ is a fresh  variable whose sort is the least
sort of subterm $w$ of $v$ at position $p$.  $\textbf{VA}$ is particularly
effective when the  equation $ u=v[z]_{p}$ resulting from the abstraction is a
$\Sigma_{1}$-equation (this may require several applications of $\textbf{VA}$),
since then it can be unified away by
$\textbf{CVUL}$. For several examples illustrating the application and 
usefulness of the $\textbf{VA}$ rule, see Section \ref{inf-ex}.

\begin{comment}

\vspace{2ex}
\noindent\textbf{Variable Abstraction} ($\textbf{VA}$).

\begin{prooftree}
\AxiomC{$[\overline{X},\mathcal{E},H] \Vdash u=w,\; x_1=v_1,\cdots,\;x_n=v_n,\;\Gamma\rightarrow\Lambda$}
\UnaryInfC{$[\overline{X},\mathcal{E},H] \Vdash u=v,\;\Gamma\rightarrow\Lambda$}
\end{prooftree}

\noindent where $u$ and $w$ are $\Sigma_{1}$-terms, $v$ is not so, and
$x_{1},\ldots, x_{n}$ are fresh variables
whose sorts are respectively the least sorts
of the $v_{1},\ldots,v_{n}$, which are subterms of $v$
such that their top symbols are not in $\Sigma_{1}$,
and $v = w \{x_{1} \mapsto v_{1},\ldots, x_{n} \mapsto v_{n}\}$.

\noindent For some examples illustrating the application and 
usefulness 
 of the $\textbf{VA}$ rule, 
see Section \ref{inf-ex}.

\end{comment}

\begin{comment}
\vspace{2ex}
\noindent\textbf{Lemma Strengthening} ($\textbf{LS}$).

\begin{prooftree}
\AxiomC{$[\overline{X},\mathcal{E},H\cup\{\Gamma'\rightarrow x=u\}] \Vdash (\Gamma'\theta,\Gamma\rightarrow\Lambda)\{y\mapsto u\theta\}$}
\UnaryInfC{$[\overline{X},\mathcal{E},H\cup\{\Gamma'\rightarrow x=u\}] \Vdash \Gamma'\theta,\Gamma\rightarrow\Lambda$}
\end{prooftree}

\noindent where $u$ is a ground term, $y \equiv \theta(x)$ is a
variable appearing in $\Gamma'\theta,\Gamma\rightarrow\Lambda$,
and $ls(y) \geq ls(u\theta)$.
\end{comment}

\vspace{2ex}
\noindent\textbf{Equality} ($\textbf{EQ}$).

\begin{prooftree}
\AxiomC{$[\overline{X},\mathcal{E},H] 
\Vdash (\Gamma\rightarrow\Lambda)[v \theta\gamma]_{p}$}
\UnaryInfC{$[\overline{X},\mathcal{E},H]
  \Vdash 
(\Gamma\rightarrow\Lambda)[w]_{p}$}
\end{prooftree}

\noindent where: (i) $w=_{B_{0}}u\theta\gamma$;
(ii) $\Gamma'\rightarrow u=v$ is
a conditional equation in either $\mathcal{E}$  or $H$; (iii) $\theta$
is a user-provided,  possibly partial substitution;
and (iv) $\theta \gamma$ satisfies $\Gamma'$, i.e.,
 for
each $t=t'$ in the hypothesis' condition $\Gamma'$,
$t\theta \gamma !\,_{\vec{\mathcal{E}}_{\overline{X}_U} \cup\,\vec{H}^{+}_{e_U}}
=_{B_{0}} t'\theta \gamma!\,_{\vec{\mathcal{E}}_{\overline{X}_U}
  \cup\,\vec{H}^{+}_{e_U}}$.  Since $\_=\_$ is assumed
    commutative, the user
    \emph{chooses}\footnote{In general, there may exist more than
      one position $p$ (and more than one $\gamma$) at which
      $w=_{B_{0}} u\theta
      \gamma$ occurs
      in the goal multiclause.  In such a case, an implementation
      of this rule should present the various matches and positions found to
    let the user choose the desired ones.}
 the term $u$ in the conditional equation ($u$ is displayed on
 the left of the equation, but could syntactically appear on the right)
 such  that $u \theta$ will be
$B_{0}$-matched by $w$ at position $p$.  Since 
$\Gamma'\rightarrow u=v$ may be a non-executable hypothesis, there may be extra 
variables in $v$ and/or $\Gamma'$ not appearing in $u$, so that
specifying the partial substitution $\theta$ may be necessary.
In simple cases $\theta$ may not be needed and $\gamma$ may be
found just by $B_{0}$-matching the goal's subterm  $w$ with the pattern $u$.
Rule $\textbf{EQ}$ allows  performing
one step of equational inference with a 
possibly conditional equational hypothesis
in a user-guided manner.

\vspace{2ex}
\noindent\textbf{Cut}.

\begin{prooftree}
\AxiomC{$[\overline{X},\mathcal{E},H] \Vdash
  \Gamma\rightarrow \Gamma'$}
\AxiomC{$[\overline{X},\mathcal{E},H] \Vdash \Gamma,\Gamma' \rightarrow\Lambda$}
\BinaryInfC{$[\overline{X},\mathcal{E},H] \Vdash \Gamma\rightarrow\Lambda$}
\end{prooftree}

\noindent where $\Gamma'$ is a conjunction of equalities and
$\mathit{vars}(\Gamma') \subseteq \mathit{vars}(\Gamma
\rightarrow \Lambda)$.  \textbf{Cut} can be viewed as a generalized
\emph{modus ponens} rule, since when $\Gamma \equiv \top$ it actually
becomes modus ponens.  In relation to \textbf{ICC}, we can regard
\textbf{ICC} as a modus ponens rule \emph{internal} to the
goal $\Gamma\rightarrow\Lambda$ viewed as an implication;
whereas \textbf{Cut} is a modus ponens rule \emph{external} to
the goal  $\Gamma\rightarrow\Lambda$   viewed as a formula, where
the user has complete freedom to choose a suitable  $\Gamma'$ 
that will help in proving the original goal.

\begin{example}
Consider an equational theory $\mathcal{E}$ containing
the multiset specification in Example \ref{multiset-ctors}
in Section \ref{GS-FubCall-Subsection}.  For concreteness,
let us identify the sort $\mathit{Elt}$ with the sort
$\mathit{Nat}$ for the Peano natural numbers, to which we have added
an equality predicate $\_\cdot\! =\! \cdot\_$ on naturals.
The  membership predicate $\_\in\_$ is defined by the equations:
$n \in \emptyset = false$,
$n \in m = n \cdot\! =\! \cdot m$, and
$n \in (m \cup U) = (n \cdot\! =\! \cdot m) \;\; \mathit{or} \;\;  n \in U$,
with $n,m$ of sort $\mathit{Nat}$ and $U,V$ of sort $\mathit{NeMSet}$.
Consider now the goal:
\[n \in V = \mathit{false},\; n \cup U = V \rightarrow \bot
\]
We can prove this goal by applying  \textbf{Cut}
  with $\Gamma' = n \in (n \cup U) = n \in V$.  That is,
  we need to prove the subgoals:
(1) $n \in V = \mathit{false},\; n \cup U = V \rightarrow n \in (n \cup U) = n \in V$,
and (2) $n \in V = \mathit{false},\; n \cup U = V , \; n \in (n \cup
U) = n \in V \rightarrow \bot$.  Subgoal (1) can be discharged by
\textbf{ICC} followed by \textbf{EPS}.  Appliying \textbf{EPS} to
subgoal (2) we get $(2')$ $n \in V = \mathit{false},\; n \cup U = V ,
\; true = n \in V \rightarrow \bot$, which can also be discharged by \textbf{ICC} followed by \textbf{EPS}.
\end{example}

\vspace{2ex}

The main property about the above inference system is the following
Soundness Theorem, whose proof is given in Appendix \ref{Proof-Sound-Theo}:

\begin{theorem}[Soundness Theorem] \label{Sound-Theo}
If a closed proof tree can be built from a goal of the form
$[\overline{X},\mathcal{E},H]\Vdash\Gamma\rightarrow\Lambda$,
then
$[\overline{X},\mathcal{E},H]\models \Gamma\rightarrow\Lambda$.
\end{theorem}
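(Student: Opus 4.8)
The plan is to prove the Soundness Theorem by structural induction on the closed proof tree, reducing it to a \emph{local soundness} lemma for each of the twenty inference rules: whenever every premise goal of a rule instance is valid in its own inductive theory, the conclusion goal is valid in its inductive theory. Since a closed proof tree has all leaves marked $\top$ (vacuously valid, and including the rules whose premise is literally $\top$, namely the $\textbf{CVUL}$ case with $\textit{Unif}^{\Omega}_{\mathcal{E}_1}(\Gamma^{\circ})=\emptyset$, $\textbf{CVUFR}$, and $\textbf{VARSAT}$), local soundness propagates validity from the leaves to the root. Before the case analysis I would re-express validity entirely inside $T_{\mathcal{E}^\square}$ using the model description of Section~\ref{SuperClIndTh}: a goal $[\overline{X},\mathcal{E},H]\Vdash\Gamma\rightarrow\Lambda$ is valid iff for every ground constructor substitution $\alpha:X\rightarrow T_{\Omega}$ with $T_{\mathcal{E}^\square}\models H^{\circ}\alpha$ one has $T_{\mathcal{E}^\square}\models(\Gamma\rightarrow\Lambda)^{\circ}\alpha$. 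This reformulation turns each rule into a statement about ground constructor instances and lets me use the canonical term algebra, ground convergence, and the total well-founded RPO order $\succ$ modulo $B_{0}$ throughout.

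Next I would dispatch the \emph{goal simplification rules} and the remaining structural rules, each of which is a one-step semantic implication requiring no induction, by invoking the corresponding preliminary result. For $\textbf{EPS}$ I would use the fundamental property of equationally-defined equality predicates from Section~\ref{EQ-PREDS}, namely $T_{\mathcal{E}}\models\varphi\Leftrightarrow(\varphi!_{\vec{\mathcal{E}}^{=}})$, together with soundness of background-aware ordered rewriting with $\vec{H}^{+}_{e_U}$, so that every normal form of $\Gamma\rightarrow\Lambda$ is $T_{\mathcal{E}^\square}$-equivalent to it under any model-assignment $\alpha$; $\textbf{ERL}$ and $\textbf{ERR}$ follow as the special case exploiting that $T_{\mathcal{E}}\models\varphi\Leftrightarrow\psi$ iff $T_{\mathcal{E}^{=}}\models\varphi=\psi$. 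For $\textbf{CVUL}$ and $\textbf{CVUFR}$ I would use completeness of constructor variant unification (Section~\ref{var-sat-nut}): every ground constructor instance of $\Gamma^{\circ}$ factors through some $\alpha\in\textit{Unif}^{\Omega}_{\mathcal{E}_1}(\Gamma^{\circ})$, with the auxiliary ground equations $\widetilde{\overline{\alpha}}|_{\overline{X}_{\Gamma}}$ recording how the pre-existing Skolem constants are forced by the unifier, so that the move from constants to variables stays sound. $\textbf{VARSAT}$ uses decidability of satisfiability in $T_{\mathcal{E}_1}\cong T_{\mathcal{E}}|_{\Sigma_1}$; $\textbf{ICC}$ uses the Lemma of Constants, the Deduction Theorem, and correctness of congruence closure modulo $B_{0}$ from Section~\ref{OS-CC}, where the passage to $\overline{\Gamma}^{\sharp}$ only inter-reduces a terminating rule set semantically equivalent to $\overline{\Gamma}$ and hence preserves the implication; $\textbf{CS}$, $\textbf{SUBL}$, $\textbf{SUBR}$, $\textbf{VA}$, $\textbf{EQ}$, $\textbf{SP}$, $\textbf{LE}$ and $\textbf{Cut}$ are straightforward instances of substitution, matching, case-split and modus-ponens reasoning. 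Throughout I would appeal to the fact that the hypothesis transformation $H\mapsto H_{\mathit{simp}}$ of Section~\ref{IndH-REW} is \emph{semantics-preserving} (it only rewrites with validity-preserving rules and congruence closure), so that the enlarged theories appearing in premises have exactly the intended models; the $\exists$ rule is sound because a theory interpretation $I:\chi\rightarrow\mathcal{E}$ supplies a concrete witness interpretation of the Skolem signature, as spelled out in Section~\ref{Skolem-subsection}.

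The genuinely inductive content, and the part I expect to be the main obstacle, lies in the two induction rules $\textbf{GSI}$ and $\textbf{NI}$, whose local soundness cannot be a one-step implication but must itself be an inner \emph{noetherian induction} on ground constructor terms ordered by $\succ$ modulo $B_{0}$ (total and well founded on $B_{0}$-equivalence classes by Footnote~\ref{total-RPO}). For $\textbf{GSI}$ I would fix a model $\alpha$ of $[\overline{X},\mathcal{E},H]$ and an assignment $\beta$ to the remaining variables, and induct on the canonical value $[\beta(z)]$ of the induction variable. Writing $[\beta(z)]=[u_i\rho]$ for some generator $u_i$ and ground constructor $\rho$, each proper $B_{0}$-subterm $v\in\mathit{PST}_{B_{0},\leq s}(u_i)$ yields a strictly $\succ$-smaller value $[v\rho]$, so the inner induction hypothesis supplies $T_{\mathcal{E}^\square}$-validity of the conjecture there; this is exactly what is needed to verify that the model satisfies the extra hypotheses $(H\uplus H_i)_{\mathit{simp}}$, whence the premise's validity yields the conjecture at $z=u_i\rho$. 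For $\textbf{NI}$ I would combine the same well-founded induction with the \emph{lifting property} of narrowing recalled in the narrowing preliminaries: every ground rewrite step $f(\vec{v})\rho\rightarrow_{\vec{E},B}w$ is covered by one of the $B_{0}$-unifiers $\alpha_{i,j}$, and each subterm subcall in $\mathit{SSC}([i])$ is a proper $B_{0}$-subterm instance and hence $\succ$-smaller, justifying the added hypotheses $H_{i,j}$. The delicate points to get right will be: aligning the syntactic sets $\mathit{PST}_{B_{0}}$ and $\mathit{SSC}$ with strict decrease in the RPO modulo $B_{0}$ (using Definition~\ref{B0-subterm-def} and avoiding $U$-collapses via the passage to $\vec{\mathcal{E}}_{U}$), handling the bookkeeping of fresh constants $\overline{Y}_{i,j}$ so the inner induction-hypothesis instances match the premise's theory, and confirming that sufficient completeness and ground convergence of $\vec{\mathcal{E}}_{U}$ make the case split over $I_{0}$ exhaustive on ground terms.
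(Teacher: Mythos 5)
Your proposal is correct and follows essentially the same route as the paper's proof in Appendix~\ref{Proof-Sound-Theo}: per-rule local soundness backed by the formula-equivalence lemmas for $\vec{\mathcal{E}}^{=}_{\overline{X}_U}\cup\vec{H}^{+}_{e_U}$-rewriting, plus a well-founded induction on ground constructor instances for \textbf{GSI} and \textbf{NI}. The only cosmetic differences are that the paper phrases the argument as a minimal-counterexample contradiction (smallest-depth closed proof tree, then a minimal ground substitution $\beta_{\mathit{min}}$) and measures decrease by term size $|\cdot|$ (exploiting that $B_{0}$ is size-preserving) rather than by the RPO $\succ$, which sidesteps the alignment issue you flag as delicate.
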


Although the Soundness Theorem is stated in full generality, in
practice, of course, its main application will be
to initial goals of the form 
$[\emptyset,\mathcal{E},\emptyset]\Vdash \Gamma \rightarrow \Lambda$
or $[\emptyset,\mathcal{E},\emptyset]\Vdash  (\exists \chi)(\Gamma 
\rightarrow \Lambda)$.

\vspace{2ex}

\noindent {\bf The Automated vs. Interactive Tradeoff}. 
This section has presented eleven goal
simplification inference  rules that exploit the induction 
hypotheses and the various symbolic techniques explained in Section \ref{PRELIM-SECT}
to simplify goals as much as possible.  It has also presented
nine inductive inference rules to be used primarily in intercative
mode, although some could be automated.
There is in fact
a tradeoff between automation and interaction.  A key 
goal of the inference system is to allow ``seven league boots'' proof steps.
%
% But a user may sometimes
% find it  difficult to understand
% how formulas have been simplified.  This is an important pragmatic question 
%beyond  the scope of this paper.
% As more experience is gained
% with Maude's {\bf NuITP}, helpful explanations can be given to users 
% to increase  understandability.
%
The {\bf NuITP}  allows the user to fine
tune the tradeoff between automation and interaction
by choosing between  applying one inference rule at a time
or combining an inference rule step with subsequent 
goal simplification steps.  Furthermore, the  {\bf NuITP}
already provides basic support for user-defined \emph{proof strategies}
through a simple strategy language that, as more experience is gained,
will be extended in the future.
 Finally,  a \emph{fully
  automated} use of the {\bf NuITP}, in which the tool
with a chosen  proof strategy  is used as a backend
by other tools is illustrated by the {\bf DM-Check} tool   \cite{DM-Check,DM-Check-WRLA24}
and is envisioned for other tools.

\section{Other Inductive Proof Examples}
\label{inf-ex}

Six additional examples  further illustrate the use of the 
simplification and inductive rules and show  the advantages of the multiclause
representation.

\vspace{1ex} 

\noindent {\bf 1. Using Lemmas}.  We illustrate how
the $\textbf{LE}$ rule can help in 
proving a conjecture.  The conjecture in question 
is the original formula in Example \ref{EPS-example},
which applying the $\textbf{EPS}$ and $\textbf{ICC}$ simplification
rules in, respectively, Examples \ref{EPS-example}
and \ref{ICC-example}, became the considerably simpler goal:
\[
[\emptyset, \mathcal{N}\mathcal{P},\emptyset] \Vdash 
x* x=s(0),\; y =0 \rightarrow s(0) = x. 
\]
We can apply the $\textbf{CAS}$ rule to variable $x$
with generator set $\{0,s(0),s(s(z))\}$ to get the
following  three subgoals:
\[
[\emptyset, \mathcal{N}\mathcal{P},\emptyset] \Vdash 
0* 0=s(0),\; y =0 \rightarrow s(0) = 0 
\]

\[
[\emptyset, \mathcal{N}\mathcal{P},\emptyset] \Vdash 
s(0) * s(0) =s(0),\; y =0 \rightarrow s(0) = s(0)
\]

\[
[\emptyset, \mathcal{N}\mathcal{P},\emptyset] \Vdash 
s(s(z))* s(s(z)) =s(0),\; y =0 \rightarrow s(0) = s(s(z))
\]

\noindent The $\textbf{EPS}$ rule 
automatically discharges the first two.  To prove the third, we assume
as already proved the
commutativity of natural number addition in Peano notation
---a property that was proved in Example
\ref{+-comm-lemma} for the ``twice as fast''
definition of addition, and whose detailed proof for 
the standard definition of addition
we leave to the reader--- 
and use the  $\textbf{LE}$ rule to
get the goal:
\[
[\emptyset, \mathcal{N}\mathcal{P},\{n+m=m+n\}] \Vdash 
s(s(z))* s(s(z)) =s(0),\; y =0 \rightarrow s(0) = s(s(z)). 
\]
Thanks to the fact that the commutativity equation $n+m=m+n$
can be applied using ordered rewriting, $\textbf{EPS}$
simplification of this goal can use the lemma plus the equations for
* and + to  first transform the equation
$s(s(z))* s(s(z)) =s(0)$ into the equation
$s(s(s(s((s(s(z))*z)+z)))) =s(0)$, which is further
transformed to $\bot$ by means of the $\mathcal{N}\mathcal{P}^{=}$-rules
$s(n)=s(m) \rightarrow n = m$ and $s(n)=0 \rightarrow \bot$, thus
discharging the goal and finishing the proof.

An intriguing thought about the $n+m=m+n$ lemma
is that, instead of applying it by ordered rewriting,
it could have been ``internalized'' after the fact as a commutativity
\emph{axiom} which is
added to the theory $\mathcal{N}\mathcal{P}$.
More generally, any associativity and/or commutativity
properties that have been proved as lemmas for some defined
binary  function symbol $f$ could be so internalized,
provided the module's  oriented equations $\vec{\mathcal{E}}$ 
remain RPO-terminating. This internalization process is already
supported by the {\bf NuITP} prover for both orientable
equations and associativity or commutativity axioms proved as lemmas
\cite{NuITP} (for a broader view of internalization
and its semantic foundations see \cite{meseguer-WRLA24}).

\vspace{2ex}

\noindent {\bf 2. Multiplicative Cancellation}.
This example is borrowed from \cite{ind-ctxt-rew}.
We wish to prove the cancellation law for natural number multiplication

\begin{center}
$x * z' = y * z' \Rightarrow x = y$
\end{center}

\noindent where $z'$ ranges over non-zero natural number while $x$ and
$y$ range over natural numbers.
We specify natural number addition and multiplication as
associative-commutative operators, as well as the $>$ predicate, in a theory 
$\mathcal{N}$ having a subsort relation $\mathit{NzNat} <
\mathit{Nat}$ of non-zero numbers as subset of all naturals
(see Appendix \ref{natural-example} for a detailed specification of $\mathcal{N}$).
Of course, since the proof of the reverse implication
$x = y \Rightarrow x * z' = y * z'$ follows trivially
by simplification with the \textbf{(ICC)} rule, what we are really
proving is the equivalence $x * z' = y * z' \Leftrightarrow x = y$.
Therefore,  using  the 
 \textbf{(ERL)}
and \textbf{(ERR)} rules, once the above cancellation
rule has been proved, the rewrite rule
$x * z' = y * z' \rightarrow x = y$
can be added to
the equality predicate theory $\vec{\mathcal{N}}^=$
($\mathcal{N}$ has no $U$ axioms, i.e.,
$\vec{\mathcal{N}}_{U}^= = \vec{\mathcal{N}}^=$)
to obtain a more powerful version of the \textbf{(EPS)} simplification
rule.

\newcommand{\natgoali}[3]{\goali{\mathcal{N}}{#1}{#2}{#3}}
\newcommand{\natgoal}[1]{\natgoali{\emptyset}{#1}{\emptyset}}

\newcommand{\barx}{\ensuremath{\overline{x}_1}}
\newcommand{\natgoalimc}[1]{\natgoali{\barx * z' = y * z'\! \rightarrow \barx = y}{#1}{\{\barx\}}}

\noindent We begin with the goal:

\begin{center}
$G : \natgoal{x * z' = y * z' \rightarrow x = y}$
\end{center}

\noindent After applying the rule \GSI\ to the variable $x$ with the 
 $B_{0}$-generator
 set $\{0, 1 + x_1\}$
and simplifying by \EPS\ we obtain:

\begin{center}
$G_1 : \natgoal{0 = y * z' \rightarrow 0 = y}$\\[2pt]
$G_2 : \natgoalimc{z' + (\barx * z') = y * z' \rightarrow \barx + 1 = y}$
\end{center}

\noindent We first prove $G_1$ by: (a) applying the \CAS\ rule to
variable $y$ with the  $B_{0}$-generator
set $\{0, y'\}$, where $y'$ has the non-zero natural
 sort \textit{NzNat}; and (b) applying the \EPS\ rule to obtain: 

\begin{center}
$G_{1.1} : \natgoal{0 = 0 \rightarrow 0 = 0}$\\[2pt]
$G_{1.2} : \natgoal{0 = y' * z' \rightarrow 0 = y'}$
\end{center}

\noindent To solve $G_{1.1}$, apply \EPS\ to obtain $\top$.
To solve $G_{1.2}$, apply \VA\ to the term $y' * z'$ which has least sort \textit{NzNat}
to obtain:

\begin{center}
$G_{1.2.1} : \natgoal{0 = z'', z'' = y' * z' \rightarrow 0 = y'}$
\end{center}

\noindent where $z''$ also has sort \textit{NzNat}. Finally apply \CVUL\ to obtain $\top$,
since the equation $0 = z''$ has no unifiers. This finishes the proof of $G_1$.
We now prove $G_2$ by: (a) applying the \CAS\ rule to variable $y$
with  $B_{0}$-generator
set
$\{0, y_1 + 1\}$; and (b) applying the \EPS\ rule to obtain:

\begin{center}
$G_{2.1} : \natgoalimc{z' + (\barx * z') = 0 \rightarrow \barx + 1 = 0}$\\[4pt]
$G_{2.2} : \natgoalimc{\minibox[c]{$z' + (\barx * z') = (y_1 * z') + z'\!$\\ $\rightarrow \barx + 1 = y_1 + 1$}}$
\end{center}

\noindent To solve $G_{2.1}$, apply \VA\ to the term $z' + (\barx * z')$ which has least sort \textit{NzNat}
to obtain:

\begin{center}
$G_{2.1.1} : \natgoalimc{\minibox[c]{$z' + (\barx * z') = z'', z'' = 0$\\$\rightarrow \barx + 1 = 0$}}$
\end{center}

\noindent where $z''$ also has sort \textit{NzNat}. As in $G_{1.2.1}$, apply \CVUL\ to obtain $\top$.
Finally, to solve $G_{2.2}$, we apply \ERL\ and \ERR\ with the equivalence
$z_1 + z_2 = z_1 + z_3 \Leftrightarrow z_2 = z_3$
(which can be proved by variant satisfiability) to obtain:

\begin{center}
$G_{2.2.1} : \natgoalimc{\barx * z' = y_1 * z' \rightarrow \barx = y_1}$
\end{center}

\noindent But note that a proof of  $G_{2.2.1}$ immediately follows by \CS.
In summary, we completed the proof after 14 applications of our
inference rules.

\vspace{2ex}

\noindent {\bf 3. Proving Disequalities}.
Consider again the theory $\mathcal{N}$ used in the above
{\bf Multiplicative Cancellation} example and 
described in detail in Appendix \ref{natural-example}.  We now wish to
prove the inductive validity of the implication:

\begin{center}
$n > 1 = \mathit{true} \Rightarrow (n + n \not= n \; \wedge \; n * n 
\not= n)$.
\end{center}

\noindent That is, the inductive validity of the clauses:

\begin{center}
$n > 1 = \mathit{true} \, , \;  n + n = n   \rightarrow \bot$
\end{center}

\noindent and

\begin{center}
$n > 1 = \mathit{true}  \, ,  \;  n * n = n   \rightarrow \bot$. 
\end{center}

Since, as pointed out in Appendix \ref{natural-example}, the
subtheory $\mathcal{N}_{1}$ with constants $0$, $1$,
$\mathit{true}$ and $\mathit{false}$, $>$, and all typings for
operator $+$, and the equations
for $+$ and $>$ is FVP, and
has a subsignature of constructors with $0$, $1$ ,
$\mathit{true}$, $\mathit{false}$, and the smallest typing for the
$AC$ operator $+$,  the first clause can be automatically simplified 
to $\top$ in two different ways: either  by application of 
the $\textbf{VARSAT}$ rule, or ---since 
$n > 1 = \mathit{true}  \; \wedge \;  n + n = n$ has no 
variant unifiers--- by applying instead the \textbf{CVUL} rule.

\vspace{1ex}

\noindent To prove the second clause,
we can apply to the equation $n > 1 = \mathit{true}$ in its
condition the \textbf{CVUL} rule.
This equation has the single constructor variant unifier
$\{n \mapsto 1 + n'\}$, where $n'$ has sort $\mathit{NzNat}$,
so we get the goal:

\begin{center}
$(n' +1) * (n' + 1) = n' + 1   \rightarrow \bot$
\end{center}

\noindent which is simplified by the \textbf{EPS} rule to the goal:

\begin{center}
$n' + n' + (n'*n')  = n' \rightarrow \bot$
\end{center}

\noindent which applying the \textbf{VA} rule yields the goal:

\begin{center}
$n' + n' + m' = n'   \,  ,  \; m' = n'*n'   \rightarrow \bot$. 
\end{center}

\noindent with $m'$ of sort $\mathit{NzNat}$,
which automatically simplifies to $\top$ by applying the
\textbf{CVUL} rule to the equation $n' + n' + m' = n'$,
since that equation has no variant unifiers.

\vspace{2ex}

\noindent {\bf 4. Reversing Palindromes}.  This example extends
the theory $\mathcal{L}$ in
Example \ref{reverse-example} on reversing (non-empty) lists of
elements.  It comes with the added bonus of illustrating
the inductive congruence closure  (${\bf ICC}$) simplification rule.
We extend $\mathcal{L}$
by: (i) automatically extending it to $\mathcal{L}^{=}$,
where equationally-defined equality predicates have been 
added,\footnote{\label{eq-pred-foot} For the sake of
simplicity, we here use $\mathit{true}$ and $\mathit{false}$
instead of $\top$ and $\bot$ for the truth values
of  $\_=\_$,
i.e., we do not keep a renamed copy of the Booleans
with $\top$ and $\bot$ as its truth values.}
and (ii) adding also a palindrome Boolean predicate  on (non-empty)
lists $\mathit{pal}: \mathit{List} \rightarrow
\mathit{Bool}$, defined (with the same typing for variables as in
Example \ref{reverse-example})
 by the following equations oriented as 
rules:
$[3]: \mathit{pal}(x) = \mathit{true}$,
$[4]: \mathit{pal}(x\; \cdot  \; x) = \mathit{true}$,
$[5]: \mathit{pal}(x\; \cdot  \;  Q \; \cdot  \;x) = \mathit{pal}(Q)$,
$[6]: \mathit{pal}(x\; \cdot  \; y) = \mathit{false}\;\; \mathit{if}\;\; (x =
y) = \mathit{false}$,
$[7]: \mathit{pal}(x\;  \cdot \;  Q \;  \cdot  \;y) = 
\mathit{false}\;\; \mathit{if}\;\; (x = y) = \mathit{false}$.
The goal we want to prove is:
\[
\mathit{pal}(L) = \mathit{true} \rightarrow \mathit{rev}(L)=L.
\]
First, of all, using the ${\bf LE}$ rule, we can add to this goal  the
 lemma $\mathit{rev}(Q \; \cdot \; y)=y \;
\cdot \; \mathit{rev}(Q)$ already proved in Example
\ref{reverse-example},
which, with the order on symbols
 $(\mathit{rev}) \succ (\_\cdot \_)$, is reductive and can therefore
be listed as rule $[8]: \mathit{rev}(Q \; \cdot \; y) \rightarrow y \;
\cdot \; \mathit{rev}(Q)$.  Let us apply rule ${\bf NI}$, choosing 
$\mathit{pal}(L)$ as our focus narrowex.  Note that the narrowings
with rules $[6]$--$[7]$ generate $\mathit{true}=\mathit{false}$
the corresponding
  premises of their subgoals,
so those subgoals simplify to $\top$.  The subgoals obtained
by  narrowing with rules $[3]$ (resp. $[4]$) simplify to $\top$
thanks to rule $[1]$ (resp. rules $[1]$ and $[2]$) for $\mathit{rev}$.
The only non-trivial case is the subgoal obtained by narrowing 
with rule $[5]$ and  substitution
$\alpha = \{L \mapsto y\;  \cdot \;  P \;  \cdot \;y,\;
Q \mapsto P ,\;  x  \mapsto y \}$, yielding subgoal
\[
\mathit{pal}(\overline{P}) = \mathit{true} \rightarrow
\mathit{rev}(\overline{y}\;  \cdot \;  
\overline{P} \; \cdot  \;\overline{y} )= 
\overline{y}\;  \cdot \;  \overline{P} \; \cdot  \;\overline{y}.
\]
which, using the $\mathit{rev}$ equations and
Lemma $[8]$, simplifies to
\[
\mathit{pal}(\overline{P}) = \mathit{true} \rightarrow
\overline{y}\;  \cdot \;  \mathit{rev}(\overline{P}) \;  \cdot \;\overline{y}
= \overline{y}\;  \cdot \;  \overline{P} \;  \cdot  \;\overline{y}.
\]
and has induction hypothesis
\[
\mathit{pal}(\overline{P}) = \mathit{true} \rightarrow
\mathit{rev}(\overline{P})=
\overline{P}.
\]

\vspace{2ex}

\noindent We can  further simplify this remaining subgoal
by ${\bf ICC}$ simplification. 
The congruence closure
of $\mathit{pal}(\overline{P}) = \mathit{true}$
is just the rule $\mathit{pal}(\overline{P}) \rightarrow \mathit{true}$,
which cannot be further simplified. 
The induction hypothesis 
$\mathit{pal}(\overline{P}) = \mathit{true} \rightarrow
\mathit{rev}(\overline{P})=
\overline{P}$, being ground
and having $\mathit{rev}(\overline{P}) \succ
\overline{P}$, is orientable as a rewrite
rule $\mathit{pal}(\overline{P}) = \mathit{true} \rightarrow
(\mathit{rev}(\overline{P}) \rightarrow
\overline{P})$ in $\vec{H}^{\oplus}_{e_U}$.  Furthermore: (i) its lefthand side
does match the subterm $\mathit{rev}(\overline{P})$ in the
subgoal's conclusion; and (ii)  its condition
is satisfied with the congruence closure 
$\mathit{pal}(\overline{P}) \rightarrow \mathit{true}$.
Therefore, thanks to the ${\bf ICC}$ rule, the entire goal
simplifies to $\top$,
finishing the proof of our original goal by a single
application of ${\bf NI}$ followed by simplification.

\vspace{2ex}

\noindent {\bf 5. Simplifying Conjectures by Rewriting
with  Non-Horn Hypotheses}.
As explained in Section \ref{IndH-REW}, we are using a substantial set
of induction hypothesis as rewrite rules $\vec{H}^{+}_{e_U}$ to
simplify conjectures.  Some of the hypotheses so
used can  be non-Horn, that is, clauses $\Upsilon \rightarrow \Delta$
where $\Delta$ has more than one disjunct.  But we have not yet seen any
examples illustrating the power of  simplification with
non-Horn hypotheses.  A simple example
illustrating such power is provided by the well-known 
unsorted theory $\mathcal{N}_{>}$ defining
the order relation on the
Peano natural numbers, defined by the equations, oriented as rules,
$[1]: 0>n \rightarrow \mathit{false}$,
$[2]: s(n) >0 \rightarrow \mathit{true}$, and
$[3]: s(n) >s(m) \rightarrow n > m$ (which has the subcall $n>m$).
We want to use  $\mathcal{N}_{>}$
to prove the trichotomy
law:
\[
x > y  = \mathit{true} \; \vee \; x =y \; \vee \; y > x = \mathit{true}.
\]
We can apply the ${\bf NI}$ rule to the narrowex $x>y$.  Narrowing
with rule $[2]$, the leftmost equation in the disjunction becomes  $\mathit{true}  =
\mathit{true}$, which allows easily discharging the generated
subgoal by ${\bf EPS}$ simplification.  
Narrowing with rule $[1]$ and substitution
$\alpha_{1}=\{ x \mapsto 0,\; y\mapsto y',\; n\mapsto y'\}$,
we generate the subgoal  $\mathit{false}  =
\mathit{true}    \; \vee \; 0 =y' \; \vee \; y' > 0 = \mathit{true}$,
which by $\vec{\mathcal{E}}^{=}$-simplification becomes the subgoal
$0 =y' \; \vee \; y' > 0 = \mathit{true}$, which 
can be easily discharged by applying again the ${\bf NI}$ rule to the
narrowex $y'>0$, to which only rule $[2]$ can be applied,
making the second disjunct $\mathit{true}  =
\mathit{true}$, which allows easily discharging of the generated
subgoal by $\vec{\mathcal{E}}^{=}$-simplification.  Finally, narrowing with rule $[3]$
and substitution
$\alpha_{3}=\{ x \mapsto s(x'),\; y\mapsto s(y'),\; n \mapsto x',\;  m\mapsto y'\}$,
we get subgoal (3):
$\overline{x'} > \overline{y'}  = \mathit{true} \; \vee \;
s(\overline{x'}) = s(\overline{y'}) \; \vee \;  s(\overline{y'}) >
s(\overline{x'}) = \mathit{true}$, which 
$\vec{\mathcal{E}}^{=}_{\{\overline{x'}, \overline{y'} \}}$-simplifies to
$\overline{x'} > \overline{y'}  = \mathit{true} \; \vee \;
\overline{x'} = \overline{y'} \; \vee \;  \overline{y'} >
\overline{x'} = \mathit{true}$, and we also get
the induction hypothesis $H_{3}\equiv
\overline{x'} > \overline{y'}  = \mathit{true} \; \vee \;
\overline{x'} = \overline{y'} \; \vee \;  \overline{y'} >
\overline{x'} = \mathit{true}$, which in $\vec{H}^{\oplus}_{e_U}$
becomes the ground rewrite rule
$(\overline{x'} > \overline{y'}  = \mathit{true} \; \vee \;
\overline{x'} = \overline{y'} \; \vee \;  \overline{y'} >
\overline{x'} = \mathit{true}) \rightarrow \top$.  But then, the simplified
subgoal (3) is immediately discharged by ${\bf  EPS}$ simplification,
thanks to this rule in $\vec{H}^{+\oplus}_{e_U}$.

\vspace{2ex}

\noindent {\bf 6. Reasoning with Multiclauses}.  Up to now most 
examples have involved clauses.  The reader may reasonably wonder
whether the extra generality of supporting multiclauses is worth the
trouble.  The purpose of this simple example is to dispel any
such qualms: multiclauses
 can afford a substantial economy of thought and support
shorter proofs.  The example is unsorted and well known: 
a slight extension of the  theory $\mathcal{N}_{>}$ just
used for illustrating the simplification of conjectures with
non-Horn hypothesis.
 It comes with the added bonuses of illustrating
equationally-defined equality predicates (see Section \ref{EQ-PREDS})
and the inductive congruence closure  (${\bf ICC}$) simplification rule.

First of all, we extend the theory  $\mathcal{N}_{>}$
to its ---automatically generated--- protecting extension 
$\mathcal{N}_{>}^{=}$, which adds a commutative,  equationally defined
Boolean equality predicate $\_=\_$. In
 $\mathcal{N}_{>}^{=}$, the equality predicate  $\_=\_$
for sort the $\mathit{Nat}$ of naturals
is defined by three rules,\footnote{For the sake of
simplicity, we here use $\mathit{true}$ and $\mathit{false}$
instead of $\top$ and $\bot$ for the truth values
of  $\_=\_$ (see Footnote \ref{eq-pred-foot}).}
namely, $n = n \rightarrow \mathit{true}$,
$0 = s(n) \rightarrow \mathit{false}$,
and $s(n) = s(m) \rightarrow n = m$.  We then further extend
$\mathcal{N}_{>}^{=}$ by declaring the $\geq$ predicate, defined
by the single rule: $[4]: n \geq m = (n > m \;\; \mathit{or}\; \; n = m)$.
Two basic properties about $>$ and $\geq$ 
that we wish to prove as lemmas are:
\[
x>y= \mathit{true} \rightarrow s(x) > y = \mathit{true} 
\;\;\; \;\;\; \;\;\; \; \;\;\; \mathit{and} \;\;\; \;\; \; \;\;\; \; \;\;\;
x>y= \mathit{true} \rightarrow y \geq x = \mathit{false}.
\]
We can of course prove them as separate lemmas.  But we can 
bundle them together and prove instead the single multiclause:
\[
x>y= \mathit{true} \rightarrow (s(x) > y = \mathit{true} \;\wedge \;
 y \geq x = \mathit{false}).
\]
We can, for example, apply ${\bf NI}$ to the focus narrowex
$x>y$.  The subgoal obtained by narrowing with rule $[1]$ is discharged by
simplification, since we get $\mathit{false} = \mathit{true}$ in the
premise. Narrowing with rule $[2]$ and  substitution
$\alpha_{2} = \{x \mapsto s(n'),\; n \mapsto n',\;
y \mapsto 0\}$ yields the subgoal
$\mathit{true} = \mathit{true} \rightarrow
(s(s(n')) > 0 = \mathit{true} \; \wedge \; 0 \geq s(n') = \mathit{false}$,
which also simplifies to $\top$ and is likewise discharged.
The interesting goal is the one obtained by narrowing with
rule $[3]$ and substitution
$\alpha_{2} = \{x \mapsto s(n'),\; n \mapsto n',\;
y \mapsto s(m'),\; m \mapsto m'\}$, namely,
\[
\overline{n'} >\overline{m'} = \mathit{true} 
\rightarrow (s(s(\overline{n'})) > s(\overline{m'}) = \mathit{true} \;\wedge \; 
 s(\overline{m'}) \geq s(\overline{n'}) = \mathit{false}). 
\]
which simplifies to:
\[
\overline{n'} >\overline{m'} = \mathit{true} 
\rightarrow (s(\overline{n'}) > \overline{m'} = \mathit{true} \;\wedge \; 
 (\overline{m'} > \overline{n'} \; \mathit{or}\; \overline{m'} =
 \overline{n'})  = 
\mathit{false})
\]
and has the following two simplified induction hypotheses:
\[
\overline{n'} >\overline{m'} = \mathit{true} \rightarrow s(\overline{n'}) > \overline{m'} = \mathit{true} 
\;\;\; \;\;\; \;\;\; \; \;\;\; \mathit{and} \;\;\; \;\; \; \;\;\; \;
\;\;\;
\overline{n'} >\overline{m'} = \mathit{true} \rightarrow 
(\overline{m'} > \overline{n'} \; \mathit{or}\; \overline{m'} =
 \overline{n'})  = 
\mathit{false}
\]

\vspace{1ex}

\noindent The congruence closure of the goal's  premise is
quite immediate, namely, the rewrite rule 
$\overline{n'} >\overline{m'} \rightarrow \mathit{true}$, which cannot
be further simplified.
 Since the two hypotheses are reductive, they are
orientable as rewrite rules
$\overline{n'} >\overline{m'} = \mathit{true} \rightarrow (s(\overline{n'}) > \overline{m'} \rightarrow \mathit{true})$
and $
\overline{n'} >\overline{m'} = \mathit{true} \rightarrow
((\overline{m'} > \overline{n'} \; \mathit{or}\; \overline{m'} =
 \overline{n'})  \rightarrow 
\mathit{false})$, which
 belong to $\vec{H}_{e_U}$ and, a fortiori,
to $\vec{H}^{\oplus}_{e_U}$.  But since: (i)  their lefhand sides 
match respective  subterms of the subgoal's conclusion,
and (ii) their condition is satisfied by the
congruence closure $\overline{n'} >\overline{m'} \rightarrow \mathit{true}$,
the ${\bf ICC}$ rule simplifies the subgoal to $\top$,
thus finishing the (joint) proof
of both lemmas by a single application of ${\bf NI}$ to their
multiclause bundling, followed by simplification.

\vspace{1ex}

An even simpler and very common opportunity
of bundling clauses into multiclauses  arises when trying to prove
several conjectures that are themselves equations,
in which case a multiclause is just a conjunction of equations.
Let us focus for simplicity on two such equations,
$e_{1}(x,y)$ and $e_{2}(x',y')$, involving variables 
 $x,x'$ of sort $s_{1}$ and  $y,y'$ of sort $s_{2}$.
Since we know that
$T_{\mathcal{E}} \models e_{1}(x,y)$
iff  $T_{\mathcal{E}} \models e_{1}(x',y')$, then,
$T_{\mathcal{E}} \models e_{1}(x,y)$ and  $T_{\mathcal{E}} \models
    e_{2}(x',y')$ hold, i.e., both conjectures are valid, iff
$T_{\mathcal{E}} \models e_{1}(x',y') \; \wedge \;  e_{2}(x',y')$ does.
The moral of this little \emph{Gedankenexperiment} is that, to take full advantage 
of bundling several clauses into a multiclause,
we should first  \emph{rename some of their variables},
so that the different conjuncts \emph{share as many variables as
  possible}. In this way, we may achieve the proverbial ---yet,
\emph{not} politically correct--- objective of killing as many birds
as possible with a single stone.

\section{Related Work and Conclusions}

As already mentioned, this work combines features from
automated, e.g., 
\cite{DBLP:conf/popl/Musser80,ind,DBLP:journals/jcss/HuetH82,DBLP:journals/ai/KapurM87,DBLP:conf/lics/Bachmair88,DBLP:journals/jar/BouhoulaR95,DBLP:journals/iandc/ComonN00}
and explicit, e.g.,
\cite{DBLP:conf/rta/KapurZ89,DBLP:conf/alp/Goguen90,DBLP:series/mcs/GuttagHGJMW93,cafe-tools-paper,DBLP:conf/birthday/GAinALOF14,itp-manual,hendrix-thesis,itp/HendrixKM10,DBLP:journals/corr/abs-2101-02690} equational
inductive theorem proving, as well as some features from first-order
superposition theorem proving
\cite{DBLP:journals/logcom/BachmairG94,DBLP:journals/aicom/Schulz02,weidenbach-SPASS},
 in a novel way.  In the explicit induction area,
the well-known ACL2 prover \cite{acl2-book} should also be mentioned.
ACL2 does not directly support inductive reasoning about general algebraic
specifications.  It does instead support very powerful inductive reasoning
about LISP-style data structures.  One way to relate ACL2 to the
above-mentioned explicit induction equational inductive provers is to view it as
a very powerful domain-specific explicit induction equational theorem
prover for recursive functions defined over LISP-style 
data structures.\footnote{ACL2 is of course a \emph{general purpose}
inductive theorem prover.  The main difference with equational theorem
provers in the broader sense is that they support any user-defined algebraic signatures.
Therefore, they can  \emph{directly} represent any algebraic data types,
whereas in ACL2 such data types are represented \emph{indirectly}, by 
encoding them as LISP data structures.}

Some of the
automatable techniques presented here have been used in some fashion in earlier
work, but, to the best of my knowledge, others have not.  For example, 
congruence closure is used in many 
provers, but congruence 
closure modulo is considerably less used, 
and order-sorted congruence closure modulo
is here used for the first time. 
Contextual
rewriting goes back to the Boyer-More prover
\cite{boyer-moore80}, later extended to ACL2 \cite{acl2-book},
and has also been used, for example, in RRL
\cite{DBLP:journals/iandc/ComonN00} and in Spike 
\cite{DBLP:journals/jar/BouhoulaR95};
and clause subsumption and ordered rewriting are used in most
automated theorem provers,
including inductive ones.
Equational simplification is used 
by most provers, and ordered rewriting is used by most
first-order and inductive automatic provers
and by some explicit induction ones;
but to the best of my knowledge simplification with
equationally-defined equality predicates \emph{modulo} axioms $B_{0}$
was only previously used in \cite{rocha-meseguer-calco11},
although in the much easier free case
equality predicates have been used
to specify ``consistency'' properties of data types in, e.g.,
\cite{DBLP:journals/jcss/HuetH82,DBLP:journals/iandc/ComonN00}.
To the best of my knowledge, neither 
constructor variant unification nor
variant satisfiability have been used in other 
general-purpose provers,
although variant unification is used
in various cryptographic protocol verification tools, e.g.,
 \cite{maude-npa-tutorial,DBLP:conf/cav/MeierSCB13}.
Combining \emph{all} these techniques, and doing so
in the very general setting of conditional order-sorted equational theories
---which subsume unsorted and many-sorted ones as
special cases--- and modulo any associativity and/or commutativity
and/or identity axioms appears to be new.

The combination of features from
automated and explicit-induction theorem proving 
offers the short-term possibility of
an inference subsystem that can be automated as a practical
oracle for inductive validity of VCs generated by other tools.
This automation, by including most of the formula simplification rules, would
allow users to focus on applying just the 9 inductive inference rules.
Of these, the ${\bf NI}$ rule and (a special case of) the ${\bf VA}$ rule
offer the prospect
of being easily automatable,\footnote{The {\bf GSI} rule
could also be automated; but this will probably
 require more complex heuristics.} bringing
us closer to the goal of achieving a practical synthesis between
interactive and automated inductive theorem proving. 
As the experience of using a subset of the formula simplification
rules to discharge VCs generated by the reachability logic
theorem prover reported in \cite{ind-ctxt-rew}
as well as the more recent experience of using the {\bf NuITP}
as a backend to discharge inductive VCs in the {\bf DM-Check} tool
\cite{DM-Check,DM-Check-WRLA24} suggests, such a synthesis could provide an effective  way
for a wide variety of other tools to use an
inductive theorem prover as an automatic ``backend'' VC verifier.
\emph{Strategies} will play a key role in achieving this goal.
The {\bf NuITP} already provides some support for defining and
using strategies, but this is an area that should be further developed.

Another area that needs further development is that of \emph{proof
  certification}.  The current version of the  {\bf NuITP} supports the
saving of proof scripts and the display of proof trees in  \LaTeX{}
notation.  Full proof certification is possible, but it will require very substantial
efforts.  A key challenge is the large body
of symbolic algorithms involved that need to be
certified.  For example, only very recently has certification for unsorted
associative-commutative unification become possible after a very
large effort formalizing and verifying Stickel's algorithm in the PVS
prover \cite{DBLP:conf/fscd/Ayala-RinconFSS22}.  Several 
inference rules use either order-sorted $B$-unification for any combination
of associative and/or commutative and/or unit axioms, or
variant $E \cup B$-unification, for which no machine-assisted formalizations such as
that in \cite{DBLP:conf/fscd/Ayala-RinconFSS22} currently exist
to the best of my knowledge.  Some \emph{partial} certification
of $B$- and  $E \cup B$-unifiers is certainly possible and achievable
in the near future, namely, certification  that  a $B$- or $E \cup
B$-unifier is correct.  The challenging part is the certification  of
\emph{completeness}: that the set of unifiers provided by the
unification algorithm covers as instances all other unifiers for
the given unification problem.  In the near future, since a good
number of inference rules are based on rewriting modulo axioms $B$
in an order-sorted equational theory, the correctness of those
rewriting steps
as equality steps could be certified using the certification method
developed for that purpose in \cite{DBLP:conf/fm/RosuELM03}.
This would also allow the ``easy'' part of unifier certification.
Included also in the need for certification are  the requirements
made on the equational theory $\mathcal{E}$, such as ground
convergence and sufficient completeness.  Here partial
certification is already available thanks to other exisiting Maude formal
tools such as Maude's Church-Rosser Checker \cite{crc-alp},
Termination Tool \cite{MTT-ijcar08,DBLP:journals/entcs/DuranLM09}
and Sufficient Completeness Checker
\cite{hendrix-meseguer-ohsaki-ijcar06}.  But, again, full certification
would require substantial new efforts.  For example, computation
of critical pairs in \cite{crc-alp} requires $B$-unification, which
itself would have to be certified.

In summary, what this paper reports on is a novel combination of inductive
theorem proving techniques to prove properties of equational programs
under very general assumptions: the equational programs can use
conditional equations, can execute modulo structural axioms $B$ such
as associativity and/or communtativity and/or unit element axioms,
and can have types and subtypes.  The main goal is to combine as
much as possible features of automatic and interactive inductive
theorem proving to make proofs shorter, while still giving the user
complete freedom to guide the proof effort.  The experience already
gained with the  {\bf NuITP} is quite encouraging; but there is much
work ahead.  First, as mentioned in Footnote
\ref{non-free-ctors}, the requirement of constructors being
free modulo $B_{0}$ should be relaxed.  Second, strategies more powerful
than those currently supported by the {\bf NuITP} should be developed
and illustrated with examples.  Third, a richer collection
of examples and challenging case studies as well as libraries
of already verified equational programs should be developed.
Fourth, the use of the  {\bf NuITP} in automated mode as a backed
should be applied to a variety of other formal tools.
Fifth, as mentioned above, substantial work is needed
in proof certification, which realistically should begin with
various kinds of partial certification as steps towards full certification.
In these and other ways, additional evidence for the usefulness
of the current inference system will become available,
and useful extensions and improvements of the inference system itself
are likely to be found.  
%
% As with pudding, one can say ---and even with more reason---
%that, ``the proof of the prover is in the proving.''

\vspace{1.5 ex}

\noindent {\bf Acknowledgements}. I warmly  thank Stephen Skeirik
for his contributions to the conference paper \cite{ind-ctxt-rew},
which, as explained in Section \ref{INTRO-SECT}, has been
further developed in substantial ways in this paper.  Since, due to other
professional obligations, Dr. Skeirik was not able to participate in
these new developments, he has expressed his
agreement on my  being the sole author of the present
paper. I cordially thank the reviewers of this paper
as well as Francisco Dur\'{a}n, Santiago Escobar, Ugo Montanari and Julia Sapi\~{n}a for
their excellent suggestions for improvement, which have led to 
a better and clearer exposition.  Furthermore, the collaboration on
the {\bf NuITP} with Drs. Dur\'{a}n,  Escobar and Sapi\~{n}a has,
as already mentioned, both further advanced the inference system and
demonstrated its effectiveness on substantial examples.
This work has been partially supported by
NRL under contracts N00173-17-1-G002 and N0017323C2002.

\bibliographystyle{splncs04}
\bibliography{ref,tex}

\begin{thebibliography}{10}
\providecommand{\url}[1]{\texttt{#1}}
\providecommand{\urlprefix}{URL }
\providecommand{\doi}[1]{https://doi.org/#1}

\bibitem{DBLP:conf/fscd/Ayala-RinconFSS22}
Ayala{-}Rinc{\'{o}}n, M., Fern{\'{a}}ndez, M., Silva, G.F., Sobrinho, D.N.: A
  certified algorithm for {AC}-unification. In: Felty, A.P. (ed.) 7th
  International Conference on Formal Structures for Computation and Deduction,
  {FSCD} 2022, August 2-5, 2022, Haifa, Israel. LIPIcs, vol.~228, pp.
  8:1--8:21. Schloss Dagstuhl - Leibniz-Zentrum f{\"{u}}r Informatik (2022)

\bibitem{baader-nipkow}
Baader, F., Nipkow, T.: Term Rewriting and All That. Cambridge University Press
  (1998)

\bibitem{Baaz-etal-formula-normal-forms}
Baaz, M., Egly, U., Leitsch, A.: Normal form transformations. In: Robinson,
  J.A., Voronkov, A. (eds.) Handbook of Automated Reasoning (in 2 volumes), pp.
  273--333. Elsevier and {MIT} Press (2001)

\bibitem{Bachmair91}
Bachmair, L.: Canonical equational proofs. Computer Science Logic, Progress in
  Theoretical Computer Science, Birkh{\"a}user Verlag AG (1991)

\bibitem{DBLP:conf/lics/Bachmair88}
Bachmair, L.: Proof by consistency in equational theories. In: Proceedings of
  the Third Annual Symposium on Logic in Computer Science {(LICS} '88),
  Edinburgh, Scotland, UK, July 5-8, 1988. pp. 228--233. {IEEE} Computer
  Society (1988)

\bibitem{DBLP:journals/logcom/BachmairG94}
Bachmair, L., Ganzinger, H.: Rewrite-based equational theorem proving with
  selection and simplification. J. Log. Comput.  \textbf{4}(3),  217--247
  (1994)

\bibitem{DM-Check}
Bae, K., Escobar, S., L\'{o}pez, R., Meseguer, J., {Sapi\~{n}a}, J.: The
  {DM-Check} deductive model checker, available at
  \url{https://safe-tools.dsic.upv.es/dmc/}

\bibitem{DM-Check-WRLA24}
Bae, K., Escobar, S., L\'{o}pez-Rueda, R., Meseguer, J., {Sapi\~{n}a}, J.:
  Verifying invariants by deductive model checking (2024), to appear in Proc.
  WRLA 2024, Springer LNCS. Preliminary version in
  \url{https://wrla2024.gitlab.io/files/Pre-Proceedings.pdf}

\bibitem{bt80}
Bergstra, J., Tucker, J.: Characterization of computable data types by means of
  a finite equational specification method. In: de~Bakker, J.W., van Leeuwen,
  J. (eds.) Automata, Languages and Programming, Seventh Colloquium, pp.
  76--90. Springer-Verlag (1980), lNCS, Volume 81

\bibitem{DBLP:journals/jar/BouhoulaR95}
Bouhoula, A., Rusinowitch, M.: Implicit induction in conditional theories. J.
  Autom. Reason.  \textbf{14}(2),  189--235 (1995)

\bibitem{boyer-moore80}
Boyer, R., Moore, J.: A Computational Logic. Academic Press (1980)

\bibitem{conditional-narrowing-SCP}
Cholewa, A., Escobar, S., Meseguer, J.: Constrained narrowing for conditional
  equational theories modulo axioms. Science of Computer Programming
  \textbf{112},  24--57 (2015)

\bibitem{maude-book}
Clavel, M., Dur\'an, F., Eker, S., Meseguer, J., Lincoln, P., Mart\'{\i}-Oliet,
  N., Talcott, C.: All About Maude -- A High-Performance Logical Framework.
  Springer LNCS Vol. 4350 (2007)

\bibitem{cafe-tools-paper}
Clavel, M., Dur\'an, F., Eker, S., Meseguer, J.: Building equational proving
  tools by reflection in rewriting logic. In: Proc. of the CafeOBJ Symposium
  '98, Numazu, Japan. CafeOBJ Project (April 1998),
  \url{http://maude.cs.uiuc.edu}

\bibitem{itp-manual}
Clavel, M., Palomino, M.: The {ITP} tool's manual, universidad Complutense,
  Madrid, April 2005, \url{http://maude.sip.ucm.es/itp/}

\bibitem{DBLP:journals/iandc/ComonN00}
Comon, H., Nieuwenhuis, R.: Induction=i-axiomatization+first-order consistency.
  Inf. Comput.  \textbf{159}(1-2),  151--186 (2000)

\bibitem{dershowitz-jouannaud}
Dershowitz, N., Jouannaud, J.P.: Rewrite systems. In: van Leeuwen, J. (ed.)
  Handbook of Theoretical Computer Science, Vol. B, pp. 243--320. North-Holland
  (1990)

\bibitem{NuITP}
Dur\'{a}n, F., Escobar, S., Meseguer, J., {n}a, J.S.: {NuITP} alpha 21 -- an
  inductive theorem prover for maude equational theories, available at
  \url{https://nuitp.webs.upv.es/}

\bibitem{MTT-ijcar08}
Dur{\'a}n, F., Lucas, S., Meseguer, J.: {MTT}: The {Maude} {Termination} {Tool}
  (system description). In: IJCAR 2008. Lecture Notes in Computer Science,
  vol.~5195, pp. 313--319. Springer (2008)

\bibitem{DBLP:journals/entcs/DuranLM09}
Dur{\'a}n, F., Lucas, S., Meseguer, J.: Methods for proving termination of
  rewriting-based programming languages by transformation. Electr. Notes Theor.
  Comput. Sci.  \textbf{248},  93--113 (2009)

\bibitem{frocos09}
Dur{\'a}n, F., Lucas, S., Meseguer, J.: Termination modulo combinations of
  equational theories. In: Frontiers of Combining Systems, 7th International
  Symposium, FroCoS 2009, Trento, Italy, September 16-18, 2009. Proceedings.
  Lecture Notes in Computer Science, vol.~5749, pp. 246--262. Springer (2009)

\bibitem{crc-alp}
Dur\'an, F., Meseguer, J.: On the {Church-Rosser} and coherence properties of
  conditional order-sorted rewrite theories. J. Algebraic and Logic Programming
   \textbf{81},  816--850 (2012)

\bibitem{maude-npa-tutorial}
Escobar, S., Meadows, C., Meseguer, J.: Maude-{NPA}: cryptographic protocol
  analysis modulo equational properties. In: Foundations of Security Analysis
  and Design {V}, {FOSAD} 2007/2008/2009 Tutorial Lectures, LNCS, vol.~5705,
  pp. 1--50. Springer (2009)

\bibitem{variant-JLAP}
Escobar, S., Sasse, R., Meseguer, J.: Folding variant narrowing and optimal
  variant termination. J. Algebraic and Logic Programming  \textbf{81},
  898--928 (2012)

\bibitem{Futatsugi10}
Futatsugi, K.: Fostering proof scores in {CafeOBJ}. In: Proc. ICFEM 2010.
  vol.~6447, pp. 1--20. Springer LNCS (2010)

\bibitem{DBLP:journals/scp/Futatsugi22}
Futatsugi, K.: Advances of proof scores in {CafeOBJ}. Sci. Comput. Program.
  \textbf{224},  102893 (2022)

\bibitem{DBLP:conf/birthday/GAinALOF14}
G{\^{a}}in{\^{a}}, D., Lucanu, D., Ogata, K., Futatsugi, K.: On automation of
  {OTS/CafeOBJ} method. In: Iida, S., Meseguer, J., Ogata, K. (eds.)
  Specification, Algebra, and Software - Essays Dedicated to Kokichi Futatsugi.
  Lecture Notes in Computer Science, vol.~8373, pp. 578--602. Springer (2014)

\bibitem{ind}
Goguen, J.: How to prove algebraic inductive hypotheses without induction: with
  applications to the correctness of data type representations. In: Proc. Fifth
  Conference on Automated Deduction, {LNCS}, vol.~87, pp. 356--373. Springer
  (1980)

\bibitem{osa1}
Goguen, J., Meseguer, J.: Order-sorted algebra {I}: Equational deduction for
  multiple inheritance, overloading, exceptions and partial operations.
  Theoretical Computer Science  \textbf{105},  217--273 (1992)

\bibitem{DBLP:conf/alp/Goguen90}
Goguen, J.A.: Proving and rewriting. In: Algebraic and Logic Programming,
  Second International Conference, Nancy, France, October 1-3, 1990,
  Proceedings. Lecture Notes in Computer Science, vol.~463, pp. 1--24. Springer
  (1990)

\bibitem{DBLP:journals/corr/abs-2101-02690}
Goguen, J.A.: Theorem proving and algebra. CoRR  \textbf{abs/2101.02690}
  (2021), \url{https://arxiv.org/abs/2101.02690}

\bibitem{MTA}
Guti\'{e}rrez, R., Meseguer, J., Skeirik, S.: The {Maude} termination
  assistant, in Pre-Proceedings of WRLA 2018

\bibitem{DBLP:journals/scp/GutierrezMR15}
Guti{\'{e}}rrez, R., Meseguer, J., Rocha, C.: Order-sorted equality enrichments
  modulo axioms. Sci. Comput. Program.  \textbf{99},  235--261 (2015)

\bibitem{DBLP:series/mcs/GuttagHGJMW93}
Guttag, J.V., Horning, J.J., Garland, S.J., Jones, K.D., Modet, A., Wing, J.M.:
  Larch: Languages and Tools for Formal Specification. Texts and Monographs in
  Computer Science, Springer (1993)

\bibitem{harrison-book}
Harrison, J.: Handbook of Practical Logic and Automated Reasoning. Cambridge
  University Press (2009)

\bibitem{itp/HendrixKM10}
Hendrix, J., Kapur, D., Meseguer, J.: Coverset induction with partiality and
  subsorts: A powerlist case study. In: ITP 2010. vol.~6172, pp. 275--290.
  Springer LNCS (2010)

\bibitem{hendrix-meseguer-ohsaki-ijcar06}
Hendrix, J., Meseguer, J., Ohsaki, H.: A sufficient completeness checker for
  linear order-sorted specifications modulo axioms. In: Automated Reasoning,
  Third International Joint Conference, IJCAR 2006. pp. 151--155 (2006)

\bibitem{hendrix-thesis}
Hendrix, J.D.: Decision Procedures for Equationally Based Reasoning. Ph.D.
  thesis, University of Illinois at Urbana-Champaign (2008),
  \texttt{http://hdl.handle.net/2142/10967}

\bibitem{DBLP:journals/jcss/HuetH82}
Huet, G.P., Hullot, J.: Proofs by induction in equational theories with
  constructors. J. Comput. Syst. Sci.  \textbf{25}(2),  239--266 (1982)

\bibitem{JKK83}
Jouannaud, J.P., Kirchner, C., Kirchner, H.: Incremental construction of
  unification algorithms in equational theories. In: Proc. ICALP'83. pp.
  361--373. Springer LNCS 154 (1983)

\bibitem{DBLP:journals/ai/KapurM87}
Kapur, D., Musser, D.R.: Proof by consistency. Artif. Intell.  \textbf{31}(2),
  125--157 (1987)

\bibitem{DBLP:conf/rta/KapurZ89}
Kapur, D., Zhang, H.: An overview of rewrite rule laboratory {({RRL})}. In:
  Proc. {RTA}-89. Lecture Notes in Computer Science, vol.~355, pp. 559--563.
  Springer (1989)

\bibitem{acl2-book}
Kaufmann, M., Manolios, P., Moore, J.: Computer-Aided Reasoning: An Approach.
  Kluwer (2000)

\bibitem{Lucas-Marche-Meseguer-IPL}
Lucas, S., March\'e, C., Meseguer, J.: Operational termination of conditional
  term rewriting systems. Information Processing Letters  \textbf{95}(4),
  446--453 (2005)

\bibitem{lucas-meseguer-normal-th-JLAMP}
Lucas, S., Meseguer, J.: Normal forms and normal theories in conditional
  rewriting. J. Log. Algebr. Meth. Program.  \textbf{85}(1),  67--97 (2016)

\bibitem{rwl-fwk}
Mart\'{\i}-Oliet, N., Meseguer, J.: Rewriting logic as a logical and semantic
  framework. In: Gabbay, D., Guenthner, F. (eds.) Handbook of Philosophical
  Logic, 2nd. Edition, pp. 1--87. Kluwer Academic Publishers (2002), first
  published as SRI Tech. Report SRI-CSL-93-05, August 1993

\bibitem{DBLP:conf/cade/MartinN90}
Martin, U., Nipkow, T.: Ordered rewriting and confluence. In: Proc. 10th
  International Conference on Automated Deduction. vol.~449, pp. 366--380.
  Springer (1990)

\bibitem{DBLP:conf/cav/MeierSCB13}
Meier, S., Schmidt, B., Cremers, C., Basin, D.A.: The {TAMARIN} prover for the
  symbolic analysis of security protocols. In: Proc. {CAV} 2013. vol.~8044, pp.
  696--701. Springer LNCS (2013)

\bibitem{ind-ctxt-rew}
Meseguer, J., Skeirik, S.: Inductive reasoning with equality predicates,
  contextual rewriting and variant-based simplification. In: Proc. {WRLA} 2020.
  LNCS, vol. 12328, pp. 114--135. Springer (2020)

\bibitem{unified-tcs}
Meseguer, J.: Conditional rewriting logic as a unified model of concurrency.
  Theoretical Computer Science  \textbf{96}(1),  73--155 (1992)

\bibitem{tarquinia}
Meseguer, J.: Membership algebra as a logical framework for equational
  specification. In: Proc. {WADT'97}. pp. 18--61. Springer LNCS 1376 (1998)

\bibitem{20-years}
Meseguer, J.: Twenty years of rewriting logic. J. Algebraic and Logic
  Programming  \textbf{81},  721--781 (2012)

\bibitem{DBLP:conf/fossacs/Meseguer16}
Meseguer, J.: Order-sorted rewriting and congruence closure. In: Proc.
  {FOSSACS} 2016. Lecture Notes in Computer Science, vol.~9634, pp. 493--509.
  Springer (2016)

\bibitem{var-sat-scp}
Meseguer, J.: Variant-based satisfiability in initial algebras. Sci. Comput.
  Program.  \textbf{154},  3--41 (2018)

\bibitem{DBLP:conf/wrla/Meseguer22}
Meseguer, J.: Checking sufficient completeness by inductive theorem proving.
  In: Bae, K. (ed.) Rewriting Logic and Its Applications -- 14th International
  Workshop, WRLA@ETAPS 2022. Lecture Notes in Computer Science, vol. 13252, pp.
  171--190. Springer (2022)

\bibitem{DBLP:journals/jlap/Meseguer23}
Meseguer, J.: Variants and satisfiability in the infinitary unification
  wonderland. J. Log. Algebraic Methods Program.  \textbf{134},  100877 (2023)

\bibitem{meseguer-WRLA24}
Meseguer, J.: Equivalence, and property internalization and preservation for
  equational programs (2024), to appear in Proc. WRLA 2024, Springer LNCS.
  Preliminary version in
  \url{https://wrla2024.gitlab.io/files/Pre-Proceedings.pdf}

\bibitem{DBLP:conf/popl/Musser80}
Musser, D.R.: On proving inductive properties of abstract data types. In:
  Conference Record of the Seventh Annual {ACM} Symposium on Principles of
  Programming Languages, Las Vegas, Nevada, USA, January 1980. pp. 154--162.
  {ACM} Press (1980)

\bibitem{DBLP:books/el/RV01/NieuwenhuisR01}
Nieuwenhuis, R., Rubio, A.: Paramodulation-based theorem proving. In: Robinson,
  J.A., Voronkov, A. (eds.) Handbook of Automated Reasoning (in 2 volumes), pp.
  371--443. Elsevier and {MIT} Press (2001)

\bibitem{DBLP:conf/birthday/OgataF14}
Ogata, K., Futatsugi, K.: Theorem proving based on proof scores for rewrite
  theory specifications of {OTS}s. In: Specification, Algebra, and Software -
  Essays Dedicated to Kokichi Futatsugi. Lecture Notes in Computer Science,
  vol.~8373, pp. 630--656. Springer (2014)

\bibitem{ohlebusch-book}
Ohlebusch, E.: Advanced Topics in Term Rewriting. Springer Verlag (2002)

\bibitem{rocha-meseguer-calco11}
Rocha, C., Meseguer, J.: Proving safety properties of rewrite theories (2011),
  in Proc. CALCO 2011, Springer LNCS 6859, 314-328

\bibitem{DBLP:conf/fm/RosuELM03}
Rosu, G., Eker, S., Lincoln, P., Meseguer, J.: Certifying and synthesizing
  membership equational proofs. In: Araki, K., Gnesi, S., Mandrioli, D. (eds.)
  {FME} 2003: Formal Methods, International Symposium of Formal Methods Europe,
  Pisa, Italy, September 8-14, 2003, Proceedings. Lecture Notes in Computer
  Science, vol.~2805, pp. 359--380. Springer (2003)

\bibitem{rubio-thesis}
Rubio, A.: Automated Deduction with Constrained Clauses. Ph.D. thesis,
  Universitat Polit\`ecnica de Catalunya (1994)

\bibitem{DBLP:conf/csl/Rubio95}
Rubio, A.: Theorem proving modulo associativity. In: Computer Science Logic,
  9th International Workshop, {CSL} '95. Lecture Notes in Computer Science,
  vol.~1092, pp. 452--467. Springer (1995)

\bibitem{DBLP:journals/iandc/Rubio02}
Rubio, A.: A fully syntactic {AC-RPO}. Inf. Comput.  \textbf{178}(2),  515--533
  (2002)

\bibitem{DBLP:journals/aicom/Schulz02}
Schulz, S.: {E} -- a brainiac theorem prover. {AI} Commun.  \textbf{15}(2-3),
  111--126 (2002)

\bibitem{ibos-wrla20}
Skeirik, S., Meseguer, J., Rocha, C.: Verification of the {IBOS} browser
  security properties in reachability logic, this volume.

\bibitem{skeirik-thesis}
Skeirik, S.: Rewriting-based symbolic methods for distributed system
  verification. Ph.D. thesis, University of Illinois at Urbana-Champaign (2019)

\bibitem{skeirik-meseguer-var-sat-JLAMP}
Skeirik, S., Meseguer, J.: Metalevel algorithms for variant satisfiability. J.
  Log. Algebr. Meth. Program.  \textbf{96},  81--110 (2018)

\bibitem{viry-tcs}
Viry, P.: Equational rules for rewriting logic. Theoretical Computer Science
  \textbf{285},  487--517 (2002)

\bibitem{weidenbach-SPASS}
Weidenbach, C., Dimova, D., Fietzke, A., Kumar, R., Suda, M., Wischnewski, P.:
  {SPASS} version 3.5. In: Schmidt, R.A. (ed.) Automated Deduction -- CADE-22.
  pp. 140--145. Springer Berlin Heidelberg, Berlin, Heidelberg (2009)

\bibitem{DBLP:journals/fuin/Zhang95}
Zhang, H.: Contextual rewriting in automated reasoning. Fundam. Inform.
  \textbf{24}(1/2),  107--123 (1995)

\end{thebibliography}

\newpage

\appendix

\section{The Natural Numbers Theory
  $\mathcal{N}$} \label{natural-example}.

\noindent The natural number theory used in the multiplicative cancellation
example of Section \ref{inf-ex} is borrowed from  \cite{ind-ctxt-rew} and
has the following Maude specification:

\lstset{numbers=left, numberstyle=\tiny, basicstyle=\ttfamily, columns=fullflexible, keepspaces=true, emph={fmod,endfm,mod,endm,sort,subsort,op,var,eq,rl,protecting,is,ctor,assoc,comm,id,ditto,variant}, emphstyle=\bfseries}

\begin{figure}
\hspace{20pt}
\begin{minipage}{\textwidth}
\begin{lstlisting}
fmod NATURAL is protecting TRUTH-VALUE .
  sorts Zero NzNat Nat .
  subsorts Zero NzNat < Nat .

  op 0 : -> Zero  [ctor] .
  op 1 : -> NzNat [ctor] .
  op _+_ : NzNat NzNat -> NzNat [ctor assoc comm] .
  op _+_ : NzNat   Nat -> NzNat [     assoc comm] .
  op _+_ : Nat   NzNat -> NzNat [     assoc comm] .
  op _+_ : Nat   Nat   -> Nat   [     assoc comm] .
  op _*_ : NzNat NzNat -> NzNat [     assoc comm] .
  op _*_ : Nat   Nat   -> Nat   [     assoc comm] .
  op _>_ : Nat   Nat   -> Bool .

  vars X Y Z : Nat .  var X' : NzNat .

  eq X +  0      =  X [variant] .
  eq X *  0      =  0 .
  eq X *  1      =  X .
  eq X * (Y + Z) = (X * Y) + (X * Z) .
  eq X + X' > X = true [variant] .
  eq X > X + Y = false [variant] .
endfm
\end{lstlisting}
\end{minipage}
\caption{Natural Number Theory Specification.}
\label{natural-thy}
\end{figure}

\noindent  Note that we have a ``sandwich'' of theories
$\mathcal{N}_{\Omega} \subseteq \mathcal{N}_{1} \subseteq
\mathcal{N}$, where $\mathcal{N}_{\Omega}$ is given by the
constants \texttt{true} and
\texttt{false} in \texttt{TRUTH-VALUE} plus the
 operators
marked as \texttt{ctor}, including the first typing for $+$,
and that $\mathcal{N}_{1}$ is the FVP theory extending 
$\mathcal{N}_{\Omega}$ with the remaining typings for $+$, the $>$ predicate,
the equation for \texttt{0} as identity element for $+$, and the two
equations for $>$.

\section{Proof of the Soundness Theorem} \label{Proof-Sound-Theo}

We need to prove that, under the theorem's assumptions
on $\mathcal{E}$,
 if $[\overline{X},\mathcal{E},H] \Vdash \Gamma
\rightarrow \Lambda$ has a closed proof tree, then
$[\overline{X},\mathcal{E},H] \models \Gamma
\rightarrow \Lambda$.  We reason by contradiction,
and assume that such an implication does not hold.
This means that there is a goal
$[\overline{X},\mathcal{E},H] \Vdash \Gamma
\rightarrow \Lambda$ 
having a closed proof tree of smallest depth $d$  possible and
such that $[\overline{X},\mathcal{E},H] 
\not\models \Gamma
\rightarrow \Lambda$.
That is, any closed proof tree of any goal
having depth less than $d$ proves a goal that is valid in its
associated theory.  We then reach a contradiction by
considering the inference rule applied at the root of the
tree.  Before reasoning by cases considering each inference rule,
we prove three lemmas that will be useful in what follows.
The statement of the first lemma might be deceptive without
some explanation of its purpose.
One might easily assume that $(\Sigma,E)$ will be used
in practice as an order-sorted equational
theory $\mathcal{E}$ where we want to prove
inductive theorems about its initial model $T_{\mathcal{E}}$.  
This is \emph{a} possible
use of the lemma, but the intention is to use it in 
the following, more general sense.  Recall the following definition
of executable inductive hypotheses:
\[
\vec{H}^{+}_{e_U} =  \vec{H}_{e_U} \cup \vec{H}_{wu_U} \cup \vec{H}^{=}_{\vee,e}
\]
and note that $\vec{H}_{e_U} \cup \vec{H}_{wu_U}$ denotes a set of
$\Sigma(\overline{X})$-rewrite rules orienting conditional
equations, whereas $\vec{H}^{=}_{\vee,e}$
denotes a set of $\Sigma(\overline{X})^{=}$-rewrite rules
orienting non-Horn hypotheses.  Furthermore, let $H^{eq}_{ne}$
denote the subset of the set $H_{ne}$ of non-executable
hypotheses that are equations or conditional equations.
In our intended use,
 $(\Sigma,E)$  will actually stand for an equational theory of the
form: $(\Sigma(\overline{X}),E \cup B \cup H_{e} \cup H_{wu} \cup H^{eq}_{ne})$,
where $(\Sigma,E \cup B)$ is the original theory $\mathcal{E}$
on which we are doing inductive reasoning about theorems valid
 in its initial algebra  $T_{\mathcal{E}}$.

\begin{lemma}  \label{formula-E-equiv}
Let $(\Sigma,E)$ be an order-sorted equational
theory, and let $(\Sigma^{=},E)$ be the extension of $(\Sigma,E)$
where $\Sigma$ is extended to $\Sigma^{=}$ as
described in Section \ref{EQ-PREDS},\footnote{The equations defining
the equality predicates are not needed here: we only need the extend
signature to represent formulas as terms.}
so that QF $\Sigma$-formulas
are represented as terms of the new Boolean sort added to
the sorts of $\Sigma$.  Let $\varphi$ and $\psi$ be any two 
QF formulas such that $\varphi =_{E} \psi$.  Then, these formulas
are $E$-equivalent, i.e., for any $(\Sigma,E)$-algebra $A$
and any assignment $a \in [X \sra A]$, where $X$ contains the
variables of $\varphi$ and $\psi$, we have the equivalence:
\[A, a \models \varphi \;\;\; \Leftrightarrow \;\;\; A, a \models
  \psi.\]
That is, we have $E \models \varphi \Leftrightarrow \psi$.
\end{lemma}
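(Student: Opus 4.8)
The plan is to prove the equivalence by a single, uniform model-theoretic argument rather than by analyzing the syntactic shape of the equational derivation of $\varphi =_E \psi$. The central device is to canonically extend the given $(\Sigma,E)$-algebra $A$ to a $\Sigma^=$-algebra $A^=$ whose new Boolean sort carries the genuine two-element Boolean algebra and whose added operators are interpreted by their intended Tarskian meaning. The whole point of the $\mathit{NewBool}$ encoding is precisely to mechanize Tarskian satisfaction, so making that interpretation the standard one is exactly what is needed.

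First I would fix $A$ and an assignment $a \in [X \sra A]$ with $X \supseteq \mathit{fvars}(\varphi) \cup \mathit{fvars}(\psi)$, and define $A^=$ as follows: on the sorts and operations of $\Sigma$ it agrees with $A$, so that $A^=|_{\Sigma} = A$; the new sort $\mathit{NewBool}$ is interpreted as $\{\top,\bot\}$; the connectives $\wedge,\vee,\neg$ are interpreted by the corresponding Boolean operations; and each operator $\_=\_$ on a connected component $[s]$ is interpreted by the actual equality test, returning $\top$ on a pair $(x,y)$ iff $x=y$ in $A_{[s]}$. Since $E$ mentions no symbol of $\Sigma^= \setminus \Sigma$, satisfaction of $E$ in $A^=$ reduces to satisfaction in its $\Sigma$-reduct $A$; thus the reduct-satisfaction equivalence from the preliminaries ($A \models \vartheta \Leftrightarrow A|_{\Omega} \models \vartheta$ for $\Omega$-formulas $\vartheta$) yields $A^= \models E$ from $A \models E$.

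Next I would establish the bridge between the term interpretation of a QF formula in $A^=$ and its Tarskian satisfaction in $A$: for every QF formula $\chi$ over $\Sigma$, I claim $\chi^{A^=}_{a} = \top$ iff $A,a \models \chi$. This is a routine structural induction on $\chi$: the atom case $t=t'$ uses $A^=|_{\Sigma} = A$ together with the chosen interpretation of $\_=\_$, and the connective cases use the standard Boolean interpretation of $\wedge,\vee,\neg$. It requires only the extended signature, not the equality-predicate-defining equations, in agreement with the lemma's footnote.

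Finally, from $\varphi =_E \psi$ — i.e.\ $E \vdash \varphi = \psi$ in the equational logic over $\Sigma^=$, where this outer ``$=$'' is the equational judgment between the two $\mathit{NewBool}$-sorted terms — and soundness of order-sorted equational deduction applied to the model $A^= \models E$, I obtain $A^= \models \varphi = \psi$, hence $\varphi^{A^=}_{a} = \psi^{A^=}_{a}$. Combining this equality with the bridge lemma for $\varphi$ and for $\psi$ gives $A,a\models\varphi \Leftrightarrow A,a\models\psi$; since $A$ and $a$ were arbitrary, this is precisely $E \models \varphi \Leftrightarrow \psi$. I expect no substantive obstacle: the only steps requiring care are the well-definedness of $A^=$ (sort-coherence of interpreting $\_=\_$ on each connected component, and the reduct identity $A^=|_{\Sigma}=A$) and the structural induction of the bridge lemma, both of which are purely routine bookkeeping once the construction is set up.
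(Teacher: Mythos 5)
Your proof is correct, but it takes a genuinely different route from the paper's. The paper argues syntactically: it reduces to a single step of $E$-equality, observes that an equation of $E$ can only rewrite a $\Sigma$-subterm sitting strictly inside some atom $t=t'$ of $\varphi$, uses $A\models E$ to conclude that the rewritten atom has the same truth value under $a$, and then propagates this upward through the Boolean structure by an induction on the length of the position. You instead argue semantically: you expand $A$ to a $\Sigma^{=}$-algebra $A^{=}$ carrying the standard two-element Booleans and the genuine equality test, obtain $A^{=}\models E$ via the reduct equivalence, prove a bridge lemma identifying the $\mathit{NewBool}$-denotation of a formula-term in $A^{=}$ with its Tarskian truth value in $A$, and then invoke soundness of order-sorted equational deduction to get $\varphi^{A^{=}}_{a}=\psi^{A^{=}}_{a}$. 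Both arguments are sound. The paper's is more elementary and self-contained: no new algebra to construct, no appeal to the general soundness theorem, and it makes visible exactly why each deduction step preserves truth. Yours is more modular and arguably more robust: the bridge lemma isolates once and for all the ``formulas as Boolean terms'' idea underlying the whole $\mathcal{E}\mapsto\mathcal{E}^{=}$ construction, the case analysis on where the rewrite occurs disappears, and the same proof would go through unchanged for any chain of $\Sigma^{=}$-equations valid in $A^{=}$, not just those coming from $E$. The points you must (and do) flag are the well-definedness of $A^{=}$ as an order-sorted algebra --- in particular that interpreting the subsort-overloaded $\_=\_$ as the equality test on each connected component is coherent --- and the fact that the variables of $\varphi$ and $\psi$ all have $\Sigma$-sorts, so $a$ already serves as an assignment into $A^{=}$.
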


\begin{proof}  Since the equality relation is 
reflexive and transitive, and so is logical equivalence,
it is enough to prove the lemma when the equality
$\varphi =_{E} \psi$ is obtained by a single step of $E$-equality.
That is, there is position $p$ in $\varphi$
such that $\varphi|_{p}$ is an equation $t = t'$,
and there is a \emph{term}  position $i.q$ in such an equation, 
$1 \leq i \leq 2$, such that, taking w.l.o.g. $i=1$,
there is an order-sorted substitution $\theta$
and a (possibly conditional) equation $(u = v\;\; \mathit{if} \;C) \in E$ (or $(v = u\;\; \mathit{if} \;C) \in E$)
such that $E \vdash C \theta$ and
 $\psi = \varphi[v \theta]_{p.1.q}$.
In other words, $\psi$ only differs from $\varphi$ in that
at position $p$ the equation now has the form
$t[v\theta]_{q} = t'$.  But, since $A \models E$, we must have
$t a = t[v\theta]_{q} a$, and therefore we also must have
$A, a \models t = t' \; \Leftrightarrow A, a 
\models t[v\theta]_{q} = t'$.
But, by the inductive definition
of the satisfaction relation $A, a \models \varphi$
in terms of the Boolean structure of $\varphi$, a simple in
induction on $|p|$, the length of $p$,
 forces
$A, a \models \varphi \; \Leftrightarrow \;\; A, a \models
  \psi$, as desired.  $\Box$
\end{proof}

\begin{lemma} \label{formula-E=-equiv}
  Let $[\overline{X},\mathcal{E},H]$ be an
inductive theory, and consider again the
signature extension $\Sigma(X) \subseteq \Sigma(X)^{=}$
allowing the representation of QF $\Sigma(X)$-formulas 
as terms of the new Boolean sort added to
the sorts of $\Sigma(X)$. Let $\varphi$ and $\psi$ be any two 
QF  $\Sigma(X)$-formulas such that $\varphi
\rightarrow^{*}_{\vec{\mathcal{E}}^{=}_{\overline{X}_{U}} \cup \vec{H}^{+}_{e_U}} \psi$. 
Then, these formulas
are $[\overline{X},\mathcal{E},H]$-equivalent,
i.e., for any $[\overline{X},\mathcal{E},H]$-model
$(T_{\mathcal{E}^{\square}},[\overline{\alpha}])$
and any constructor ground substitution
$\beta$ whose domain contains the variables of $\varphi$ and $\psi$,
we have the equivalence:
\[T_{\mathcal{E}^{\square}} \models 
\varphi^{\circ}
(\alpha \uplus \beta)
\;\;\; \Leftrightarrow \;\;\; 
T_{\mathcal{E}^{\square}} \models 
\psi^{\circ}
(\alpha \uplus \beta)
\]
That is, we have $[\overline{X},\mathcal{E},H] \models \varphi \Leftrightarrow \psi$.
\end{lemma}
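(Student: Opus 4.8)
The plan is to reduce the statement to the case of a \emph{single} rewrite step and then to show that each such step replaces a subterm (or a Boolean subformula) by one that is semantically equal (or logically equivalent) in the algebra $(T_{\mathcal{E}^{\square}},[\overline{\alpha}])$, so that the truth value of the whole formula is left unchanged. Since $\rightarrow^{*}_{\vec{\mathcal{E}}^{=}_{\overline{X}_{U}} \cup \vec{H}^{+}_{e_U}}$ is the reflexive--transitive closure of the one-step relation, and logical equivalence is itself reflexive and transitive, it suffices to treat a single step $\varphi \rightarrow \psi$ with some rule $\rho$ at a position $p$. Recall that, under $(\cdot)^{\circ}(\alpha \uplus \beta)$, the constants $\overline{X}$ occurring in $\varphi$ and in $\rho$ are interpreted by $[\overline{\alpha}]$ and the free variables by $\beta$; thus $\varphi^{\circ}(\alpha \uplus \beta)$ is precisely the denotation of the Boolean term $\varphi$ in the algebra $(T_{\mathcal{E}^{\square}},[\overline{\alpha}])$ under $\beta$, consistently with the semantics of inductive theories recalled in Section \ref{SuperClIndTh}. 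I would then split on whether $\rho$ belongs to $\vec{\mathcal{E}}^{=}_{\overline{X}_{U}}$ or to $\vec{H}^{+}_{e_U}$.

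For $\rho \in \vec{\mathcal{E}}^{=}_{\overline{X}_{U}}$, every such rule orients an equation of $\mathcal{E}^{=}_{U}$ (a base equation in $\vec{E}$, an oriented unit axiom in $\vec{U}$, or an equationally defined equality-predicate equation), and all of these are valid in $T_{\mathcal{E}^{\square}}$ viewed as a $\Sigma^{\square}(\overline{X})^{=}$-algebra. Hence the replaced instance and its reduct denote equal elements under $(\alpha \uplus \beta)$. If $p$ lies inside an equational atom, this is exactly one step of $\mathcal{E}^{=}_{U}$-equality and the conclusion follows from Lemma \ref{formula-E-equiv} applied to the algebra $(T_{\mathcal{E}^{\square}},[\overline{\alpha}])$. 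If $p$ is the position of a Boolean subformula (as for the rules $(s(x)=s(y)) \to (x=y)$, $(s(x)=0)\to\bot$, etc.), the soundness property of equationally defined equality predicates recalled in Section \ref{EQ-PREDS} shows that $\varphi|_{p}$ and its reduct are equivalent QF formulas in $T_{\mathcal{E}}$; a structural induction on the Boolean structure of $\varphi$ above $p$ --- exactly the induction used at the end of the proof of Lemma \ref{formula-E-equiv} --- then lifts this equivalence from $\varphi|_{p}$ to $\varphi$ and $\psi$.

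For $\rho \in \vec{H}^{+}_{e_U}$, I would use that $(T_{\mathcal{E}^{\square}},[\overline{\alpha}])$, being a model of the inductive theory, satisfies $H$ and hence, by the way the rules were generated in Section \ref{IndH-REW}, the oriented hypotheses $\vec{H}^{+}_{e_U}$. For a (possibly constrained, possibly conditional) reductive or usable equation rule from $\vec{H}_{e_U}$, applying it at $p$ replaces $l\theta$ by $r\theta$ once its condition instance has been discharged; the matching condition $z := r$ and the order constraint $l \succ z$ only govern applicability and leave the underlying equality $l\theta = r\theta$ intact, while the remaining equational (or rewrite) condition, once satisfied, forces $l\theta = r\theta$ in the model because the corresponding hypothesis is valid there. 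For a non-Horn rule $\Delta \to \top \;\mathit{if}\; \Gamma$ from $\vec{H}^{=}_{\vee,e}$, discharging $\Gamma\theta$ together with validity of the clause $\Gamma \to \Delta$ in the model forces the disjunction $\Delta\theta$ to be true, hence to denote $\top$, so the replacement again preserves truth. In every case one concludes by the same structural-induction lifting as above.

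The main obstacle is the soundness of \emph{condition evaluation}: the conditions of hypothesis rules are themselves discharged by the non-confluent, operationally terminating relation $\rightarrow_{R:E^{=},B_{0}}$ using both equations and rules, and for the constrained rules they contain matching conditions whose right-hand sides are normalized by this same relation. I would therefore organize the whole argument as a single induction well-founded on the operational-termination ordering underlying $\mathcal{R}_{[\overline{X},\mathcal{E},H]}$ (Section \ref{IndH-REW}), so that the statement ``every certified condition instance holds in the model'' is available as an inductive hypothesis when justifying the outer step. Handling the \emph{abus de langage} that lets a disjunction $\Delta$ be read simultaneously as a formula and as a Boolean term --- so that ``$\Delta\theta$ is true'' and ``$\Delta\theta =_{E^{=}} \top$'' coincide --- in the non-Horn case, and checking that the order-constraint and matching-condition literals are semantically inert, are the remaining points that demand care but introduce no new ideas.
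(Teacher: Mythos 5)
Your proposal is correct and follows essentially the same route as the paper's proof: reduce to a single rewrite step, case-split on the kind of rule (base/hypothesis equations handled via Lemma \ref{formula-E-equiv}, equality-predicate rules via the semantics of $\mathcal{E}^{=}_{U}$ plus a structural induction on the Boolean context, and non-Horn rules via validity of $\Gamma\rightarrow\Delta$ in the model together with the already-established cases for discharging the condition). Your explicit well-founded induction on the operational-termination ordering to justify soundness of condition evaluation is a point the paper treats more informally (it simply invokes the reflexive-transitive closure of the earlier cases), but it is a refinement of, not a departure from, the same argument.
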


\begin{proof} Since the rewrite relation 
$\rightarrow^{*}_{\vec{\mathcal{E}}^{=}_{\overline{X}_{U}} \cup \vec{H}^{+}_{e_U}}$ is 
reflexive and transitive, and so is logical equivalence,
it is enough to prove the lemma for a single rewrite step
$\varphi
\rightarrow_{\vec{\mathcal{E}}^{=}_{\overline{X}_{U}} \cup
  \vec{H}^{+}_{e_U}} \psi$.
We give separate proofs of the lemma
for the three possible cases in which
a rewrite $\varphi
\rightarrow_{\vec{\mathcal{E}}^{=}_{\overline{X}_{U}} \cup
  \vec{H}^{+}_{e_U}} \psi$ can happen.

\vspace{1ex}

\noindent {\bf Case} (1).  The rewrite is performed with a
rule in 
$\vec{\mathcal{E}}_{U} \cup  \vec{H}_{e_U} \cup \vec{H}_{wu_U}$. 
Then the result follows from 
$[\overline{X},\mathcal{E},H] \models
E \cup B \cup H_{e} \cup H_{wu}$
and Lemma
\ref{formula-E-equiv}
applied to the equational
theory $(\Sigma(\overline{X}),E \cup B \cup H_{e} \cup H_{wu})$.
This  finishes the proof for {\bf Case} (1).

\vspace{1ex}

\noindent {\bf Case} (2). The rewrite is the
the application of a rule in 
$\vec{\mathcal{E}}^{=}_{\overline{X}_{U}}$
not in $\vec{\mathcal{E}}_{U}$; or
(3) it is the application of a rule
in $\vec{H}^{=}_{\vee,e}$.
In case (2), all such rules are of the form
$(u = v) \rightarrow \phi \;\; \mathit{if} \; C$
with $\phi$ a $\Sigma$-formula.
For example, if $[\_,\_]$ is a pairing constructor
in $\Sigma$ satisfying no axioms, then
there is a rule in $\vec{\mathcal{E}}^{=}_{\overline{X}_{U}}$
of the form $([x,y] = [x'=y']) \rightarrow
x = x' \wedge y = y'$.  Therefore,
$\varphi
\rightarrow_{\vec{\mathcal{E}}^{=}_{\overline{X}_{U}}} \psi$
exactly means that there is position $p$ in $\varphi$
such that $\varphi|_{p}$ is an equation $t = t'$,
and there is an order-sorted substitution $\theta$ and
a rule $(u = v) \rightarrow \phi \;\; \mathit{if} \; C$
in $\vec{\mathcal{E}}^{=}_{\overline{X}_{U}}$
such that: (a) $(u = v)\theta =_{B_{0}} (t = t')$,
(b) $\mathcal{E}^{=}_{\overline{X}_{U}} \vdash C \theta$,
and (c) $\psi = \varphi[\phi\theta]_{p}$.  We now have to prove
that for any
any $[\overline{X},\mathcal{E},H]$-model
$(T_{\mathcal{E}^{\square}},[\overline{\alpha}])$
and any constructor ground substitution
$\beta$ whose domain contains the variables of 
$\varphi$ and $\varphi[\phi\theta]_{p}$,
we have the equivalence:
\[T_{\mathcal{E}^{\square}} \models 
\varphi^{\circ}
(\alpha \uplus \beta)
\;\;\; \Leftrightarrow \;\;\; 
T_{\mathcal{E}^{\square}} \models 
\varphi[\phi\theta]_{p}^{\circ}
(\alpha \uplus \beta).
\]
But note that
if we have a rewrite 
$\varphi
\rightarrow_{\vec{\mathcal{E}}^{=}_{\overline{X}_{U}}}
\varphi[\phi\theta]_{p}$ with substitution $\theta$, we also have
a rewrite
$\varphi^{\circ}
\rightarrow_{{\vec{\mathcal{E}}_{U}}^{=}}
\varphi[\phi \theta]_{p}^{\circ}$ with 
substitution $\theta^{\circ}$,
so that $\varphi[\phi \theta]_{p}^{\circ}
= \varphi^{\circ}[\phi \theta^{\circ}]_{p}$.
But a simple induction on $|p|$ using the inductive definition
of the satisfaction relation $T_{\mathcal{E}^{\square}} \models 
\varphi^{\circ}$
in terms of the Boolean structure of $\varphi$, together with
the fact that for this rewrite at position $p$ to happen
$\varphi|_{p} \equiv t = t'$ must be a $\Sigma(X)$-equation,
forcing $T_{\mathcal{E}^{\square}} \models (t = t')^{\circ}$
iff $T_{\mathcal{E}} \models (t = t')^{\circ}$
iff (by the properties of $\mathcal{E}_{U}^{=}$)
 $T_{\mathcal{E}} \models \phi \theta^{\circ}$
iff  $T_{\mathcal{E}^{\square}} \models \phi \theta^{\circ}$,
 gives us 
$T_{\mathcal{E}^{\square}} \models 
\varphi^{\circ}
(\alpha \uplus \beta)
\; \Leftrightarrow \;
T_{\mathcal{E}^{\square}} \models 
\varphi[\phi\theta]_{p}^{\circ}
(\alpha \uplus \beta)$,
as desired.  This finishes the proof of  {\bf Case} (2).

\vspace{1ex}

\noindent  {\bf Case} (3).   There is a non-Horn clause
$\Upsilon \rightarrow \Delta$ in $H^{=}_{e}$,
oriented as a rewrite rule
$\Upsilon \rightarrow (\Delta \rightarrow \top)$,
 a position $p$ in $\varphi$, and a substitution
$\theta$ such that: (i) $\varphi|_{p} =_{B^{=}_{0}} \Delta \theta$
(where $B^{=}_{0}$ are the axioms in $\vec{\mathcal{E}}^=_{{\overline{X}}_U}$)
and (ii) $\top \in \Upsilon \alpha !_{\vec{\mathcal{E}}^=_{{\overline{X}}_U} \cup\ \vec{H}_{e_U}}$.
But using the (reflexive transitive closure of) the
\emph{already proved} cases (1)--(2) above,
$\top \in \Upsilon \theta !_{\vec{\mathcal{E}}^=_{{\overline{X}}_U}
  \cup\ \vec{H}_{e_U}}$
implies that for any $[\overline{X},\mathcal{E},H]$-model
$(T_{\mathcal{E}^{\square}},[\overline{\alpha}])$
and any constructor ground substitution
$\beta$ whose domain contains the variables of $\varphi$ and $\psi$,
we have the equivalence:
\[T_{\mathcal{E}^{\square}} \models 
(\Upsilon \theta)^{\circ}
(\alpha \uplus \beta)
\;\;\; \Leftrightarrow \;\;\; 
T_{\mathcal{E}^{\square}} \models 
\top
(\alpha \uplus \beta)
\]
that is, we have $T_{\mathcal{E}^{\square}} \models 
(\Upsilon \theta)^{\circ}
(\alpha \uplus \beta)$  But we also
have $[\overline{X},\mathcal{E},H] \models (\Upsilon \rightarrow \Delta)\theta$,
which forces
$T_{\mathcal{E}^{\square}} \models 
(\Delta \theta)^{\circ}
(\alpha \uplus \beta)$, and therefore the
equivalence
\[T_{\mathcal{E}^{\square}} \models 
(\Delta \theta)^{\circ}
(\alpha \uplus \beta)
\;\;\; \Leftrightarrow \;\;\; 
T_{\mathcal{E}^{\square}} \models 
\top
(\alpha \uplus \beta)
\]
which by the Tarskian semantics for QF formulas forces the
equivalence
\[T_{\mathcal{E}^{\square}} \models 
(\varphi[\Delta \theta]_{p})^{\circ}
(\alpha \uplus \beta)
\;\;\; \Leftrightarrow \;\;\; 
T_{\mathcal{E}^{\square}} \models 
(\varphi[\top]_{p})^{\circ}
(\alpha \uplus \beta)
\]
as desired.  This finishes the proof of  {\bf Case} (3),
and therefore that of the Lemma. $\Box$
\end{proof}

\noindent  Call two inductive theories
$[\overline{X},\mathcal{E},H]$ and $[\overline{X},\mathcal{E},H']$
\emph{semantically equivalent}, denoted
$[\overline{X},\mathcal{E},H]\equiv [\overline{X},\mathcal{E},H]$,
iff they have the same models.  The following Lemma
gives a useful sufficient condition for semantic equivalence.

\begin{lemma} \label{ind-th-equiv}
  Let $[\overline{X},\mathcal{E},H]$ be an
inductive theory, and $\overline{G},\overline{G'}$ be
two conjunctions of ground $\Sigma(\overline{X})$-equations.
Then , $[\overline{X},\mathcal{E},H] \models \overline{G}
\Leftrightarrow \overline{G'}$ implies
$[\overline{X},\mathcal{E},H \cup \{\overline{G}\}]\equiv 
[\overline{X},\mathcal{E},H \cup \{\overline{G'}\}]$.
\end{lemma}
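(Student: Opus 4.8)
The plan is to reduce the statement to the concrete description of the models of an inductive theory given in Section \ref{SuperClIndTh}. Recall that, up to isomorphism, a model of any inductive theory of the form $[\overline{X},\mathcal{E},H'']$ is a pair $(T_{\mathcal{E}^\square},[\overline{\alpha}])$ where $\alpha : X \rightarrow T_{\Omega}$ is a ground constructor substitution, and that $(T_{\mathcal{E}^\square},[\overline{\alpha}])$ is such a model iff $T_{\mathcal{E}^\square} \models (H'')^{\circ}\,\alpha$. Since the two theories to be compared share the same $\overline{X}$ and the same $\mathcal{E}$, their models range over the very same family of pairs $(T_{\mathcal{E}^\square},[\overline{\alpha}])$; what differs is only the satisfaction condition imposed on $\alpha$. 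Hence proving semantic equivalence amounts to showing that, for every ground constructor substitution $\alpha$, we have $T_{\mathcal{E}^\square} \models (H \cup \{\overline{G}\})^{\circ}\,\alpha$ iff $T_{\mathcal{E}^\square} \models (H \cup \{\overline{G'}\})^{\circ}\,\alpha$.

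First I would fix an arbitrary $\alpha$ and exploit the conjunctive nature of the model condition: since the condition is a conjunction over the clauses of the hypothesis set, $T_{\mathcal{E}^\square} \models (H \cup \{\overline{G}\})^{\circ}\,\alpha$ holds iff both $T_{\mathcal{E}^\square} \models H^{\circ}\,\alpha$ and $T_{\mathcal{E}^\square} \models \overline{G}^{\circ}\,\alpha$ hold, and symmetrically for $\overline{G'}$. Thus it suffices to establish, under the common assumption $T_{\mathcal{E}^\square} \models H^{\circ}\,\alpha$, the equivalence $T_{\mathcal{E}^\square} \models \overline{G}^{\circ}\,\alpha$ iff $T_{\mathcal{E}^\square} \models \overline{G'}^{\circ}\,\alpha$.

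Next I would invoke the hypothesis. Whenever $T_{\mathcal{E}^\square} \models H^{\circ}\,\alpha$ holds, the pair $(T_{\mathcal{E}^\square},[\overline{\alpha}])$ is by definition a model of $[\overline{X},\mathcal{E},H]$, so the assumption $[\overline{X},\mathcal{E},H] \models \overline{G} \Leftrightarrow \overline{G'}$ gives $(T_{\mathcal{E}^\square},[\overline{\alpha}]) \models \overline{G} \Leftrightarrow \overline{G'}$, i.e., $T_{\mathcal{E}^\square} \models (\overline{G} \Leftrightarrow \overline{G'})^{\circ}\,\alpha$. Since $(\cdot)^{\circ}$ commutes with the Boolean connectives and with the subsequent application of $\alpha$, this biconditional is exactly the claim that $T_{\mathcal{E}^\square} \models \overline{G}^{\circ}\,\alpha$ iff $T_{\mathcal{E}^\square} \models \overline{G'}^{\circ}\,\alpha$. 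Combining this with the conjunctive decomposition of the previous step yields both inclusions between the two classes of models, and hence $[\overline{X},\mathcal{E},H \cup \{\overline{G}\}]\equiv [\overline{X},\mathcal{E},H \cup \{\overline{G'}\}]$.

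There is no deep obstacle here; the entire content is bookkeeping with the model description and the conjunctive reading of the satisfaction condition. The one point requiring a little care — and the only place I expect any subtlety — is the handling of the $(\cdot)^{\circ}$ operation: because $\overline{G}$ and $\overline{G'}$ are ground $\Sigma(\overline{X})$-formulas, I must check that $\overline{G}^{\circ}$ and $\overline{G'}^{\circ}$ are $\Sigma$-formulas whose only variables lie in $X$, so that $\alpha$ (total on $X$) grounds them, and that the satisfaction of $\overline{G} \Leftrightarrow \overline{G'}$ under a fixed $\alpha$ genuinely decomposes along the biconditional. Both facts follow immediately from the definition of $(\cdot)^{\circ}$ and from the Tarskian semantics of $\Leftrightarrow$ for the quantifier-free formulas involved, so the argument goes through with no real difficulty.
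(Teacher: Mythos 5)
Your argument is correct and is essentially the paper's own proof: both take a model of one extended theory, observe it is in particular a model of $[\overline{X},\mathcal{E},H]$, and apply the hypothesis $[\overline{X},\mathcal{E},H]\models \overline{G}\Leftrightarrow \overline{G'}$ to transfer satisfaction of $\overline{G}$ to $\overline{G'}$ (and symmetrically). Your unrolling through the concrete description $T_{\mathcal{E}^\square}\models(\cdot)^{\circ}\alpha$ is just a more explicit rendering of the same two-line argument the paper gives directly in terms of the satisfaction relation.
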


\begin{proof}  We prove the $(\Rightarrow)$ implication
of the semantic equivalence $\equiv$.  The $(\Leftarrow)$
implication is entirely symmetric, changing the
roles of $\overline{G}$ and $\overline{G'}$.
For any model $(T_{\mathcal{E}^{\square}},[\overline{\alpha}])$
of $[\overline{X},\mathcal{E},H \cup \{\overline{G}\}]$
we of course have
$(T_{\mathcal{E}^{\square}},[\overline{\alpha}])\models
\overline{G}$.  But then,
$[\overline{X},\mathcal{E},H] \models \overline{G}
\Leftrightarrow \overline{G'}$
forces $(T_{\mathcal{E}^{\square}},[\overline{\alpha}])\models
\overline{G'}$, which in turn forces
$(T_{\mathcal{E}^{\square}},[\overline{\alpha}])$ to be a
model of $[\overline{X},\mathcal{E},H \cup \{\overline{G'}\}]$,
as desired. $\Box$
\end{proof}

\noindent We now resume our proof of the Soundness Theorem.
The cases are as follows:

\vspace{2ex}

\noindent {\bf EPS}.  By assumption we
have $[\overline{X},\mathcal{E},H] 
\not\models \Gamma
\rightarrow \Lambda$, but
$[\overline{X},\mathcal{E},H] 
\models (\Gamma \rightarrow
\Lambda)!_{\vec{\mathcal{E}}^{=}_{\overline{X}_{U}} \cup
  \vec{H}^{+}_{e_{U}}}$.
But by Lemma \ref{formula-E=-equiv}
we must have
$[\overline{X},\mathcal{E},H] 
\models \Gamma
\rightarrow \Lambda$
iff $[\overline{X},\mathcal{E},H] 
\models (\Gamma \rightarrow
\Lambda)!_{\vec{\mathcal{E}}^{=}_{\overline{X}_{U}} \cup
  \vec{H}^{+}_{e_{U}}}$, contradicting our original assumption.

\vspace{2ex}

\noindent {\bf CVUL}.  By the minimality assumption
we have: (i) $[\overline{X},\mathcal{E},H] 
\not\models \Gamma,\Gamma'
\rightarrow \Lambda$, where the
$\Gamma'$ are $\mathcal{E}_{1}$-equalities;
and (ii) $\{[\overline{X}\uplus
    \overline{Y}_{\alpha},\; \mathcal{E},\;
(H \cup
\widetilde{\overline{\alpha}}|_{\overline{X}_{\Gamma}})_{\mathit{simp}}
] \models  (\Gamma'\rightarrow\Lambda)
\overline{\alpha}\}_{\alpha\in\textit{Unif}^{\hspace{1pt}\Omega}_{\mathcal{E}_1}\!(\Gamma^{\circ})}$.
But (i) exactly means that  there is a ground constructor
substitution $\gamma$ with domain $X$
such that  $(T_{\mathcal{E}^{\square}},[\overline{\gamma}])$
is a model of $[\overline{X},\mathcal{E},H]$, 
which implies (a)
$(T_{\mathcal{E}^{\square}},[\overline{\gamma}]) \models 
H$; and that  for $Z = \mathit{vars}(\Gamma,\Gamma' \rightarrow
\Lambda)$ there is a ground constructor substitution
$\beta$ with domain $Z$ disjoint from $X$
such that (b)
 $T_{\mathcal{E}^{\square}} \not\models 
(\Gamma,\Gamma' \rightarrow 
\Lambda)^{\circ}
(\gamma \uplus \beta)$, which means that
(b).1 $T_{\mathcal{E}^{\square}} \models 
(\Gamma',\Gamma'')^{\circ}
(\gamma \uplus \beta)$, and (b).2
$T_{\mathcal{E}^{\square}} \not\models 
\Lambda^{\circ}
(\gamma \uplus \beta)$.
Let $X_{\Gamma} \uplus Z_{\Gamma}
 = \mathit{vars}(\Gamma^{\circ})$,
with  $X_{\Gamma} \subseteq X$, and $Z_{\Gamma} \subseteq Z$. But
(b).1 implies that $(\gamma \uplus \beta)|_{X_{\Gamma} \uplus Z_{\Gamma}}$
is a ground constructor unifier of $\Gamma^{\circ}$.  Therefore,
there is an idempotent variant
 constructor $\mathcal{E}_{1}$-unifier $\alpha$
of $\Gamma^{\circ}$ with domain $X_{\Gamma} \uplus Z_{\Gamma}$ and 
fresh range $Y$ (so that $Y \supseteq Y_{\alpha}$),
and a ground constructor substitution $\tau$ with domain $Y$ such that
$(\gamma \uplus \beta)|_{X_{\Gamma} \uplus Z_{\Gamma}} =_{B_{1}} \alpha \tau$.
This means that  $(T_{\mathcal{E}^{\square}},[\overline{\gamma}
\uplus \overline{\tau}|_{Y_{\alpha}}])$
 is a model of the inductive theory
\[
[\overline{X}\uplus
    \overline{Y}_{\alpha},\; \mathcal{E},\;
(H \cup
\widetilde{\overline{\alpha}}|_{\overline{X}_{\Gamma}})_{\mathit{simp}}
]
\]
since: (i) $(T_{\mathcal{E}^{\square}},[\overline{\gamma}
\uplus \overline{\tau}|_{Y_{\alpha}}]) \models H$
because $(T_{\mathcal{E}^{\square}},[\overline{\gamma}]) \models 
H$ (the variables $Y_{\alpha}$ are fresh and therefore
do not appear in $H$); and
(ii) $(T_{\mathcal{E}^{\square}},[\overline{\gamma}
\uplus \overline{\tau}|_{Y_{\alpha}}]) 
\models \widetilde{\overline{\alpha}}|_{\overline{X}_{\Gamma}}$,
since for each $\overline{x} \in \overline{X}$
we have $(\gamma \uplus \tau|_{Y_{\alpha}})(x)=\gamma(x)
=(\gamma \uplus \beta)|_{X_{\Gamma} \uplus Z_{\Gamma}}) (x) 
=_{B_{1}} (\alpha \; \tau)(x) = \alpha(x) \; \tau|_{Y_{\alpha}}
=\alpha(x) (\gamma \uplus \tau|_{Y_{\alpha}})$.  Therefore, 
by (ii) we have
$(T_{\mathcal{E}^{\square}},[\overline{\gamma}
\uplus \overline{\tau}|_{Y_{\alpha}}]) \models  
(\Gamma'\rightarrow\Lambda) \overline{\alpha}$.
But since $\mathit{vars}((\Gamma'\rightarrow\Lambda)
\overline{\alpha})=Z\setminus Z_{\Gamma} \uplus Y\setminus Y_{\alpha}$,
in particular, for the ground constructor substitution
$\beta|_{Z\setminus Z_{\Gamma}} \uplus \tau|_{Y\setminus Y_{\alpha}}$
 we must have
$T_{\mathcal{E}^{\square}}\models
(\Gamma'\rightarrow\Lambda)^{\circ} \alpha
(\gamma \uplus \tau|_{Y_{\alpha}}\uplus
\beta|_{Z\setminus Z_{\Gamma}} \uplus
\tau|_{Y\setminus Y_{\alpha}})$, that is,
$T_{\mathcal{E}^{\square}}\models
(\Gamma'\rightarrow\Lambda)^{\circ} \alpha
(\gamma \uplus \tau \uplus \beta|_{Z\setminus Z_{\Gamma}})$.
But since $(\gamma \uplus \beta)|_{X_{\Gamma} \uplus Z_{\Gamma}}
=_{B_{1}} \alpha \tau$, this forces
$T_{\mathcal{E}^{\square}}\models
(\Gamma'\rightarrow\Lambda)^{\circ}(\gamma \uplus \beta)$,
which by (b).1 forces
$T_{\mathcal{E}^{\square}}\models
\Lambda^{\circ}(\gamma \uplus \beta)$,
contradicting (b).2, as desired.

\vspace{2ex}

\noindent {\bf CVUFR}. By the minimality assumption
 we have (i).1 $[\overline{X},\mathcal{E},H] 
\not\models \Gamma \rightarrow \Lambda \wedge (u=v,\Delta)$, 
(i).2 $\textit{Unif}^{\hspace{2pt}\Omega}_{\mathcal{E}_1}((u=v)^{\circ})=\emptyset$, and
(ii) $[\overline{X},\mathcal{E},H] 
\models \Gamma \rightarrow \Lambda \wedge \Delta$.
But (i).1 means 
that, for $Y = \mathit{vars}(\Gamma \rightarrow
\Lambda \wedge (u=v,\Delta))$, we have constructor
ground substitutions $\alpha$ and $\beta$ with respective domains
$X$ and $Y$ such that: (a) $(T_{\mathcal{E}^{\square}},[\overline{\alpha}])$
is a model of $[\overline{X},\mathcal{E},H]$, and (b)
 $T_{\mathcal{E}^{\square}} \not\models 
(\Gamma \rightarrow 
\Lambda \wedge (u=v,\Delta))^{\circ}
(\alpha \uplus \beta)$.
That is,
(b).1 $T_{\mathcal{E}^{\square}} \models 
\Gamma^{\circ}
(\alpha \uplus \beta)$, and (b).2
$T_{\mathcal{E}^{\square}} \not\models (\Lambda \wedge (u=v,\Delta))^{\circ}
(\alpha \uplus \beta)$.  But (b).2 is equivalent to:
$T_{\mathcal{E}^{\square}} \models \neg(\Lambda)^{\circ}
(\alpha \uplus \beta)$
or
($T_{\mathcal{E}^{\square}} \models (u \not=v)^{\circ}
(\alpha \uplus \beta)$
and $T_{\mathcal{E}^{\square}} \models (\neg \Delta)^{\circ}
(\alpha \uplus \beta)$).
And (i).2 means that $T_{\mathcal{E}^{\square}} \models (u\not=v)^{\circ}
(\alpha \uplus \beta)$ is necessarily true.
Therefore, (b).2 is equivalent to:
$T_{\mathcal{E}^{\square}} \not\models (\Lambda \wedge \Delta)^{\circ}
(\alpha \uplus \beta)$, which, together with (b).1
and (a), contradicts (ii), as desired.

\vspace{2ex}

\noindent {\bf SUBL}.  We prove soundness for the two different cases
of the rule.

\vspace{1ex}

\noindent {\bf Case} $x$ is a variable.
By the minimality assumption 
we have (i) $[\overline{X},\mathcal{E},H] 
\not\models x=u,\;\Gamma
\rightarrow \Lambda$ with 
with $x$ a variable of sort $s$, $\mathit{ls}(u) \leq s$,
and $x$ not appearing in $u$,
and (ii) $[\overline{X},\mathcal{E},H] 
\models (\Gamma
\rightarrow \Lambda) \{x \mapsto u\}$.
But (i)  means 
that, for $Y = \mathit{vars}(x=u,\;\Gamma \rightarrow
\Lambda)$, we have constructor
ground substitutions $\alpha$ and $\beta$ with respective domains
$X$ and $Y$ such that: (a) $(T_{\mathcal{E}^{\square}},[\overline{\alpha}])$
is a model of $[\overline{X},\mathcal{E},H]$, and (b)
 $T_{\mathcal{E}^{\square}} \not\models 
(x=u,\Gamma \rightarrow 
\Lambda)^{\circ}
(\alpha \uplus \beta)$.
That is,
(b).1 $T_{\mathcal{E}^{\square}} \models 
(x=u,\Gamma)^{\circ}
(\alpha \uplus \beta)$, and
(b).2 $T_{\mathcal{E}^{\square}} \not\models 
\Lambda^{\circ}
(\alpha \uplus \beta)$.
But (b).1 and $x$ not appearing in $u$
imply that $\beta(x) =_{\mathcal{E}} u^{\circ} (\alpha \uplus\beta|_{Y\setminus \{x\}})$.
Therefore, $(\alpha \uplus \beta) =_{\mathcal{E}} 
(\alpha \uplus \{ x \mapsto u^{\circ} (\alpha \uplus\beta|_{Y\setminus \{x\}})\} 
\uplus \beta|_{Y\setminus \{x\}})$.
But (ii) and (a) imply that
$T_{\mathcal{E}^{\square}} \models 
\Gamma^{\circ}\{ x \mapsto u^{\circ}\} 
(\alpha \uplus \beta|_{Y\setminus \{x\}})$, that is,
$T_{\mathcal{E}^{\square}} \models 
\Gamma^{\circ} (\alpha \uplus \{ x \mapsto u^{\circ} 
(\alpha \uplus\beta|_{Y\setminus \{x\}})\} 
\uplus \beta|_{Y\setminus \{x\}})$,
which by
the semantic equivalence 
$(\alpha \uplus \beta) =_{\mathcal{E}} 
(\alpha \uplus \{ x \mapsto u^{\circ} (\alpha \uplus\beta|_{Y\setminus \{x\}})\} 
\uplus \beta|_{Y\setminus \{x\}})$
 contradicts (b), as desired.

\vspace{2ex}

\noindent {\bf Case} $\overline{x} \in \overline{X}$.
By the minimality assumption 
we have (i) $[\overline{X},\mathcal{E},H] \not\models
\overline{x}=u,\; \Gamma\rightarrow\Lambda$,
with $\overline{x}$ a fresh constant of sort $s$, $\mathit{ls}(u) \leq s$,
and $\overline{x}$ not appearing in $u$,
and (ii) $[\overline{X} \uplus
  \overline{Y}_{u} ,\mathcal{E},(H \cup \{\overline{x} = \overline{u}\})_{\mathit{simp}}] 
\models (\Gamma\rightarrow\Lambda) (\{y \mapsto \overline{y}\}_{y \in  Y_{u}} \uplus
\{\overline{x}\mapsto \overline{u}\})$.
But (i)  means 
that, for $Y = \mathit{vars}(\overline{x}=u,\;\Gamma \rightarrow
\Lambda)$, we have constructor
ground substitutions $\alpha$ and $\beta$ with respective domains
$X$ and $Y$ such that: (a) $(T_{\mathcal{E}^{\square}},[\overline{\alpha}])$
is a model of $[\overline{X},\mathcal{E},H]$, and (b)
 $T_{\mathcal{E}^{\square}} \not\models 
(\overline{x}=u,\Gamma \rightarrow 
\Lambda)^{\circ}
(\alpha \uplus \beta)$.
That is,
(b).1 $T_{\mathcal{E}^{\square}} \models 
(\overline{x}=u,\Gamma)^{\circ}
(\alpha \uplus \beta)$, and
(b).2 $T_{\mathcal{E}^{\square}} \not\models 
\Lambda^{\circ}
(\alpha \uplus \beta)$.
But (b).1 and $\overline{x}$ not appearing in $u$
imply that $\alpha(x) =_{\mathcal{E}} 
u^{\circ} (\alpha|_{X\setminus\{x\}} \uplus\beta|_{Y_{u}})$.
Therefore, $(\alpha \uplus \beta) =_{\mathcal{E}} 
(\alpha|_{X\setminus \{x\}} \uplus \{ x \mapsto u^{\circ} (\alpha|_{X\setminus \{x\}} \uplus\beta|_{Y_{u}})\} 
\uplus \beta)$.
Now note that 
$\mathit{vars}( (\Gamma\rightarrow\Lambda) (\{y \mapsto \overline{y}\}_{y \in  Y_{u}} \uplus
\{\overline{x}\mapsto \overline{u}\}))=Y\setminus Y_{u}$,
and that $(T_{\mathcal{E}^{\square}},[\overline{\alpha} \uplus 
\overline{\beta}|_{Y_{u}}])$ is a model of
$[\overline{X} \uplus
  \overline{Y}_{u} ,\mathcal{E},(H \cup \{\overline{x} =
  \overline{u}\})_{\mathit{simp}}]$, because
it satisfies $H$ since $(T_{\mathcal{E}^{\square}},[\overline{\alpha}])$
does, and it satisfies $\overline{x} =
  \overline{u}$ by $\alpha(x) =_{\mathcal{E}} u^{\circ} (\alpha|_{X\setminus
  \{x\}} \uplus\beta|_{Y_{u}})$.
Therefore,
$(T_{\mathcal{E}^{\square}},[\overline{\alpha}\uplus\overline{\beta}|_{Y_{u}}]),  
[\beta|_{Y\setminus Y_{u}}] \models  (\Gamma\rightarrow\Lambda) (\{y \mapsto \overline{y}\}_{y \in  Y_{u}} \uplus
\{\overline{x}\mapsto \overline{u}\})$, that is,
$T_{\mathcal{E}^{\square}} \models  ((\Gamma\rightarrow\Lambda) (\{y \mapsto \overline{y}\}_{y \in  Y_{u}} \uplus
\{\overline{x}\mapsto \overline{u}\}))^{\circ}(\alpha \uplus \beta)$,
which means,
$T_{\mathcal{E}^{\square}} \models  (\Gamma\rightarrow\Lambda)^{\circ}
\{ x\mapsto u^{\circ}\} (\alpha \uplus \beta)$, which
by the semantic equivalence $(\alpha \uplus \beta) =_{\mathcal{E}} 
(\alpha|_{X\setminus \{x\}} \uplus \{ x \mapsto u^{\circ} 
(\alpha|_{X\setminus \{x\}} \uplus\beta|_{Y_{u}})\} \uplus \beta)$
forces
$T_{\mathcal{E}^{\square}} \models  ((\Gamma\rightarrow\Lambda)^{\circ}
(\alpha \uplus \beta)$, which by 
(b).1 forces $T_{\mathcal{E}^{\square}} \models 
\Lambda^{\circ}
(\alpha \uplus \beta)$, contradicting (b).2, as desired.

\vspace{2ex}

\noindent {\bf SUBR}. We prove soundness for the
somewhat more involved case $\overline{x}\in \overline{X}$
and leave the simper and similar
case when $x$ is a variable for the reader.

\vspace{1ex}

\noindent {\bf Case} $\overline{x}\in \overline{X}$.  By the
minimality assumption we have:
(i) $[\overline{X},\mathcal{E},H] \not\models
  \Gamma\rightarrow\Lambda\wedge 
\overline{x}=u$,  where $\Lambda \not= \top$,
$\overline{x}$ has sort $s$, $ls(u)\leq s$, and
$\overline{x}$ does not appear in $u$;
(ii).1 $[\overline{X},\mathcal{E},H] \models
 \Gamma\rightarrow \overline{x}=u$; and
(ii).2 $[\overline{X} \uplus
  \overline{Y}_{u},\mathcal{E},
(H \cup \{ \overline{x} =  \overline{u}\}) _{\mathit{simp}}] \models
  (\Gamma\rightarrow\Lambda)(\{y \mapsto \overline{y}\}_{y \in Y_{u}} \uplus
\{ \overline{x}\mapsto \overline{u}\})$.
But (i)  means 
that, for $Y = \mathit{vars}(\Gamma\rightarrow\Lambda\wedge 
\overline{x}=u)$, we have constructor
ground substitutions $\alpha$ and $\beta$ with respective domains
$X$ and $Y$ such that: (a) $(T_{\mathcal{E}^{\square}},[\overline{\alpha}])$
is a model of $[\overline{X},\mathcal{E},H]$, and (b)
 $T_{\mathcal{E}^{\square}} \not\models 
(\Gamma\rightarrow\Lambda\wedge 
\overline{x}=u)^{\circ}
(\alpha \uplus \beta)$.
That is,
(b).1 $T_{\mathcal{E}^{\square}} \models 
(\Gamma)^{\circ}
(\alpha \uplus \beta)$, and
(b).2 $T_{\mathcal{E}^{\square}} \not\models 
(\Lambda^{\circ}\wedge 
x=u^{\circ})
(\alpha \uplus \beta)$, which is equivalent
to ($T_{\mathcal{E}^{\square}} \not\models 
\Lambda^{\circ} (\alpha \uplus \beta)$
or
$T_{\mathcal{E}^{\square}} \not\models  x=u^{\circ}
(\alpha \uplus \beta)$).
But (b).1 and (ii).1 force $T_{\mathcal{E}^{\square}} \models  x=u^{\circ}
(\alpha \uplus \beta)$, making (b).2 equivalent to
(b).$2 '$ $T_{\mathcal{E}^{\square}} \not\models 
\Lambda^{\circ} (\alpha \uplus \beta)$.
And $\overline{x}$ not appearing in $u$
imply that $\alpha(x) =_{\mathcal{E}} 
u^{\circ} (\alpha|_{X\setminus\{x\}} \uplus\beta|_{Y_{u}})$.
Therefore, $(\alpha \uplus \beta) =_{\mathcal{E}} 
(\alpha|_{X\setminus \{x\}} \uplus \{ x \mapsto u^{\circ} (\alpha|_{X\setminus \{x\}} \uplus\beta|_{Y_{u}})\} 
\uplus \beta)$.
Now note that 
$\mathit{vars}((\Gamma\rightarrow\Lambda)(\{y \mapsto \overline{y}\}_{y \in Y_{u}} \uplus
\{ \overline{x}\mapsto \overline{u}\}) )=Y\setminus Y_{u}$,
and that $(T_{\mathcal{E}^{\square}},[\overline{\alpha} \uplus 
\overline{\beta}|_{Y_{u}}])$ is a model of
$[\overline{X} \uplus
  \overline{Y}_{u} ,\mathcal{E},(H \cup \{\overline{x} =
  \overline{u}\})_{\mathit{simp}}]$, because
it satisfies $H$ since $(T_{\mathcal{E}^{\square}},[\overline{\alpha}])$
does, and it satisfies $\overline{x} =
  \overline{u}$ by $\alpha(x) =_{\mathcal{E}} u^{\circ} (\alpha|_{X\setminus
  \{x\}} \uplus\beta|_{Y_{u}})$.
Therefore,
$(T_{\mathcal{E}^{\square}},[\overline{\alpha}\uplus\overline{\beta}|_{Y_{u}}]),  
[\beta|_{Y\setminus Y_{u}}] \models
 (\Gamma\rightarrow\Lambda)(\{y \mapsto \overline{y}\}_{y \in Y_{u}} \uplus
\{ \overline{x}\mapsto \overline{u}\})
$, that is,
$T_{\mathcal{E}^{\square}} \models  ((\Gamma\rightarrow\Lambda) (\{y \mapsto \overline{y}\}_{y \in  Y_{u}} \uplus
\{\overline{x}\mapsto \overline{u}\}))^{\circ}(\alpha \uplus \beta)$,
which means,
$T_{\mathcal{E}^{\square}} \models  (\Gamma\rightarrow\Lambda)^{\circ}
\{ x\mapsto u^{\circ}\} (\alpha \uplus \beta)$, which
by the semantic equivalence $(\alpha \uplus \beta) =_{\mathcal{E}} 
(\alpha|_{X\setminus \{x\}} \uplus \{ x \mapsto u^{\circ} 
(\alpha|_{X\setminus \{x\}} \uplus\beta|_{Y_{u}})\} \uplus \beta)$
forces
$T_{\mathcal{E}^{\square}} \models  ((\Gamma\rightarrow\Lambda)^{\circ}
(\alpha \uplus \beta)$, which by 
(b).1 forces $T_{\mathcal{E}^{\square}} \models 
\Lambda^{\circ}
(\alpha \uplus \beta)$, contradicting (b).$2'$, as desired.

\vspace{2ex}

\noindent {\bf NS}.  We prove soundness for the fully general
version of the  {\bf NS} rule.  By the minimality assumption 
we have: (i) $[\overline{X},\mathcal{E},H] \not\models
 (\Gamma\rightarrow \Lambda) [f(\vec{v})]_{p}$; and (ii)
$\{[\overline{X}
\uplus\overline{Y}_{i,j},\mathcal{E},H \uplus 
\widetilde{\overline{\alpha}}_{i,j}|_{\overline{X}_{f(\vec{v})}}
]
 \models (\Gamma_{i},(\Gamma\rightarrow\Lambda)[r_{i}]_{p})
\overline{\alpha}_{i,j}
\}_{ i \in I_{0}}^{j \in J_{i}}$.
But (i) exactly means that there
is a model $(T_{\mathcal{E}^{\square}},[\overline{\alpha}])$
of $[\overline{X},\mathcal{E},H]$ 
such that 
 $(T_{\mathcal{E}^{\square}},[\overline{\alpha}])
\not\models  (\Gamma\rightarrow\Lambda) [f(\vec{v})]_{p}$.
Therefore, there is  
a ground constructor substitution $\beta$ with domain the
variables $Z$ of $\Gamma\rightarrow\Lambda$
such that $(T_{\mathcal{E}^{\square}},[\overline{\alpha}]), [\beta]
\not\models (\Gamma\rightarrow\Lambda) [f(\vec{v})]_{p}$,
that is,
(a) $T_{\mathcal{E}^{\square}}
\not\models (\Gamma\rightarrow\Lambda)^{\circ} [f(\vec{v})^{\circ}]_{p}
(\alpha \uplus \beta)$.  But, by sufficient completeness
of $f$, this means that there exist $i \in I_{0}$,
$\alpha_{i,j} \in
\mathit{Unif}_{B_{0}}(f(\vec{v})^{\circ}=f(\vec{u_{i}}))$,
with $\mathit{ran}(\alpha_{i,j})$ fresh variables,
and $\rho \in [\mathit{ran}(\alpha_{i,j}) \rightarrow T_{\Omega}]$
such that, denoting $Z_{f(\vec{v})} =\mathit{vars}(f(\vec{v}))$,
(b.1) $f(\vec{v})^{\circ}(\alpha \uplus \beta)
|_{X_{f(\vec{v})}\uplus Z_{f(\vec{v})}}
=_{B_{0}}f(\vec{u_{i}})\alpha_{i,j}\rho$; which forces
(b.2)
$(\alpha \uplus \beta)=_{B_{0}} \alpha|_{X \setminus X_{f(\vec{v})}}
\uplus \alpha_{i,j}\rho \uplus \beta|_{Z \setminus Z_{f(\vec{v})}}$; and
we furthermore have
(b.3) $T_{\mathcal{E}^{\square}} \models \Gamma_{i} \alpha_{i,j}\rho$.
We claim that 
 $(T_{\mathcal{E}^{\square}},[\overline{\alpha} 
\uplus \overline{\rho}|_{Y_{i,j}}])$ is a model of 
$[\overline{X}
\uplus\overline{Y}_{i,j},\mathcal{E},H \uplus 
\widetilde{\overline{\alpha}}_{i,j}|_{\overline{X}_{f(\vec{v})}}]$.
Showing this boils down to showing
 $T_{\mathcal{E}^{\square}} \models x = \alpha_{i,j}(x)
(\alpha \uplus \rho|_{Y_{i,j}})$,
$x \in X_{f(\vec{v})}$,
which follows from (b.2) and $T_{\mathcal{E}^{\square}} \models
B_{0}$.
Therefore, by (ii) we have,
 $T_{\mathcal{E}^{\square}} \models 
 (\Gamma_{i},(\Gamma\rightarrow\Lambda)[r_{i}]_{p})^{\circ}
\alpha_{i,j}  (\alpha \uplus \rho|_{Y_{i,j}})$,
which forces
$T_{\mathcal{E}^{\square}} \models 
 (\Gamma_{i},(\Gamma\rightarrow\Lambda)[r_{i}]_{p})^{\circ}
\alpha_{i,j}  (\alpha \uplus \rho
\uplus \beta|_{Z \setminus Z_{f(\vec{v})}})$, which by
the freshness assumption on $\mathit{ran}(\alpha_{i,j})$
is just
$T_{\mathcal{E}^{\square}} \models 
 (\Gamma_{i},(\Gamma\rightarrow\Lambda)[r_{i}]_{p})^{\circ}
(\alpha|_{X \setminus X_{f(\vec{v})}} \uplus \alpha_{i,j} \rho
\uplus \beta|_{Z \setminus Z_{f(\vec{v})}})$,
which by,  (b.1), (b.3) and $T_{\mathcal{E}^{\square}}$
satisfying equation $[i]$ and $B_{0}$, forces
$T_{\mathcal{E}^{\square}} \models 
 (\Gamma\rightarrow\Lambda)^{\circ}
(\alpha|_{X \setminus X_{f(\vec{v})}} \uplus \rho \alpha_{i,j} 
\uplus \beta|_{Z \setminus Z_{f(\vec{v})}})$,
which by (b.2) forces
$T_{\mathcal{E}^{\square}} \models 
 (\Gamma\rightarrow\Lambda)^{\circ}
\alpha_{i,j}  (\alpha \uplus \beta)$,
contradicting (a), as desired.

\vspace{2ex}

\noindent {\bf CS}.  By the minimality assumption
we have (i) $[\overline{X},\mathcal{E},H\cup\{\Gamma\rightarrow\Delta\}] \not\models
 \Gamma\theta,\Gamma'\rightarrow\Lambda\wedge(\Delta\theta,\Delta')$;
and (ii)
$[\overline{X},\mathcal{E},H\cup\{\Gamma\rightarrow\Delta\}] \models
 \Gamma\theta,\Gamma'\rightarrow\Lambda$.  Therefore, there is a model
$(T_{\mathcal{E}^{\square}},[\overline{\alpha}])$ of 
$[\overline{X},\mathcal{E},H\cup\{\Gamma\rightarrow\Delta\}]$
and a constructor ground substitution $\beta$ with domain $Y=
\mathit{vars}(\Gamma\theta,\Gamma'\rightarrow\Lambda\wedge(\Delta\theta,\Delta'))$
such that: (a).1 
$(T_{\mathcal{E}^{\square}},[\overline{\alpha}]) 
\models \Gamma\rightarrow\Delta$;
(a).2 $T_{\mathcal{E}^{\square}} \models 
(\Gamma\theta,\Gamma')^{\circ}(\alpha \uplus \beta)$; and
(b) $T_{\mathcal{E}^{\square}} \not\models 
(\Lambda\wedge(\Delta\theta,\Delta'))^{\circ}(\alpha \uplus \beta)$, i.e.,
either (b).1 $T_{\mathcal{E}^{\square}} \not\models 
\Lambda^{\circ}(\alpha \uplus \beta)$ holds,
or (b).2 $T_{\mathcal{E}^{\square}} \not\models (\Delta\theta,\Delta')^{\circ}
(\alpha \uplus \beta)$ does.  But,  (a).1 and (a).2
force $T_{\mathcal{E}^{\square}} \models  (\Delta\theta)^{\circ}
(\alpha \uplus \beta)$, which makes (b)
equivalent to (b).1.  And (a).2 and (ii) force
$T_{\mathcal{E}^{\square}} \models (\Delta\theta)^{\circ}
(\alpha \uplus \beta)$, contradicting (b).1, as desired.

\vspace{2ex}

\noindent {\bf ERL} and {\bf ERR}.  We give the proof of soundness
for  {\bf ERL}; the proof for  {\bf ERR} is entirely similar.
By the minimality assumption we have:
(i) $[\overline{X},\mathcal{E},H] \not\models (u=v)\theta,\Gamma\rightarrow\Lambda$,
 (ii) $[\overline{X},\mathcal{E},H] \models (u'=v')\theta,\Gamma\rightarrow\Lambda$
and (iii) $[\emptyset,\mathcal{E},\emptyset] \models u=v\Leftrightarrow
u'=v'$.  This means that there is a model $(T_{\mathcal{E}^{\square}},[\overline{\alpha}])$
of $[\overline{X},\mathcal{E},H]$
and a ground constructor substitution $\beta$ with domain the
variables of $(u=v)\theta,\Gamma\rightarrow\Lambda$
such that: (a)
$T_{\mathcal{E}^{\square}}\not\models((u=v)\theta,\Gamma\rightarrow\Lambda)^{\circ}
(\alpha \uplus \beta)$,  and (b)
$T_{\mathcal{E}^{\square}} \models((u'=v')\theta,\Gamma\rightarrow\Lambda)^{\circ}
(\alpha \uplus \beta)$ (note that by the $\mathit{vars}(u=v) \supseteq
\mathit{vars}(u=v)$ assumption, the same $\beta$ works for both
satisfaction statements).  But this is impossible, since,
by (iii), in the inductive evaluation of the Tarskian semantics of
(a) we can replace the truth value of
$(u=v)\theta (\alpha \uplus \beta)$ by that of $(u'=v')\theta (\alpha
\uplus \beta)$, so that the formulas in (a) and (b) must
evaluate to the same truth value, contradicting (a), as desired.

\vspace{2ex}

\noindent {\bf ICC}.  By the minimality assumption 
we have $[\overline{X},\mathcal{E},H] 
\not\models \Gamma
\rightarrow \Lambda$, and
$[\overline{X},\mathcal{E},H] \models  \bigwedge_{i \in I}
\Gamma^{\sharp}_{i} \rightarrow
\Lambda^{\sharp}_{i}$.
We will show that this is impossible if we show the equivalence
$(\dagger) \; \; [\overline{X},\mathcal{E},H] 
\models \Gamma
\rightarrow \Lambda \Leftrightarrow
[\overline{X},\mathcal{E},H]  \models \bigwedge_{i \in I}
\Gamma^{\sharp}_{i}
\rightarrow \Lambda^{\sharp}_{i}$.
But (1) $[\overline{X},\mathcal{E},H] 
\models \Gamma
\rightarrow \Lambda$ holds
iff (2) $[\overline{X} \uplus \overline{Y},\mathcal{E},H \uplus 
\{\overline{\Gamma}\}] \models \overline{\Lambda}$ does,
where $Y = \mathit{vars}(\Gamma \rightarrow
\Lambda)$, and
$\overline{\Gamma}$ (resp. $\overline{\Lambda}$)
 is obtained from $\Gamma$ (resp. $\Lambda$)
 by replacing
each $y \in Y$ by its corresponding constant $\overline{y} \in
\overline{Y}$.  And since $cc^\succ_{B_0}(\overline{\Gamma})$
is the ground Knuth-Bendix completion modulo $B_{0}$ of
$\overline{\Gamma}$, by construction
we have the equivalence
$B_{0} \models \bigwedge
\overline{\Gamma}
\Leftrightarrow  \bigwedge cc^\succ_{B_0}(\overline{\Gamma})$, and,
since $B_{0}$ belongs to  $\mathcal{E}$,    a fortiori we have the equivalence
$[\overline{X} \uplus \overline{Y},\mathcal{E},H] \models \bigwedge
\overline{\Gamma}
\Leftrightarrow  \bigwedge cc^\succ_{B_0}(\overline{\Gamma})$.
Therefore, (2)
holds iff
(3) $[\overline{X} \uplus \overline{Y},\mathcal{E},H \uplus 
\{ cc^\succ_{B_0}(\overline{\Gamma})\}] \models \overline{\Lambda}$ does.
Furthermore, by Lemma \ref{formula-E=-equiv}
and the construction of  $\overline{\Gamma}^{\sharp}$ we
have the equivalence
$[\overline{X} \uplus \overline{Y},\mathcal{E},H \uplus 
\{ cc^\succ_{B_0}(\overline{\Gamma})\}] \models
 \bigwedge cc^\succ_{B_0}(\overline{\Gamma}) \Leftrightarrow
 \overline{\Gamma}^{\sharp}$.  Therefore, (3) holds iff
 (4)  $[\overline{X} \uplus \overline{Y},\mathcal{E},H \uplus
 \overline{\Gamma}^{\sharp}] \models \overline{\Lambda}$ does.
 Of course, if $\overline{\Gamma}^{\sharp}= \bot$
 (4) trivially holds.
 But this is the case where, by convention,
$[\overline{X},\mathcal{E},H]  \models \bigwedge_{i \in I} \Gamma^{\sharp}_{i}
\rightarrow \Lambda^{\sharp}_{i}$ denotes $\top$;  so
in this case have thus proved that (1) holds and therefore $(\dagger)$.
 From now on we may assume that 
 $\overline{\Gamma}^{\sharp} = \bigvee_{i \in I} \overline{\Gamma}^{\sharp}_{i}$
 with $I \not= \emptyset$.  Therefore, (4) holds iff
(5)  $[\overline{X} \uplus \overline{Y},\mathcal{E},H] \models  \bigwedge_{i \in I}
\overline{\Gamma}^{\sharp}_{i} \rightarrow \overline{\Lambda}$ does.
But, by Lemma \ref{formula-E=-equiv}, for each $i \in I$,
 $[\overline{X} \uplus \overline{Y},\mathcal{E},H \uplus
\overline{\Gamma}_{i}^{\sharp}] \models  \overline{\Lambda}$ holds
iff 
$[\overline{X} \uplus \overline{Y},\mathcal{E},H \uplus
\overline{\Gamma}^{\sharp}_{i}] \models
\overline{\Lambda}^{\sharp}_{i}$ does,
and therefore
$[\overline{X} \uplus \overline{Y},\mathcal{E},H ] \models
\overline{\Gamma}_{i}^{\sharp} \rightarrow \overline{\Lambda}$ holds iff
$[\overline{X} \uplus \overline{Y},\mathcal{E},H ] \models
\overline{\Gamma}_{i}^{\sharp} \rightarrow \overline{\Lambda}^{\sharp}_{i}$
does.  Therefore, (5) holds iff 
(6) $[\overline{X} \uplus \overline{Y},\mathcal{E},H] \models  \bigwedge_{i \in I}
\overline{\Gamma}^{\sharp}_{i} \rightarrow
\overline{\Lambda}^{\sharp}_{i}$ does; which itself holds
iff
(7) $[\overline{X},\mathcal{E},H] \models  \bigwedge_{i \in I}
\Gamma^{\sharp}_{i} \rightarrow
\Lambda^{\sharp}_{i}$ does.
This means that we have proved the equivalence $(\dagger)$ for all cases, as
desired.

\vspace{2ex}

\noindent {\bf VARSAT}.  The contradiction of the
minimality assumption is in this
case is quite immediate, since for an $\mathcal{E}_{1}$-formula
$\Gamma^{\circ} \rightarrow \Lambda^{\circ}$
its negation is unsatisfiable in $T_{\mathcal{E}}$ iff
$T_{\mathcal{E}}\models \Gamma^{\circ} \rightarrow \Lambda^{\circ}$,
which, a fortiori, implies that any model 
$(T_{\mathcal{E}^{\square}},[\overline{\alpha}])$
of $[\overline{X},\mathcal{E},H]$ must satisfy
$\Gamma \rightarrow \Lambda$, contradicting the minimality
assumption that $[\overline{X},\mathcal{E},H] \not\models  \Gamma \rightarrow \Lambda$, 
as desired.

\vspace{2ex}

\noindent  {\bf GSI}. By the minimality assumption we have:
(i) $[\overline{X},\mathcal{E},H] \not\models
 \Gamma\rightarrow\bigwedge_{j\in J}\Delta_j$; and
(ii) $\{[\overline{X}\uplus\overline{Y}^{\bullet}_{i},\mathcal{E},
H\uplus H_{i}] \models
(\Gamma\rightarrow\bigwedge_{j\in
    J}\Delta_j)\{z\mapsto\overline{u}^{\bullet}_i\}
\}_{1 \leq i \leq n}$, where $z$ is a variable of sort $s$.  
But (i) exactly means that there
is a model $(T_{\mathcal{E}^{\square}},[\overline{\alpha}])$
of $[\overline{X},\mathcal{E},H]$ 
such that 
 $(T_{\mathcal{E}^{\square}},[\overline{\alpha}])
\not\models \Gamma\rightarrow\bigwedge_{j\in J}\Delta_j$.
Therefore, there is  
a ground constructor substitution $\beta$ with domain the
variables $Z$ of $\Gamma\rightarrow\bigwedge_{j\in J}\Delta_j$
such that (a) $(T_{\mathcal{E}^{\square}},[\overline{\alpha}]), [\beta]
\not\models \Gamma\rightarrow\bigwedge_{j\in J}\Delta_j$.
We can choose among such $\beta$ one such that the term size
$|\beta(z)|$ is a  smallest possible number $m$.  Let us denote 
such a choice  by  $\beta_{\mathit{min}}$.  This means
that 
(b) for any $\beta'\in[Z \rightarrow T_{\Omega}]$ such that
$|\beta'(z)|< m$ we must have
$(T_{\mathcal{E}^{\square}},[\overline{\alpha}]), [\beta']
\models \Gamma\rightarrow\bigwedge_{j\in J}\Delta_j$.
Furthermore, there is a $u_{k}$ in the
$B_{0}$-generator set $\{u_{1},\; \ldots \; u_{n}\}$ for sort $s$
and a ground substitution $\rho$ with domain $Y_{k}$ such that:
(c) $\beta_{\mathit{min}}(z) =_{B_{0}}u_{k}\rho$ and, since
$B_{0}$ is size-preserving, (d) $|u_{k}\rho|=m$, and
since each $v \in \mathit{PST}_{B_{0}, \leq s}(u_{k})$
 is a proper $B_{0}$-subterm of $u_{k}$,
 $|\rho(v)|< m$ for each $v \in \mathit{PST}_{B_{0}, \leq s}(u_{k})$.
We now can distinguish two cases:

\vspace{1ex}

\noindent {\bf Case} (1): $Y_{k} \not= \emptyset$
and the hypotheses 
\[
H_{k}=\{((\Gamma\rightarrow\Delta_j)\{z\rightarrow\overline{v}\}) 
!\,_{\vec{\mathcal{E}}_{{\overline{X} \uplus\overline{Y}_{i}}_U}^{=}}
\mid v \in \mathit{PST}_{B_{0}, \leq s}(u_{k})\; \wedge \; j \in J \}
\]
are non-trivial (therefore,  $\mathit{PST}_{B_{0}, \leq s}(u_{k}) 
\not= \emptyset$).
Then,  $\overline{Y}^{\bullet}_{k} = \overline{Y}_{k}$
and $\overline{u_{k}}^{\bullet} = \overline{u_{k}}$.  We claim that
$(T_{\mathcal{E}^{\square}},[\overline{\alpha} \uplus \overline{\rho}])$
is a model of $[\overline{X}\uplus\overline{Y}_{k},
\mathcal{E},H\uplus
  \{(\Gamma\rightarrow\Delta_j)\{z\mapsto \overline{v}\}
\}_{v \in \mathit{PST}_{B_{0}, \leq s}(u_{k})}^{j\in J}]$.  This is the case
because: (e) $(T_{\mathcal{E}^{\square}},[\overline{\alpha} \uplus
\overline{\rho}]) \models H$ since
$(T_{\mathcal{E}^{\square}},[\overline{\alpha}])\models H$, and
(f) $(T_{\mathcal{E}^{\square}},[\overline{\alpha} \uplus
\overline{\rho}]) \models
\{(\Gamma\rightarrow\Delta_j)\{z\mapsto \overline{v}\}
\}_{v \in \mathit{PST}_{B_{0}, \leq s}(u_{k})}^{j\in J}$, that is,
$T_{\mathcal{E}^{\square}}  \models
\{(\Gamma\rightarrow\Delta_j)^{\circ}\{z\mapsto  v\}
(\alpha \uplus
\rho)
\}_{v \in \mathit{PST}_{B_{0}, \leq s}(u_{k})}^{j\in J}$.
(f) holds  because,
 for each $v \in \mathit{PST}_{B_{0}, \leq s}(u_{k})$,
$\mathit{vars}((\Gamma\rightarrow\Delta_j)\{z\mapsto \overline{v}\})
\subseteq Z\setminus \{z\}$, and
since the variables $Y_{k}$ are fresh 
and therefore do not appear in $\Gamma\rightarrow\Delta_j$, 
$j \in J$, (f) is equivalent to
$T_{\mathcal{E}^{\square}}  \models
\{(\Gamma\rightarrow\Delta_j)^{\circ}(\{z\mapsto  \rho(v)\}
\uplus \alpha)
\}_{v \in \mathit{PST}_{B_{0}, \leq s}(u_{k})}^{j\in J}$,
wich must hold because
any constructor ground substitution $\gamma$
of the variables $Z\setminus \{z\}$
gives us a ground substitution
$\beta' = \{z\rightarrow \rho(v)\} \uplus \gamma$
with domain $Z$ such that we must have
$T_{\mathcal{E}^{\square}}  \models
\{(\Gamma\rightarrow\Delta_j)^{\circ}(\{z\mapsto \rho(v)\}
\uplus \alpha \uplus \gamma)
\}_{v \in \mathit{PST}_{B_{0}, \leq s}(u_{k})}^{j\in J}$, that is,
$T_{\mathcal{E}^{\square}}  \models
\{(\Gamma\rightarrow\Delta_j)^{\circ}(\alpha \uplus \beta')
\}_{v \in \mathit{PST}_{B_{0}, \leq s}(u_{k})}^{j\in J}$, which must
hold by (b), since  $|\beta'(v)|=|\rho(v)|< m$ for each 
$v \in \mathit{PST}_{B_{0}, \leq s}(u_{k})$.
Therefore, (ii) forces
(g) $(T_{\mathcal{E}^{\square}},[\overline{\alpha} \uplus
\overline{\rho}]) \models (\Gamma\rightarrow\bigwedge_{j\in
    J}\Delta_j)\{z\mapsto\overline{u}_k\}$, that is,
$T_{\mathcal{E}^{\square}}  \models (\Gamma\rightarrow\bigwedge_{j\in
    J}\Delta_j)^{\circ} \{z\mapsto u_k\}(\alpha \uplus
\rho)$, which, again, since 
the variables $Y_{k}$ are fresh 
and therefore do not appear in
$ (\Gamma\rightarrow\bigwedge_{j\in
    J}\Delta_j)^{\circ}$,   is just
$T_{\mathcal{E}^{\square}}  \models (\Gamma\rightarrow\bigwedge_{j\in
    J}\Delta_j)^{\circ} (\{z\mapsto u_k\rho\} \uplus 
\alpha)$; but since 
$\beta_{\mathit{min}} =_{B_{0}} \{z\mapsto u_k\rho\} \uplus 
\beta|_{Z \setminus \{z\}}$, (g) is equivalent
to $T_{\mathcal{E}^{\square}}  \models (\Gamma\rightarrow\bigwedge_{j\in
    J}\Delta_j)^{\circ} (\alpha \uplus \beta_{\mathit{min}})$,
contradicting (a), as desired.

\vspace{1ex}

\noindent {\bf Case} (2): 
Either $Y_{k} = \emptyset$ or
the hypotheses $H_{k}$
are trivial, so that $\overline{Y}^{\bullet}_{k} = \emptyset$
and $\overline{u_{k}}^{\bullet} =  u_{k}$.  Therefore, 
for $i=k$, (ii) becomes:
$[\overline{X},\mathcal{E},
H] \models
(\Gamma\rightarrow\bigwedge_{j\in
    J}\Delta_j)\{z\mapsto u_{k}\})$.  
But since $(T_{\mathcal{E}^{\square}},[\overline{\alpha}])$ is a model
of $[\overline{X},\mathcal{E},H]$, 
for the ground constructor substitution
$\beta_{\mathit{min}}|_{Z \setminus \{z\}} \uplus \rho$
(where in case $Y_{k} = \emptyset$, $\rho$ is the
empty substitution) we must have
$T_{\mathcal{E}^{\square}} \models
(\Gamma\rightarrow\bigwedge_{j\in
    J}\Delta_j)^{\circ}\{z\mapsto u_{k}\}
(\alpha \uplus \beta_{\mathit{min}}|_{Z \setminus \{z\}} \uplus \rho)$, which
since the variables $Y_{k}$, if any, are fresh, just means
$T_{\mathcal{E}^{\square}} \models
(\Gamma\rightarrow\bigwedge_{j\in
    J}\Delta_j)^{\circ}(\{z\mapsto u_{k}\rho\} \uplus
\alpha \uplus \beta_{\mathit{min}}|_{Z \setminus \{z\}})$.
But since 
$\beta_{\mathit{min}} =_{B_{0}} \{z\mapsto u_k\rho\} \uplus 
\beta|_{Z \setminus \{z\}}$ this is equivalent to
$T_{\mathcal{E}^{\square}}  \models (\Gamma\rightarrow\bigwedge_{j\in
    J}\Delta_j)^{\circ} (\alpha \uplus \beta_{\mathit{min}})$,
contradicting (a), as desired.

\vspace{2ex}

\noindent {\bf NI}.  By the minimality assumption
we have: (i) $[\overline{X},\mathcal{E},H] \not\models
 (\Gamma\rightarrow\bigwedge_{l\in L}\Delta_l) [f(\vec{v})]_{p}$, and
(ii) $\{[\overline{X}
\uplus\overline{Y}^{\bullet}_{i,j},\mathcal{E},(H \uplus 
H_{i,j}]
 \models  (\Gamma_{i},(\Gamma\rightarrow\bigwedge_{l\in L}\Delta_l)[r_{i}]_{p})
\overline{\alpha}^{\bullet}_{i,j}
\}_{ i \in I_{0}}^{j \in J_{i}}$.
But (i) exactly means that there
is a model $(T_{\mathcal{E}^{\square}},[\overline{\alpha}])$
of $[\overline{X},\mathcal{E},H]$ 
such that 
 $(T_{\mathcal{E}^{\square}},[\overline{\alpha}])
\not\models  (\Gamma\rightarrow\bigwedge_{l\in L}\Delta_l) [f(\vec{v})]_{p}$.
Therefore, there is  
a ground constructor substitution $\beta$ with domain the
variables $Z$ of $\Gamma\rightarrow\bigwedge_{j\in L}\Delta_l$
such that (a) $(T_{\mathcal{E}^{\square}},[\overline{\alpha}]), [\beta]
\not\models (\Gamma\rightarrow\bigwedge_{l\in L}\Delta_l) [f(\vec{v})]_{p}$.
We can choose among such $\beta$ one such that the term size
$|f(\vec{v})\beta|$ is a smallest possible number $m$.  Let us denote 
such a choice by  $\beta_{\mathit{min}}$.  This means
that 
(b) for any $\beta'\in[Z \rightarrow T_{\Omega}]$ such that
$|f(\vec{v})\beta'|< m$ we must have
$(T_{\mathcal{E}^{\square}},[\overline{\alpha}]), [\beta']
\models (\Gamma\rightarrow\bigwedge_{l\in L}\Delta_l) [f(\vec{v})]_{p}$.
Furthermore, by sufficient completeness,
there must be a rule 
$[i]: f(\vec{u_{i}}) \rightarrow r_{i} \;\; \mathit{if} \;\;
\Gamma_{i}$ among those defining $f$,
a unifier $\alpha_{i,j} \in \mathit{Unif}_{B_{0}}(f(\vec{v})=f(\vec{u_{i}}))$
with $\mathit{ran}(\alpha_{i,j})=Y_{i,j}$ fresh variables,
and a constructor ground substitution
$\rho \in [Y_{i,j} \rightarrow T_{\Omega}]$ such that,
defining  $Z_{\vec{v}}= \mathit{vars}(f(\vec{v}))$,
we have: (c).1 $\beta_{\mathit{min}}|_{Z_{\vec{v}}}=_{B_{0}}
(\alpha_{i,j} \rho) |_{Z_{\vec{v}}}$, and therefore
$\beta_{\mathit{min}} =_{B_{0}} \beta_{\mathit{min}}|_{Z \setminus Z_{\vec{v}}}  
\uplus (\alpha_{i,j} \rho) |_{Z_{\vec{v}}}$;
(c).2 $f(\vec{v}) \beta_{\mathit{min}} =_{B_{0}}   f(\vec{u_{i}}) \alpha_{i,j}\rho$, 
so that $|f(\vec{u_{i}}) \alpha_{i,j}\rho| = m$;
(c).3  $\Gamma_{i} \beta_{\mathit{min}} =_{B_{0}}   \Gamma_{i}
\alpha_{i,j}\rho$, and $\mathcal{E} \vdash \Gamma_{i}\alpha_{i,j}\rho$
(since rule $[i]$ applies); and therefore
(c).4 $f(\vec{v}) \beta_{\mathit{min}} =_{\mathcal{E}} r_{i}\beta_{\mathit{min}}$.
We can now distinguish two cases:

\vspace{1ex}

\noindent {\bf Case} (1):  The simplified hypotheses
$(H_{i,j}) !\,_{\vec{\mathcal{E}}_{{\overline{X} \uplus \overline{Y}_{i,j}}_U}^{=}}$
are nontrivial, so that $\overline{Y}^{\bullet}_{i,j}=
\overline{Y}_{i,j}$ and
$\overline{\alpha}^{\bullet}_{i,j} = \overline{\alpha}_{i,j}$.
We claim that 
$(T_{\mathcal{E}^{\square}},[\overline{\alpha} \uplus\overline{\rho}])$ is a model
of  $[\overline{X}
\uplus\overline{Y}_{i,j},\mathcal{E}, H \uplus  H_{i,j}]$.
Fists of all, 
$(T_{\mathcal{E}^{\square}},[\overline{\alpha}\uplus\overline{\rho}])\models H$
since $(T_{\mathcal{E}^{\square}},[\overline{\alpha}])\models H$.
Let us see that we also have
$(T_{\mathcal{E}^{\square}},[\overline{\alpha}\uplus\overline{\rho}])\models H_{i,j}$.
That is, we need to show that for each induction hypothesis
$(\Gamma\rightarrow\Delta_l) 
\overline{\gamma}$ in $H_{i,j}$, and for each ground constructor
substitution
$\delta \in [(Z \setminus Z_{\vec{v}}) \rightarrow T_{\Omega}]$,
$T_{\mathcal{E}^{\square}} \models
(\Gamma\rightarrow\Delta_l)^{\circ}
\gamma (\alpha \uplus \rho \uplus \delta)$ holds, which
by the freshness assumption on $Y_{i,j}
$ just means that
(d) $T_{\mathcal{E}^{\square}} \models
(\Gamma\rightarrow\Delta_l)^{\circ}
(\alpha \uplus (\gamma \rho) \uplus \delta)$ holds.
Furthermore, by the definition 
of the matching substitution $\gamma$ we must have
 $f(\vec{v}) \gamma (\alpha \uplus \rho \uplus \delta)  = 
f(\vec{v}) \gamma \rho = f(\vec{w}) \alpha_{i,j}\gamma$,
with $f(\vec{w})$ a \emph{proper subcall} of rule $[i]$,
so that, by (c).2, $|f(\vec{w}) \alpha_{i,j}\rho| < m$.
But, since $(T_{\mathcal{E}^{\square}},[\overline{\alpha}])$ is a model
of $[\overline{X},\mathcal{E},H]$, 
 viewing $(\gamma \rho) \uplus \delta$
as a decomposition of a ground substitution
$\beta'$ with domain $Z$ and noticing that
$|f(\vec{v}) \beta'| = |f(\vec{v}) \gamma\rho| < m$, (d) indeed holds
because of (b).  Therefore, by (ii), we must have
$(T_{\mathcal{E}^{\square}},[\overline{\alpha} \uplus\overline{\rho}])
\models (\Gamma_{i},(\Gamma\rightarrow\bigwedge_{l\in L}\Delta_l)[r_{i}]_{p})
\overline{\alpha}_{i,j}$. 
In particular, since
$\mathit{vars}((\Gamma_{i},(\Gamma\rightarrow
\bigwedge_{l\in L}\Delta_l)[r_{i}]_{p})
\overline{\alpha}_{i,j}) = Z\setminus Z_{\vec{v}}$,
this must be the case for the ground constructor substitution
$\beta_{\mathit{min}}|_{Z\setminus Z_{\vec{v}}}$.  That is, we must have
$T_{\mathcal{E}^{\square}}
\models (\Gamma_{i},(\Gamma^{\circ}\rightarrow\bigwedge_{l\in
  L}\Delta^{\circ}_l)
[r_{i}]_{p})
\alpha _{i,j} (\alpha \uplus \rho \uplus \beta_{\mathit{min}}|_{Z\setminus Z_{\vec{v}}})$,
that is,
$T_{\mathcal{E}^{\square}}
\models (\Gamma_{i},
(\Gamma^{\circ}\rightarrow\bigwedge_{l\in
  L}\Delta^{\circ}_l)
[r_{i}]_{p}) (\alpha \uplus(\alpha _{i,j} \rho) \uplus
\beta_{\mathit{min}}|_{Z\setminus Z_{\vec{v}}})$.  But, by (c).3,
this is equivalent to
$T_{\mathcal{E}^{\square}}
\models (\Gamma^{\circ}\rightarrow\bigwedge_{l\in
  L}\Delta^{\circ}_l)
[r_{i}]_{p} (\alpha \uplus(\alpha _{i,j} \rho) \uplus
\beta_{\mathit{min}}|_{Z\setminus Z_{\vec{v}}})$, which by 
(c).1 is equivalent to
$T_{\mathcal{E}^{\square}}
\models (\Gamma^{\circ}\rightarrow\bigwedge_{l\in
  L}\Delta^{\circ}_l)
[r_{i}]_{p} (\alpha \uplus \beta_{\mathit{min}})$, which by (c).4 is
equivalent to
$T_{\mathcal{E}^{\square}}
\models (\Gamma^{\circ}\rightarrow\bigwedge_{l\in
  L}\Delta^{\circ}_l)
[f(\vec{v})]_{p} (\alpha \uplus \beta_{\mathit{min}})$,
contradicting (a), as desired.

\vspace{1ex}

\noindent {\bf Case} (2):  the simplified  hypotheses
$(H_{i,j}) !\,_{\vec{\mathcal{E}}_{{\overline{X} \uplus \overline{Y}_{i,j}}_U}^{=}}$
 are trivial, so that $Y^{\bullet}_{i,j} = \emptyset$ and
$\overline{\alpha}^{\bullet}_{i,j} = \alpha_{i,j}$.
This means that, for rule $[i]$, (ii) gives us
$[\overline{X},\mathcal{E},H]
 \models  (\Gamma_{i},(\Gamma\rightarrow\bigwedge_{l\in L}\Delta_l)[r_{i}]_{p})
\alpha_{i,j}$. But, since $(T_{\mathcal{E}^{\square}},[\overline{\alpha}])$ is a model
of $[\overline{X},\mathcal{E},H]$ and 
$\mathit{vars}((\Gamma_{i},(\Gamma\rightarrow
\bigwedge_{l\in L}\Delta_l)[r_{i}]_{p}) \alpha_{i,j}) \subseteq
(Z\setminus Z_{\vec{v}}) \uplus Y_{i,j}$,
in particular, for the ground constructor substitution
$\beta_{\mathit{min}}|_{Z\setminus Z_{\vec{v}}} \uplus
(\alpha _{i,j} \rho)$ we must have
$T_{\mathcal{E}^{\square}}
\models (\Gamma_{i},
(\Gamma^{\circ}\rightarrow\bigwedge_{l\in
  L}\Delta^{\circ}_l)
[r_{i}]_{p}) (\alpha \uplus(\alpha _{i,j} \rho) \uplus
\beta_{\mathit{min}}|_{Z\setminus Z_{\vec{v}}})$.  But, by (c).3,
this is equivalent to
$T_{\mathcal{E}^{\square}}
\models (\Gamma^{\circ}\rightarrow\bigwedge_{l\in
  L}\Delta^{\circ}_l)
[r_{i}]_{p} (\alpha \uplus(\alpha _{i,j} \rho) \uplus
\beta_{\mathit{min}}|_{Z\setminus Z_{\vec{v}}})$, which by 
(c).1 is equivalent to
$T_{\mathcal{E}^{\square}}
\models (\Gamma^{\circ}\rightarrow\bigwedge_{l\in
  L}\Delta^{\circ}_l)
[r_{i}]_{p} (\alpha \uplus \beta_{\mathit{min}})$, which by (c).4 is
equivalent to
$T_{\mathcal{E}^{\square}}
\models (\Gamma^{\circ}\rightarrow\bigwedge_{l\in
  L}\Delta^{\circ}_l)
[f(\vec{v})]_{p} (\alpha \uplus \beta_{\mathit{min}})$,
contradicting (a), as desired.

\vspace{2ex}

\noindent {\bf Existential} ($\exists$).   By the minimality 
assumption we have:
(i) $[\emptyset,\mathcal{E},\emptyset] \not\models
 (\exists \chi)(\Gamma\rightarrow\Lambda)$, that is,
$T_{\mathcal{E}^{\square}}
\not\models (\exists \chi)(\Gamma\rightarrow\Lambda)$;
 and
(ii) $[\emptyset,\mathcal{E},\emptyset] \models
 I(\Gamma\rightarrow\Lambda)$,
that is,
$T_{\mathcal{E}^{\square}}
\models  I(\Gamma\rightarrow\Lambda)$,
where $\chi$ is a Skolem signature,
$\Gamma\rightarrow\Lambda$
is a $\Sigma \cup \chi$-multiclause, 
and $I: \chi \rightarrow
\mathcal{E}$ is a theory interpretation.
But, as pointed out in Section \ref{Skolem-subsection},
$T_{\mathcal{E}^{\square}}
\models  I(\Gamma\rightarrow\Lambda)$
gives a \emph{constructive proof} of
$T_{\mathcal{E}^{\square}}
\models (\exists \chi)(\Gamma\rightarrow\Lambda)$
by proving that the intepretation $I(f)$
of the symbols  $f$ in the Skolem signature 
satisfies the interpretation $I(\Gamma\rightarrow\Lambda)$
of the $\Sigma \uplus \chi$-multiclause
$\Gamma\rightarrow\Lambda$, in direct contradiction of (i),
as desired.

\vspace{2ex}

\noindent {\bf LE}.  By the minimality assumption we have:
(i) $[\overline{X},\mathcal{E},H] \not\models
 \Gamma\rightarrow\Lambda$;
(ii) $[\overline{X}_{0},\mathcal{E},H_{0}] \models
  \Gamma'\rightarrow\bigwedge_{j\in J}\Delta'_j$;
(iii) $[\overline{X},\mathcal{E},H] \models H_{0}$;
and (iv) $[\overline{X},\mathcal{E},H\uplus
\{\Gamma'\rightarrow\Delta'_j\}_{j\in J}] \models
 \Gamma\rightarrow\Lambda$,
where 
$\emptyset \subseteq \overline{X}_{0} \subseteq \overline{X}$.  
 But (i) exactly means that there
is a model $(T_{\mathcal{E}^{\square}},[\overline{\alpha}])$
of $[\overline{X},\mathcal{E},H]$ 
such that 
(a) $(T_{\mathcal{E}^{\square}},[\overline{\alpha}])
\not\models  \Gamma\rightarrow\Lambda$.
However, $(T_{\mathcal{E}^{\square}},[\overline{\alpha}])$
is also a model of $[\overline{X},\mathcal{E},H\uplus
\{\Gamma'\rightarrow\Delta'_j\}_{j\in J}]$.
This is so because we obviously have (b)
$(T_{\mathcal{E}^{\square}},[\overline{\alpha}]) \models H$;
which by (iii)  forces  
$(T_{\mathcal{E}^{\square}},[\overline{\alpha}]) \models H_{0}$;
which, by (ii) and $\overline{X}_{0} \subseteq \overline{X}$,
 forces (c) $(T_{\mathcal{E}^{\square}},[\overline{\alpha}]) \models
\Gamma'\rightarrow\Delta'_j$ for each $j \in J$.
Therefore, (iv) gives us
$(T_{\mathcal{E}^{\square}},[\overline{\alpha}])
\models  \Gamma\rightarrow\Lambda$, in flat contradiction
of (a), as desired.

\vspace{2ex}

\noindent {\bf SP}.  By the minimality assumption
we have: (i) $[\overline{X},\mathcal{E},H] 
\not\models \Gamma
\rightarrow \Lambda$;
(ii) $\{[\overline{X},\mathcal{E},H] 
\models \Gamma_{i} \theta ,\Gamma
\rightarrow \Lambda\}_{i \in I}$;
(iii) $[\overline{X},\mathcal{E},H] \models H_{0}$;
and (iv) $[\overline{X}_{0},\mathcal{E},H_{0}]   \models 
\bigvee_{i \in I} \Gamma_{i}$,
where $\overline{X}_{0} \subseteq \overline{X}$,
and $\mathit{vars}((\bigvee_{i \in I} \Gamma_{i})\theta)
\subseteq \mathit{vars}(\Gamma \rightarrow
\Lambda) = Y$.
But (i) means we have constructor
ground substitutions $\alpha$ and $\beta$ with respective domains
$X$ and $Y$ such that: (a) $(T_{\mathcal{E}^{\square}},[\overline{\alpha}])$
is a model of $[\overline{X},\mathcal{E},H]$, and (b)
 $T_{\mathcal{E}^{\square}} \not\models 
(\Gamma\rightarrow 
\Lambda)^{\circ}
(\alpha \uplus \beta)$,
that is,
(b).1 $T_{\mathcal{E}^{\square}} \models 
(\Gamma)^{\circ}
(\alpha \uplus \beta)$, and
(b).2 $T_{\mathcal{E}^{\square}} \not\models 
\Lambda^{\circ}
(\alpha \uplus \beta)$.
But, since by (iii) and (iv),  we must have
$T_{\mathcal{E}^{\square}} \models 
((\bigvee_{i \in I} \Gamma_{i})\theta)^{\circ}
(\alpha \uplus \beta)$, 
there must be an $i \in I$ such that
$T_{\mathcal{E}^{\square}} \models 
(\Gamma_{i}\theta)^{\circ}
(\alpha \uplus \beta)$.
Therefore, by (b).1, we have
(c)  $T_{\mathcal{E}^{\square}} \models 
(\Gamma_{i}\theta ,\Gamma)^{\circ}
(\alpha \uplus \beta)$,
which by (ii)  and 
$(T_{\mathcal{E}^{\square}},[\overline{\alpha}])$
being a model of $[\overline{X},\mathcal{E},H]$
forces $T_{\mathcal{E}^{\square}} \models 
\Lambda^{\circ}
(\alpha \uplus \beta)$, contradicting (b).2, as desired.

\vspace{2ex}

\noindent {\bf CAS}. We prove the two modalities: for
a $z$ variable, and for $\overline{z}\in \overline{X}$.

\vspace{1ex}

\noindent {\bf Case} (1). We have $z$ of sort $s$ occurring in
$\Gamma\rightarrow\Lambda$, and $\{u_1,\cdots,u_n\}$ is
a $B_{0}$-generator set for sort $s$, with each $u_i$, $1\leq i\leq  n$, 
having fresh variables.  By the minimality assumption
we have: (i) $[\overline{X},\mathcal{E},H] \not\models 
\Gamma\rightarrow\Lambda$; and
(ii) $\left\{[\overline{X},\mathcal{E},H] \models
 (\Gamma\rightarrow\Lambda)\{z\mapsto u_i\}\right\}_{1\leq i\leq n}$.
A moment's reflection helps us realize that the task of proving this
case \emph{coincides} with the ---already accomplished---  task
 of proving  {\bf Case} (2) for the {\bf GSI} rule, i.e., the case when
either $Y_{i} = \emptyset$ or the induction 
hypotheses are trivial: the proof is identical here.

\vspace{1ex}  

\noindent {\bf Case} (2). We have
$\overline{z}\in \overline{X}$ of sort $s$ occurs in
$\Gamma\rightarrow\Lambda$, 
$Y_{i} = \mathit{vars}(u_i)$, $\overline{Y}_{i}$
are the corresponding new fresh constants, and
 $\overline{u}_i \equiv u_{i} \{y \mapsto \overline{y}\}_{y \in
   Y_{i}}$.  By the minimality assumption we have:
(i) $[\overline{X},\mathcal{E},H] \not\models \Gamma\rightarrow
\Lambda$; and (ii) 
$\{[\overline{X}
  \uplus\overline{Y}_{i},\mathcal{E},
H  \uplus \{\overline{z} = \overline{u}_i\}] \models
(\Gamma\rightarrow\Lambda)\{\overline{z}\mapsto\overline{u}_i\} 
\}_{1 \leq i \leq n}$.   But (i) exactly means that there
is a model $(T_{\mathcal{E}^{\square}},[\overline{\alpha}])$
of $[\overline{X},\mathcal{E},H]$ 
such that 
$(T_{\mathcal{E}^{\square}},[\overline{\alpha}])
\not\models  \Gamma\rightarrow\Lambda$, that is,
(a) $T_{\mathcal{E}^{\square}} \not\models  
(\Gamma\rightarrow\Lambda)^{\circ}\alpha$.
But then there is a $k$, $1 \leq k \leq n$,
and a ground constructor substitution
$\rho$ with domain $Y_{k}$ such that
(b).1 $\alpha(x)=_{B_{0}} u_{k}\; \rho$; 
which implies
(b).2 $\alpha =_{B_{0}} \alpha|_{X \setminus \{z\}} \uplus \{z \mapsto u_{k}\; \rho\}$.
We claim that 
$(T_{\mathcal{E}^{\square}},[\overline{\alpha} \uplus \overline{\rho}])$
is a model of $[\overline{X}
  \uplus\overline{Y}_{i},\mathcal{E},
H  \uplus \{\overline{z} = \overline{u}_{k}\}]$.
Indeed: (b)
$(T_{\mathcal{E}^{\square}},[\overline{\alpha} \uplus\overline{\rho}]) \models H$
because $(T_{\mathcal{E}^{\square}},[\overline{\alpha}]) \models H$; and
(c) $(T_{\mathcal{E}^{\square}},[\overline{\alpha} \uplus\overline{\rho}])
\models \overline{z} = \overline{u}_{k}$, that is,
$T_{\mathcal{E}^{\square}} \models z = u_{k} (\alpha \uplus \rho)$,
which holds by (b).1 and $T_{\mathcal{E}^{\square}} \models B_{0}$.
But then (ii) forces
$(T_{\mathcal{E}^{\square}},[\overline{\alpha} \uplus\overline{\rho}])
\models (\Gamma\rightarrow\Lambda)\{\overline{z}\mapsto\overline{u}_k\}$,
that is,
$T_{\mathcal{E}^{\square}} \models  (\Gamma\rightarrow\Lambda)^{\circ}
\{z \mapsto u_k\}(\alpha \uplus \rho)$,
which by (b).2 is equivalent to
$T_{\mathcal{E}^{\square}} \models  (\Gamma\rightarrow\Lambda)^{\circ}\alpha$,
flatly contradicting (a), as desired.

\vspace{2ex}

\noindent {\bf VA}.
By the minimality assumption we have:
(i)  $[\overline{X},\mathcal{E},H] 
\not\models [\overline{X},\mathcal{E},H] \Vdash u=v[w]_{p},\;\Gamma\rightarrow\Lambda$;
and (ii) $[\overline{X},\mathcal{E},H] \models
u=v[z]_{p},\;z=w,\; \Gamma\rightarrow\Lambda$, where $z$ is
fresh variable of sort the least sort of $w$.
But (i) exactly means 
that, for $Y = \mathit{vars}( u=v[w]_{p},\;\Gamma \rightarrow
\Lambda)$
we have constructor
ground substitutions $\alpha$ and $\beta$ with respective domains
$X$ and $Y$ such that: (a) $(T_{\mathcal{E}^{\square}},[\overline{\alpha}])$
is a model of $[\overline{X},\mathcal{E},H]$, and (b)
 $T_{\mathcal{E}^{\square}} \not\models 
(u=[w]_{p},\;\Gamma' \rightarrow 
\Lambda)^{\circ}
(\alpha \uplus \beta)$.
That is,
(b).1 $T_{\mathcal{E}^{\square}} \models 
(u=v[w]_{p} ,\Gamma)^{\circ}
(\alpha \uplus \beta)$, and
(b).2  $T_{\mathcal{E}^{\square}} \not\models 
\Lambda^{\circ}
(\alpha \uplus \beta)$.
Consider now the constructor ground substitution
$\alpha \uplus \beta \uplus \{z \mapsto w(\alpha \uplus \beta)\}$.
Since $v[z]_{p}(\alpha \uplus \beta \uplus \{z \mapsto w(\alpha
\uplus \beta)\}) =  v[w]_{p} (\alpha \uplus \beta)$, (b).1 forces
$T_{\mathcal{E}^{\square}} \models 
(u=v[z]_{p},\;z=w\; ,\Gamma)^{\circ}
(\alpha \uplus \beta \uplus \{z \mapsto w(\alpha \uplus \beta)\})$.
But then (ii) forces
$T_{\mathcal{E}^{\square}} \models 
\Lambda^{\circ}
(\alpha \uplus \beta \uplus \{z \mapsto w(\alpha \uplus \beta)\})$.
But since $z$, being fresh, does not occur in $\Lambda^{\circ}$,
this in turn forces
$T_{\mathcal{E}^{\square}} \models 
\Lambda^{\circ}
(\alpha \uplus \beta)$, in direct contradiction with (b).2.

\vspace{2ex}

\noindent $\textbf{EQ}$.  By the minimality assumption we have:
(i) $[\overline{X},\mathcal{E},H]
  \not\models
  (\Gamma\rightarrow\Lambda)[w]_{p}$
  and (ii)  $[\overline{X},\mathcal{E},H] 
  \models (\Gamma\rightarrow\Lambda)[v \theta\gamma]_{p}$, with
  $w=_{B_{0}}u\theta\gamma$.  But, since the conditional equation $\Gamma'\rightarrow u=v$
  used in this inference step belongs to either $\mathcal{E}$ or $H$,
 this is impossible, since, by
  Lemma \ref{formula-E-equiv}, we have
  $[\overline{X},\mathcal{E},H] 
  \models (\Gamma\rightarrow\Lambda)[v \theta\gamma]_{p}
  \; \Leftrightarrow \;  (\Gamma\rightarrow\Lambda)[w]_{p} $,
  which by (ii) forces
   $[\overline{X},\mathcal{E},H]
  \models
  (\Gamma\rightarrow\Lambda)[w]_{p}$.

  \vspace{2ex}

\noindent $\textbf{Cut}$.  By the minimality assumption
we have: (i) $[\overline{X},\mathcal{E},H] \not\models \Gamma\rightarrow\Lambda$,
 (ii) $[\overline{X},\mathcal{E},H] \models
 \Gamma\rightarrow \Gamma'$ and (iii) $[\overline{X},\mathcal{E},H]
 \models \Gamma,\Gamma' \rightarrow\Lambda$, with
 $\mathit{vars}(\Gamma') \subseteq \mathit{vars}(\Gamma
 \rightarrow \Lambda)=Y$.  But (i)--(iii) are respectively equivalent
 to: (i$'$) $[\overline{X} \uplus \overline{Y},\mathcal{E},H] \not\models
 \overline{\Gamma}\rightarrow\overline{\Lambda}$,
 (ii$'$)  $[\overline{X} \uplus \overline{Y},\mathcal{E},H] \models
 \overline{\Gamma}\rightarrow \overline{\Gamma'}$ and (iii$'$)
 $[\overline{X} \uplus \overline{Y},\mathcal{E},H]
 \models \overline{\Gamma},\overline{\Gamma'}
 \rightarrow\overline{\Lambda}$, which is impossible,
 since $((G \Rightarrow G') \wedge ((G \wedge G') \Rightarrow L))
 \Rightarrow (G \Rightarrow L)$ is a tautology of Propositional Logic.
 Therefore, by the Tarskian semantics of formulas,  (ii$'$) and
  (iii$'$) imply $[\overline{X} \uplus \overline{Y},\mathcal{E},H] \models
  \overline{\Gamma}\rightarrow\overline{\Lambda}$, which is
  equivalent to $[\overline{X},\mathcal{E},H] \models \Gamma\rightarrow\Lambda$.

\vspace{2ex}

\noindent This finishes the proof of the Soundness Theorem.  $\Box$

\end{document}